\def\lncs{0}
\newcommand{\local}{${\mathsf{LOCAL}}$}
\newcommand{\congest}{${\mathsf{CONGEST}}$}
\newcommand{\dist}{\mbox{\rm dist}}
\newcommand{\cdist}{\mbox{\rm c-dist}}
\newcommand{\TZ}{\mathsf{TZ}}
\newcommand{\Cluster}{\mathsf{TruncatedBS}}
\newcommand{\Ball}{\mathsf{\textbf{B}}}
\newcommand{\ClusterHop}{\mathsf{TruncatedTZ}}
\newcommand{\BasicSpanner}{\mathsf{BasicSpanner}}
\newcommand{\ImprovedSpannerI}{\mathsf{SpannerShortDist}}
\newcommand{\ImprovedSpannerII}{\mathsf{SpannerLongDist}}
\newcommand{\ImprovedHopsetII}{\mathsf{HopsetsSmallStretch}}
\newcommand{\ImprovedHopsetI}{\mathsf{HopsetsSmallHops}}
\newcommand{\ClusterAndAugment}{\mathsf{SuperClusterAugment}}
\newcommand{\ClusterAndAugmentHop}{\mathsf{ClusterAndAugmentHop}}
\newcounter{note}[section]
\titlespacing*{\section}{0pt}{1.1\baselineskip}{\baselineskip}
\titlespacing*{\subsection}{-5pt}{\baselineskip}{\baselineskip}
\newtheorem{theorem}{Theorem}
\newtheorem{definition}{Definition}
\newtheorem{lemma}{Lemma}
\newtheorem{claim}{Claim}
\newtheorem{observation}{Observation}
\newtheorem{corollary}{Corollary}
\newtheorem{fact}{Fact}
\crefname{claim}{Claim}{Claims}
\crefname{observation}{Observation}{Observations}
\newcommand{\poly}{\mathsf{poly}}
\newcommand{\ignore}[1]{}
\newenvironment{boxfig}[2]{\begin{figure}[#1]\fbox{\begin{minipage}{\linewidth}
				\vspace{0.2em}
				\makebox[0.025\linewidth]{}
				\begin{minipage}{0.95\linewidth}
					{{
							#2 }}
				\end{minipage}
				\vspace{0.2em}
	\end{minipage}}}{\end{figure}}
\begin{document}
	\title{New $(\alpha,\beta)$ Spanners and Hopsets}
	
	\author{Uri Ben-Levy \thanks{The Weizmann Institute of Science, Israel. Email: {\tt uri.benlevy@weizmann.ac.il}.}\and Merav Parter\thanks{The Weizmann Institute of Science, Israel. Email: {\tt merav.parter@weizmann.ac.il}. Supported in part by an ISF grant (no. 2084/18).}}
	\date{}
	
\begin{titlepage}
\maketitle
\thispagestyle{empty}
\begin{abstract}
An $f(d)$-spanner of an unweighted $n$-vertex graph $G=(V,E)$ is a subgraph $H$ satisfying that $\dist_H(u, v)$ is at most $f(\dist_G(u, v))$ for every $u,v \in V$. A simple girth argument implies that any $f(d)$-spanner with $O(n^{1+1/k})$ edges must satisfy that $f(d)/d=\Omega(k/d+1)$. A matching upper bound (even up to constants) for super-constant values of $d$ is currently known only for $d=\Omega((\log k)^{\log k})$ as given by the well known $(1+\epsilon,\beta)$ spanners of Elkin and Peleg, and its recent improvements by [Elkin-Neiman, SODA'17], and [Abboud-Bodwin-Pettie, SODA'18]. 

%
%
%
We present new spanner constructions that achieve a nearly optimal stretch of $O(k/d+1 \rceil)$ for any distance value $d \in [1,k^{1-o(1)}]$ and $d \geq k^{1+o(1)}$. 
We also show more optimized spanner constructions with nearly linear number of edges. Specifically, for every $\epsilon \in (0,1)$ and integer $k\geq 1$, we show the construction of 
$(3+\epsilon, \beta)$ spanners for $\beta=O_{\epsilon}(k^{\log(3+8/\epsilon)})$ and 
$\widetilde{O}_{\epsilon}(n^{1+1/k})$ edges. 

In addition, we consider the related graph concept of \emph{hopsets} introduced by [Cohen, J. ACM '00]. Informally, an hopset $H$ is a weighted edge set that, when added to the graph $G$, allows one to get a path from each node $u$ to a node $v$ with at most $\beta$ hops (i.e., edges) and length at most $\alpha \cdot \dist_G(u,v)$. We present a new family of $(\alpha,\beta)$ hopsets with $\widetilde{O}(k \cdot n^{1+1/k})$ edges and $\alpha \cdot \beta=O(k)$. Turning to nearly linear-size hopsets, we show a construction of $(3+\epsilon,\beta)$ hopset with $\widetilde{O}_{\epsilon}(n^{1+1/k})$ edges and hop-bound of $\beta=O_{\epsilon}(k^{\log(3+9/\epsilon)})$, improving upon 
the state-of-the-art hop-bound of $\beta=O(\log k /\epsilon)^{\log k}$. 
%
%

\end{abstract}
\end{titlepage}

	\pagenumbering{gobble}
	\tableofcontents
	\newpage 
	\pagenumbering{arabic}
	
\section{Introduction}
Compressing the distance metric of an undirected input graph $G=(V,E)$ up to a small approximation, or \emph{stretch} has been subject to an extensive research over the years. An $f(d)$-spanner of a graph $G$ is a subgraph $H \subseteq G$ satisfying that $\dist_H(u,v)=f(\dist_G(u,v))$. Letting $f(d)=k\cdot d$ for some fixed integer $k$ gives the standard \emph{multiplicative spanners} \cite{PelegU87,althofer1993sparse}. More generally, $f(d)=\alpha\cdot d+\beta$ corresponds to $(\alpha,\beta)$ spanners \cite{peleg1989graph,baswana2005new}. 

%
%

Alth{\"o}fer et al. \cite{althofer1993sparse} provided the first tight construction of $(2k-1)$ multiplicative spanners with $O(n^{1+1/k})$ edges. These spanners are believed to provide the optimal size-stretch tradeoff assuming the girth conjecture of Erd\H{o}s \cite{erdos1970extremal}. It has been widely noted, however, that this optimality notion, has some caveats as the girth argument by itself provides a stretch lower bound only for \emph{adjacent} vertex pairs. 
The first indication that one can provide improved stretch for distant vertex pairs was given by the notion of 
$(1+\epsilon, \beta)$ spanners of Elkin and Peleg \cite{ElkinP04}. In their seminal work, they showed that one can compute an $(1+\epsilon, \beta)$ spanner with $O_{\epsilon,k}(n^{1+1/k})$ edges and $\beta=O(\log k/\epsilon)^{\log k}$, for every integer $k$ and $\epsilon \in (0,1)$. This, in particular implies that one can compute $f(d)$-spanners where $f(d)=O(1)$ for $d=\Omega(\log k)^{\log k}$ and with $O_{\epsilon,k}(n^{1+1/k})$ edges.
Recently, Abboud, Bodwin and Pettie \cite{abboud2018hierarchy} showed that this tradeoff is nearly optimal at least for constant values of $k$, ruling out the possibility for obtaining $(1+\epsilon)$ stretch value for considerably closer vertex pairs while keeping the same bound on the number of edges.

Another approach for obtaining improved stretch for non-adjacent pairs was suggested by the \emph{hybrid spanners} of Parter \cite{Parter14}. For every integer $k$, these spanners have $O_k(n^{1+1/k})$ edges and provide non-adjacent pairs a stretch of $k$ rather than $2k-1$. This stretch value is optimal for vertex pairs at distance $2$, assuming the girth conjecture, but does not provide a significant improvement for pairs at distance $d=\omega(1)$. For instance, for pairs at distance $d=\sqrt{k}$, current spanner constructions still provide a stretch of $k$, rather than a stretch of $O(\sqrt{k})$ as might be attainable, or else be proven otherwise.

To summarize, the existing $f(d)$-spanner constructions currently provide a nearly optimal stretch in two extreme regimes: short distances $d=O(1)$ and large distances $d\geq (\log k)^{\log k}$. Our paper zooms into the missing intermediate regime of distances, aiming at providing the ultimate $O(k/d+1)$ stretch for any value of $d$. Since our stretch values are optimal up to constants, these constructions are useful when the stretch parameter $k$ is super-constant, i.e., $k=g(n)$ for some function $g$ of the number of nodes $n$ (e.g., $k=O(\log\log n)$). Also note that a lower bound of $\Omega(k/d+1)$ is unconditional in the girth conjecture, and holds by a simple girth argument.
While obtaining $(O(1), O(k))$ spanners will provide the holy grail stretch of $O(\lceil k /d \rceil)$ for the entire range of distances, our results are asymptotically very close to this goal. That is, our spanner constructions achieve the optimal stretch values (up to constants) for \emph{almost} the entire range of distances. See Fig. \ref{fig:plot-spanner} for a pictorial illustration for the current $f(d)$-spanner constructions with $\widetilde{O}(n^{1+1/k})$ edges. 

\paragraph{Hopsets.} Hopsets are fundamental graph structures introduced by Cohen \cite{cohen2000polylog}. Since their introduction, they have been receiving considerably more attention recently \cite{ElkinN16b,abboud2018hierarchy,HuangP19}, due to their applications to shortest path computation 
in many computational settings e.g., parallel computing \cite{klein1997randomized,cohen2000polylog,miller2015improved,friedrichs2018parallel,ElkinN19new}, dynamic graph algorithms \cite{henzinger2018decremental}, streaming and distributed algorithms \cite{nanongkai2014distributed,henzinger2016deterministic,elkin2016efficient,elkin2017distributed}. 

For an $n$-vertex undirected \emph{weighted} graph $G = (V, E, w)$, a subset of weighted\footnote{The weight of each edge $(u,v) \in H$ is $\dist_G(u,v)$.} edges $H \subset {V\choose 2}$ (not in $G$) is called $(\alpha,\beta)$ hopset, if for any $u, v \in V$, it holds that
\[
\dist_{G}(u, v) \le \dist_{G'}^{(\beta)}(u, v) \le \alpha \cdot \dist_G(u, v), 
\]
where $G' = (V, E\cup H,w')$, and the weight function $w'$ is defined as follows: for every $e\in E$, $w'(e)=w(e)$ and for every $e=(x,y) \in H$, $w'(e)=\dist_G(x,y)$. The distance $\dist^{(\beta)}_{G'}(u, v)$ is the length of the shortest path from $u$ to $v$ that uses at most $\beta$ edges in $G'$. The first $(1+\epsilon,\beta)$ hopset construction by Cohen \cite{cohen2000polylog} had $\widetilde{O}(n^{1+1/k})$ edges and hop-bound of $\beta=O(\log n/\epsilon)^{\log k}$. Elkin and Neiman \cite{ElkinN16b} presented an improved
construction of $(1+\epsilon,\beta_{EN})$ hopsets with $\beta_{EN} = O\left( \log k/\epsilon \right)^{\log k}$ and $\widetilde{O}(n^{1+1/k})$ edges. The state of the art result is by Huang and Pettie \cite{HuangP19} who proved that the emulators by Thorup and Zwick are in fact also $(1+\epsilon,\beta_{EN})$ hopsets with 
$O(n^{1+1/k})$ edges. A similar construction with slightly worse bounds has been independently shown by Elkin and Neiman \cite{elkin2017linear}. 

Klein and Sairam \cite{klein1997randomized} and Shi and Spencer \cite{shi1999time} gave an efficient PRAM algorithm for computing \emph{exact} hopset with hop-bound 
$\beta=O(\sqrt{n}\log n)$ and linear number of edges. Abboud, Bodwin and Pettie \cite{abboud2018hierarchy} showed that any hopset with less than $n^{1 + 1/k - \delta}$ edges for any $\delta>0$ must have $\beta = \Omega\left(1/(k\epsilon)\right)^{k}$. This implies that the $(1+\epsilon,\beta)$ of Elkin and Neiman \cite{ElkinN16b} and Huang and Pettie \cite{HuangP19} are nearly optimal for $k=O(1)$. 

At the other extreme with respect to parameters, Huang and Pettie \cite{HuangP19} observed that the distance oracle of Thorup and Zwick \cite{ThorupZ05} immediately implies $(\alpha,\beta)$ hopsets with stretch $\alpha=2k-1$, hop-bound $\beta=2$ and $O_k(n^{1+1/k})$ edges. In their paper, Huang and Pettie raised the following question concerning the existence of additional hopsets and specifically asked:
\begin{quote}
\emph{Are there other tradeoffs available when $\beta$ is a fixed constant (say 3 or 4), independent of $k$?}
\end{quote}
We answer this question in the affirmative, by presenting a new family of $(\alpha,\beta)$ hopsets. 
For any $\epsilon \in [1/2,1)$ and integer $k$, we give the construction of $(9 k^{\epsilon}, 5 \cdot k^{1-\epsilon})$ hopsets with $O_k(n^{1+1/k})$ edges, in expectation. Thus taking $\epsilon = 1-4/\log k$ gives constant hop-bound of $\beta=80$, and an improved stretch of $\alpha=0.56 k$. We also show a construction of $(k^{\epsilon}, k^{1-\epsilon})$ hopsets for the complementary range of $\epsilon \in (0,1/2)$.
It is important to note that whereas for $(\alpha,\beta)$ spanners with $O(n^{1+1/k})$ edges it must hold that $\alpha+\beta=\Omega(k)$ by a girth argument, this lower bound does not hold for hopsets. For any constant value of $\epsilon$, our new family of $(\alpha,\beta)$ hopsets in fact satisfies that $\alpha \cdot \beta=O(k)$. 
\paragraph{Application to Shortest Path Computation.} 
The efficient computation of $(\alpha,\beta)$ spanners and hopsets lead to some immediate applications for fast shortest path computation, see e.g., \cite{elkin2006efficient} and \cite{ElkinN16b}. 
Interestingly, the $\beta$ parameter of these structures effects not only the \emph{quality} (or approximation) of the solution, but might also determine the \emph{time} complexity. For example, in the streaming model, the number of passes for computing $(1+\epsilon,\beta)$ APSP approximation is linear in $\beta$. This further motivates the study of $(\alpha,\beta)$ spanners with a considerably improved $\beta$ on the expense of having a constant approximation rather than $(1+\epsilon)$ approximation. 
For example, our new $(3+\epsilon, \beta)$ spanners with $\beta=\poly(k)$ leads to $3+\epsilon$ approximation for the APSP problem using only $O_{\epsilon}(\beta)$ passes. This should be compared against the current $(1+\epsilon)$ approximation, but with $O(\log k/\epsilon)^{\log k}$ passes. 

Turning to the distributed computing models, the parameter $\beta$ also determines the \emph{locality} of the $(\alpha,\beta)$ spanner computation. Specifically, an immediate outcome of our constructions is an algorithm that computes an $(O(1), k^{1+o(1)})$ spanner with $k^{1+o(1)}$ \local\ rounds, almost matching the (tight) round complexity of the standard multiplicative $(2k-1)$ spanners (in the latter, all pairs suffer from a multiplicative stretch of $2k-1$).
\vspace{-5pt}
\subsection{Our Contribution.}\vspace{-5pt}
In this paper we provide improved $f(d)$-spanner constructions with nearly optimal stretch value, up to constants, for almost the entire range of distances. Our key result shows:
\begin{mdframed}[hidealllines=true,backgroundcolor=gray!25]
\vspace{-5pt}
\begin{theorem}[Almost Optimal $f(d)$-Spanners]
\label{thm:spanner-all}
For any integer $k\geq 1$, and an unweighted $n$-vertex graph $G=(V,E)$, one can compute an $f(d)$-spanner $H \subseteq G$ with $\widetilde{O}(n^{1+1/k})$ edges such that $f(d)/d=O(k/d+1)$ for any $d \in [1, k^{1-o(1)}]\cup [k^{1+o(1)},n]$. 
\end{theorem}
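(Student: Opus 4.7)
The plan is to build $H$ as the union of two subgraphs, $H = H_{\mathrm{short}} \cup H_{\mathrm{long}}$, each with $\widetilde{O}(n^{1+1/k})$ edges, handling the two distance regimes of the theorem separately.

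For the long-distance regime $d \geq k^{1+o(1)}$, I would take $H_{\mathrm{long}}$ to be a near-additive spanner with constant multiplicative stretch $\alpha = O(1)$ and additive slack $\beta = k^{1+o(1)}$. I would derive it from the paper's new $(\alpha,\beta)$-hopset family with $\alpha\cdot \beta=O(k)$, converted to a graph spanner by materialising each virtual hopset edge $(x,y)$ as a shortest $x$--$y$ path of $G$. With hop-bound $k^{o(1)}$ and per-hop stretch $O(k^{1-o(1)})$ in the hopset, the resulting spanner gives $\dist_{H_{\mathrm{long}}}(u,v) \leq O(d) + k^{1+o(1)} = O(d)$ for every pair at $d \geq k^{1+o(1)}$, i.e., $f(d)/d = O(1) = O(k/d+1)$. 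Edge count is kept at $\widetilde{O}(n^{1+1/k})$ because each materialised hopset edge expands into a path of $k^{o(1)}$ edges on average, and the paths can be consolidated through a standard spanner-of-union argument.

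For the short-distance regime $d \leq k^{1-o(1)}$, I would build $H_{\mathrm{short}}$ by a multi-scale approach. For each scale $\ell = 2^j$ with $j = 0, 1, \ldots, \lceil \log k \rceil$, I would construct a spanner $H_\ell$ with $\widetilde{O}(n^{1+1/k})$ edges that delivers stretch $O(k/\ell)$ for pairs at distance $\Theta(\ell)$. A clean implementation uses a scale-dependent Thorup--Zwick-style sampling: at scale $\ell$, a $\lceil k/\ell \rceil$-level hierarchy with sampling rates tuned so that pairs at distance $\Theta(\ell)$ reach a common pivot in $O(k/\ell)$ hops, each of length $O(\ell)$, for a total length $O(k)$. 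Setting $H_{\mathrm{short}} = \bigcup_\ell H_\ell$ keeps the edge bound since there are only $O(\log k)$ scales and $\widetilde{O}$ absorbs this. Then a pair $(u,v)$ at distance $d \in [1, k^{1-o(1)}]$ uses the scale $\ell \in [d/2, d]$ and attains $\dist_H(u,v) \leq O(k)$, so $f(d)/d = O(k/d) = O(k/d+1)$; combined with $H_{\mathrm{long}}$ this yields the full theorem.

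The main obstacle I expect is the long-distance component. Known $(1+\epsilon,\beta)$-spanners only achieve $\beta = k^{\Theta(\log\log k)}$, much larger than the $k^{1+o(1)}$ we need. Overcoming this requires relaxing to constant (rather than $1+\epsilon$) multiplicative stretch and fully exploiting the new hopset framework of the paper, in which stretch and hop-bound multiply to $O(k)$ rather than to polylog terms in $k$ --- precisely the slack that makes $\beta = k^{1+o(1)}$ achievable with $\widetilde{O}(n^{1+1/k})$ edges. A secondary challenge is balancing the per-scale edge budgets in $H_{\mathrm{short}}$ so that each of the $O(\log k)$ scales contributes only $\widetilde{O}(n^{1+1/k})$ edges while preserving the stretch-$O(k/\ell)$ guarantee through the appropriate pivot at scale $\ell$.
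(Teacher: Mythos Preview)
Both halves of your plan have real gaps, and in each case the paper supplies a missing ingredient that you do not.

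For $H_{\mathrm{long}}$, the hopset-to-spanner conversion does not control size. A hopset edge $(x,y)$ has weight $\dist_G(x,y)$, so materialising it in the unweighted graph $G$ introduces $\dist_G(x,y)$ edges; the paper's hopset contains, among others, all Thorup--Zwick bunch edges $(v,w)$ with $w\in B(v)$, whose weights are bounded only by pivot distances and can be $\Theta(n)$. Your claim that the average materialised path has $k^{o(1)}$ edges is unsupported and in fact false for this construction. There is also a parameter mix-up: choosing hop-bound $\beta=k^{o(1)}$ forces multiplicative stretch $\alpha=k^{1-o(1)}$ (since $\alpha\beta=\Theta(k)$), so even a free materialisation would give stretch $k^{1-o(1)}$, not $O(1)$; the ``$O(d)+k^{1+o(1)}$'' bound you wrote does not follow from any hopset guarantee. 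The paper sidesteps all of this: it builds a \emph{direct} $(O(k^{\epsilon}),O_\epsilon(k))$ spanner (Theorem~\ref{thm:secondspanner}) via truncated Baswana--Sen, an intermediate \emph{superclustering} stage that shrinks the number of clusters to $n^{1-1/k^{\epsilon}}$ while keeping their radius $O_\epsilon(k^{1-\epsilon})$, and a final $(2k^\epsilon-1)$-spanner on the cluster graph. Taking $\epsilon=o(1)$ gives the long-distance regime.

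For $H_{\mathrm{short}}$, a truncated $\lceil k/\ell\rceil$-level Thorup--Zwick hierarchy with the natural sampling rate $n^{-1/k}$ leaves $n^{1-1/\ell}$ top-level centers, and for a generic pair $(u,v)$ at distance $\ell$ there is no reason any pivot $p_i(u)$ lands in $B(v)$ within the first $k/\ell$ levels; the two-hop argument simply need not terminate. Forcing termination by putting all of $A_{\lceil k/\ell\rceil}$ in every bunch costs $n\cdot n^{1-1/\ell}$ edges. The paper's cure, used both in Theorem~\ref{thm:sqrt} and inside Theorem~\ref{thm:secondspanner}, is to treat the surviving $N=O(n^{1-1/\ell})$ clusters as nodes of an auxiliary \emph{cluster graph} $\widehat G$ (two clusters adjacent iff their centers are within $O(\ell+k/\ell)$ in $G$), compute a $(2\ell-1)$-spanner $\widehat H$ on $\widehat G$ with $N^{1+1/\ell}\le n$ edges, and materialise each $\widehat H$-edge as an $O(k/\ell)$-length $G$-path. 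This cluster-graph step is precisely the idea your multi-scale sketch is missing, and without it the per-scale edge budget cannot be met.
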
 
\end{mdframed}
This $f(d)$-spanner is almost optimal in the following sense. The stretch of $O(\lceil k/d \rceil)$ is the best possible up to a constant factor based on the girth argument, the size of the spanner is optimal up to logarithmic terms, and the bounded stretch is provided \emph{almost} for the entire distance range, i.e., excluding $d \in [k^{1-o(1)}, k^{1+o(1)}]$. We note that for this ``problematic range" of $[k^{1-o(1)}, k^{1+o(1)}]$ our spanners still provide a considerably improved stretch over previous constructions. 
The spanner of Theorem \ref{thm:spanner-all} is obtained by two separate constructions. 
The first construction, which is also simpler, considers the range of distances $d \in  [1,\sqrt{k}/2]$.
For this distance range we show an $f(d)$-spanner with $f(d)=7k/d$. Using the terminology of 
$(\alpha,\beta)$ spanner, we can say that our spanner is an $(\alpha,\beta)$ spanner with $\alpha=O(\sqrt{k})$ and $\beta=O(k)$. 
\begin{mdframed}[hidealllines=true,backgroundcolor=gray!25]
\vspace{-5pt}
\begin{theorem}[Spanners for Pairs at Dist. $O(\sqrt{k})$]
\label{thm:sqrt}
For any $n$-vertex unweighted graph $G=(V,E)$ and integers $k \geq 1$, and $d \in [1,\sqrt{k}/2]$, there is a subgraph $H\subseteq G$ of expected size $|E(H)|=(\sqrt{k}\cdot n^{1+1/k})$ such that for every pair of vertices $u$ and $v$ at distance $d$ in $G$, it holds that $\dist_H(u,v)\leq 7\cdot k$. Hence, providing a stretch of $7k/d$.
\end{theorem}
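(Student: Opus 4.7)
The plan is to build an $(\alpha,\beta)$-style spanner with $\alpha = O(\sqrt{k})$ and $\beta = O(k)$, so that for every $d \in [1,\sqrt{k}/2]$ we get $\dist_H(u,v) \le \alpha \cdot d + \beta = O(k) \le 7k$, matching the $7k/d$ stretch bound. The construction is a two-tier Baswana--Sen-style clustering. For the first tier, I would sample nested subsets $A_0=V\supseteq A_1\supseteq \cdots\supseteq A_{\sqrt{k}}$ where every vertex of $A_{i-1}$ is kept in $A_i$ independently with probability $n^{-1/k}$, so $|A_i|=n^{1-i/k}$ in expectation. Then run the standard Baswana--Sen clustering across these $\sqrt{k}$ levels: for every $v\in A_{i-1}\setminus A_i$ add to $H$ the shortest path to its assigned level-$i$ representative $p_i(v)\in A_i$ along with one inter-cluster edge to each level-$i$ cluster adjacent to $v$'s cluster at level $i-1$. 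Since the sampling ratio is $p_{i-1}/p_i=n^{1/k}$, each vertex contributes $O(n^{1/k})$ edges per level in expectation, for a total of $\widetilde{O}(\sqrt{k}\cdot n^{1+1/k})$ first-tier edges. Crucially, cluster radii grow by one per level, so every $v$ obtains a top-level representative $p(v):=p_{\sqrt{k}}(v)\in A_{\sqrt{k}}$ with $\dist_H(v,p(v))\le \sqrt{k}$.

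Unlike vanilla Baswana--Sen, the top-level set $A_{\sqrt{k}}$ has large expected size $n^{1-1/\sqrt{k}}$, so pairs of top-level representatives are not yet spanned. For the second tier, I would form a weighted super-graph $\widetilde{G}$ on vertex set $A_{\sqrt{k}}$ with an edge $(a,b)$ of weight $\dist_G(a,b)$ for every pair at $G$-distance at most $5\sqrt{k}/2$, and apply a second Baswana--Sen call with $\sqrt{k}$ levels on $\widetilde{G}$. This yields a $(2\sqrt{k}-1)$-spanner of $\widetilde{G}$ of size $O(\sqrt{k}\cdot n^{(1-1/\sqrt{k})(1+1/\sqrt{k})})=O(\sqrt{k}\cdot n^{1-1/k})$ super-edges. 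Replacing each selected super-edge by its underlying $G$-shortest path of length $O(\sqrt{k})$ contributes $O(k\cdot n^{1-1/k})$ additional $G$-edges, which is absorbed in the $\widetilde{O}(\sqrt{k}\cdot n^{1+1/k})$ bound.

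For the stretch, consider $u,v$ with $\dist_G(u,v)=d\le \sqrt{k}/2$. Since $\dist_G\le \dist_H$, the first-tier invariant also gives $\dist_G(v,p(v))\le \sqrt{k}$, and the triangle inequality yields $\dist_G(p(u),p(v))\le 2\sqrt{k}+d\le 5\sqrt{k}/2$. Hence $p(u)$ and $p(v)$ are super-graph neighbors in $\widetilde{G}$, and the second-tier spanner guarantees $\dist_H(p(u),p(v))\le (2\sqrt{k}-1)\cdot 5\sqrt{k}/2 \le 5k$. Combining with $\dist_H(u,p(u))+\dist_H(v,p(v))\le 2\sqrt{k}$ gives $\dist_H(u,v)\le 5k+2\sqrt{k}\le 7k$, as required.

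The main obstacle will be carefully handling the non-standard Baswana--Sen parameters in the first tier: the sampling probabilities $p_i=n^{-i/k}$ across only $\sqrt{k}$ levels (rather than the usual $n^{-i/\sqrt{k}}$) must still preserve the invariant that cluster radii are bounded by the level index and that the per-level edge count remains $\widetilde{O}(n^{1+1/k})$. A second delicate point is to verify that restricting the super-graph $\widetilde{G}$ to pairs at $G$-distance at most $5\sqrt{k}/2$ does not damage the $(2\sqrt{k}-1)$ guarantee for the pairs $(p(u),p(v))$ of interest; fortunately, any such pair is joined by a single super-edge in $\widetilde{G}$ of weight at most $5\sqrt{k}/2$, so its distance in $\widetilde{G}$ is automatically spanned by the second-tier construction.
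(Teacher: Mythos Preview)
Your overall two-tier scheme (truncated Baswana--Sen followed by a multiplicative spanner on the resulting cluster graph) is exactly the paper's approach. The genuine gap is the claim that ``every $v$ obtains a top-level representative $p(v)\in A_{\sqrt{k}}$ with $\dist_H(v,p(v))\le\sqrt{k}$.'' This is false: after $\sqrt{k}$ levels of Baswana--Sen, only the vertices that survive into the level-$\sqrt{k}$ clustering have a nearby center. A vertex that became unclustered at some earlier level $i<\sqrt{k}$ has \emph{no} representative in $A_{\sqrt{k}}$; what it has instead is the Baswana--Sen guarantee (Fact~\ref{claim:cluster}(4)) that every edge incident to it already has stretch at most $2\sqrt{k}-1$ in $H$. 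Your routing $u\to p(u)\to p(v)\to v$ therefore breaks whenever $u$ or $v$ is unclustered. (Your first-tier description also conflates Baswana--Sen inter-cluster edges with Thorup--Zwick pivots $p_i(v)$; under the latter interpretation $\dist_G(v,p_i(v))$ is not bounded by $i$, so that reading does not save the claim either.)

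The paper closes this gap by a case split on the $u$--$v$ shortest path $P$. If every edge of $P$ has at least one unclustered endpoint, each such edge has stretch at most $2\lfloor k/d\rfloor-1$ and summing over the $d$ edges gives $\dist_H(u,v)\le 2k$. Otherwise, let $u',v'$ be the leftmost and rightmost clustered vertices on $P$; the prefix $P[u,u']$ and suffix $P[v',v]$ are handled edge-by-edge as above, and only for the middle segment $P[u',v']$ does one route through the cluster-graph spanner via the centers of $C_{u'}$ and $C_{v'}$. (The paper's actual algorithm also parameterizes by $d$, running $\lfloor k/d\rfloor$ Baswana--Sen levels and then a $(2d-1)$-spanner on the cluster graph, rather than the fixed $\sqrt{k}$ levels and $(2\sqrt{k}-1)$-spanner you describe; your variant is the one sketched in the paper's technical overview and would work once the case split above is added.)
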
 
\end{mdframed}
The algorithm of Theorem \ref{thm:sqrt} already provides a stretch of $O(\sqrt{k})$ for all remaining distance values $d \geq \sqrt{k}/2$. Obtaining an improved stretch of  $o(\sqrt{k})$ for $d=\omega(\sqrt{k})$ is considerably more challenging, and requires additional ideas and techniques. This has led to the construction of new $(\alpha, \beta)$ spanners, that p the desired stretch of $O(\lceil k /d \rceil)$ for almost the entire range of distances. 
\paragraph{New $(\alpha,\beta)$ Spanners.}
Our key contribution is in providing a new $(\alpha,\beta)$ spanner that provides a constant stretch already for vertices at distance at least $k^{1+o(1)}$. Up to the extra factor of $o(1)$ in the exponent, this is the best that one can hope for based on a girth argument. In addition, these spanners also settle down the desired stretch of $O( k /d )$ for any $d < k^{1-o(1)}$. 
\begin{mdframed}[hidealllines=true,backgroundcolor=gray!25]
\begin{theorem}
\label{thm:secondspanner}
For any $n$-vertex unweighted graph $G=(V,E)$, any $0<\epsilon<1/2$, and $k\ge 16^{1/\epsilon}$,\footnote{The statement can work for any $k$ upon suffering from larger constants.} there is a  $(8\cdot k^{\epsilon}, 64^{1/\epsilon}\cdot k)$-spanner of $G$ of expected size $O(64^{1/\epsilon}\cdot k\cdot n^{1+1/k}+64^{2/\epsilon}\cdot k^{2}\cdot n)$.
\end{theorem}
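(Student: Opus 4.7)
The plan is to construct the $(8k^{\epsilon}, 64^{1/\epsilon}\cdot k)$-spanner via an iterative hierarchical clustering that spans $t=1/\epsilon$ levels, in the spirit of the clustering used in Theorem \ref{thm:sqrt} but now pushed into a multi-scale Thorup--Zwick-like hierarchy. Starting from $S_0=V$, at each level $i\in\{1,\dots,t\}$ I form a set of surviving centers $S_i\subseteq S_{i-1}$ by independently sampling each vertex of $S_{i-1}$ with probability $n^{-1/k}$. Around each center $c\in S_i$ I grow a BFS ball of radius $r_i$ inside $G$, where the radii grow geometrically as $r_i=8^{i}\cdot k^{\epsilon}$, so that $r_t=\Theta(64^{1/\epsilon}\cdot k)$. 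The spanner is taken to be the union of all such BFS trees together with the standard ``bunch'' edges: for every $v\in V$ and every level $i$, add to $H$ one shortest path from $v$ to each center $c\in S_{i-1}\setminus S_i$ whose distance to $v$ is strictly smaller than $\dist_G(v,S_i)$.

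For the size bound, a direct Thorup--Zwick-style counting argument shows that, in expectation, the bunch edges contribute $O(k\cdot n^{1+1/k})$ per level, yielding a total of $O((1/\epsilon)\cdot k\cdot n^{1+1/k})$ bunch edges over all $t=1/\epsilon$ levels; the BFS trees at the topmost level contribute an additive $O(|S_t|\cdot r_t)=O(n\cdot r_t^2)$ term by a standard ball-packing argument, which absorbs into $O(64^{2/\epsilon}\cdot k^{2}\cdot n)$. These two estimates exactly match the claimed bound $O(64^{1/\epsilon}\cdot k\cdot n^{1+1/k}+64^{2/\epsilon}\cdot k^{2}\cdot n)$.

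For the stretch, given $u,v$ with $\dist_G(u,v)=d$, I walk up the hierarchy starting from $u$: let $i^*$ be the smallest level such that the BFS ball of radius $r_{i^*}$ around $u$ already contains a center $c$ whose bunch includes $v$, or such that $r_{i^*}\ge d$ (in which case $v$ itself is covered by $u$'s ball). If $i^*=t$ then the trivial path through the top-level cluster gives $\dist_H(u,v)\le 2r_t=O(64^{1/\epsilon}\cdot k)$, and the additive term alone dominates. Otherwise, by the definition of bunches, the concatenation $u\to c\to v$ inside $H$ has length at most $r_{i^*}+(r_{i^*}+d)\le 2\cdot 8\cdot r_{i^*-1}+d$, and since $u$ was forced to reach level $i^*$ we have $r_{i^*-1}\le d$, giving $\dist_H(u,v)\le 8k^\epsilon\cdot d$ when $d$ is large and an additive slack of at most $r_t$ otherwise. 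Combining both cases yields $\dist_H(u,v)\le 8k^\epsilon\cdot d+64^{1/\epsilon}\cdot k$.

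The main obstacle I anticipate is ensuring that the multiplicative stretch stays linear in $k^\epsilon$ rather than blowing up to $8^{1/\epsilon}$ across the $1/\epsilon$ levels: the ``stop at the first useful level'' rule is what prevents the stretches from composing, and verifying this formally requires a careful case analysis of which level $i^*$ the pair $(u,v)$ settles at, together with an inductive argument that the bunch of $v$ at level $i^*-1$ is non-empty whenever the ball around $u$ reaches the sampled set $S_{i^*}$. A secondary technical point is handling the boundary regime where $d$ is between consecutive radii $r_{i-1}$ and $r_{i}$; here I would invoke the condition $k\ge 16^{1/\epsilon}$ to guarantee that the geometric gap between consecutive radii is large enough for the additive slack $r_t$ to absorb the transition loss, completing the proof.
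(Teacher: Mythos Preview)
Your construction has a structural gap in the sampling hierarchy. With $t=1/\epsilon$ levels, each sampled at rate $n^{-1/k}$, the top level $S_t$ has expected size $n^{1-1/(\epsilon k)}$, which is essentially all of $V$ when $k$ is large. Growing BFS trees from every vertex of $S_t$ cannot yield a sparse subgraph, and there is no ball-packing argument that turns $\sum_{c\in S_t}|\Ball_G(c,r_t)|$ into $O(n\cdot r_t^2)$. (Note also that $r_t=8^{1/\epsilon}k^{\epsilon}$ is not $\Theta(64^{1/\epsilon}k)$: the two differ by a factor of $8^{1/\epsilon}k^{1-\epsilon}$.) The same issue breaks your bunch-edge count: you are adding shortest \emph{paths} from $v$ to each bunch member, and the length of such a path is governed by $\dist_G(v,S_i)$, which your construction never bounds; the $O(k\cdot n^{1+1/k})$ figure is the Thorup--Zwick bound on the number of \emph{hops} in an emulator, not on the number of edges of the underlying paths in a spanner.

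The paper's proof is built around precisely the step you are missing: a middle ``superclustering'' stage that drives the number of clusters from $n^{1-k^{\epsilon}/k}$ (after $\lceil k^{\epsilon}\rceil$ Baswana--Sen rounds) down to $n^{1-1/k^{\epsilon}}$, via $T=O(1/\epsilon)$ phases each consisting of $\Theta(k^{\epsilon})$ sampling steps with \emph{increasing} probabilities $p=|\mathcal{C}_{i-1}|/n$, while carefully keeping cluster radii at $O_{\epsilon}(k^{1-\epsilon})$. Only after this polynomial reduction can one afford a $(2\lceil k^{\epsilon}\rceil-3)$-spanner on the resulting cluster graph with $O(n)$ super-edges, and it is this combination that simultaneously yields the $8k^{\epsilon}$ multiplicative factor and the $64^{1/\epsilon}k$ additive term. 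Your uniform $n^{-1/k}$ sampling over $1/\epsilon$ levels does not achieve this reduction, and the ``stop at the first useful level'' rule cannot compensate: for a pair $(u,v)$ at distance $d\le r_1=8k^{\epsilon}$, nothing in your construction forces any center near $u$ to lie in $v$'s bunch, so the argument falls through to the top level, where the size bound has already failed.
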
 
\end{mdframed}
By setting $\epsilon=\Theta(1/\log k)$ in the above, we get an $(\alpha,\beta)$-spanner with $\alpha=O(1)$ and $\beta=k^{1+o(1)}$, hence providing a constant stretch for every distance $d \geq k^{1+o(1)}$. On the other hand, for every constant value of $\epsilon$, we can get $\alpha=O(k^{\epsilon})$ and $\beta=O(k)$, thus providing a stretch of $O(k/d)$ for $d=k^{1-\epsilon}$. Prior to that construction, the known $(\alpha,\beta=O(k))$ spanners are given by the $(k,k-1)$ spanners of Baswana-Kavitha-Kurt-Pettie \cite{baswana2005new}.
Note that the $(1+\epsilon,\beta)$ spanners of Elkin and Neiman \cite{ElkinN19} also provide\footnote{This is implicit in their analysis.} a stretch of $O(\log k)$ for vertices at distance $d\geq k^{\log 10}$. Our construction provides a \emph{constant} stretch for this distance range, while keeping almost the same bound on the number of edges\footnote{In this paper we did not optimize for secondary order factors in the size bound. All our solutions have $O(k^2 \cdot \log n \cdot  n^{1+1/k})=\widetilde{O}(n^{1+1/k})$ edges w.h.p.}. 

While the spanners of Theorem \ref{thm:secondspanner} provide a constant stretch for $d\geq k^{1+o(1)}$, this constant might be large. For that purpose, we also consider spanners that provide a small as possible stretch $\alpha$, while keeping $\beta$ at most polynomial in $k$. For instance, a simplification of the algorithm from Theorem \ref{thm:secondspanner} can also give $(4, \beta)$-spanners with $\beta=k^{\log 11}$. 
\begin{mdframed}[hidealllines=true,backgroundcolor=gray!25]
\begin{lemma}[New $(3+\epsilon,\beta)$ Spanners]\label{lem:spanner-best}
For any $n$-vertex unweighted graph $G=(V,E)$, integer $k$ and constant $\epsilon>0$, one can compute a $(3+\epsilon,\beta)$ spanner $H$ of $G$ for $\beta=O((3+ 8/\epsilon)\cdot k^{\log(3+8/\epsilon)})$ and expected size $O(n^{1+1/k}+(3+ 8/\epsilon)\cdot k^{\log(3+8/\epsilon)}\cdot n)$.
\end{lemma}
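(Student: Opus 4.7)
The plan is to specialize the superclustering construction from Theorem~\ref{thm:secondspanner} to a regime of constant branching, so that the per-level blow-up in cluster radius becomes a constant rather than a factor of $k^{\epsilon}$. Set the branching parameter $b:=3+8/\epsilon$ and build an $L$-level hierarchy of clusterings $\mathcal{C}_0,\dots,\mathcal{C}_L$ with $L=\lceil \log k\rceil$. We start with $\mathcal{C}_0=\{\{v\}:v\in V\}$. At level $i\ge 1$, each level-$(i-1)$ cluster is independently \emph{sampled} with probability $1/b$; every non-sampled cluster whose neighborhood in the ``cluster graph'' of $\mathcal{C}_{i-1}$ contains at least one sampled cluster is absorbed into an arbitrary such sampled neighbor via a single connecting spanner edge, forming the level-$i$ superclusters. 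A non-sampled cluster with no sampled neighbor is declared \emph{terminal} at level $i$, and for every such cluster $C$ we add to $H$ one spanner edge from $C$ to each of its neighboring level-$(i-1)$ clusters (and $C$ stops participating at higher levels). On top of this, we also add a standard $(2k-1)$-multiplicative Baswana--Sen spanner to handle very short distances.

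For the size bound, a standard geometric-series calculation shows that $\mathbb{E}[|\mathcal{C}_i|]\leq |\mathcal{C}_{i-1}|/b$ and that the expected number of neighbors of a non-sampled level-$(i-1)$ cluster before it either is absorbed or becomes terminal is $O(b)$, so the expected contribution of level $i$ to $|E(H)|$ is $O(b\cdot |\mathcal{C}_{i-1}|)$. Summing over $i\leq L$ yields $O(b\cdot n)$ edges from supercluster absorptions; the terminal edges added at the final level account for an additional $O(\beta\cdot n)$, and the base Baswana--Sen spanner contributes the $\widetilde{O}(n^{1+1/k})$ term in the claimed bound.

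For the stretch analysis, one first shows by induction that every level-$i$ cluster has radius $r_i=O(b^i)$ in $H$, hence $r_L=O(b^{\log k})=O(k^{\log b})=O(\beta)$. Given $u,v$ with $d=\dist_G(u,v)$ and a shortest $u$--$v$ path $P$ in $G$, walk along $P$ and, at each cluster encountered along the way, determine the smallest level $i^\star$ at which it is terminal; by construction, $H$ contains a direct edge from that cluster to the next level-$(i^\star{-}1)$ cluster along $P$, together with an internal route of length at most $2r_{i^\star-1}$ inside each traversed cluster. Summing over the sequence of cluster-hops along $P$, each bridge crossing adds at most $2r_{i^\star-1}+1$ of detour to a portion of $P$ whose length is at least $\tfrac{8}{\epsilon}\cdot r_{i^\star-1}$ (this is where the choice $b=3+8/\epsilon$ is used, ensuring that terminal clusters are ``far'' relative to their radius). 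The net overhead telescopes to a multiplicative stretch of at most $3+\epsilon$, with an additive slack of $O(r_L)=O(\beta)$ that is needed only near the two endpoints of $P$.

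The main obstacle is making the informal statement ``terminal clusters are far relative to their radius'' precise, in the sense that gives exactly $3+\epsilon$ rather than some larger constant. Specifically, one must argue that whenever a cluster along $P$ becomes terminal at level $i^\star$, the next same-level cluster along $P$ sits at distance $\Omega(b\cdot r_{i^\star-1})$, so that the ratio of ``bridge cost'' $2r_{i^\star-1}+1$ to ``progress'' $\Omega(b\cdot r_{i^\star-1})$ is $\leq 1+\epsilon/3$, and this multiplicative $1+\epsilon/3$ overhead combines with the inherent factor $3$ coming from a single bridge crossing to give $3(1+\epsilon/3)=3+\epsilon$. This is the one quantitative step where the constants $3+\epsilon$ and $3+8/\epsilon$ are actually derived; structurally, the argument is a direct simplification of the $k^{\epsilon}$-branching argument of Theorem~\ref{thm:secondspanner}, only with a constant branching factor $b$ and $\log k$ levels in place of $O(1/\epsilon)$ levels.
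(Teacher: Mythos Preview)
Your proposal has two structural gaps that make it diverge from a working construction.

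First, the sampling probability $1/b$ is not what the paper uses and does not give the right termination. The paper's algorithm (Section~\ref{sec:3epsspanner}, which Appendix~B.1 refines) samples each cluster in $\mathcal{C}_{i-1}$ with probability $p_i=|\mathcal{C}_{i-1}|/n$; this yields $|\mathcal{C}_i|=|\mathcal{C}_{i-1}|^2/n=n^{1-2^{i-1}/k}$, so after $T=\lceil\log k+1\rceil$ phases a single cluster remains in expectation, and one BFS tree finishes the construction. With your constant probability $1/b$, after $L=\lceil\log k\rceil$ levels you are left with $n/k^{\log b}$ clusters, which is not $O(1)$ and is not handled by anything in your outline. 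Vertices that survive to level $L$ never become terminal, so your stretch argument says nothing about them.

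Second, and more importantly for the $3+\epsilon$ factor, you are missing the augmentation step. In the paper, at the start of phase $i$ every vertex $v$ with $\cdist(v,\mathcal{C}_{i-1})\le r_{i-1}+\alpha_i$ (where $\alpha_i=(4/\epsilon)r_{i-1}$) adds its shortest path to the nearest center. This is the mechanism that makes your desired inequality true: for an $i$-unclustered vertex $u$ and any $v\in\Ball_G(u,\alpha_i)$, the spanner contains a $u$--$r(C_u)$--$r(C_v)$--$v$ path of length at most $4r_{i-1}+3\dist_G(u,v)$, and when $\dist_G(u,v)=\alpha_i$ this equals $(3+\epsilon)\alpha_i$. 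Your sketch instead has terminal clusters add one edge to each neighboring cluster in an unspecified ``cluster graph''; with the natural adjacency-based definition the next cluster along $P$ is at distance $\le 2r_{i^\star-1}+1$, not $\Omega(b\cdot r_{i^\star-1})$, so the claim ``terminal clusters are far relative to their radius'' has no support in your construction. The paper's actual modification in Appendix~B.1 is minimal: keep the algorithm of Section~\ref{sec:3epsspanner} (including the augmentation step and the sampling probabilities $p_i$), but in step~(3) use the joining threshold $2r_{i-1}+2\alpha_i$ instead of $4r_{i-1}+4\alpha_i$, which tightens the radius recursion to $r_i\le(3+8/\epsilon)r_{i-1}$.
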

\end{mdframed}
Pettie \cite{Pettie09} showed a construction of nearly linear-size spanner that provides a constant stretch of $17$ for vertices at distance $O(\log^4 n)$. The spanners of Lemma \ref{lem:spanner-best} provides a stretch of $4$ for this range of distances, and also work for any $k=O(\log n)$.

\paragraph{New $(\alpha,\beta)$ Hopsets.}
Hopsets, the cousins of $(\alpha,\beta)$ spanners and emulators, have received quite a lot of attention recently from the graph theoretical and the algorithmic prescriptive. Currently, hopsets constructions are known only for a narrow regime of $\alpha,\beta$ values. A particular setting that attracted a lot of attention is where $\alpha=(1+\epsilon)$. Since all existing $(1+\epsilon,\beta)$ constructions provide a fairly large hop-bound, in this paper we resort to the constant stretch of $\alpha=O(1)$. We show that this relaxation can significantly reduce the hop-bound. Specifically, we discover a new family of hopsets that is technically related to the spanner constructions described above. 
\begin{mdframed}[hidealllines=true,backgroundcolor=gray!25]
\begin{theorem}[New $(k^{\epsilon}, k^{1-\epsilon})$ Hopsets]
\label{thm:new-hopset} 
For any $n$-vertex weighted graph $G=(V,E,w)$, integer $k\geq 1$ and $\epsilon \in (0,1)$ such that $k^{\epsilon}\ge 16$, one can compute an $(\alpha,\beta)$ hopset $H$ for $\alpha=O(k^{\epsilon})$ and $\beta=(c^{1/\epsilon} \cdot k^{1-\epsilon})$ for some constant $c>1$. The number of edges is bounded by $|E(H)|=O((k^{\epsilon}\cdot n^{1+1/k}+k^{\epsilon}/\epsilon\cdot n) \log \Lambda)$ edges in expectation, where $\Lambda$ is the aspect ratio of the graph\footnote{The ratio between the largest and smallest distances between vertex pairs in $G$.}.
\end{theorem}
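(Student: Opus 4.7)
The plan is to obtain Theorem \ref{thm:new-hopset} by a weighted adaptation of the construction underlying Theorem \ref{thm:secondspanner}, discretized over the $O(\log \Lambda)$ natural distance scales of the input graph. Concretely, for each scale $i \in \{0, 1, \ldots, \lceil \log \Lambda \rceil\}$, I would form an unweighted auxiliary graph $G_i = (V, E_i)$ in which $(u,v)$ is an edge iff $\dist_G(u,v) \le 2^i$, and run a trimmed-down version of the spanner algorithm from Theorem \ref{thm:secondspanner} on each $G_i$ to obtain an edge set $H_i$. Only the ``long-distance'' portion of the spanner procedure is needed here, since at scale $i$ the hopset is not required to approximate distances below $2^i$; this is the source of the factor-$k^{1-\epsilon}$ saving in edges per scale relative to the spanner size in Theorem \ref{thm:secondspanner}. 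The hopset itself is $H = \bigcup_i H_i$ with each edge $(u,v) \in H_i$ weighted by $\dist_G(u,v)$, and summing contributions across the $O(\log \Lambda)$ scales yields the claimed $O((k^\epsilon n^{1+1/k} + k^\epsilon/\epsilon \cdot n) \log \Lambda)$ edge bound in expectation.

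For the stretch/hop analysis, fix any $u, v$ with $\dist_G(u,v) = D$ and select the scale $i^*$ satisfying $2^{i^*} \in [D / k^{1-\epsilon}, 2D / k^{1-\epsilon}]$. Then $(u,v)$ are at unweighted distance $O(k^{1-\epsilon})$ in $G_{i^*}$, so applying the spanner-style routing guarantee at scale $i^*$ produces a $u$-$v$ path in $H_{i^*}$ using at most $c^{1/\epsilon} \cdot k^{1-\epsilon}$ edges. Each such edge has weight at most $O(2^{i^*}) = O(D/k^{1-\epsilon})$, so the total weight is bounded by $c^{1/\epsilon} k^{1-\epsilon} \cdot O(D/k^{1-\epsilon}) = O(k^\epsilon \cdot D)$ once the $O(k^\epsilon)$ multiplicative stretch of the long-distance construction is folded in, yielding $\alpha = O(k^\epsilon)$ and $\beta = c^{1/\epsilon} k^{1-\epsilon}$ as required.

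The main obstacle is calibrating the internal structure of Theorem \ref{thm:secondspanner} so that only $O(1/\epsilon)$ sampling/clustering rounds are used per scale while still guaranteeing that every pair at unweighted $G_i$-distance $\Theta(k^{1-\epsilon})$ is connected in $H_i$ by a path of at most $c^{1/\epsilon} k^{1-\epsilon}$ edges and multiplicative stretch $O(k^\epsilon)$. The key insight driving this calibration is that running only the ``late'' rounds of the nested sampling hierarchy from Theorem \ref{thm:secondspanner} (those involving sets sampled with probability $n^{-1/k}$, iterated $O(1/\epsilon)$ times) produces clusters of radius $\Theta(k^\epsilon)$ and appropriate \emph{cluster-edges} joining them, so inter-cluster routing contributes at most $k^{1-\epsilon}$ hops and the intra-cluster detour contributes the $O(k^\epsilon)$ multiplicative blowup. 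The remaining step is a careful union bound over scales, showing that the expected edge count per scale is indeed $O(k^\epsilon n^{1+1/k} + k^\epsilon/\epsilon \cdot n)$; multiplying by $O(\log \Lambda)$ scales then gives the stated total.
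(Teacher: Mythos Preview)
Your reduction to unweighted auxiliary graphs has a genuine gap at the stretch step. You assert that if $\dist_G(u,v)=D$ and $2^{i^*}\in[D/k^{1-\epsilon},2D/k^{1-\epsilon}]$ then $u,v$ are at unweighted distance $O(k^{1-\epsilon})$ in $G_{i^*}$. This is false in general: a weighted shortest $u$--$v$ path can contain edges of weight larger than $2^{i^*}$, and if for such an edge $(a,b)$ one also has $\dist_G(a,b)>2^{i^*}$, then $(a,b)\notin E_{i^*}$ and you cannot hop across it. In the extreme case where $G$ is a single heavy edge, $G_{i^*}$ is edgeless and $u,v$ are disconnected there. The paper confronts exactly this issue (see the proof of Theorem~\ref{thm:hopset-first-regime}): it partitions the weighted path into segments of length at most $d/k^{1-\epsilon}$ \emph{and then takes one extra graph edge} per segment, so every segment is guaranteed to advance at least $d/k^{1-\epsilon}$ even when a heavy edge is encountered. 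That argument requires working directly in the weighted graph rather than in an unweighted proxy.

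There is a second problem with the size and with how you handle sparse vertices. The spanner of Theorem~\ref{thm:secondspanner} begins with truncated Baswana--Sen, which is an unweighted tool; the paper explicitly replaces this by the Thorup--Zwick bunches (Fact~\ref{fact:TZ}, procedure $\ClusterHop$) so that the initial clustering and the ``sparse'' stretch guarantee of Claim~\ref{cl:helper-sparse-hopset} make sense for weighted inputs. Moreover, the $O(k^{\epsilon}n^{1+1/k}+k^{\epsilon}/\epsilon\cdot n)$ bound per scale is not obtained by ``running the spanner on $G_i$'' (which would cost $\widetilde O(k\,n^{1+1/k}+k^{2}n)$ edges as in Theorem~\ref{thm:secondspanner}) but by replacing every shortest path added in the spanner algorithm by a single weighted hop. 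Your proposal alludes to a ``trimmed-down version'' and the ``late rounds,'' but never makes this replacement explicit, so the claimed edge count is unsupported. In short, the paper's construction is not a black-box reduction to Theorem~\ref{thm:secondspanner}; it reworks each of its three stages for the weighted setting (TZ instead of Baswana--Sen, hops instead of paths, per-scale radius $r_0=d/R'$), and those changes are what make both the stretch and the size bounds go through.
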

\end{mdframed}
For example, by setting $\epsilon=\Theta(1/\log k)$, we get an $(\alpha,\beta)$ hopset with constant stretch of $\alpha=O(1)$, and an almost linear hop-bound $O(k^{1+o(1)})$. This brings us very close to the ultimate holy-grail construction of $(O(1),k)$ hopsets. 

We also consider the other direction of minimizing the stretch $\alpha$ as much as possible, while keeping the hop-bound $\beta$ to be polynomial in $k$. As with spanners, this setting is considerably simpler (compared to that of Thm. \ref{thm:new-hopset}), and we have the following:
\begin{mdframed}[hidealllines=true,backgroundcolor=gray!25]
\begin{lemma}[New $(3+\epsilon,\beta)$ Hopsets]\label{lem:hopset-best}
For any $n$-vertex weighted graph $G=(V,E,w)$, integer $k$ and $\epsilon>0$, one can compute a $(3+\epsilon,\beta)$ hopset $H$ where $\beta=16\cdot (3+ 9/\epsilon)\cdot k^{\log(3+9/\epsilon)}$ with expected size $|E(H)|=O((n^{1+1/k}+\log{k}\cdot n) \cdot \log{\Lambda})$.
\end{lemma}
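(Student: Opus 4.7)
The plan is to adapt the $(3+\epsilon,\beta)$ spanner construction from Lemma \ref{lem:spanner-best} to the weighted hopset setting. Since the graph is weighted, first reduce to a sequence of per-scale constructions: partition the distance range into $O(\log\Lambda)$ geometric scales of the form $[\rho, 2\rho]$ with $\rho = 2^j$, and compute a separate hopset $H_\rho$ targeting pairs whose true distance lies in the current scale. This yields the $\log\Lambda$ factor in the size bound; at each scale, the per-scale hopset will have $O(n^{1+1/k} + \log k \cdot n)$ edges in expectation.

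For a fixed scale $\rho$, run a Thorup--Zwick style hierarchy $A_0 = V \supseteq A_1 \supseteq \cdots \supseteq A_L$ where each $A_{i+1}$ is sampled from $A_i$ at an appropriate rate and $L = \log k$. At level $i$, compute for each vertex $u$ its pivot $p_i(u) \in A_i$ and its truncated bunch at radius $\rho$. For every $u$ and every cluster vertex $x$ at the appropriate level, insert a weighted hopset edge $(u,x)$ of weight $\dist_G(u,x)$. This satisfies the definition of a hopset edge (its weight equals a genuine $G$-distance), and the expected number of edges is bounded by the standard bunch-counting argument restricted to the current scale.

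The $(3+\epsilon,\beta)$ trade-off is obtained through a recursion with branching parameter $r = 3 + 9/\epsilon$: at each of the $L=\log k$ recursion levels, a path in $G \cup H$ of the form ``climb to a pivot, traverse one hopset edge, descend to the target'' contributes at most $r$ hops and multiplies the accumulated stretch by at most a factor of $1 + \Theta(\epsilon/\log k)$. After $L$ levels, the hop bound is $r^{\log k} = k^{\log r} = k^{\log(3+9/\epsilon)}$, and the multiplicative stretch stays at most $3 \cdot \bigl(1 + \Theta(\epsilon/\log k)\bigr)^{\log k} \le 3 + \epsilon$. The mild asymmetry between the spanner exponent $\log(3+8/\epsilon)$ in Lemma \ref{lem:spanner-best} and the hopset exponent $\log(3+9/\epsilon)$ here arises from the extra hop needed to replace a spanner subpath inside a cluster by a single weighted hopset edge, together with the associated approximation loss accounted for within the cluster.

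The main obstacle will be the stretch analysis across the $\log k$ recursion levels. The ``$3$'' in $3+\epsilon$ comes from a single-level triangle-inequality argument: for a pair $(u,v)$, identify a common cluster center $c$ in the bunches of both, and use $\dist_G(u,c)+\dist_G(c,v) \le 3\dist_G(u,v)$ when $c$ is sufficiently close to $u$. But the intermediate routes are themselves recursive approximations, so one must verify that iterating this triangle-inequality argument $\log k$ times does not blow up the stretch multiplicatively. This is handled by choosing a per-level radius schedule so the errors telescope additively into the $\epsilon$ slack rather than compounding. A secondary subtlety, absent from the spanner proof, is that every hopset edge of weight $w$ must correspond to a true $G$-path of length $w$; this is immediate from the construction (edges are inserted with weight $\dist_G(u,x)$) but must be stated explicitly to conform to the hopset definition, and it justifies concatenating hopset edges freely in the distance-analysis step.
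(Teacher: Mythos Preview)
Your sketch does not match the paper's argument, and the stretch mechanism you describe has a genuine gap.

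The paper's construction (Section~\ref{sec:3epshopset}, with the sharper parameters in Appendix~\ref{sec:best-hopset}) is \emph{not} a recursion in which stretch accumulates multiplicatively over $\log k$ levels. For a fixed scale $[d,2d]$ it builds a sequence of clusterings $\mathcal{C}_0,\ldots,\mathcal{C}_T$ over $T=\lceil\log k\rceil$ phases: phase $i$ subsamples clusters with probability $p_i=|\mathcal{C}_{i-1}|/n$, sets an augmentation radius $\alpha_i=(4/\epsilon)r_{i-1}$, and adds hops from each vertex to its nearest cluster center and between centers of nearby clusters. The radii grow geometrically, $r_i\le(3+8/\epsilon)^i r_0$, and $r_0=d/(2R')$ is chosen precisely so that $r_T\le d/2$. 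The stretch proof then takes a shortest $u$--$v$ path $P$ and partitions it into segments whose lengths depend on the phase at which the left endpoint became unclustered; for an $i$-unclustered endpoint the segment has length $\alpha_i$, and a single triangle-inequality detour through two cluster centers gives a $3$-hop (plus one $G$-edge) path of length at most $3\cdot\dist_G+4r_{i-1}\le(3+\epsilon)\cdot\dist_G$ on that segment. The hop bound $\beta$ is four times the number of segments, which is $O(d/r_0)=O(R')$. Thus the factor $3$ is applied once \emph{per segment}, and the segment stretches are summed additively; nothing is composed multiplicatively across the $\log k$ phases. The final substitution $\epsilon'=\tfrac{8}{9}\epsilon$ turns $8/\epsilon'$ into $9/\epsilon$, which is where the constant in the statement comes from.

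The gap in your proposal is exactly in the stretch accounting. You assert that each of $\log k$ recursion levels contributes $r=3+9/\epsilon$ hops and multiplies the stretch by $1+\Theta(\epsilon/\log k)$, and then that the final stretch is $3\cdot\bigl(1+\Theta(\epsilon/\log k)\bigr)^{\log k}$. But you never say where that single global factor of $3$ enters. If every level contributes only $(1+\Theta(\epsilon/\log k))$, the product is $1+\epsilon$, not $3+\epsilon$; if some level contributes a factor of $3$, you must identify which one and why the others do not; and if every level contributes $3$, the stretch is $3^{\log k}$. The recursive stretch-compounding picture you describe is the Elkin--Neiman $(1+\epsilon,\beta)$ mechanism, which yields $\beta=(O(\log k)/\epsilon)^{\log k}$, not $k^{O(1)}$. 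The polynomial-in-$k$ hop bound of this lemma comes from a different, non-recursive idea: the $\log k$ phases are used only to shrink the number of clusters (so that the size stays $n^{1+1/k}$), while the hop and stretch bounds come from the path-segmentation argument above, which you have not supplied.
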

\end{mdframed}
An interesting reference point is for $k=O(\log n)$, i.e., where the hopset has a nearly linear size of $\widetilde{O}(n)$ edges. In this setting, Lemma \ref{lem:hopset-best} gives for example stretch $\alpha=4$ and $\beta=O((\log n)^{3.6})$. Lemma \ref{lem:hopset-best} gives a constant stretch but with hop-bound $\beta=(\log n)^{1+o(1)}$. This should be compared with the $(1+\epsilon,\beta)$ hopsets of \cite{HuangP19} and \cite{elkin2016efficient} that provide a hop-bound of $O_{\epsilon}(\log\log n)^{\log\log n}$. 
See Figure \ref{fig:hopplot} for comparison with existing work.

\textbf{Remark.} We note that Gitlitz, Elkin and Neiman \cite{GENPRAMDO19} independently provided different constructions for $(3+\epsilon,\beta)$ spanners and hopsets with slightly larger $\beta$ vlues than those obtained in Theorems \ref{lem:spanner-best} and \ref{lem:hopset-best}.
%

\paragraph{Applications to Shortest Paths.}
We also show the efficient computation of our simplified $(O(1), \beta)$ spanners and hopsets in various computational settings. This has direct implications to APSP computation.  Elkin and Neiman \cite{ElkinN19,ElkinN16b} specified the implementation details of their hopsets and spanners, along with some immediate applications to shortest paths computation in several computational settings. In the non-centralized settings (e.g., distributed, streaming, etc.), the value of $\beta$ effects not only the approximation quality of the solution, but rather also determines the \emph{locality} of the problem. Our simplified $(3+\epsilon,\poly(k))$ spanners and hopsets is similar, implementation-wise\footnote{I.e., the steps that determine the computational cost are quite similar.}, to the $(1+\epsilon, O(\log k)^{\log k})$ spanners and hopsets of \cite{ElkinN19} and \cite{ElkinN16b}. Therefore we can get these applications, almost for free, while enjoying an improved running time due to our improved $\beta$, upon suffering from a slightly larger stretch of $(3+\epsilon)$ in the centralized regime and $(4+\epsilon)$ in the distributed and streaming regimes rather than $(1+\epsilon)$.  
For example, we can have the following:
\begin{mdframed}[hidealllines=true,backgroundcolor=gray!25]
\begin{lemma}[Approx. APSP]\label{lem:apsp-stream}
For every $n$-vertex unweighted graph and any parameters  $\epsilon>0,\rho \in (0,1)$, there exists a streaming algorithm that computes a $(4+\epsilon,\beta)$ approximation for the APSP in the multi-pass streaming model for $\beta = O((5+16/\epsilon)^{\log{\rho}+2/\rho}\cdot k^{\log(5+16/\epsilon)})$ in either (1) $O(n^{1+\rho}\cdot \log{n})$ space with high probability and $O(\beta)$ passes, or (2) with $O(n^{1+1/k}+(\beta+\log{n}) \cdot n)$ space in expectation and $O(n^{\rho}/\rho \cdot \log{n} \cdot \beta)$ passes with high probability.
\end{lemma}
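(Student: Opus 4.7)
The plan is to invoke the $(3+\epsilon,\beta)$-spanner of Lemma \ref{lem:spanner-best} (with the hopset of Lemma \ref{lem:hopset-best} as a drop-in substitute in the weighted case) and implement its construction in the multi-pass streaming model, following the template used by Elkin and Neiman \cite{ElkinN19,ElkinN16b} for their $(1+\epsilon,\beta)$ analogues. The key observation that makes this reduction work is that if $H$ is an $(\alpha,\beta)$-spanner of $G$, then running BFS in $H$ truncated at hop count $\beta$ already yields an $(\alpha,\beta)$-approximate APSP estimate. Thus the streaming task reduces to (i) building $H$ (or enough of $H$) in the stream, and (ii) simulating bounded-hop BFS on the collected edges.

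For option (1), I would set the girth parameter $k$ so that $n^{1+1/k}\le n^{1+\rho}$ and invoke Lemma \ref{lem:spanner-best} with stretch parameter $\epsilon/2$, obtaining a $(3+\epsilon/2,\beta')$-spanner of $\widetilde{O}(n^{1+\rho})$ edges that fits within the $O(n^{1+\rho}\log n)$ memory budget. The construction proceeds in $O(\log k)$ hierarchical clustering phases, each growing BFS balls of radius at most $\beta'$. Each phase is simulated by $O(\beta')$ sweeps over the stream, since the only state that must persist from pass to pass is the current BFS frontier and the spanner edges already selected, which together remain within the target memory. Once $H$ is fully in memory, distance queries are resolved offline by bounded-hop BFS in $H$, and the step from stretch $(3+\epsilon/2)$ to the declared $(4+\epsilon)$ absorbs the $(1+\Theta(\epsilon))$ factor introduced by geometric rounding of distance scales and by truncating BFS at hop $\beta$.

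For option (2), memory is too tight to maintain BFS state for all sources simultaneously, so the plan is to batch the BFS exploration. Partition $V$ into $n^{1-\rho}$ groups of size $n^\rho$, and for every group run $O(\beta)$-hop BFS from its sources in parallel, using one stream sweep per BFS hop. An outer geometric-scale loop over $O(\log n)$ distance scales combined with $O(1/\rho)$ recursion levels from the hierarchical spanner/hopset contributes the $\log n/\rho$ overhead, yielding the claimed $O(n^\rho/\rho\cdot \log n\cdot \beta)$ total passes. The resulting memory footprint is dominated by the spanner's $O(n^{1+1/k})$ edges together with $O((\beta+\log n)\cdot n)$ words of BFS labels and predecessor pointers per active batch.

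The main obstacle is verifying that the spanner construction of Lemma \ref{lem:spanner-best}, stated in the centralized model, can indeed be realized in the streaming model within the quoted pass and memory bounds. Each clustering phase involves randomized center sampling, BFS exploration of cluster interiors, and the selection of inter-cluster edges; all of these must be executed either by stream sweeps or by local computation on already-stored material without ever exceeding the final spanner size. The second delicate point is reconciling the $(3+\epsilon)$ spanner stretch with the $(4+\epsilon)$ streaming stretch and the appearance of $(5+16/\epsilon)$ rather than $(3+8/\epsilon)$ in the $\beta$ bound: these extra constants arise from invoking Lemma \ref{lem:spanner-best} with a reduced parameter $\epsilon/2$, composed with the additional multiplicative slack introduced by the geometric-scale reduction used to convert the unweighted spanner into an APSP oracle on the bucketed distance graph.
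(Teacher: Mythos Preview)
Your proposal has the right high-level picture --- build an $(\alpha,\beta)$ spanner in the stream, then read off APSP --- but several specific claims are incorrect, and in particular your treatment of option (2) misidentifies where the pass overhead comes from.

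First, you invoke the wrong spanner. The statement's $\beta$ has base $(5+16/\epsilon)$, which is the constant of Lemma~\ref{lem:3plusepsspanner} (Section~\ref{sec:3epsspanner}), not Lemma~\ref{lem:spanner-best}. Your arithmetic ``use $\epsilon/2$ in $(3+8/\epsilon)$'' yields $(3+16/\epsilon)$, not $(5+16/\epsilon)$, so that explanation cannot be right. The real reason the paper does not use the tighter construction of Lemma~\ref{lem:spanner-best} is structural: in Section~\ref{sec:3epsspanner}, step~(3) joins clusters within radius $4r_{i-1}+4\alpha_i$ while step~(5) grows BFS only to radius $2r_{i-1}+2\alpha_i$. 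This factor-two gap is precisely what makes Lemma~\ref{lem:few-bfs} work (if any one of the nearby centers is sampled, \emph{all} of them become clustered and none participate in step~(5)), bounding the per-vertex BFS congestion by $O(n^{\rho/2}\log n)$. Lemma~\ref{lem:spanner-best} collapses this gap to get the better radius, and you lose the congestion bound needed for an efficient streaming (or \congest) implementation.

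Second, the $\rho$-dependence in $\beta$ does not come from ``$O(1/\rho)$ recursion levels''; it comes from running the meta-algorithm for $T'=\lceil\log(k\rho)\rceil+\lceil 2/\rho-1\rceil$ phases with capped sampling probability $p_i\ge n^{-\rho/2}$ (Observation~\ref{obs:pinro}). This is what produces the exponent $\log\rho+2/\rho$ in the radius bound of Observation~\ref{obs:rad-meta-eff-spanner}.

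Third, the jump from $(3+\epsilon)$ to $(4+\epsilon)$ is not ``geometric rounding of distance scales.'' It arises because the distributed/streaming variant cannot afford a full BFS from the final centers and instead truncates to depth $5r_{T'}$; the claim following Lemma~\ref{lem:3plusepsspannerloc} shows this truncation degrades the multiplicative stretch to $4+\epsilon$.

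Finally, your plan for option~(2) is off target. In the paper both options are two ways of \emph{building the spanner}; once $H$ is in memory, APSP is computed offline with no further passes. The $O(n^{\rho}/\rho\cdot\log n\cdot\beta)$ passes of option~(2) come from splitting the step-(5) BFS explorations into $\tau=O(\log n/p_i)$ sub-batches of sampled centers so that each vertex is hit by only $O(\log n)$ trees per sub-batch, keeping memory to $O(n\log n)$ plus the spanner. Your ``partition $V$ into $n^{1-\rho}$ query groups'' scheme is solving the wrong subproblem and would not recover the stated space bound, since the spanner itself must still be stored. The actual proof of Lemma~\ref{lem:apsp-stream} is a one-line appeal to Lemma~\ref{lem:spanner-stream}.
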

\end{mdframed}
This is the analogue of Cor. 21 in \cite{ElkinN19} only that they have $(1+\epsilon)$ approximation with $\beta_{EN}=O(\log k)^{\log k}$ passes, rather than $(4+\epsilon)$ approximation with $O_{\epsilon}(k^{\log(5+16/\epsilon)})$ passes. 
%
\subsection{Technical Overview.}\vspace{-5pt}
Throughout, we consider a fixed stretch parameter of $k$, and restrict the number of edges in the output spanners and hopsets to $O_{k}(n^{1+1/k})$ edges in expectation. 
\vspace{-10pt}
\subsubsection{New Spanners}
The starting point for our algorithms is the observation that existing multiplicative spanner constructions (e.g., Baswana-Sen \cite{BaswanaS07}, Thorup-Zwick \cite{ThorupZ05}) provide a considerably improved multiplicative stretch, i.e., of $o(k)$, for edges incident on \emph{sparse} vertices. By sparse, we mean vertices whose $o(k)$-ball contains a \emph{small} number of vertices. To be more concrete, we start by describing some useful properties of the Baswana-Sen Algorithm.   
\paragraph{Useful Properties of the Baswana-Sen Algorithm \cite{BaswanaS07}.}
The Baswana-Sen algorithm consists of $k$ steps of clustering. 
A \emph{clustering} $\mathcal{C}=\{C_1,\ldots, C_\ell\}$ is a collection of vertex disjoint sets which we call \emph{clusters}. Every cluster has some special vertex which we call the \emph{cluster center}. The set of clustered vertices is $V(\mathcal{C})=\bigcup_{i=1}^{\ell} C_i$. 
In the high level, the Baswana-Sen algorithm computes $k$ levels of clustering $\mathcal{C}_0, \ldots, \mathcal{C}_{k-1}$. In each clustering step $i$, given the clustering $\mathcal{C}_{i-1}$, the algorithm computes a clustering $\mathcal{C}_i$, along with a subset of edges $H_i$ that ``takes care" of the newly unclustered vetices (i.e., those that belong to a cluster in $\mathcal{C}_{i-1}$, but do not belong to the clusters of $\mathcal{C}_i$). The clustering $\mathcal{C}_i$ and the output sugbraph $H_i$ have the following useful properties:
\begin{itemize}
\item{(1)}  $|\mathcal{C}_i|=O(n^{1-i/k})$ and $|E(H_i)|=O(n^{1+1/k})$ in expectation. 
\item{(2)}  The radius of each cluster $C \in \mathcal{C}_i$ is at most $i$. Specifically, for each cluster $C \in \mathcal{C}_i$, the subgraph $H_i$ contains a tree $T_C\subseteq G[C]$ rooted at the cluster center of $C$, spanning all the vertices in $C$ and has depth at most $i$. 
\item{(3)} For every unclustered vertex $u \notin V(\mathcal{C})$, $\dist_{H_i}(u,v)\leq 2i-1$ for every $v \in N(u)$. 
\end{itemize}
\paragraph{Warming Up: $(\alpha,\beta)$ Spanners with $\alpha=O(\sqrt{k})$ and $\beta=k$.} 
To illustrate the essence of our constructions, we start by showing an algorithm for computing an $(\alpha,\beta)$ spanner with $\alpha=O(\sqrt{k})$ and $\beta=k$. As we will see, with this approach, a stretch of $\alpha=O(\sqrt{k})$ is the best that one can get for $\beta=k$. Obtaining  the ultimate stretch of $O(\lceil k /d \rceil)$ for every $d=\omega(\sqrt{k})$ will require additional ideas, and a considerably more delicate analysis.

The \textbf{first phase} of the algorithm applies a truncated variant of the Baswana-Sen algorithm in which only the first $\sqrt{k}$ clustering steps (out of the $k$ many steps) are applied. As a result, we get a cluster collection $\mathcal{C}=\mathcal{C}_{\sqrt{k}}$ containing $O(n^{1-1/\sqrt{k}})$ clusters (in expectation), and a subgraph $H'=\bigcup_i^{\sqrt{k}} H_i$ that \emph{takes care} of all edges incident to the non-clustered vertices (i.e., vertices not in $\mathcal{C}$) by Property (3).

In the \textbf{second phase}, the algorithm computes a cluster-graph $\widehat{G}$ whose nodes correspond to the clusters of $\mathcal{C}$. Every two clusters $C, C' \in \mathcal{C}$ are connected by an edge in $\widehat{G}$ iff $\dist_G(r(C),r(C'))\leq 3\sqrt{k}$ where $r(C),r(C')$ are the centers of the clusters $C,C'$ respectively.
The algorithm then computes a $(2\sqrt{k}-1)$ multiplicative spanner $\widehat{H}$ on this cluster graph. The edges of this spanner are translated to $G$-edges as follows. For each edge $(C,C')$ in the spanner $\widehat{H}$, the shortest path in $G$ between $r(C)$ and $r(C')$ is added to the spanner $H$. By property (3), $\widehat{G}$ contains $N=O(n^{1-1/\sqrt{k}})$ nodes, and thus $\widehat{H}$ contains $O(N^{1+1/\sqrt{k}})=O(n)$ edges in expectation. Overall, this step adds $O(\sqrt{k} n)$ edges to the spanner. 

\textbf{The stretch argument:} For the sake of this intuitive explanation, we will show that the spanner provides a stretch of $O(\sqrt{k})$ for vertex pairs at distance $\sqrt{k}$. 
Fix such pair $u,v$ and let $P$ be their shortest path in $G$. 
If $P$ has at most one clustered vertex (i.e., vertex appearing in the clusters of $\mathcal{C}$), then all the edges in $P$ are incident to non-clustered vertices, and thus by 
property (3), for each edge $(x,y) \in P$, $\dist_{H}(x,y)\leq 2\sqrt{k}-1$.

Otherwise, let $u'$ and $v'$ be the far-most clustered vertices on the path $P$, where $u'$ (respectively, $v'$) is the closest clustered vertex to $u$ (respectively, $v$). By property (3) again, all edges on the path segments $P[u,u']$ and $P[v',v]$ enjoy a stretch of at most  $(2\sqrt{k}-1)$ in the spanner. It remains to consider the segment $P[u',v']$. Let $C_{u'}, C_{v'}$ be the clusters of $u',v'$ respectively in $ \mathcal{C}$. Since the radius of these clusters is at most $\sqrt{k}$ (by property (2)), we have that $\dist_G(r(C_{u'}),r(C_{v'}))\leq 3\sqrt{k}$ and thus $C_{u'}$ and $C_{v'}$ are neighbors in $\widehat{G}$. Since $\widehat{H}$ is a $(2\sqrt{k}-1)$ spanner of $\widehat{G}$, we have $\dist_{\widehat{H}}(C_{u'},C_{v'})\leq 2\sqrt{k}-1$. Finally, as each edge $(C,C')$ in $\widehat{H}$ is translated into a path of length $\leq 3\sqrt{k}$ in $G$, we have that 
$u'$ and $v'$ are connected in $H$ by a path of length $O(k)$ in $G$, concluding that $\dist_H(u,v)=O(k)$ as desired. The complete algorithm appears in Sec. \ref{sec:spanner-close}.

\paragraph{The challenge in obtaining a multiplicative stretch $\alpha=o(\sqrt{k})$.}
We note that this algorithm, as is, cannot be extended to provide an improved stretch of $o(\sqrt{k})$ for distances $d\geq \sqrt{k}$ for the following reason. Let $i$ be a parameter that determines the number of the Baswana-Sen steps applied in the first phase of the algorithm. Then, after applying $i$ steps of the Basawna-Sen algorithm, we end with $n^{1-i/k}$ clusters in $\mathcal{C}_i$. By property (3), the stretch obtained for all unclustered vertices (i.e., not in $\mathcal{C}_i$) is bounded by $2i-1$. 
In the second phase, a cluster graph $\widehat{G}$ with $|\mathcal{C}_i|$ nodes is defined. Each node of  $\widehat{G}$ corresponds to a cluster in $\mathcal{C}_i$, and two clusters $C,C' \in \mathcal{C}_i$ are connected in $\widehat{G}$, if their distance in $G$ is at most $3i$. To keep the number of edges in the final spanner small\footnote{Any standard $(2k-1)$-spanner on $N$ vertices contains $O(N^{1+1/k})$ edges.} the algorithm can only afford the computation of  
$(2(k/i)-1)$-multiplicative spanner $\widehat{H} \subseteq \widehat{G}$. Each edge in this spanner $\widehat{H}$ is translated to a path of length $\leq 3i$ in the final spanner. Thus the algorithm adds at most $O(i \cdot n)$ edges in this phase. Overall, we get a stretch of $2i-1$ on all the edges incident to the non-clustered vertices (those that do not appear in $\mathcal{C}$), and a stretch of $\Theta(k/i)$ between every pair of \emph{clustered} vertices at distance at most $i$ in $G$. The optimal stretch is therefore achieved for $i=\Theta(\sqrt{k})$. 

In the next paragraph, we explain how to bypass this obstacle by adding a crucial intermediate phase to the algorithm.  

\paragraph{A New Three Stage Approach for $(\alpha,\beta)$ Spanners.} We next explain the high level ideas to obtain an $(\alpha,\beta)$ spanner with a multiplicative stretch $\alpha=O(k^{\epsilon})$ and an additive stretch $\beta=c^{1/\epsilon} \cdot k$ for some constant $c>1$ and any $\epsilon \in (0,1/2)$. 
By taking $\epsilon=\Theta(1/\log k)$, it provides a \emph{constant} multiplicative stretch for all pairs at distance $d \geq k^{1+o(1)}$. By setting $\epsilon=o(1)$, we get a multiplicative stretch of $O(k/d)$ for all pairs at distance $d < k^{1-o(1)}$. 
In the following, we zoom into a fixed distance value $d \in [1,k^{1-\epsilon}]$ and describe the high-level construction of an $f(d)$-spanner $H$ with $f(d)=O(k/d)$. The same procedure will be repeated for every $d \in [1,k^{1-\epsilon}]$ (in fact, it will be sufficient to repeat it for every class of distances $[d,2d]$). 
The algorithm has three phases. 
The \textbf{initial clustering stage} applies a truncated Baswana-Sen algorithm, running only its first $\lceil k^{\epsilon} \rceil$ steps.
This results in a clustering $\mathcal{C}_0$ containing $n^{1-1/k^{1-\epsilon}}$ clusters in expectation of radius $\lceil k^{\epsilon} \rceil$, and a subset of edges $H_0$. Property (3) of the Baswana-Sen algorithm guarantees a stretch of $O(k^{\epsilon})$ on all edges incident to the unclustered vertices (i.e., vertices not in $\mathcal{C}_0$). 
Note that at this point the number of clusters is too large to be able to compute a $k^{\epsilon}$-spanner on the cluster graph, and terminate. The purpose of the next stage is to rapidly reduce the number of clusters to the ultimate number of $n^{1-1/k^{\epsilon}}$ clusters while keeping the radius of this clustering bounded by $O_{\epsilon}(k^{1-\epsilon})$. 

The \textbf{intermediate superclustering stage} is the most delicate part of the algorithm. It consists of $T=O(1/\epsilon)$ phases of superclustering. This step is similar in flavor to the $(1+\epsilon,\beta)$ spanner construction by Elkin-Neiman \cite{ElkinN19}, with several key differences that we explicitly state.

A \emph{supercluster} $SC=\{C_1,\ldots, C_\ell\}$ is a collection of vertex-disjoint clusters. Every supercluster $SC$ has a special vertex $r(SC)$, that is denoted as the supercluster's center. We denote be $V(SC)$ the set of vertices in the supercluster, that is $V(SC)=\bigcup_{C\in SC}V(C)$. A \emph{superclustering} $\mathcal{SC}=\{SC_1,\ldots, SC_{\ell'}\}$ is a collection of vertex-disjoint superclusters.  
The radius of a supercluster $SC_\ell$ is defined by $\max_{u\in V(SC_{\ell})}\dist_{G}(u,r(SC_{\ell}))$.
Each phase $i \in \{1,\ldots, T\}$ of the superclustering procedure starts with a clustering $\mathcal{C}_{i-1}$ with $N_{i-1}=O(n^{1-k^{\epsilon\cdot (i-1)}/k^{1-\epsilon}})$ clusters of radius $r_{i-1}=O((2k^{\epsilon})^{i})$. The output of the phase is a clustering $\mathcal{C}_{i}$ with $N_i$ clusters and radius $r_i$, as well as a collection of edges $H_i$ added to the spanner that \emph{takes care} of the vertices that stopped being clustered at that phase (i.e., appearing in $\mathcal{C}_{i-1}$ but not in $\mathcal{C}_{i}$). We now describe the high level structure of this $i^{th}$ phase. 

The phase has $t=O(k^{\epsilon})$ steps of superclustering. Starting with the $0^{th}$ superclustering 
$\mathcal{SC}_{i-1,0}=\{C ~\mid~ C \in \mathcal{C}_{i-1}\}$, each step $j \in \{1,\ldots, t\}$ gets as input a superclustering $\mathcal{SC}_{i-1,j-1}$, where the radius of each supercluster is bounded by $r_{i-1,j-1}=O(k^{\epsilon}\cdot e^{4j/k^{\epsilon}}\cdot r_{i-1,0})$. Initially, $r_{i-1,0}$ is simply the radius of the clusters in $\mathcal{C}_{0}$. 
It then outputs a new superclustering $\mathcal{SC}_{i-1,j}$ by applying the following sequence of operations:
\begin{itemize}
\item{\textbf{Augmentation:}} We set an augmentation parameter $\alpha_{i-1,j}=O(r_{i-1,0} \cdot e^{4j/k^{\epsilon}})$. Each \emph{vertex} at distance at most $r_{i-1,j-1}+\alpha_{i-1,j}=O(e^{4j/k^{\epsilon}}\cdot k^{\epsilon}\cdot r_{i-1,0})$ from a center of some supercluster in the current superclustering, adds its shortest path to its closest center to the spanner $H$ (without being added to that supercluster). 

\item{\textbf{Sub-sampling:}} Each supercluster $SC \in \mathcal{SC}_{i-1,j-1}$ gets sampled into $\mathcal{SC}'$ with probability of $N_{i-1}/n$ where $N_{i-1}$ is the number of clusters in $\mathcal{C}_{i-1}$.
\item{\textbf{New Superclustering:}} Each cluster (belonging to any supercluster $SC \in \mathcal{SC}_{i-1,j-1}$) at (center) distance at most $r_{i-1,0}+2\alpha_{i-1,j}+r_{i-1,j-1}$ from a center of a sampled supercluster, joins its closest sampled supercluster and adds the shortest path between their centers to the spanner. The center of the sampled supercluster maintains its role.  
\item{\textbf{Handling Lost Clusters:}} Each other cluster $C$ (that is too far from the sampled superclusters), adds to the spanner $H$ a shortest path from its center to the center of every supercluster $SC \in \mathcal{SC}_{i-1,j-1}$ provided that their distance at most $r_{i-1,0}+2\alpha_{i-1,j}+r_{i-1,j-1}$. This cluster will no-longer appear in the superclustering. 
\end{itemize}
The augmentation parameters $\alpha_{i,j}$, as well as the precise number of external phases $T$, and internal superclustering steps $t$ are all set in a very delicate manner. We note that this additional \emph{augmentation} step is not applied in the Elkin-Neiman's algorithm \cite{ElkinN16b}, and we find it to be quite useful in our stretch analysis. The output clustering $\mathcal{C}_i$ of the $i^{th}$ phase consists of a cluster $C_\ell=V(SC_\ell)$ for every supercluster $SC_\ell \in \mathcal{SC}_{i-1,t}$ where $\mathcal{SC}_{i-1,t}$ is the last superclustering of that phase. 

At the end of all these $T$ phases, we are left with a clustering $\mathcal{C}_T$ with $N_T=O(n^{1-1/k^{\epsilon}})$ clusters. The augmentation parameters $\alpha_{i,j}$ are set in a way that guarantees that the radius this clustering is bounded $r_T=O_{\epsilon}(k^{1-\epsilon})$. Before starting the final phase, 
every unclustered vertex at distance at most $r_{T}+d$ from some cluster center, adds to the spanner its shortest path to its closest center.

The \textbf{final clustering-graph phase} computes a cluster graph $\widehat{G}$ where each cluster in $\mathcal{C}_T$ corresponds to a node in that graph. Two clusters $C,C' \in \mathcal{C}_T$ are connected in $\widehat{G}$ if their center-distance is at most $2r_{T}+2d$. The algorithm then computes a $(2k^{\epsilon}-1)$ spanner $\widehat{H} \subseteq \widehat{G}$ containing at most $O(N^{1+1/k^{\epsilon}})=O(n)$ edges. Finally, these spanner edges are translated into edges in $G$, by adding to the final spanner $H$ the shortest path between the centers $r(C)$ and $r(C')$ for each edge $(C,C') \in \widehat{H}$. It is easy to see that this step adds $O_k(n)$ edges to the spanner. 

The analysis is based on providing distinct stretch guarantees depending on the precise step\footnote{The $(i,j)$ step is the $j$'th step of the $i$'th phase.} $(i,j)$ in which the vertex stopped being clustered. 
Formally, a vertex is said to be $(i-1,j)$-unclustered if it belongs to the superclusters of $\mathcal{SC}_{i-1,j-1}$ but does not belong to the superclusters of $\mathcal{SC}_{i-1,j}$. The analysis shows that the later a vertex $u$ stopped being cluster the stronger is its stretch guarantee in the following sense. For every $(i-1,j)$-unclustered vertex $u$, the edges added to the spanner in phase $i$ guarantee that $\dist_H(u,v) \leq k^{\epsilon}\cdot \dist_G(u,v)$ for every vertex $v$ at distance of $\alpha_{i-1,j}$ from $u$, where $\alpha_{i-1,j}$ grows with both $i$ and $j$. The key point is that this stretch bound holds even if $v$ is $(i-1,j-1)$ unclustered. We complete the argument by considering any $u$-$v$ shortest path $P$ (of any length), and dividing it into consecutive disjoint segments $P[x,y]$ of possibly varying lengths. The length of each segment depends on the step $(i,j)$ in which certain vertices along the path $P$ stopped being clustered. 
We then show that for each segment (except perhaps the last one), the spanner provides a multiplicative stretch of $k^{\epsilon}$ between the endpoints of this segment. A detailed algorithm description appears in Sec. \ref{sec:second-regime-spanner}.

\subsubsection{New Hopsets} Our hopsets constructions bare similarities to the spanner algorithms, but include several modifications. First, our hopset is defined for weighted graphs whereas in the $(\alpha,\beta)$ spanners the graph is required to be unweighted. The key difference will be in the way that we handle the sparse vertices. In the spanners above, we applied a truncated variant of the Baswana-Sen algorithm. Here, we will use the classic $(2k-1,2)$ hopsets that followed by the distance oracle of Thorup and Zwick \cite{ThorupZ05}. To explain the ideas in their cleanest and most simplified form, in the below high-level description we restrict attention to unweighted graph and mainly explain the first non-trivial construction of $(\sqrt{k},\sqrt{k})$ hopsets. We start by over-viewing the construction of $(2k-1,2)$ hopsets, and highlight their key properties. 

\paragraph{A Short Exposition of $(2k-1,2)$ Hopsets.} The construction of the distance oracle by Thorup and Zwick \cite{ThorupZ05} is based on an hierarchical collection of \emph{centers}  $A_{k-1}\subset...\subset A_{0}=V$. Each $A_{i}$ is obtained by sampling each $v \in A_{i-1}$ independently with probability of $n^{-1/k}$. The $i^{th}$ \emph{pivot} of every vertex $v$ denoted by $p_i(v)$ is the closest vertex in $A_{i}$ to $v$. For every $v$, its $i^{th}$ \emph{bunch} $B_i(v)$ contains all vertices in $A_i$ that are closer to $v$ than $p_{i+1}(v)$.
Thorup and Zwick showed that for every $v$, $|B(v)|=O(k \cdot n^{1/k})$ in expectation. As observed by Huang and Pettie \cite{HuangP19}, the collection of bunches translates into a $(2k-1,2)$ hopset $H$ as follows: for every $v$ and $u\in B(v)$, add to $H$ an edge $(u,v)$ of weight $\dist_G(u,v)$. 
Since each bunch $B(v)$ is of size $O(k n^{1/k})$, the output hopset has $O(k n^{1+1/k})$ edges.
The key property that will be used by our algorithms is as follows. Fix a pair of vertices $u,v$ and define $i^*$ as the minimal index such that $p_{i^*}(v)$ is in the bunch of $u$ or vice-versa. By construction, $i^* \leq k-1$. By the stretch argument of the distance oracle of Thorup and Zwick \cite{ThorupZ05}, one can show that 
the hopset $H$ contains a two-hop path between $u$ and $v$ that goes through the common $(i^*)$th pivot, and the path has length at most $(2i^*+1)\cdot \dist(u,v)$. 
We are next explain the construction of the $(\sqrt{k}, \sqrt{k})$ hopset. 

\paragraph{Warming Up: $(\alpha,\beta)$ Hopset with $\alpha,\beta=O(\sqrt{k})$.} 
As common in hopset constructions, we fix a distance range $[d,2d]$ and describe a construction that provides the desired stretch and hop-bound for all pairs $u,v$ at distance $d' \in [d,2d]$. The same procedure is then applied for each of the $\log{\Lambda}$ distance ranges, where $\Lambda$ is the aspect ratio of the graph. 

The hopset algorithm has two phases. First it computes the $(2k-1,2)$ hopsets based on the Thorup-Zwick distance oracle. In fact, for our purposes it will be sufficient to apply only the first $\sqrt{k}$ steps of the construction, i.e., computing the center sets $A_0=V, \ldots, A_{\sqrt{k}}$, and adding hops to the hopset $H'$ based on the bunches $B_i(v)$ for every $v$ and every $i \in \{1,\ldots,\sqrt{k}\}$. 

In the second phase, the algorithm computes a clustering $\mathcal{C}_0$ centered at the vertices of $A_{\sqrt{k}}$ as follows. Each vertex $v \in G$ at distance $2d$ from $A_{\sqrt{k}}$ joins the cluster of its closest center in $A_{\sqrt{k}}$. This defines a cluster collection $\mathcal{C}_0$ of $|A_{\sqrt{k}}|=n^{1-1/\sqrt{k}}$ (in expectation) of vertex-disjoint clusters of radius at most $2d$. In the hopset, we add a hop $(v,c_v)$ between each clustered vertex $v$ to its cluster center $c_v \in A_{\sqrt{k}}$. 

Now, the algorithm computes the cluster-graph $\widehat{G}$ in which each cluster in $\mathcal{C}_0$ corresponds to a node, and two clusters $C,C' \in \mathcal{C}_0$ are adjacent iff the distance between their centers is at most $6d$. We note that the graph $\widehat{G}$ will be unweighted even when the graph $G$ is weighted, as each edge in $\widehat{G}$ corresponds to a path of length at most $6d$ in $G$. 
Letting $\widehat{H}$ be a $(2\sqrt{k}-1)$ multiplicative spanner of $\widehat{G}$, for every edge $(C,C')\in \widehat{H}$, the algorithm adds the hop $(r(C), r(C'))$ to the hopset, where $r(C),r(C')$ are the centers of $C,C'$ respectively. 

\textbf{Why it works?} First we bound the size of the hopset by $O_k(n^{1+1/k})$. The first phase adds a TZ-hopset with $O(k \cdot n^{1+1/k})$ edges. In the second phase, we add one hop for every edge $(C,C')$ in the spanner $\widehat{H}$. Since this spanner has $O(|A_{\sqrt{k}}|^{1+1/\sqrt{k}})=O(n)$ edges, it adds $O(n)$ hops overall. In addition, each clustered vertex is connected with a hop to its cluster center, and as the clusters are vertex disjoint, it adds $O(n)$ hops. 

Next, we consider the stretch and hop-bound argument for a fixed pair $u$ and $v$ at distance $d' \in [d,2d]$ in $G$. Let $P$ be their shortest path. A vertex $w$ is called \emph{clustered} if it belongs to the clusters of $\mathcal{C}_0$. By definition, a vertex $w$ is clustered iff $\dist_G(w,A_{\sqrt{k}})\leq 2d$. First, consider the case where at most one vertex on $P$ is clustered. The argument goes by partitioning $P$ into $O(\sqrt{k})$ disjoint consecutive segments of length $2d/\sqrt{k}$. For each such segment $P[x,y]$ we show that in the TZ-hopset $H'$ there is a  two-hop path between $x$ and $y$ of length at most $O(d)$. 
To see this, consider an unclustered vertex $x$ and let $i$ be the minimum index satisfying that 
$p_i(x) \in B_i(y)$. By the properties of the TZ-hopsets, $\dist(x,p_i(x))\leq i \cdot \dist(x,y)$. Since 
$\dist_G(x,A_{\sqrt{k}})> 2d$, and $\dist(x,y)=2d/\sqrt{k}$, we get that $i \leq \sqrt{k}$. 
We therefore have $O(\sqrt{k})$ segments on $P$, for each the TZ-hopset provides a two-hop path of length $O(d)$. In total, we get a $u$-$v$ path of $2\sqrt{k}$ hops, and of total length $O(\sqrt{k} \cdot d)$ as required. This establishes the stretch and hop-bound argument for a path containing at most one clustered vertex.

The other case follows by the second phase of the algorithm. We consider the far-most clustered vertex pair $u'$ and $v'$ on the $u$-$v$ shortest path. The stretch and hop-bound argument for the subpaths $P[u,u']$ and $P[v',v]$ follows by the argument above, and thus it remains to consider the path $P[u',v']$.
Let $C_{u'}$ and $C_{v'}$ be the cluster of $u'$ and $v'$ respectively. Since  $\dist_G(u',v')\leq 2d$, we get that the clusters $C_{u'}$ and $C_{v'}$ are adjacent in $\widehat{G}$, and therefore the hopset contains a path of at most $(2\sqrt{k}-1)$ hops between between the centers of $C_{u'}$  and $C_{v'}$.
As each such hop has weight of $O(d)$, overall the hopset contains a path with $O(\sqrt{k})$ hops connecting $u'$ and $v'$, of total length $O(\sqrt{k}\cdot d)$ as required. This completes the high-level idea of the construction, see Sec. \ref{sec:first-regime-hopset} for more details.

\paragraph{Three Stage Approach for $(k^\epsilon,k^{1-\epsilon})$ Hopsets.} The computation of the $(k^\epsilon,k^{1-\epsilon})$ hopsets for every $\epsilon \in [1/2,1)$ is very similar to the high level description mentioned above. The complementary range of $\epsilon \in (0,1/2)$ is considerably more involved. It also has a three stage structure in a very similar manner to the $(k^{\epsilon}, k)$ spanners. The first stage applies a truncated TZ hopset construction restricting to the first $k^{\epsilon}$ levels of clustering. Letting $A_{k^{\epsilon}}$ be the centers in the $k^{\epsilon}$-level of the Thorup-Zwick algorithm, the second phase computes an initial clustering $\mathcal{C}_0$ with centers of $A_{k^{\epsilon}}$. The vertices that do not belong to the clusters of $\mathcal{C}_0$ are called unclustered. For those vertices, the correctness will follow by the TZ-hopsets. The remaining clustered vertices are handled in two stages. A key stage of superclustering which rapidly reduces the number of clusters to $n^{1-1/k^{\epsilon}}$, and a final stage in which the number of clusters is small enough, to allow the computation of an $k^{\epsilon}$-spanner on that cluster graph.
A more detailed description appears in Sec. \ref{sec:second-regime-hopset}. 
\paragraph{Open Problems.}
The most important open problem left by this work concerns the existence of $(\alpha,\beta)$ spanners with $\widetilde{O}(n^{1+1/k})$ edges for $\alpha=O(1)$ and $\beta=O(k)$. This would provide a nearly optimal stretch, up to constants, for the entire range of distances. With our current constructions one can only get $(\alpha, O(k))$ spanners with $\alpha=2^{O(\sqrt{\log k})}$. Alternatively, for a multiplicative stretch of $\alpha=O(1)$, we currently get $\beta=k^{1+o(1)}$. Note that the lower-bound constructions of Abboud, Bodwin and Pettie \cite{abboud2018hierarchy} are only tight for constant values of $k$, and hence it might still be possible to even obtain $(1+\epsilon, O(k))$ spanners with $O_{k,\epsilon}(n^{1+1/k})$ edges. 
Another interesting open problem concerns the tightness of our hopsets constructions. We present a new family of $(\alpha=k^{\epsilon}, \beta=k^{1-\epsilon})$ hopsets with $\widetilde{O}(n^{1+1/k})$ edges for any constant $\epsilon \in (0,1)$. The most critical question is whether any $(\alpha,\beta)$ hopset with $\widetilde{O}(n^{1+1/k})$ edges must satisfy that $\alpha\cdot \beta=\Omega(k)$. 

\definecolor{uququq}{rgb}{0.25,0.25,0.25}
\definecolor{xdxdff}{rgb}{0.49,0.49,1}

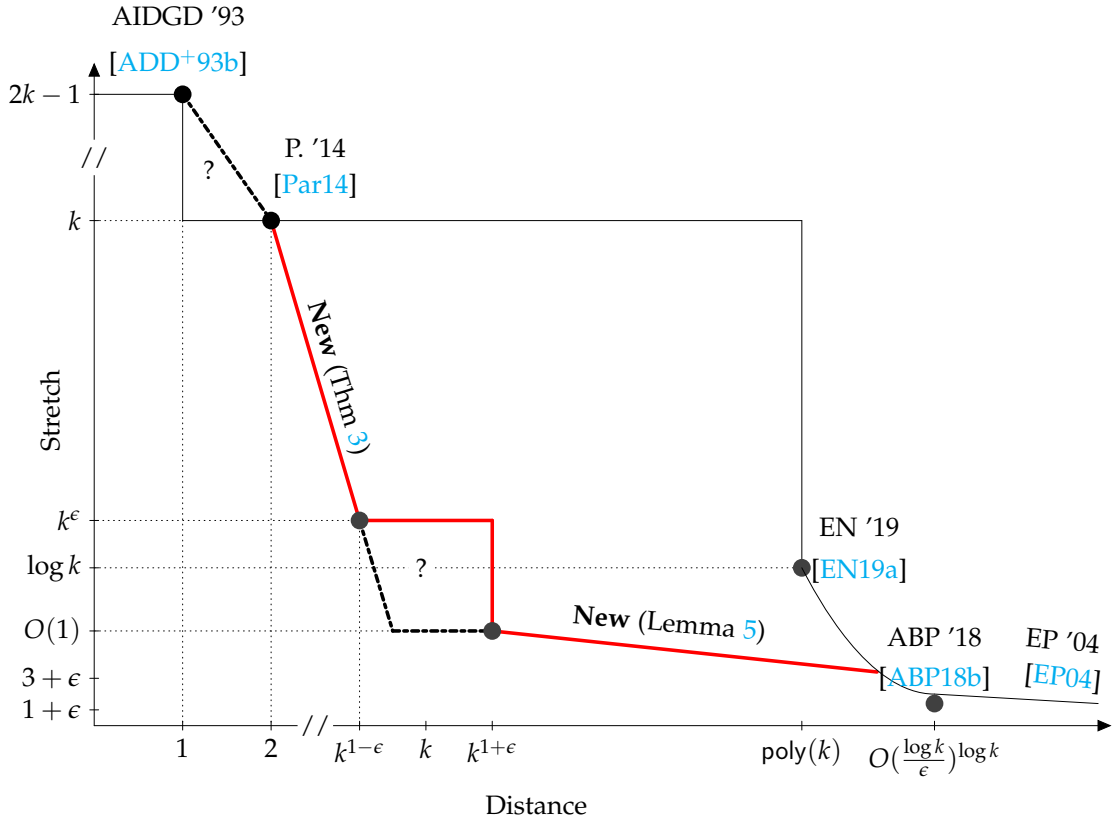
\begin{figure}
		\begin{center}
		\resizebox{0.9\linewidth}{!}{
		\begin{tikzpicture}[line cap=round,line join=round,>=triangle 45,x=7cm,y=5cm]
		\draw[->,color=black] (0,0) -- (2.3,0) ;
		\draw[color=black] (-0.1,1) node[rotate=90] {\large Stretch};
		\draw[color=black] (1,-0.25) node[rotate=0] {\large Distance};
		\draw[color=black] (2pt,1.6) -- (-2pt,1.6) node[left] {\large $k$};
		\draw[color=black] (0.75,2pt) -- (0.75,-2pt) node[below] {\large $k$};
		\draw[color=black] (0.6,2pt) -- (0.6,-2pt) node[below] {\large $k^{1-\epsilon}$};
		\draw[color=black] (0.9,2pt) -- (0.9,-2pt) node[below] {\large $k^{1+\epsilon}$};
		\draw[color=black] (1.6,2pt) -- (1.6,-2pt) node[below] {\large $\poly(k)$};			
		\draw[color=black] (1.9,2pt) -- (1.9,-2pt) node[below] {\large $O(\frac{\log{k}}{\epsilon})^{\log{k}}$};
		\draw[color=black] (0.4,2pt) -- (0.4,-2pt) node[below] {\large $2$};
		\fill [color=white] (0.5,0) circle (8pt);
		\draw[color=black] (0.5,0) node[rotate=0] {//};
		\draw[color=black]  (0.2,2pt) -- (0.2,-2pt) node[below] {\large $1$};
		\draw[color=black] (2pt,0.65) -- (-2pt,0.65) node[left] {\large $k^{\epsilon}$};
		\draw[color=black] (2pt,0.5) -- (-2pt,0.5) node[left] {\large $\log{k}$};
		\draw[color=black] (2pt,0.3) -- (-2pt,0.3) node[left] {\large $O(1)$};
		\draw[color=black] (2pt,0.15) -- (-2pt,0.15) node[left] {\large $3+\epsilon$};
		\draw[color=black] (2pt,0.05) -- (-2pt,0.05) node[left] {\large $1+\epsilon$};
		\draw[color=black] (2pt,2) -- (-2pt,2) node[left] {\large $2k-1$};
		\draw[->,color=black] (0,0) -- (0,2.1);
		\fill [color=white] (0,1.8) circle (8pt);
		\draw[color=black] (0,1.8) node[rotate=0] {//};
		
		\draw[color=black] (0.55,1.1) node[rotate=-73] {\large \textbf{New} (Thm \ref{thm:secondspanner})};
		\draw[dotted] (0,0.5)-- (1.6,0.5);
		\fill [color=white] (0.74,0.5) circle (6pt);
		\draw[color=black] (0.74,0.5) node[rotate=0] {\large ?};
		
		\draw (0,2)-- (0.2,2);
				\draw[ultra thick,dash dot] (0.2,2)-- (0.4,1.6);
		\draw[color=black] (0.26,1.76) node[rotate=0] {\large ?};
		\fill [color=black] (0.2,2) circle (4pt);
		\draw[color=black] (0.19,2.25) node[rotate=0] {\large AIDGD '93};
		\draw[color=black] (0.19,2.1) node[rotate=0] {\large \cite{AlthoferDDJS93}};
		\draw (0.2,1.99)-- (0.2,1.6);
		\draw[dotted] (0,1.6)-- (0.2,1.6);
		\draw[dotted] (0.2,1.6)-- (0.2,0);
				\draw[ultra thick, red] (0.4,1.6)-- (0.6,0.65);
		\draw (0.2,1.6)-- (1.6,1.6);
		\draw (1.6,1.6)-- (1.6,0.5);
		\draw[dotted] (0.4,1.6)-- (0.4,0);
		\fill [color=black] (0.4,1.6) circle (4pt);
		\draw[color=black] (0.5,1.83) node[rotate=0] {\large P. '14};
		\draw[color=black] (0.5,1.71) node[rotate=0] {\large \cite{Parter14}};
		
		\draw (1.9,0.1)-- (2.27,0.07);
		\draw[color=black] (2.19,0.27) node[rotate=-5] {\large EP '04};
		\draw[color=black] (2.19,0.15) node[rotate=-6] {\large \cite{ElkinP04}};
		\draw[color=black] (1.3,0.32) node[rotate=-5] {\large \textbf{New} (Lemma \ref{lem:3plusepsspanner})};
		\draw[ultra thick, red] (0.9,0.3)-- (1.77,0.17);
		\draw[ultra thick,dash dot] (0.6,0.65)-- (0.675,0.3);
		\draw[dotted] (0.6,0.65)-- (0.6,0);
		\draw[dotted] (0.6,0.65)-- (0,0.65);
		\draw[ultra thick, red] (0.6,0.65)-- (0.9,0.65);
		\draw[dotted] (0,0.3)-- (0.675,0.3);
		\draw[ultra thick,dash dot] (0.675,0.3)-- (0.9,0.3);
		\draw[ultra thick, red] (0.9,0.65)-- (0.9,0.3);
		\fill [color=uququq] (0.6,0.65) circle (4pt);
		\fill [color=uququq] (0.9,0.3) circle (4pt);
		\fill [color=uququq] (1.6,0.5) circle (4pt);
		\draw[color=black] (1.73,0.63) node[rotate=0] {\large EN '19};
		\draw[color=black] (1.73,0.49) node[rotate=0] {\large \cite{ElkinN19}};
	 \draw (1.6,0.5) parabola[bend at end] (1.9,0.1) node[below right]{};
	 	\fill [color=uququq] (1.9,0.07) circle (4pt);
	 \draw[color=black] (1.9,0.28) node[rotate=0] {\large ABP '18};
	 \draw[color=black] (1.9,0.15) node[rotate=0] {\large\cite{AbboudBP18}};
	 	
		\end{tikzpicture}}
	\end{center}
		\caption{\label{fig:plot-spanner} \normalsize{The landscape of the stretch function $f(d)/d$ vs. $d$ when fixing the size of the $f(d)$-spanner to $\widetilde{O}(n^{1+1/k})$ edges. The girth conjecture implies $f(d)\geq 2k-1$ for $d=1$ and $f(d)\geq k$ for $d=2$. Prior to our work, constant stretch of $(1+\epsilon)$ was obtained only for $d \geq O(\log k/\epsilon)^{\log k}$ while all closer vertex pairs $d \geq 3$ still suffered from a multiplicative stretch of $k$. Our results leave a gray open area (marked with $?$ in the plot) for the range of distances $d \in [k^{1-\epsilon}, k^{1+\epsilon}]$ for any small constant $\epsilon$. Specifically, currently for vertices at distance $k$ we have a stretch of $2^{O(\sqrt{\log k})}$}. }
\end{figure}

\begin{figure}[h!]
		\begin{center}
		\resizebox{0.9\linewidth}{!}{
		\begin{tikzpicture}[line cap=round,line join=round,>=triangle 45,x=7cm,y=5cm]
		\draw[->,color=black] (0,0) -- (2.3,0) ;
		\draw[color=black] (-0.1,1) node[rotate=0] {\large $\alpha$};
		\draw[color=black] (1,-0.25) node[rotate=0] {\large $\beta$};
		\draw[color=black] (2pt,1.6) -- (-2pt,1.6) node[left] {\large $k^{1-\epsilon}$};
		\draw[color=black] (0.6,2pt) -- (0.6,-2pt) node[below] {\large $k^{1-\epsilon}$};
		\draw[color=black] (0.4,2pt) -- (0.4,-2pt) node[below] {\large $k^{\epsilon}$};
		\draw[color=black] (0.9,2pt) -- (0.9,-2pt) node[below] {\large $k^{1+\epsilon}$};
		\draw[color=black] (1.6,2pt) -- (1.6,-2pt) node[below] {\large $\poly(k)$};			
		\draw[color=black] (1.85,2pt) -- (1.85,-2pt) node[below] {\large $O(\frac{\log{k}}{\epsilon})^{\log{k}}$};
		\draw[color=black] (2.2,2pt) -- (2.2,-2pt) node[below] {\large $\widetilde{O}(\sqrt{n})$};
		\draw[color=black] (2.2,0.22) node[rotate=0] {\large KS '97,SS' 99};
		\draw[color=black] (2.2,0.15) node[rotate=0] {\large \cite{klein1997randomized},\cite{shi1999time}};
		\fill [color=white] (0.5,0) circle (8pt);
		\draw[color=black] (0.5,0) node[rotate=0] {//};
		\fill [color=white] (2.04,0) circle (8pt);
		\draw[color=black] (2.04,0) node[rotate=0] {//};
		\draw[color=black]  (0.2,2pt) -- (0.2,-2pt) node[below] {\large $2$};
		\draw[color=black] (2pt,0.65) -- (-2pt,0.65) node[left] {\large $k^{\epsilon}$};
		\draw[color=black] (2pt,0.5) -- (-2pt,0.5) node[left] {\large $\log{k}$};
		\draw[color=black] (2pt,0.3) -- (-2pt,0.3) node[left] {\large $O(1)$};
		\draw[color=black] (2pt,0.2) -- (-2pt,0.2) node[left] {\large $3+\epsilon$};
		\draw[color=black] (2pt,0.1) -- (-2pt,0.1) node[left] {\large $1+\epsilon$};
		\draw[color=black] (2pt,0.05) -- (-2pt,0.05) node[left] {\large $1$};
		\draw[color=black] (2pt,2) -- (-2pt,2) node[left] {\large $2k-1$};
		\draw[->,color=black] (0,0) -- (0,2.1);
		\fill [color=white] (0,1.8) circle (8pt);
		\draw[color=black] (0,1.8) node[rotate=0] {//};
		
		\draw[color=black] (0.55,1.1) node[rotate=-73] {\large \textbf{New} (Thm. \ref{thm:secondspanner})};
		\draw[dotted] (0,0.5)-- (1.6,0.5);
		\fill [color=white] (0.74,0.5) circle (6pt);
		\draw[color=black] (0.74,0.5) node[rotate=0] {\large ?};
		\draw[dotted] (0,0.05)-- (2.2,0.05);
		\draw[dotted] (2.2,0)-- (2.2,0.05);
		\fill [color=black] (2.2,0.05) circle (4pt);

		\draw (0,2)-- (0.2,2);
		\fill [color=black] (0.2,2) circle (4pt);
		\draw[ultra thick,dash dot] (0.2,2)-- (0.4,1.6);
		\draw[color=black] (0.26,1.76) node[rotate=0] {\large ?};
		\draw[color=black] (0.2,2.21) node[rotate=0] {\large TZ '05};
		\draw[color=black] (0.2,2.08) node[rotate=0] {\large \cite{ThorupZ05}};
		\draw (0.2,1.99)-- (0.2,1.6);
		\draw[dotted] (0,1.6)-- (0.2,1.6);
		\draw[dotted] (0.2,1.6)-- (0.2,0);
		\draw[dotted] (0.2,1.6)-- (0.4,1.6);
		\draw[dotted] (0.4,1.6)-- (0.4,0);

		\draw (1.9,0.1)-- (2.2,0.07);
		\draw[color=black] (1.75,0.45) node[rotate=-50] {\large EN '16};
		\draw[color=black] (1.71,0.38) node[rotate=-50] {\large \cite{ElkinN16b}};
		\draw[color=black] (1.87,0.26) node[rotate=0] {\large HP '19};
		\draw[color=black] (1.87,0.18) node[rotate=0] {\large \cite{HuangP19}};
		\draw[color=black] (1.3,0.32) node[rotate=-5] {\large\textbf{New} (Lemma \ref{lem:hopset-best})};
		
		\draw[ultra thick, red] (0.9,0.3)-- (1.77,0.17);
		\draw[ultra thick, red] (0.4,1.59)-- (0.6,0.65);
		\draw[ultra thick,dash dot] (0.6,0.65)-- (0.675,0.3);
		\draw[dotted] (0.6,0.65)-- (0.6,0);
		\draw[dotted] (0.6,0.65)-- (0,0.65);
		\draw[ultra thick, red] (0.6,0.65)-- (0.9,0.65);
		\draw[dotted] (0,0.3)-- (0.675,0.3);
		\draw[ultra thick,dash dot] (0.675,0.3)-- (0.9,0.3);
		\draw[ultra thick, red] (0.9,0.65)-- (0.9,0.3);
		\fill [color=black] (0.4,1.6) circle (4pt);
		\fill [color=uququq] (0.6,0.65) circle (4pt);
		\fill [color=uququq] (0.9,0.3) circle (4pt);

	 \draw (1.6,0.5) parabola[bend at end] (1.9,0.1) node[below right]{};
	 \fill [color=uququq] (1.85,0.1) circle (4pt);
		\end{tikzpicture}}
	\end{center}
		\caption{The tradeoff between $\alpha$ (stretch) and the $\beta$ (hop-bound) in existing hopset constructions with $O(n^{1+1/k})$ edges. The question marks indicates the current gaps in these regimes.}  \label{fig:hopplot}
	\end{figure}
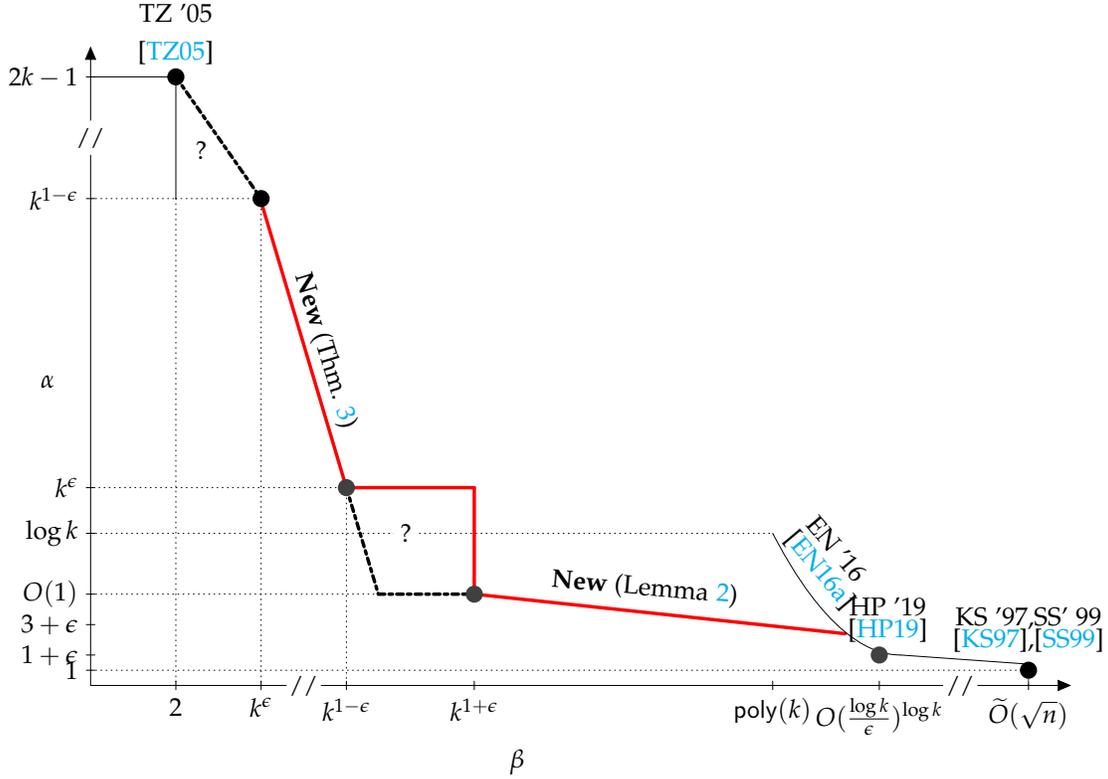

\subsection{Preliminaries}
\paragraph{Graph Notations and Definitions.} We consider an undirected $n$-vertex graph $G=(V,E)$, where $V$ is the set of vertices and $E$ is the edge-set. let $N_G(u)$ be the neighbor of $u$ in $G$. When $G$ is clear from the context, we may simply write $N(u)$. Unless specified otherwise we assume $G$ to be unweighted. For $u,v\in V$ we denote by $\dist_{G}(u,v)$ the distance from $u$ to $v$ in $G$. Similarly, for any subgraph $H\subseteq{G}$ we denote by $\dist_{H}(u,v)$ the distance between the vertices in $H$. For any vertex $u\in V$ and integer $d$, we denote by $\Ball_{G}(u,d)$ the set of vertices at distance at most $d$ from $u$ in $G$, that is $\Ball_{G}(u,d)=\{v\in V ~|~\dist_{G}(u,v)\le d\}$. When the context is clear we might simply write $\Ball(u,d)$. By $\partial \Ball_{G}(u,d)$ we denote the set of vertices at distance \textit{exactly} $d$ from $u$, that is $\partial \Ball_{G}(u,d)=\{v\in V ~|~\dist_{G}(u,v)=d\}$. 
For a weighted graph $G=(V,E,w)$, the \emph{aspect ratio} denoted by $\Lambda$ is the ratio between the largest and smallest vertex-pair distances in $G$. Unless stated otherwise, in our constructions, shortest path ties are broken in a consistent manner.

\paragraph{$(\alpha,\beta)$ Hopsets.} Hopsets are fundamental graph structures introduced by Cohen \cite{cohen2000polylog}. Let $G = (V, E, w)$ be an undirected \emph{weighted} graph and $H \subset {V\choose 2}$ be a set of edges called the \emph{hopset}.  
In the graph $G' = (V, E\cup H,w')$ the weight function $w'$ is defined by letting $w'(e)=w(e)$ for every $e \in E$ and $w'(e=(x,y)))=\dist_G(x,y)$. 
Define the \emph{$\beta$}-limited distance in $G'$, denoted $\dist^{(\beta)}_{G'}(u, v)$, 
to be the length of the shortest path from $u$ to $v$ that uses at most $\beta$ edges in $G'$.
We call $H$ a \emph{$(\beta,\epsilon)$-hopset}, where $\beta \ge 1, \epsilon>0$, 
if, for any $u, v \in V$, we have
$\dist_{G'}^{(\beta)}(u, v) \le (1 + \epsilon) \dist_G(u, v)$.

\paragraph{Clusters and Superclusters.} A \emph{cluster} $C \subseteq V$ is a subset vertices with a small weak diameter in $G$. Every cluster $C$ has a special vertex $r(C)$ that is denoted as the \emph{cluster center}. A \emph{clustering} $\mathcal{C}=\{C_1,\ldots, C_\ell\}$ is a collection of disjoint clusters, where the vertices of the clustering are denoted by $V(\mathcal{C})=\bigcup_i C_i$. Note that $V(\mathcal{C})$ is not necessarily $V(G)$.  

In our algorithms we measure the distance between clusters $C,C'$ by the distance between their centers $r(C),r(C')$ respectively. Formally, define $\cdist_G(C,C')=\dist_G(r(C),r(C'))$. In the same manner, for a vertex $v$ and a cluster $C$, define $\cdist_G(v,C)=\dist_G(v,r(C))$. For a collection of clusters $\mathcal{C}=\{C_1,\ldots, C_\ell\}$ and a cluster $C'$, let $\cdist_G(C', \mathcal{C})=\min_{C_j \in \mathcal{C}}\cdist_G(C',C_j)$.
In the same manner, for a vertex $v$ and a cluster collection $\mathcal{C}$, let $\cdist_G(v, \mathcal{C})=\min_{C_j \in \mathcal{C}}\cdist_G(v,C_j)$.  
Throughout, we break shortest-path ties based on IDs. For instance, the closest cluster in $\mathcal{C}$ to a given vertex $v$ is the minimum-ID cluster $C'$ in $\mathcal{C}$ that satisfies that $\cdist(v, \mathcal{C})=\dist_{G}(v,r(C'))$. The radius of a cluster $C$ is defined by $rad(C):=\max_{u\in V(C)} \dist_{G}(u,r(C))$.

A \emph{supercluster} $SC_i=\{C_{1}\cdots C_{j}\}$ is a set of disjoint clusters, with one special vertex, namely, the center of superclusters, denoted by  $r(SC_i)$.
Let $V(SC_i)=\bigcup_{C_j \in SC_i}V(C_j)$ be the vertices of the supercluster. 
A superclustering $\mathcal{SC}=\{SC_1,\ldots, SC_\ell\}$ is a collection of vertex disjoint superclusters. Let $V(\mathcal{SC})=\bigcup_{SC_j \in \mathcal{SC}}V(SC_j)$ denote the vertices of the superclustering $\mathcal{SC}$. Similarly to clusters, for a pair of superclusters $SC_i,SC_j$ define 
$\cdist_G(SC_i,SC_j)=\dist_{G}(r(SC_{i}),r(SC_{j}))$. For a cluster $C$ and a supercluster $SC$, define $\cdist(C,SC)=\dist_{G}(r(C),r(SC))$. For a cluster $C$ and a superclustering $\mathcal{SC}$, define $\cdist(C, \mathcal{SC})=\min_{SC'\in \mathcal{SC}}\cdist(C,SC')$. Similarly, for a vertex $v\in V$ and a supercluster $SC$ let $\cdist(v,SC)=\dist_{G}(v,r(SC))$, and for a superclustering $\mathcal{SC}$, define $\cdist(v,\mathcal{SC})=\min_{SC'\in \mathcal{SC}}\cdist(v,SC')$.
The \emph{radius} of a supercluster $SC_i$ is defined by $rad(SC_i):=\max_{u\in V(SC_i)}\dist_{G}(u,r(SC_i))$.


\subsection{Algorithmic Tools}
\paragraph{Multiplicative Spanners of Baswana and Sen \cite{BaswanaS07}.} Our algorithms are based on a truncated variant of the Baswana-Sen algorithm, namely Procedure $\Cluster$. The procedure gets as an input a graph $G$, a stretch parameter $k$ and integer $t$ that determines the number of clustering steps. The algorithm begins with singleton clusters $\mathcal{C}_{0}=\{\{v\} ~|~v\in V\}$. Then, at every step $i \in \{1,\ldots,t\}$, a clustering $\mathcal{C}_i$ is defined based on the given clustering $\mathcal{C}_{i-1}$. Every cluster in $\mathcal{C}_{i-1}$ is sampled with probability $n^{-1/k}$ to $\mathcal{C}_{i}$. Vertices that are not adjacent\footnote{We say that a vertex $v$ is adjacent to a cluster $C$ if $C$ contains at least one neighbor of $v$.} to sampled clusters are called \emph{unclustered}, and they do not appear in the following clusters. For each unclustered vertex the procedure adds to the spanner $H$ one edge to each of its adjacent clusters in $\mathcal{C}_{i-1}$. Any other vertex joins its closest sampled cluster. Finally, the algorithm also adds to the output subgraph $H$ the spanning trees of each cluster $C \in \mathcal{C}_i$ (rooted at the cluster center) for every $i \in \{1,\ldots, t\}$. 

\begin{fact}{[\textbf{Theorems 4.1,4,2} and \textbf{Lemma 4.1} in \cite{BaswanaS07}]}
\label{claim:cluster}
(1) For every $1\le i\le t$, $\mathbb{E}[|\mathcal{C}_{i}|]=n^{1-\frac{i}{k}}$, (2) the radius of each cluster $C \in \mathcal{C}_{i}$ is at most $i$, (3) the total number of edges, in expectation, added to $H$ is $O(t\cdot n^{1+1/k})$, and (4) for every edge $(u,v) \in G$ satisfying that at least one of the endpoints is not clustered in 
$\mathcal{C}_{i}$, it holds that $\dist_{H}(u,v)\le 2i-1$.
\end{fact}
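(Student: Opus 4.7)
The plan is to verify the four items in turn, following the standard inductive template of Baswana and Sen: items (1) and (2) will be pure inductions on the level index $i$, item (3) will be a per-step expectation bound via a geometric-sampling argument summed over $t$ levels, and item (4) will track the first level at which an endpoint exits the clustering.

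For (1), I will start from $|\mathcal{C}_0|=n$ and use the fact that each cluster of $\mathcal{C}_{i-1}$ survives to $\mathcal{C}_i$ independently with probability $n^{-1/k}$ while joining vertices attach to an already-sampled center and therefore do not create new cluster centers. This yields $\mathbb{E}[|\mathcal{C}_i| \mid \mathcal{C}_{i-1}]=n^{-1/k}|\mathcal{C}_{i-1}|$, and iterating gives $\mathbb{E}[|\mathcal{C}_i|]=n\cdot n^{-i/k}=n^{1-i/k}$. For (2), singletons in $\mathcal{C}_0$ have radius $0$; assuming every $C\in\mathcal{C}_{i-1}$ has radius at most $i-1$, a cluster $C'\in\mathcal{C}_i$ consists of a sampled $C\in\mathcal{C}_{i-1}$ (sharing its center $r(C)=r(C')$) together with joining vertices $v$ that each have a $G$-neighbor $w\in C$, so $\dist_G(v,r(C'))\leq 1+\dist_G(w,r(C))\leq i$. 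Because the algorithm also inserts the joining edge $(v,w)$ and retains the spanning tree of $C$, the same bound holds along edges of $H$.

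For (3), I will split the edges added at level $i$ into (a) the ``farewell'' edges contributed by vertices $v$ sitting in unsampled clusters of $\mathcal{C}_{i-1}$, and (b) a single tree edge created each time a vertex joins a sampled cluster (at most $n$ per level). For (a), given a vertex $v$ with $q_v$ adjacent clusters in $\mathcal{C}_{i-1}$, $v$ contributes $q_v$ farewell edges when none of its adjacent clusters is sampled (which happens with probability $(1-n^{-1/k})^{q_v}$) and at most a single joining edge otherwise, so its expected contribution is at most $q_v(1-n^{-1/k})^{q_v}+1=O(n^{1/k})$ using the standard bound $\max_q q(1-p)^q=O(1/p)$ with $p=n^{-1/k}$. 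Summing over the $n$ vertices yields $O(n^{1+1/k})$ per level, hence $O(t\cdot n^{1+1/k})$ overall, which absorbs the $O(tn)$ tree edges.

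For (4), suppose $u\notin V(\mathcal{C}_i)$ and let $j\leq i$ be the smallest index with $u\notin V(\mathcal{C}_j)$. If $v$ has also become unclustered by level $i$ and exited earlier than $u$, I will simply swap the roles of $u$ and $v$, so I may assume $v\in V(\mathcal{C}_{j-1})$. Because $(u,v)\in E$, the cluster $C_v\in\mathcal{C}_{j-1}$ containing $v$ is adjacent to $u$; because $u$ becomes unclustered at step $j$, the cluster $C_v$ is not sampled, and so the farewell rule inserts into $H$ an edge $(u,w_v)$ with $w_v\in C_v\cap N(u)$. Concatenating this edge with the path $w_v\to r(C_v)\to v$ through the spanning tree of $C_v$ (of depth at most $j-1$ by (2)) gives a $u$-$v$ path in $H$ of length at most $1+2(j-1)=2j-1\le 2i-1$. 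I expect the main technical obstacle to be the expectation bound in (3): the event ``$v$ is in an unsampled cluster'' and the count ``number of clusters $v$ is adjacent to'' are coupled, and the cleanest way to control their joint contribution is precisely the geometric-until-first-sample calculation sketched above.
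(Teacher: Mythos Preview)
Your argument is correct and follows the standard Baswana--Sen template. Note, however, that the paper does not actually supply a proof of this statement: it is recorded as a \emph{Fact} with a citation to Theorems~4.1, 4.2 and Lemma~4.1 of~\cite{BaswanaS07}, so there is no ``paper's own proof'' to compare against. Your write-up is essentially a faithful reconstruction of the original Baswana--Sen analysis: the linearity-of-expectation count for~(1), the one-hop radius growth for~(2), the $q(1-p)^q\le O(1/p)$ bound for~(3), and the ``first-exit level plus cluster tree'' walk for~(4) are exactly the ingredients used there.
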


\paragraph{Distance Oracles and Hopsets of Thorup and Zwick \cite{ThorupZ05}.}
\label{par:TZ} 
Our hopsets constructions are based on the $(2k-1,2)$ hopsets that are based on the distance oracles of Thorup and Zwick \cite{ThorupZ05}.  The construction of the hopset is based on defining an hierarchical collection of \emph{centers}  $A_{k-1}\subset...\subset A_{0}=V$, where each $A_{i}$ is obtained by sampling each $v \in A_{i-1}$ independently with probability of $n^{-1/k}$ and let $A_{k}=\phi$. The $i^{th}$ \emph{pivot} of every vertex $v$ denoted by $p_i(v)$ is the closest vertex in $A_{i}$ to $v$. For every $v$, its $i^{th}$ \emph{bunch} $B_i(v)$ contains all vertices in $A_i$ that are strictly closer to $v$ than $p_{i+1}(v)$. That is, $B_i(v)=\{u \in A_i - A_{i+1} ~\mid~ \dist_G(v,u)< \dist(v,p_{i+1}(v))\}$, and let $B(v)=\bigcup_{i=1}^{k-1} B_i(v)$. Note that we add a hop from each vertex to all vertices in $A_{k-1}$. Thorup and Zwick showed that for every $v$, $|B(v)|=O(k\cdot n^{1/k})$ in expectation.
The collection of bunches translates into $(2k-1,2)$ hopset $H$ as follows: for every $v$ and $u\in B(v)$, add to $H$ an edge $(u,v)$ of weight $\dist_G(u,v)$. In our applications, we sometimes use a truncated version of the Thorup and Zwick construction for a given input parameter $t \leq k$. Algorithm $\TZ(G,k,t)$ gets as input a graph $G$, stretch parameter $k$ and an integer $t$. The output of the algorithm is the TZ-hopset $H_{TZ}$ along with the subset $A_t$, that is, level-$t$ centers that contains $n^{1-t/k}$ vertices in expectation.

\begin{fact}
\label{fact:TZ}
Fix a vertex pair $u$ and $v$. Let $i_u$ the minimal index $i$ such that $p_i(u) \in B_i(v)$, and similarly let $i_v$ be the minimal index $i'$ such that $p_{i'}(v) \in B_{i'}(u)$. Let $i^*=\min\{i_u,i_v\}$. 
(i) For every $j \leq i^*$, it holds that $\dist_G(u,p_j(u))\leq j \cdot \dist_G(u,v)$, and
(ii) the hopset $H_{TZ}$ satisfies that
$\dist_G(u,v) \leq \dist^{(2)}_{H_{TZ} \cup G}(u,v)\leq  (2i^*+1) \dist_G(u,v)$. 
Furthermore (iii), it holds that $|B_{i}(v)|\le n^{1/k}$ in expectation.
\end{fact}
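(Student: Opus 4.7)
The plan is to derive all three parts from the classical Thorup--Zwick distance oracle argument, exploiting the hierarchy $A_{k-1}\subset\cdots\subset A_0=V$ and the bunch structure. For part (i), I would prove by \emph{simultaneous} induction on $j\le i^*$ that both $\dist_G(u,p_j(u))\le j\cdot \Delta$ and $\dist_G(v,p_j(v))\le j\cdot \Delta$ hold, where $\Delta:=\dist_G(u,v)$. The base $j=0$ is immediate since $p_0(u)=u$ and $p_0(v)=v$. For the inductive step, since $j-1<i^*\le i_v$ we have $p_{j-1}(v)\in A_{j-1}\setminus B_{j-1}(u)$; unpacking the bunch definition, this forces $\dist_G(u,p_j(u))\le \dist_G(u,p_{j-1}(v))$ (either $p_{j-1}(v)\in A_j$, so $p_j(u)$ is at least as close to $u$ as $p_{j-1}(v)$, or $p_{j-1}(v)\in A_{j-1}\setminus A_j$ with $\dist_G(u,p_{j-1}(v))\ge \dist_G(u,p_j(u))$ by the very reason it failed to enter $B_{j-1}(u)$). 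The triangle inequality together with the inductive hypothesis then yields $\dist_G(u,p_j(u))\le \Delta+(j-1)\Delta=j\Delta$, and the symmetric bound uses $j-1<i_u$ in the analogous way.

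For part (ii), I would assume without loss of generality that $i^*=i_u$, so $p_{i^*}(u)\in B_{i^*}(v)$ and hence $H_{TZ}$ contains the edge $(v,p_{i^*}(u))$ of weight $\dist_G(v,p_{i^*}(u))$. One also observes that $p_{i^*}(u)$ lies in $B(u)$: taking the maximal $\ell$ with $p_\ell(u)=p_{i^*}(u)$, the maximality ensures $p_\ell(u)\in A_\ell\setminus A_{\ell+1}$ and $\dist_G(u,p_\ell(u))<\dist_G(u,p_{\ell+1}(u))$, so $p_\ell(u)\in B_\ell(u)$ (and the boundary case $p_{i^*}(u)\in A_{k-1}$ is covered by the top-level hop to every vertex of $A_{k-1}$). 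Thus $H_{TZ}$ contains both edges $(u,p_{i^*}(u))$ and $(v,p_{i^*}(u))$, giving a $2$-hop path of total weight at most $2\dist_G(u,p_{i^*}(u))+\Delta\le (2i^*+1)\Delta$ by part (i) and the triangle inequality. The lower bound is immediate since every hopset edge $(x,y)$ carries weight exactly $\dist_G(x,y)$, so no path in $H_{TZ}\cup G$ can undercut $\dist_G$.

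For part (iii), I would condition on $A_i$ and list its elements in increasing order of distance from $v$ (with consistent tie-breaking) as $x_1,x_2,\ldots$. If $t$ is the smallest index with $x_t\in A_{i+1}$, then $p_{i+1}(v)=x_t$ and the bunch definition gives $B_i(v)=\{x_1,\ldots,x_{t-1}\}$. Since each $x_s$ is independently included in $A_{i+1}$ with probability $n^{-1/k}$, the variable $t$ is geometric with mean $n^{1/k}$, hence $\mathbb{E}[|B_i(v)|]=\mathbb{E}[t-1]\le n^{1/k}$, where a possible truncation at $|A_i|$ can only reduce the expectation. The main subtlety in the whole argument is in part (i): a naive induction on $j$ that fixes only the endpoint $u$ fails to close, because the step requires $\dist_G(v,p_{j-1}(v))\le (j-1)\Delta$ rather than $\dist_G(u,p_{j-1}(u))\le (j-1)\Delta$; the remedy is to carry both bounds together, leveraging the symmetric roles of $i_u$ and $i_v$ in the definition of $i^*$. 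Parts (ii) and (iii) are then direct consequences of the bunch construction and the independent sampling.
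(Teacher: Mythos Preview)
Your proposal is correct and follows essentially the same approach as the paper's proof: the simultaneous induction for part~(i) using $p_{j-1}(v)\notin B_{j-1}(u)$ to bound $\dist_G(u,p_j(u))$, and the two-hop path through $p_{i^*}(u)$ for part~(ii), are exactly what the paper does (though the paper leaves the simultaneity of the induction implicit and writes ``$p_{j-1}(v)\notin B(u)$'' somewhat loosely). You go a bit further than the paper in two places: you spell out why the edge $(u,p_{i^*}(u))$ is actually present in $H_{TZ}$, and you supply the geometric-distribution argument for part~(iii), which the paper simply cites from Thorup--Zwick without proof.
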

\begin{proof}
	(i) By induction on $j$. For $j=0$ this is clear since $p_{j}(u)=u$. Assume the claim is true for $j-1$, thus $\dist_{G}(v,p_{j-1}(v))\le (j-1)\cdot \dist_{G}(u,v)$ and $\dist_{G}(u,p_{j-1}(u))\le (j-1)\cdot \dist_{G}(u,v)$. For $j\leq i^*$, since $p_{j-1}(v) \notin B(u)$, it follows that 
$$\dist_{G}(u,p_j(u))\le \dist_{G}(u,p_{j-1}(v))\le \dist_{G}(u,v)+\dist_{G}(v,p_{j-1}(v))\le j\cdot \dist_{G}(u,v).$$
(ii) W.l.o.g., let $i^*=i_u$. Then, $\dist_{H_{TZ}}(u,v) \leq \dist_G(u, p_{i^*}(u))+\dist_G(p_{i^*}(u),v)$.
Then by (i) it follows that $\dist_{H_{TZ}}(u,v)\le i^*\cdot \dist_G(u,v)+(i^*+1)\dist_G(u,v)\leq (2\cdot i^{*}+1)\cdot \dist_{G}(u,v)$.
\end{proof}
Throughout in all our hopset constructions, whenever an edge $(u,v) \in V \times V$ is added to the hopset it is given a weight of $\dist_G(u,v)$. To avoid confusion, the edges not in $G$ are referred to as \emph{hops}.

\paragraph{Roadmap.} In Sec. \ref{sec:spanner-close}, we present the first spanner construction that provides a stretch of $7k/d$ for all vertices at distance at most $\sqrt{k}$. Then in Sec. \ref{sec:second-regime-spanner}, we present the key construction of $O(k^{\epsilon}, O_{\epsilon}(k))$ spanners. Sec. \ref{sec:3epsspanner} presents a simplified construction for $(3+\epsilon,\beta)$ spanners. 
Sec. \ref{sec:hopsets-new} describes the new hopset constructions. 
Finally, in Appendix \ref{sec:eff-con}, we describe the implementation details and applications, and in Appendix \ref{sec:best-con}, we show improved constructions of $(3+\epsilon,\beta)$ spanners and hopsets.

\def\APPENDTABLE{
\begin{table}
\centering
\small
\begin{tabular}{ |l|l|l|l| }
\hline
Reference &$\alpha$ & $\beta$  & Size  \\ 
\hline
\cite{klein1997randomized,shi1999time} & $1$ & $O(\sqrt{n}\log n)$ & $O(n)$ \\
\hline
\cite{cohen2000polylog} & $(1+\epsilon)$ & $k'=(\log n)^{O(\log k)}$ & $n^{1+1/k}\cdot k'$  \\
\hline
\cite{ThorupZ05} & $2k-1$ & $2$ & $O_k(n^{1+1/k})$  \\
\hline
\cite{ElkinN16b,HuangP19} & $(1+\epsilon)$ & $O(\log k/\epsilon)^{O(\log k)}$ & $\widetilde{O}(n^{1+1/k})$  \\
\hlineB{3}
\multirow{3}{*}{{\bf New} }& $k^{\epsilon}$ & $k^{1-\epsilon}$ & $O_k(n^{1+1/k})$   \\
&$O(1)$ & $k^{1+o(1)}$ & $O_k(n^{1+1/k})$   \\
& $3+\epsilon$ & $O_{\epsilon}(k^{\log(3+9/\epsilon)})$ & $O_k(n^{1+1/k})$   \\
\hline
\end{tabular}
\caption{\label{tab:hops} Comparison between $(\alpha,\beta)$-hopsets}\label{table:hopset}
\end{table}
}
	
\section{Improved Spanners for Close Vertex Pairs}\label{sec:spanner-close}
This section is devoted to showing Theorem \ref{thm:sqrt}. We consider unweighted graphs, and describe the
construction of a spanner that provides a stretch of $7k/d$ for every pair of vertices $u$ and $v$ at distance at most $d$ in $G$, provided that $d \leq \sqrt{k}/2$. In the language of $(\alpha,\beta)$-spanner, this spanner can be viewed as an $(O(\sqrt{k}), k)$ spanner. 
For simplicity, we fix a distance value $d \in [1,\sqrt{k}/2]$ and explain how to provide a stretch of $7k/d$ using a subgraph of expected size $O(k/d \cdot n^{1+1/k})$. The same procedure will be repeated for every $d \leq \sqrt{k}/2$.
\paragraph{Description of Algorithm $\ImprovedSpannerI$.} The algorithm consists of two key steps. The \textbf{first step} applies a truncated version of Baswana-Sen algorithm, applying only the first $\left \lfloor k/d \right \rfloor$ clustering steps. This results in a clustering $\mathcal{C}$ of expected size $n^{1-\frac{\left \lfloor k/d \right \rfloor}{k}}=O(n^{1+1/k-1/d})$, as well as a subset of edges added to the spanner $H$ (i.e., that takes care of the vertices that are not in the clusters of $\mathcal{C}$). 
\indent In the \textbf{second step}, the algorithm computes a cluster-graph $\widehat{G}(\mathcal{C},\mathcal{E})$ as follows. The vertices of the cluster-graphs, denoted as \emph{super-nodes}, are the clusters of $\mathcal{C}$. Every two clusters $C_i,C_j \in \mathcal{C}$ are connected in $\widehat{G}$ iff the distance between their centers in $G$ is at most $d+2k/d$. That is, $\mathcal{E}=\{(C_i,C_j) ~\mid~ \cdist_G(C_i,C_j)\leq d+2\cdot k/d\}$. 
The algorithm then computes a $(2d-1)$ spanner $\widehat{H}$ on this cluster graph, by using any standard multiplicative spanner procedure (e.g., the greedy spanner). Finally, the edges of this spanner are translated into $G$-edges as follows: for every $(C_i,C_j) \in \widehat{H}$, add to $H$ the shortest path in $G$ between the centers of $C_i$ and $C_j$. This completes the description of the algorithm. See Alg. \ref{alg:imp-spanner} for a pseudocode.
 
\begin{algorithm}
\begin{algorithmic}[1]
	\caption{$\ImprovedSpannerI(G,k,d)$.}
	\label{alg:imp-spanner}
\State \textbf{Input:} A graph $G=(V,E)$ and integers $k,d\le \sqrt{k}/2$. 
\State \textbf{Output:} A subgraph $H\subseteq{G}$ of expected size $O(\frac{k}{d}\cdot n^{1+1/k})$ that obtains stretch of $O(k/d)$ for pairs at distance $d$.
\State Let $(\mathcal{C},H_{0}) \leftarrow \Cluster(G,k,\left \lfloor k/d \right \rfloor)$.
\State Let $\widehat{G}=(\mathcal{C},\mathcal{E}=\{(C_i,C_j) ~\mid~   \cdist_G(C_i,C_j)\leq d+2k/d\})$.
\State $\widehat{H}\gets \BasicSpanner(\widehat{G},2d-1)$.
\State For each edge $(C,C')\in E(\widehat{H})$, add to $H$ the shortest path between $r(C)$ and $r(C')$ in $G$.
\State \Return $H$
\end{algorithmic}
\end{algorithm}
We next analyze Algorithm $\ImprovedSpannerI$ and prove Thm. \ref{thm:sqrt}.
\begin{proof}
\textbf{Stretch:} Fix a pair of vertices $u$ and $v$ at distance $d$ in $G$ for $d \in [1,\sqrt{k}/2]$, and
let $P$ be their shortest path in $G$. We consider two cases. First, assume that no edge $e=(u',v')$ on $P$ has both its endpoints in the clusters of $\mathcal{C}$. Then, by Fact \ref{claim:cluster}(4), we have that $\dist_H(u',v')\leq 2(k/d)-1$ for every edge $(u',v')$ in $P$. This gives a $u$-$v$ path in $H$ of total length $d \cdot (2(k/d)-1)=2k-d$. 
Next consider the complementary case where $P$ contains at least one edge with its both endpoints clustered. Let $u',v'$ be the leftmost and rightmost clustered vertices on $P$. Let us define the following subpaths $P_{1}:=P[u,u']$, $P_{2}:=[u',v']$ and $P_{3}:=[v',v]$, such that $P=P_{1}\circ P_{2} \circ P_{3}$. 
For every vertex $z$, let $C_z$ be its closest cluster in $\mathcal{C}$ with respect to the distance to the centers. Since $\cdist_{G}(C_{u'},C_{v'})\le 2k/d+\dist_{G}(u,v)= 2k/d+d$, it holds that $(C_{u'},C_{v'}) \in E(\widehat{G})$, thus by the properties of the $(2d-1)$-spanner $\widehat{H}$, it holds that $\dist_{\widehat{H}}(C_{u'},C_{v'})\le 2d-1$. 
As each edge in $\widehat{H}$ corresponds to a shortest path of length at most $2k/d+d$ in $H$, we have:
\begin{eqnarray*}
\dist_{H}(u',v')&\le& \max_{x\in C_{u'},y\in C_{v'}}\dist_{H}(x,y)
\\&\le& 2k/d+\dist_{\widehat{H}}(C_{u'},C_{v'})\cdot  (2k/d + d)
\leq 2k/d+(2d-1)\cdot  (2k/d+d)
\leq 4k+2d^2~.
\end{eqnarray*} 
Finally, again by Fact \ref{claim:cluster}(4), we have that $d_{H}(u,u')\le (2(k/d)-1)\cdot d_{G}(u,u')$ and $d_{H}(v',v)\le (2(k/d)-1)\cdot d_{G}(v',v)$. We therefore conclude that
\begin{eqnarray*}
\dist_{H}(u,v)&\le& \dist_{H}(u,u')+\dist_{H}(u',v')+ \dist_{H}(v',v)\\&\leq& (2(k/d)-1)\cdot d_{G}(u,u')
+ 4k+ 2 d^2+(2(k/d)-1)\cdot d_{G}(u,u')
\\&\le& (2(k/d)-1)\cdot d+4 k+2 d^2
\le 6k+2d^2\le (6.5 k/d)\cdot d~.
\end{eqnarray*}


\def\APPENDFIGFIRSTSPANNER{
\begin{figure}[h!]
\begin{center}
\includegraphics[scale=0.45]{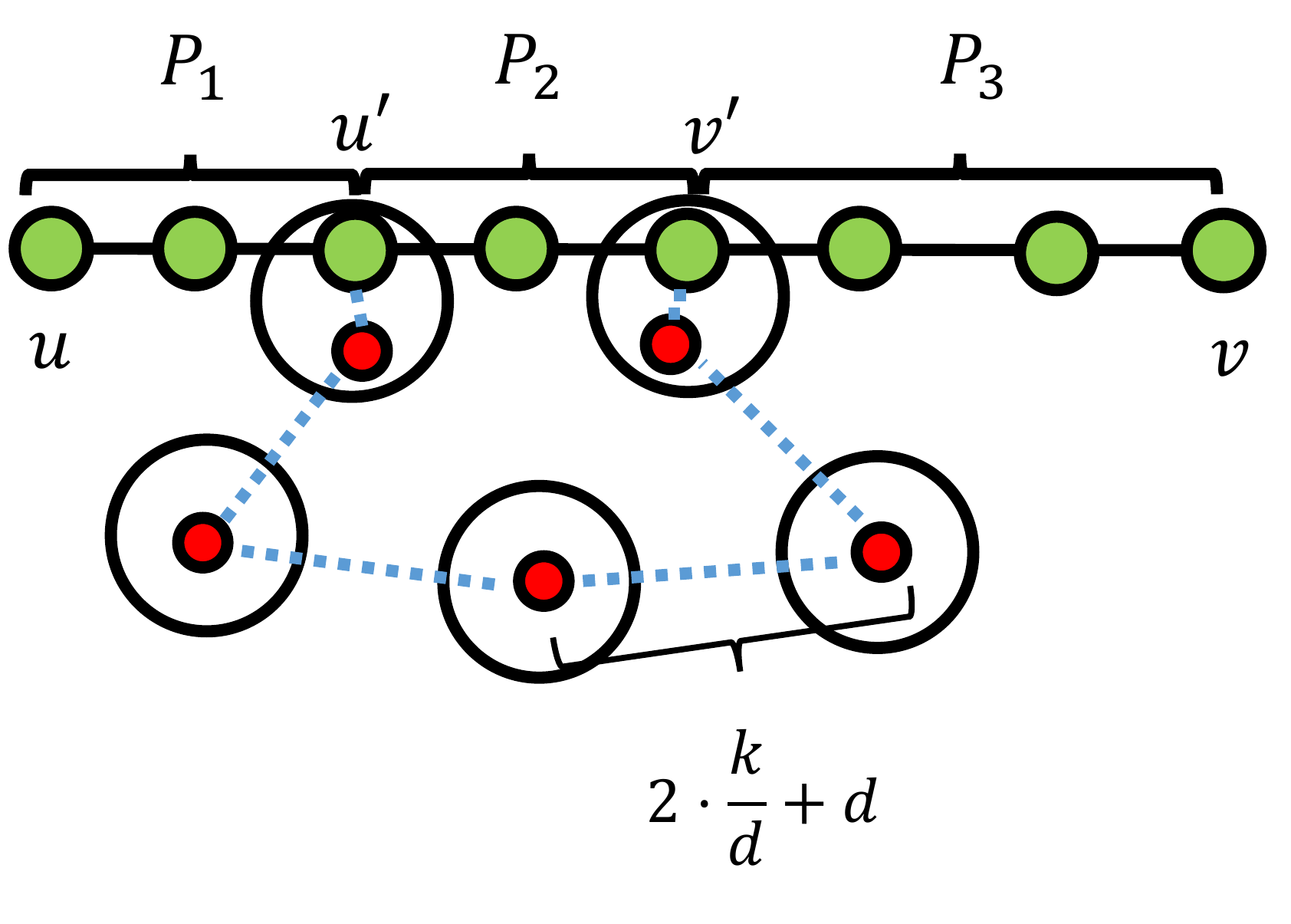}
\caption{\small Illustration for the proof of Theorem \ref{thm:sqrt}. Shown is a $u$-$v$ shortest path in $G$, where $u',v'$ are the leftmost and rightmost clustered vertices on the path. By definition, the clusters of $u'$ and $v'$ are neighbors in the cluster graph $\widehat{G}$. Thus they are connected by a path of $2d-1$ clusters in the cluster-spanner $\widehat{H}$, i.e., the spanner of the cluster graph.}
\label{fig:closespanner}
\end{center}
\end{figure}
}

\textbf{Size:} We show that for each fixed distance value $d$, the algorithm adds at most $O(k/d \cdot n^{1+1/k})$ edges to the spanner. By Claim \ref{claim:cluster}(3), the first $k/d$ steps of the Baswana-Sen clustering adds $O(k/d\cdot n^{1+1/k})$ edges in expectation. By Claim \ref{claim:cluster}(1), in expectation, $\mathcal{C}$ contains $N=n^{1+1/k-1/d}$ clusters. Thus, the cluster-graph $\widehat{G}$ has $N$ super-nodes, and the size of its $(2d-1)$-spanner $\widehat{H}$ is $N^{1+1/d}=O(n^{1+1/(k\cdot d)})$ in expectation. Since each edge in $\widehat{H}$ corresponds to a path of length at most $(2k/d+d)$ in $G$, we get that this step adds $d \cdot |E(\widehat{H})|=O(k/d \cdot n^{1+1/(k\cdot d)})$ edges to the spanner. The size argument follows. 
\end{proof}
\section{New $(k^{\epsilon},O_{\epsilon}(k))$ Spanners}\label{sec:second-regime-spanner}
In this section, we consider Theorem \ref{thm:secondspanner} and show the construction of $(\alpha,\beta)$ spanners that provide a nearly optimal stretch for all pairs at distance $\sqrt{k}/2 < d \leq k^{1-o(1)}$ and $d \geq k^{1+o(1)}$. For the sake of simplicity, we consider a fixed distance value $d$, and prove the following lemma: 
\begin{lemma}
\label{lem:secondspanner}
For any $n$-vertex unweighted graph $G=(V,E)$, any $0<\epsilon<\frac{1}{2}$ and integers $k>16^{1/\epsilon},d\ge 1$, there is a subgraph $H\subseteq{G}$ of expected size $O(k^{\epsilon}\cdot n^{1+1/k}+(64^{1/\epsilon}k^{1+\epsilon}+d)\cdot n)$ such that for every $u,v\in V$, at distance $d$ in $G$, it holds that $\dist_{H}(u,v)\le 4\cdot k^{\epsilon}\cdot d+1/6\cdot 64^{1/\epsilon}\cdot k$.
\end{lemma}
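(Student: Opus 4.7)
\medskip
\noindent\textbf{Proof proposal.} The plan is to follow the three-stage template sketched in the technical overview, instantiated for a single fixed distance scale $d$. Let $\epsilon'=k^{\epsilon}$ (think of it as the ``multiplicative budget''). The construction will be: (I) a truncated Baswana--Sen clustering producing $\mathcal{C}_0$; (II) a sequence of $T=\Theta(1/\epsilon)$ superclustering phases, each of $t=\Theta(k^{\epsilon})$ inner steps, that boosts the cluster-count reduction geometrically; (III) a $(2k^{\epsilon}-1)$ multiplicative spanner built on the final cluster graph whose edges are inflated back to $G$-paths. I will set parameters so that after stage~(I) we have $N_0=n^{1-1/k^{1-\epsilon}}$ clusters of radius $r_0=\lceil k^{\epsilon}\rceil$, after stage~(II) we have $N_T=O(n^{1-1/k^{\epsilon}})$ clusters of radius $r_T=O(64^{1/\epsilon}\cdot k)$, and stage~(III) then adds $O(N_T^{1+1/k^{\epsilon}}\cdot r_T)=O(r_T\cdot n)$ spanner edges. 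Summed with the $\widetilde{O}(k^{\epsilon}\cdot n^{1+1/k})$ edges from stage~(I) (by Fact~\ref{claim:cluster}) and the augmentation edges, this matches the size bound $O(k^{\epsilon}\cdot n^{1+1/k}+(64^{1/\epsilon}k^{1+\epsilon}+d)\cdot n)$.

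First I would formalize the parameters of stage~(II). For each phase $i\in[1,T]$ and step $j\in[1,t]$, set the augmentation radius $\alpha_{i-1,j}=c\cdot r_{i-1,0}\cdot e^{4j/k^{\epsilon}}$ and the ``absorption distance'' $r_{i-1,0}+2\alpha_{i-1,j}+r_{i-1,j-1}$. A telescoping computation (using $\prod_{j\le t}(1+\Theta(1/k^{\epsilon}))=\Theta(1)$ and $\sum_j e^{4j/k^{\epsilon}}=O(k^{\epsilon})$) yields $r_{i,0}=O(k^{\epsilon}\cdot r_{i-1,0})$, so after $T=O(1/\epsilon)$ phases $r_T=O(k^{\epsilon(T+1)})=O(64^{1/\epsilon}\cdot k)$. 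The sub-sampling with probability $N_{i-1}/n$ inside each step ensures, via a standard Chernoff-plus-union-bound argument, that the expected number of ``lost'' clusters in any step is $O(n^{1/k^{\epsilon}})$ per surviving supercluster, which in turn caps the edges added in the ``handling lost clusters'' substep by $O(n\cdot\mathrm{poly}(1/\epsilon))$ per step and hence $O(64^{1/\epsilon}k^{1+\epsilon}\cdot n)$ overall. Together with the augmentation edges (one shortest path per vertex within the augmentation ball, so $O(r_{i-1,j-1}+\alpha_{i-1,j})$ per vertex per step), this gives the stated edge bound.

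The key stretch argument is the main obstacle and I would treat it with induction on the ``depth'' at which a vertex ceased to be clustered. I would prove the following invariant: if $u$ is $(i-1,j)$-unclustered, then the edges added during phase $i$ satisfy $\dist_H(u,w)\le k^{\epsilon}\cdot \dist_G(u,w)$ for every $w$ with $\dist_G(u,w)\le \alpha_{i-1,j}$, and moreover $u$ is within $\alpha_{i-1,j}$ of a cluster center surviving into $\mathcal{C}_i$. The base case $(0,\cdot)$ is Fact~\ref{claim:cluster}(4). The inductive step uses the augmentation substep (which supplies the short $H$-path from $u$ to its closest supercluster center) and the ``new superclustering''/``handling lost clusters'' substeps (which supply the short $H$-path between the old and new centers). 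Given a $u$--$v$ shortest path $P$ in $G$, I would greedily partition $P$ into maximal consecutive segments $P[x,y]$ on which the weaker of the two endpoints is $(i-1,j)$-unclustered for some step $(i,j)$, choose the segment length equal to $\alpha_{i-1,j}$, and apply the invariant to each segment to contribute a multiplicative stretch of $k^{\epsilon}$. The last segment (at most $r_T+d$ long) is handled by stage~(III): both endpoints lie in clusters of $\mathcal{C}_T$ whose centers are within $2r_T+2d$, so they are adjacent in $\widehat{G}$ and connected by $(2k^{\epsilon}-1)$ edges in $\widehat{H}$, each inflating to a $G$-path of length at most $2r_T+2d$. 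Summing the multiplicative stretch $k^{\epsilon}$ over the ``long'' part of $P$ and adding the $\le (2k^{\epsilon}-1)(2r_T+2d)$ contribution of the last segment yields $\dist_H(u,v)\le 4k^{\epsilon}\cdot d+\tfrac{1}{6}\cdot 64^{1/\epsilon}\cdot k$ after absorbing constants, provided $k>16^{1/\epsilon}$ so that the various $e^{4j/k^{\epsilon}}$ factors remain bounded by a small constant.

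The hardest part will be the bookkeeping in the inductive invariant: one has to track both the additive radius $r_{i-1,j}$ of the current superclustering and the augmentation radius $\alpha_{i-1,j}$ along with the probabilistic guarantee that every vertex adjacent to a ``lost cluster'' actually got its short $H$-path added before it dropped out. I would handle this by proving a single strengthened lemma at each step stating simultaneously (a) the radius recursion for $r_{i,j}$, (b) the expected-size bound for $|\mathcal{SC}_{i,j}|$ by Chernoff on the sub-sampling, and (c) the local stretch invariant described above; the three facts feed into each other but can be closed in a single induction on $(i,j)$. Once this lemma is in place, Lemma~\ref{lem:secondspanner} follows by setting $d$ explicitly and reading off the constants.
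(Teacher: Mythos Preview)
Your three-stage plan is the paper's plan, and the ingredients you list (truncated Baswana--Sen, $T$ phases of $t$ superclustering steps with an augmentation sub-step, final $(2k^\epsilon-1)$-spanner on the cluster graph) are exactly right. But there is a genuine arithmetic error in your radius bookkeeping that breaks the conclusion.

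You claim the final radius is $r_T=O(64^{1/\epsilon}\cdot k)$. In the finalizing stage the additive stretch you incur is $(2k^\epsilon-1)(2r_T+2d)$, which is $\Theta(k^\epsilon r_T)+O(k^\epsilon d)$. To land on the lemma's additive term $\tfrac{1}{6}\cdot 64^{1/\epsilon}\cdot k$ you therefore need $r_T=O_\epsilon(k^{1-\epsilon})$, not $O_\epsilon(k)$; with your stated $r_T$ the additive term would be $\Theta_\epsilon(k^{1+\epsilon})$. The paper gets the correct $r_T\le \tfrac{1}{30}\cdot 64^{(1-\epsilon)/\epsilon}k^{1-\epsilon}$ precisely by choosing $t=\lceil k^\epsilon\rceil/4$ and $T=\log_{t}(k^{1-2\epsilon})$ (so $t^T=k^{1-2\epsilon}$, which is what forces $|\mathcal{C}_T|=n^{1-1/k^\epsilon}$); this makes $T\approx (1-2\epsilon)/\epsilon$, and then $r_T\approx r_0\cdot(2k^\epsilon)^T=\Theta_\epsilon(k^{1-\epsilon})$. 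Your ``$T=\Theta(1/\epsilon)$ hence $r_T=O(k^{\epsilon(T+1)})=O(64^{1/\epsilon}k)$'' is too coarse and off by a factor $k^\epsilon$.

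Two smaller points. First, the invariant you state for $(i-1,j)$-unclustered $u$, namely $\dist_H(u,w)\le k^\epsilon\cdot\dist_G(u,w)$ for all $w\in\Ball_G(u,\alpha_{i-1,j})$, is stronger than what the construction gives and than what you need: the paper only proves the absolute bound $\dist_H(u,w)\le \lceil k^\epsilon\rceil\cdot\alpha_{i,j}$, which yields $k^\epsilon$ multiplicative stretch only at the boundary $\partial\Ball$; since your path segments are chosen of length exactly $\alpha_{i-1,j}$ that suffices, but the invariant as written is not provable from the algorithm. (The extra clause ``$u$ is within $\alpha_{i-1,j}$ of a center surviving into $\mathcal{C}_i$'' is false by definition of $(i-1,j)$-unclustered.) Second, your treatment of ``the last segment'' conflates two cases. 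The correct dichotomy is: either some vertex of $P$ belongs to $\mathcal C_T$, in which case both $u$ and $v$ lie within $r_T+d$ of cluster centers that are adjacent in $\widehat G$ and stage~(III) alone gives the bound; or no vertex of $P$ is clustered, in which case every segment endpoint is $(i,j)$-unclustered for some $(i,j)$ and the last (possibly short) segment contributes at most $2r_T$ additively. Fix the $r_T$ calculation and separate these two cases, and your outline becomes the paper's proof.
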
 
We later on show how Theorem \ref{thm:secondspanner} follows by applying the construction of Lemma \ref{lem:secondspanner} algorithm for any $d \in [1,k^{1-\epsilon}]$. 
We now turn to describe our three stage procedure for computing the spanners of Lemma \ref{lem:secondspanner}.
\paragraph{Algorithm $\ImprovedSpannerII$.} The algorithm works in three stages. In the \textbf{first stage} it calls Procedure $\Cluster$ for $\left \lceil k^{\epsilon} \right \rceil$ clustering steps. This results with a clustering $\mathcal{C}_{0}$ of $O(n^{1-\frac{\lceil k^{\epsilon} \rceil}{k}})$ clusters in expectation, each with radius at most $\left \lceil k^{\epsilon} \right \rceil$. 

In the \textbf{second stage}, the number of clusters is dramatically reduced to $O(n^{1-\frac{1}{k^{\epsilon}}})$ while keeping the radius of each cluster to $O_{\epsilon}(k^{1-\epsilon})$. In the \textbf{last stage},  a cluster graph is computed on the collection of $O(n^{1-\frac{1}{\lceil k^{\epsilon} \rceil}})$  clusters, and a $(2\cdot \lceil k^{\epsilon} \rceil-3)$ spanner $\widehat{H}$ is computed on that graph. Each edge in $\widehat{H}$ will be translated into a $G$-path that is added to the final spanner.

\paragraph{Preliminary Stage: Truncated Baswana-Sen Algorithm.}
The algorithm starts by applying the first $\left \lceil k^{\epsilon} \right \rceil$ steps of Alg. $\Cluster$.
This results in a clustering $\mathcal{C}_{0}$ and a subgraph $H_{0}$. By the properties of the Baswana-Sen algorithm, $H_{0}$ has $O(k^{\epsilon}\cdot n^{1+1/k})$ edges in expectation and $\mathcal{C}_{0}$ consists of $n^{1-\frac{\left \lceil k^{\epsilon} \right \rceil}{k}}$ clusters in expectation with radius at most $\left \lceil k^{\epsilon} \right \rceil$. By Fact \ref{claim:cluster}(4), we have:
\begin{claim}\label{cl:helper-spanner-second}
For every unclustered vertex $u\in V$ and every vertex $v\in N_{G}(u)$, 
$\dist_{H_{0}}(u,v) \leq 2 \lceil k^{\epsilon} \rceil-1$.
\end{claim}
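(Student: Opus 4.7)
The plan is to invoke Fact~\ref{claim:cluster}(4) directly with the parameter $i = \lceil k^{\epsilon} \rceil$, since this is exactly the number of Baswana--Sen clustering steps executed in the preliminary stage that produces $(\mathcal{C}_{0}, H_{0})$. The statement is essentially a specialization of that fact to the truncation index chosen by Algorithm $\ImprovedSpannerII$.

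The first step is to unpack the meaning of ``unclustered'' in the hypothesis. In the description of Procedure $\Cluster$, a vertex becomes unclustered at step $i$ precisely when it is not adjacent to any sampled cluster at that step, and once unclustered it does not appear in $\mathcal{C}_{i}, \mathcal{C}_{i+1}, \ldots$. Thus saying that $u$ is unclustered after $\lceil k^{\epsilon} \rceil$ steps is equivalent to saying $u \notin V(\mathcal{C}_{\lceil k^{\epsilon} \rceil})$, which is the precondition of Fact~\ref{claim:cluster}(4).

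With this identification in hand, I would fix an arbitrary neighbor $v \in N_{G}(u)$ and observe that $(u,v) \in E(G)$ has at least one endpoint, namely $u$, outside $\mathcal{C}_{\lceil k^{\epsilon} \rceil}$. Fact~\ref{claim:cluster}(4) applied with $i = \lceil k^{\epsilon} \rceil$ then immediately yields $\dist_{H_{0}}(u,v) \leq 2\lceil k^{\epsilon} \rceil - 1$, which is the required bound. There is essentially no obstacle: the claim is a direct specialization of a cited property of the Baswana--Sen construction, included here only to isolate the stretch guarantee that will be reused when handling vertices that drop out of the clustering during the preliminary stage.
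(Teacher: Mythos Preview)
Your proposal is correct and takes essentially the same approach as the paper, which simply cites Fact~\ref{claim:cluster}(4) with $i=\lceil k^{\epsilon}\rceil$ immediately before stating the claim. Your additional unpacking of what ``unclustered'' means is sound and matches the paper's setup.
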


\paragraph{Middle Stage: Superclustering.} 
For clarity of presentation, throughout we assume that $\lceil k^{\epsilon} \rceil$ divides $4$, up to factor $4$ in the final stretch, this assumption can be made without loss of generality.
The middle step consists of $T:=\log_{\lceil k^{\epsilon} \rceil/4}(k^{1-2\epsilon})$ applications of 
Procedure $\ClusterAndAugment$. We refer to each application of this procedure by a \emph{phase}. 
For clarity of presentation, we also assume $T$ to be an integer, and in Sec. \ref{app:fraction} we describe how to remove this assumption.
In each phase $i$, the input to Procedure $\ClusterAndAugment$ is a clustering $\mathcal{C}_{i-1}=\{ C_1,\ldots, C_\ell\}$ of radius $r_{i-1}=O((2k^{\epsilon})^{i})$. The output of the phase is a clustering $\mathcal{C}_{i}$ of radius $r_i$, and a subgraph $H_i$ that takes care of all vertices that became unclustered in that phase. 
This output clustering is obtained by applying $t:=\lceil k^{\epsilon} \rceil/4$ steps of supercluster growing. As we will see, the superclustering procedure will be very similar to Proc. $\Cluster$ only that we will now treat each cluster $C \in \mathcal{C}_{i-1}$ as a \emph{node}.

Starting with the trivial superclustering $\mathcal{SC}_{i-1,0}=\{\{C_j\} ~\mid~ C_j \in \mathcal{C}_{i-1}\}$ whose radius is bounded by $r_{i-1,0}=r_{i-1}$, in the $j^{th}$ step of phase $i$ for $j\geq 1$, the algorithm is given a superclustering $\mathcal{SC}_{i-1,j-1}$. The radius of these superclusters will be bounded by $r_{i-1,j-1}$. The algorithm defines a superclustering $\mathcal{SC}_{i-1,j}$ along with a subgraph $H_{i,j}$ as follows. 
\begin{enumerate}
\item Each unclustered vertex $v \in V\setminus {V(\mathcal{SC}_{i-1,j-1})}$ satisfying that
$$\cdist(v, \mathcal{SC}_{i-1,j-1})\leq r_{i-1,j-1}+\alpha_{i-1,j}~$$
adds to $H_{i,j}$ the shortest paths to the closest center in $\mathcal{SC}_{i-1,j-1}$, where 
%
%
\begin{equation}\label{eq:alpha}
\alpha_{i-1,j}=\left \lceil \frac{4\cdot r_{i-1,0}}{\lceil k^{\epsilon} \rceil-3}+\left(1+\frac{4}{\lceil k^{\epsilon} \rceil-3}\right)\cdot \alpha_{i-1,j-1} \right \rceil~.
\end{equation}
\item Let $\mathcal{SC}' \subseteq \mathcal{SC}_{i-1,j-1}$ be the collection of superclusters obtained by 
sampling each supercluster $SC_\ell \in \mathcal{SC}_{i-1,j-1}$ independently with probability of $\frac{n_{0}}{n}$, where $n_{0}:=|\mathcal{C}_{i-1}|$. 
\item Set $\delta_{i-1,j}=r_{i-1,0}+r_{i-1,j-1}+2\cdot \alpha_{i-1,j}$. 
Each cluster $C \in SC_\ell$ at center-distance at most $\delta_{i-1,j}$ from $\mathcal{SC}'$ joins the supercluster of its closest center in $\mathcal{SC}'$, by adding the shortest path between the centers to $H_{i,j}$.
\item The superclustering $\mathcal{SC}_{i-1,j}$ consists of all sampled superclusters in $\mathcal{SC}'$ augmented by their nearby clusters (i.e., at center-distance at most $\delta_{i-1,j}$). The center of each sampled supercluster in $\mathcal{SC}'$ maintains its role in the augmented supercluster. The radius of this supercluster will be shown to be bounded by $r_{i-1,j}=r_{i-1,j-1}+2r_{i-1,0}+2\alpha_{i-1,j}$.
\item Each cluster $C \in SC_\ell$ at center-distance larger than $\delta_{i-1,j}$ from $\mathcal{SC}'$, adds to $H_{i,j}$ the shortest path from its center to the center of any supercluster in $\mathcal{SC}_{i-1,j-1}$ at center-distance at most $\delta_{i-1,j}$. 
\end{enumerate}
A vertex $u$ is said to be $(i-1,j)$-unclustered if $u$ belongs to the superclusters of $\mathcal{SC}_{i-1,j-1}$ but does not belong to the superclusters of $\mathcal{SC}_{i-1,j}$. 

The parameter $\alpha_{i-1,j}$ is set in a way that guarantees that for every $(i-1,j)$-unclustered vertex $v$ and every $u\in \Ball_{G}(v,\alpha_{i-1,j})$, the subgraph $H_{i,j}$ contains a $u$-$v$ path of length at most $\lceil k^{\epsilon} \rceil\cdot \alpha_{i-1,j}$ (hence providing a stretch of $\lceil k^{\epsilon} \rceil$ for every $v \in \partial \Ball_{G}(u,\alpha_{i-1,j})$).

Let $\mathcal{SC}_{i-1,t}$ be the output superclustering after the last $t$ step in the $i^{th}$ phase. Then the output clustering of the phase is given by $\mathcal{C}_i=\{\{V(SC_j)\} ~\mid~ SC_j \in \mathcal{SC}_{i-1,t}\}$.
That is, all the clusters in a given supercluster in $\mathcal{SC}_{i-1,t}$ form a single merged cluster in $\mathcal{C}_i$. Finally, let $H_i=\bigcup_{j} H_{i,j}$. 
This completes the description of the $i^{th}$ phase. After $T$ phases, the output clustering $\mathcal{C}_{T}$ is shown to contain at most $n^{1-1/k^{\epsilon}}$ clusters in expectation. Let $H=\bigcup_{i=0}^{T} H_i$ be the current spanner. 

\paragraph{Finalizing Stage: Spanner on the Cluster Graph.} 
Given the collection of $O(n^{1-1/k^{\epsilon}})$ clusters in $\mathcal{C}_{T}$, the algorithm first adds a shortest path from each unclustered vertex $v \notin V(\mathcal{C}_{T})$ to its closest cluster in $\mathcal{C}_{T}$ up to center-distance $r_{T}+d$, if such exists. Next, a cluster graph $\widehat{G}=(\mathcal{C}_{T},\mathcal{E})$ is defined by 
connecting two clusters $C,C' \in \mathcal{C}_{T}$ if their center-distance in $G$ is at most $2 r_{T}+2 d$.
That is, $\mathcal{E}=\{(C,C') ~\mid~ C,C' \in \mathcal{C}_{T} \mbox{~and~} \cdist_G(C,C')\leq 2 r_{T}+2d\}$. Let $\widehat{H}$ be a $k'$-spanner of $\widehat{G}$ for $k'=2\cdot \lceil k^{\epsilon} \rceil-3$. For edge $(C,C') \in \widehat{H}$, the shortest path between the centers of $C$ and $C'$ is added to the spanner $H$. 
This completes the description of the algorithm. 
\begin{algorithm}
	\begin{algorithmic}[1]
		\caption{$\ImprovedSpannerII(G,k,d,\epsilon)$}
		\label{alg:improvedspanner2}
		\State \textbf{Input:} A graph $G=(V,E)$, integers $k,d$ and $0<\epsilon<\frac{1}{2}$. 
		\State \textbf{Output:} A subgraph $H\subseteq{G}$ of expected size $O(k^{\epsilon}\cdot n^{1+1/k}+(64^{1/\epsilon}k+d)\cdot n)$, such that for any $u,v$ vertices at distance $d$ in $G$, $\dist_{H}(u,v)\le 4\cdot k^{\epsilon}\cdot d+O(64^{1/\epsilon} \cdot k)$.		
		\State Let $H\leftarrow \emptyset,T:=\log_{\lceil k^{\epsilon} \rceil/4}(k^{1-2\epsilon}),t:=\lceil k^{\epsilon}\rceil/4$.
		\State Let $(\mathcal{C}_{0},H_{0}) \leftarrow \Cluster(G,k,\lceil k^{\epsilon} \rceil)$.
		\For{$i=1$ to $T$}
		\State $(\mathcal{C}_{i},H_{i})\leftarrow \ClusterAndAugment(G,k,t,\epsilon,\mathcal{C}_{i-1})$.
		\State $H\leftarrow H\cup H_{i}$.
		\EndFor 
		\State For each $v\in V\setminus V(\mathcal{C}_T)$ with $\cdist_{G}(v,\mathcal{C}_T)\le r_{T}+d$, add to $H$ its shortest path to a center in $\mathcal{C}_{T}$.
		\State Let $\widehat{G}=(\mathcal{C}_T,\mathcal{E}=\{(C,C') ~\mid~ \cdist_{G}(C,C')\le 2\cdot r_{T}+2\cdot d\})$.
		\State Let $\widehat{H}=\BasicSpanner(\widehat{G},2\cdot \lceil k^{\epsilon} \rceil-3)$. 
		\State For each edge $(C,C')\in E(\widehat{H})$ add to $H$ the shortest path between the centers of $C,C'$.
		\State \Return $H$
	\end{algorithmic}

\end{algorithm}
	\begin{algorithm}[ht]
	\begin{algorithmic}[1]
		\caption{$\ClusterAndAugment(G,k,t,\epsilon,\mathcal{C})$}
		\label{alg:ClusterAndAugment}
		\State \textbf{Input:} A graph $G=(V,E)$, integers $k,t$, stretch parameter $0<\epsilon <1$ and a clustering $\mathcal{C}$.
		\State \textbf{Output:} A clustering $\mathcal{C'}$ and a subgraph $H\subseteq{G}$.
		\State Let $H\leftarrow \emptyset$, $r_{0} = rad_{G}(\mathcal{C})$,  $\alpha_{0}= 0$, and $n_{0}=|\mathcal{C}|$.
		\State Set $\mathcal{SC}_{0}\leftarrow \left\{\{C\}|C\in \mathcal{C}\right\}$, where for each $C\in \mathcal{C}$, $\{C\}$ is a supercluster containing all vertices in $C$, with $C$'s center as a center.
		\For {$j=1$ to $t$}
		\State Let $r_{j-1}=rad(SC_{j-1})$.
		\State Set $\alpha_{j}\leftarrow \left \lceil \frac{4\cdot r_{0}}{\lceil k^{\epsilon} \rceil-3}+(1+\frac{4}{\lceil k^{\epsilon} \rceil-3})\cdot \alpha_{j-1} \right \rceil$. 
		\State For each $v\in V\setminus{V(\mathcal{SC}_{j-1})}$ at center-distance at most $r_{j-1}+\alpha_{j}$ from the superclustering, add to $H$ a shortest path from $v$ to its closest center in $\mathcal{SC}_{j-1}$.
		\State $\mathcal{SC}_{j}\leftarrow Sample(\mathcal{SC}_{j-1},\frac{n_{0}}{n})$
		\For{every $SC\in \mathcal{SC}_{j-1}$}
		\For{every $C\in SC$} 
		\State If $\cdist_G(C, \mathcal{SC}_{j})\leq r_{0}+r_{j-1}+2\cdot \alpha_{j}$, add $C$ to its closest supercluster $SC' \in \mathcal{SC}_{j}$. Add to $H$ the shortest path from $r(C)$ to $r(SC')$.
		\State Otherwise, add to $H$ the shortest path from $r(C)$ to the center of any supercluster in $\mathcal{SC}_{j-1}$ at center-distance at most $r_{0}+r_{j-1}+2\cdot \alpha_{j}$ from $C$.
		\EndFor
		\EndFor
		\EndFor
		\State $\mathcal{C'}=\{\{V(SC_j)\} ~\mid~ SC_j \in \mathcal{SC}_t\}$.
		\State \Return $(\mathcal{C'},H)$.
	\end{algorithmic}
\end{algorithm}
\subsection*{Stretch Analysis.}
For the rest of the analysis, let $r_{0}:=rad(\mathcal{C}_{0})$ thus $r_0=\lceil k^{\epsilon} \rceil$, $t=\lceil k^{\epsilon} \rceil/4$, $T= \log_{\lceil k^{\epsilon} \rceil/4}(k^{1-2\epsilon})$, for $1\le i \le T$, $r_{i,0}=rad(\mathcal{C}_{i})$ and let $r_{T}:=rad(\mathcal{C}_T)$ be the radius of the clusters in the last clustering $\mathcal{C}_T$ at the end of the middle stage. 
For the sake of the stretch and size analysis, we will need the following two claims, bounding $\alpha_{i,j}$ and $r_{i,0}$, respectively, both of these claims follow by simple inductive arguments. 
\begin{claim}
\label{claim:alpha}
For every $i\in \{0,\ldots ,T-1\},j\in \left[\lceil k^{\epsilon}\rceil/4 \right]$, $\alpha_{i,j}\le (r_{i,0}+\frac{\lceil k^{\epsilon} \rceil-3}{4})\cdot\left((1+\frac{4}{\lceil k^{\epsilon} \rceil-3})^{j}-1\right)$.
\end{claim}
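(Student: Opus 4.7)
My plan is to prove the claim by straightforward induction on $j$, with $i$ fixed throughout. The recurrence in \eqref{eq:alpha} has the form $\alpha_{i,j} = \lceil A + c\cdot \alpha_{i,j-1}\rceil$ where $A = \frac{4 r_{i,0}}{\lceil k^\epsilon \rceil - 3}$ and $c = 1 + \frac{4}{\lceil k^\epsilon \rceil - 3}$, with base case $\alpha_{i,0} = 0$ (set at initialization of Procedure $\ClusterAndAugment$). The natural closed form of the ``continuous'' recurrence $x_j = A + c x_{j-1}$ with $x_0 = 0$ is $x_j = \frac{A}{c-1}(c^j - 1) = r_{i,0}(c^j - 1)$, so the extra summand $\frac{\lceil k^\epsilon\rceil - 3}{4}(c^j-1)$ in the claimed bound is precisely what is needed to absorb the ceiling.

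The base case $j=0$ is immediate since $\alpha_{i,0} = 0$ and $c^0 - 1 = 0$. For the induction step, assume
\[
\alpha_{i,j-1} \leq \left(r_{i,0} + \tfrac{\lceil k^\epsilon\rceil-3}{4}\right)\bigl(c^{j-1} - 1\bigr).
\]
Using $\lceil x \rceil \leq x + 1$, the recurrence gives
\[
\alpha_{i,j} \leq \tfrac{4 r_{i,0}}{\lceil k^\epsilon\rceil - 3} + c\cdot \alpha_{i,j-1} + 1
\leq \tfrac{4 r_{i,0}}{\lceil k^\epsilon\rceil - 3} + c\left(r_{i,0} + \tfrac{\lceil k^\epsilon\rceil-3}{4}\right)\bigl(c^{j-1} - 1\bigr) + 1.
\]
On the other side, a direct algebraic expansion of the target upper bound yields
\[
\left(r_{i,0} + \tfrac{\lceil k^\epsilon\rceil-3}{4}\right)(c^j - 1)
= c\left(r_{i,0} + \tfrac{\lceil k^\epsilon\rceil-3}{4}\right)(c^{j-1} - 1) + (c-1)\left(r_{i,0} + \tfrac{\lceil k^\epsilon\rceil-3}{4}\right).
\]
Plugging in $c - 1 = \frac{4}{\lceil k^\epsilon\rceil - 3}$ gives $(c-1)(r_{i,0} + \frac{\lceil k^\epsilon\rceil-3}{4}) = \frac{4 r_{i,0}}{\lceil k^\epsilon\rceil - 3} + 1$. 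So the two expressions match exactly, completing the induction.

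There is essentially no real obstacle here; the main thing to be careful about is accounting honestly for the $+1$ coming from the ceiling, which is exactly what the extra additive term $\frac{\lceil k^\epsilon\rceil - 3}{4}$ inside the parenthesis is designed to offset (the identity $(c-1)\cdot \frac{\lceil k^\epsilon\rceil-3}{4} = 1$ is the only nontrivial computation). The hypothesis $k \geq 16^{1/\epsilon}$ in Lemma \ref{lem:secondspanner} ensures $\lceil k^\epsilon\rceil \geq 16$, so $c - 1 > 0$ and the multiplier $c$ is finite, which is all that is needed for the induction to be meaningful.
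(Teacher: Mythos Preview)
Your proof is correct and follows essentially the same approach as the paper: both argue by induction on $j$, using $\lceil x\rceil \le x+1$ to absorb the ceiling via the extra additive term $\frac{\lceil k^\epsilon\rceil-3}{4}$. The only cosmetic difference is that the paper first proves the intermediate partial-sum bound $\alpha_{i,j}\le (1+\frac{4 r_{i,0}}{\lceil k^\epsilon\rceil-3})\sum_{p=0}^{j-1}(1+\frac{4}{\lceil k^\epsilon\rceil-3})^p$ and then evaluates the geometric series, whereas you induct directly on the closed form; the underlying identity $(c-1)\cdot\frac{\lceil k^\epsilon\rceil-3}{4}=1$ is the same key computation in both.
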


\begin{proof}
For the definition of $\alpha_{i,j}$ see Eq. (\ref{eq:alpha}). We first show by induction on $j$ that
$$\alpha_{i,j}\le  \left(1+\frac{4\cdot r_{i,0}}{\lceil k^{\epsilon} \rceil-3}\right)\cdot\sum_{p=0}^{j-1} \left(1+\frac{4}{\lceil k^{\epsilon} \rceil-3}\right)^{p}.$$ 
The base case, $j=1$ is trivial since $\alpha_{i,1} = \left\lceil \frac{4\cdot r_{i,0}}{\lceil k^{\epsilon} \rceil-3} \right \rceil$. 
Assuming that the claim holds for $j-1$, letting $\chi=\sum_{p=0}^{j-2}\left(1+\frac{4}{\lceil k^{\epsilon} \rceil-3}\right)^{p}$, we have:
\begin{eqnarray*}
\alpha_{i,j}&=& \left \lceil \frac{4\cdot r_{i,0}}{\lceil k^{\epsilon} \rceil-3}+\left(1+\frac{4}{\lceil k^{\epsilon} \rceil-3}\right)\cdot\alpha_{i,j-1} \right \rceil \\&\le& \left \lceil\frac{4\cdot r_{i,0}}{\lceil k^{\epsilon} \rceil-3}+\left(1+\frac{4}{\lceil k^{\epsilon} \rceil-3}\right)\cdot \left (1+\frac{4\cdot r_{i,0}}{\lceil k^{\epsilon} \rceil-3} \right)
\cdot \chi\right \rceil
\\&<& 1+\frac{4\cdot r_{i,0}}{\lceil k^{\epsilon} \rceil-3}+\left(1+\frac{4\cdot r_{i,0}}{\lceil k^{\epsilon} \rceil-3}\right)
\cdot
\sum_{p=1}^{j-1} \left(1+\frac{4}{\lceil k^{\epsilon} \rceil-3}\right)^{p}
\\&=& \left(1+\frac{4\cdot r_{i,0}}{\lceil k^{\epsilon} \rceil-3}\right)\cdot\sum_{p=0}^{j-1} \left(1+\frac{4}{\lceil k^{\epsilon} \rceil-3}\right)^{p}.
\end{eqnarray*}
Therefore, we have:
\begin{eqnarray*}
\alpha_{i,j}&\le& \left(1+\frac{4\cdot r_{i,0}}{\lceil k^{\epsilon} \rceil-3}\right)\cdot\sum_{p=0}^{j-1} \left(1+\frac{4}{\lceil k^{\epsilon} \rceil-3}\right)^{p} 
\\&=& \left(1+\frac{4\cdot r_{i,0}}{\lceil k^{\epsilon} \rceil-3}\right)\cdot \frac{\lceil k^{\epsilon} \rceil-3}{4} \cdot\left((1+\frac{4}{\lceil k^{\epsilon} \rceil-3})^{j}-1\right)
\\&=& \left(r_{i,0}+\frac{\lceil k^{\epsilon} \rceil-3}{4}\right)\cdot\left((1+\frac{4}{\lceil k^{\epsilon} \rceil-3})^{j}-1\right).
\end{eqnarray*}

\end{proof}
We next turn to bound the radii of the superclusters in each phase $i$ of the algorithm.
Note that in the $(i+1,j)$ step, since we add to each sampled supercluster, all clusters at center-distance at most $r_{i,0}+r_{i,j-1}+2\alpha_{i,j}$, the radius $r_{i,j}$ of the new supercluster is increased by an additive term of at most $2r_{i,0}+2\alpha_{i,j}$. 

\begin{claim}
\label{claim:radiusecondspanner}
If $k\ge 16^{1/\epsilon}$ then for each $0\le i \le T$ it holds that $r_{i,0} \le (2\cdot \lceil k^{\epsilon} \rceil)^{i}\cdot r_{0}$. 
In particular, for the final radius of the clustering $\mathcal{C}_T$ it holds that $r_{T}\leq 1/30 \cdot 64^{\frac{1-\epsilon}{\epsilon}}\cdot k^{1-\epsilon}$.
\end{claim}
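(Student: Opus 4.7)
The plan is to prove the claim in two parts: (i) an induction on $i$ establishing the multiplicative growth bound $r_{i,0}\le(2m)^i\cdot r_0$ where $m:=\lceil k^{\epsilon}\rceil$; and (ii) plug in the value $T=\log_{m/4}(k^{1-2\epsilon})$ together with the hypothesis $k\ge 16^{1/\epsilon}$ to extract the explicit numerical bound on $r_T$.

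For (i), the base case $i=0$ is immediate since $r_{0,0}=r_0$. For the inductive step I would use the per-phase radius recurrence $r_{i-1,j}=r_{i-1,j-1}+2r_{i-1,0}+2\alpha_{i-1,j}$, which follows from the algorithm: in step $(i,j)$ a sampled supercluster of radius $r_{i-1,j-1}$ absorbs a cluster of radius at most $r_{i-1,0}$ attached via a shortest path of length at most $\delta_{i-1,j}=r_{i-1,0}+r_{i-1,j-1}+2\alpha_{i-1,j}$, enlarging the radius by at most $2r_{i-1,0}+2\alpha_{i-1,j}$. Telescoping over $j=1,\dots,t$ (with $t=m/4$) gives
\[
r_{i,0}\;=\;(1+2t)\,r_{i-1,0}+2\sum_{j=1}^{t}\alpha_{i-1,j}.
\]
Setting $\mu:=1+4/(m-3)=(m+1)/(m-3)$ and invoking Claim~\ref{claim:alpha} with the closed form $\sum_{j=1}^{t}\mu^{j}=(m+1)(\mu^{t}-1)/4$, the sum is bounded by $(r_{i-1,0}+(m-3)/4)\cdot\bigl((m+1)(\mu^{t}-1)/4-t\bigr)$. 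The key numeric input is $\mu^{t}-1\le 2$, valid for $m\ge 16$: at $m=16$ one checks $(17/13)^{4}<3$ directly, and for larger $m$ the quantity $(1+4/(m-3))^{m/4}$ is monotonically decreasing toward $e$. Substituting yields $r_{i,0}\le(m+2)\,r_{i-1,0}+(m-3)(m+2)/8$. Since radii are non-decreasing in $i$, $r_{i-1,0}\ge r_0=m$, which lets me absorb the additive term as a multiplicative $(1+1/8)$ correction, giving $r_{i,0}\le (9/8)(m+2)\,r_{i-1,0}$. Finally, $m\ge 16$ implies $m+2\le 9m/8$, hence $(9/8)(m+2)\le 81m/64\le 2m$, closing the induction.

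For (ii), the definition of $T$ gives $(m/4)^{T}=k^{1-2\epsilon}$, so $(2m)^{T}=8^{T}k^{1-2\epsilon}$, and since $m\le 2k^{\epsilon}$,
\[
r_{T}\;\le\;(2m)^{T}\cdot r_{0}\;=\;(2m)^{T}\cdot m\;\le\;2\cdot 8^{T}\cdot k^{1-\epsilon}.
\]
The hypothesis $k\ge 16^{1/\epsilon}$ translates to $\epsilon\log k\ge 2\log 4$, so $\epsilon\log k-\log 4\ge(\epsilon\log k)/2$; this yields the clean upper bound $T\le 2(1-2\epsilon)/\epsilon\le 2/\epsilon-4$, and therefore $8^{T}\le 8^{2/\epsilon-4}=64^{1/\epsilon}/4096=64^{(1-\epsilon)/\epsilon}/64$. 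Combining,
\[
r_{T}\;\le\;\frac{2}{64}\cdot 64^{(1-\epsilon)/\epsilon}\,k^{1-\epsilon}\;=\;\frac{1}{32}\cdot 64^{(1-\epsilon)/\epsilon}\,k^{1-\epsilon}\;\le\;\frac{1}{30}\cdot 64^{(1-\epsilon)/\epsilon}\,k^{1-\epsilon}.
\]

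The main obstacle is controlling the additive $(m-3)(m+2)/8$ term in the inductive step so that the per-phase multiplicative factor stays below $2m$: the inductive lower bound $r_{i-1,0}\ge m$ is essential to reabsorb this term into a $(1+1/8)$ correction, and the gap between $81/64$ and $2$ provides just enough slack to close the induction for all $m\ge 16$. A secondary delicate point is the quantitative bound $\mu^{t}-1\le 2$, which is tightest precisely at the boundary $m=16$ and motivates the assumption $k\ge 16^{1/\epsilon}$.
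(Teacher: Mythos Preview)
Your proof is correct and follows essentially the same approach as the paper: both argue by induction on $i$, telescope the per-step radius recurrence $r_{i,j}=r_{i,j-1}+2r_{i,0}+2\alpha_{i,j}$, invoke Claim~\ref{claim:alpha} to control $\sum_j\alpha_{i,j}$, and extract the per-phase factor $2\lceil k^\epsilon\rceil$ before plugging in $T$. The only difference is cosmetic—the paper bounds $\mu^{t}$ via $e^{1+3/(m-3)}$ and then estimates numerically, whereas you use the direct inequality $\mu^{t}\le 3$ at $m=16$ combined with monotonicity; both routes land on the same $r_{i+1,0}\le 2m\,r_{i,0}$ and the identical final bound on $r_T$.
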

\begin{proof}
We prove the claim by induction on $i$. The base case, $i=0$, is trivial. Assuming the claim is true for $i$ we show the correctness for $i+1$. Phase $i+1$ begins with clusters in $\mathcal{C}_i$ with radii $r_{i,0}$, and finishes after $t=\lceil k^{\epsilon} \rceil/4$ steps of Procedure $\ClusterAndAugment$ with clusters of radius at most $r_{i+1,0}$. We therefore bound $r_{i+1,0}$. At step $j$ of $\ClusterAndAugment$ we start with the superclustering $\mathcal{SC}_{i,j-1}$ of radius $r_{i,j-1}$, and we add to the superclusters, clusters of radius $r_{i,0}$ at center-distance at most $r_{i,0}+r_{i,j-1}+2\cdot \alpha_{i,j}$. Thus at the $j$th step we increase the radii of the superclusters by an additive factor of at most $2\cdot r_{i,0}+2\cdot \alpha_{i,j}$. Combining this with the bound on $\alpha_{i,j}$ of Claim \ref{claim:alpha}:
\begin{eqnarray}
r_{i,j} &=& r_{i,j-1}+2\cdot r_{i,0}+2\cdot \alpha_{i,j} \label{eq:rij}
=(2\cdot j+1)\cdot r_{i,0}+2\cdot \sum_{p=1}^{j}\alpha_{i,p} 
\\&\leq& (2\cdot j+1)\cdot r_{i,0} + 2\cdot \sum_{p=1}^{j}\left(r_{i,0}+\frac{\lceil k^{\epsilon} \rceil-3}{4}\right)\cdot\left((1+\frac{4}{\lceil k^{\epsilon} \rceil-3})^{p}-1\right) \nonumber
\\&\le& (2\cdot j+1)\cdot r_{i,0}-2\cdot j\cdot \left (r_{i,0}+\frac{\lceil k^{\epsilon} \rceil-3}{4} \right) 
+ 2\cdot \left (r_{i,0}+\frac{\lceil k^{\epsilon} \rceil-3}{4} \right)\cdot \sum_{p=1}^{j}\left(1+\frac{4}{\lceil k^{\epsilon} \rceil-3}\right)^{p}  \nonumber
\\&\le& r_{i,0}-j\cdot \frac{\lceil k^{\epsilon} \rceil-3}{2} + 2\cdot \left(r_{i,0}+\frac{\lceil k^{\epsilon} \rceil-3}{4} \right) \cdot \left(1+\frac{4}{\lceil k^{\epsilon} \rceil-3}\right)\cdot \frac{\lceil k^{\epsilon} \rceil-3}{4}\cdot \left((1+\frac{4}{\lceil k^{\epsilon} \rceil-3})^{j}-1\right) \nonumber
\\&=& r_{i,0}-j\cdot \frac{\lceil k^{\epsilon} \rceil-3}{2} + \left (r_{i,0}+\frac{\lceil k^{\epsilon} \rceil-3}{4} \right) \cdot \left(\frac{\lceil k^{\epsilon} \rceil+1}{2} \right) \cdot \left((1+\frac{4}{\lceil k^{\epsilon} \rceil-3})^{j}-1\right) \nonumber
\\&\le&\left (1+ \frac{\lceil k^{\epsilon} \rceil+1}{2}\cdot \left((1+\frac{4}{\lceil k^{\epsilon} \rceil-3})^{j}-1\right)\right) 
\cdot r_{i,0}-j\cdot \frac{\lceil k^{\epsilon} \rceil-3}{2}+\frac{k^{2\epsilon}}{8}\cdot \left((1+\frac{4}{\lceil k^{\epsilon} \rceil-3})^{j}-1\right)~. \nonumber
\end{eqnarray}
Thus by plugging $t=\lceil k^{\epsilon} \rceil/4$, we get:
\begin{eqnarray*}
r_{i+1,0}&:=&r_{i,t}\le \left (1+ \frac{\lceil k^{\epsilon} \rceil+1}{2}\cdot \left(e^{1+\frac{3}{\lceil k^{\epsilon} \rceil-3}}-1\right)\right)\cdot r_{i,0}-\frac{\lceil k^{\epsilon} \rceil}{4}\cdot \frac{\lceil k^{\epsilon} \rceil-3}{2}+\frac{k^{2\epsilon}}{8}\cdot \left(e^{1+\frac{3}{\lceil k^{\epsilon} \rceil-3}}-1\right)
\\&\le&\left (1+ \frac{\lceil k^{\epsilon} \rceil+1}{2}\cdot \left(e^{1+\frac{3}{\lceil k^{\epsilon} \rceil-3}}-1\right)\right)\cdot r_{i,0}+\frac{1}{2}\cdot \lceil k^{\epsilon} \rceil+\frac{k^{2\epsilon}}{8}\cdot \left(e^{1+\frac{3}{\lceil k^{\epsilon} \rceil-3}}-2\right)
\\&\le&\left (1+ 1.5\cdot (\lceil k^{\epsilon} \rceil+1)\right)\cdot r_{i,0}+\frac{1}{2}\cdot \lceil k^{\epsilon} \rceil+\frac{k^{2\epsilon}}{4} \leq \left(\frac{11}{6}\cdot \lceil k^{\epsilon} \rceil+2.5 \right)\cdot r_{i,0}\le 2\cdot \lceil k^{\epsilon} \rceil\cdot r_{i,0}~.
\end{eqnarray*}
The third inequality holds when $k\ge 11^{1/\epsilon}$, the fourth inequality follows as $r_{i,0}\ge \lceil k^{\epsilon} \rceil$, and the fifth inequality follows as $k\ge 16^{1/\epsilon}$.
Finally, by plugging $i=T$ we get that:
\begin{eqnarray}
r_{T}&\le& r_{0}\cdot (2\cdot \lceil k^{\epsilon} \rceil)^{\log_{\lceil k^{\epsilon} \rceil/4}(k^{1-2\epsilon})} \leq \lceil k^{\epsilon} \rceil\cdot (8\cdot (\lceil k^{\epsilon} \rceil/4))^{\log_{\lceil k^{\epsilon} \rceil/4}(k^{1-2\epsilon})} \nonumber
\\&\le&  \lceil k^{\epsilon} \rceil\cdot 8^{\frac{(1-2\epsilon)\cdot \log{k}}{\epsilon \cdot \log{k}-2}}\cdot k^{1-2\cdot \epsilon}\le (1+\frac{1}{k^{\epsilon}})64^{\frac{1-2\epsilon}{\epsilon}}\cdot k^{1-\epsilon}~\nonumber
 \leq 1/30\cdot 64^{\frac{1-\epsilon}{\epsilon}}\cdot k^{1-\epsilon}~,
\end{eqnarray}
where the fourth inequality holds for $k\ge 16^{1/\epsilon}$.
\end{proof}

\begin{figure}[h!]
\begin{center}
\includegraphics[scale=0.40]{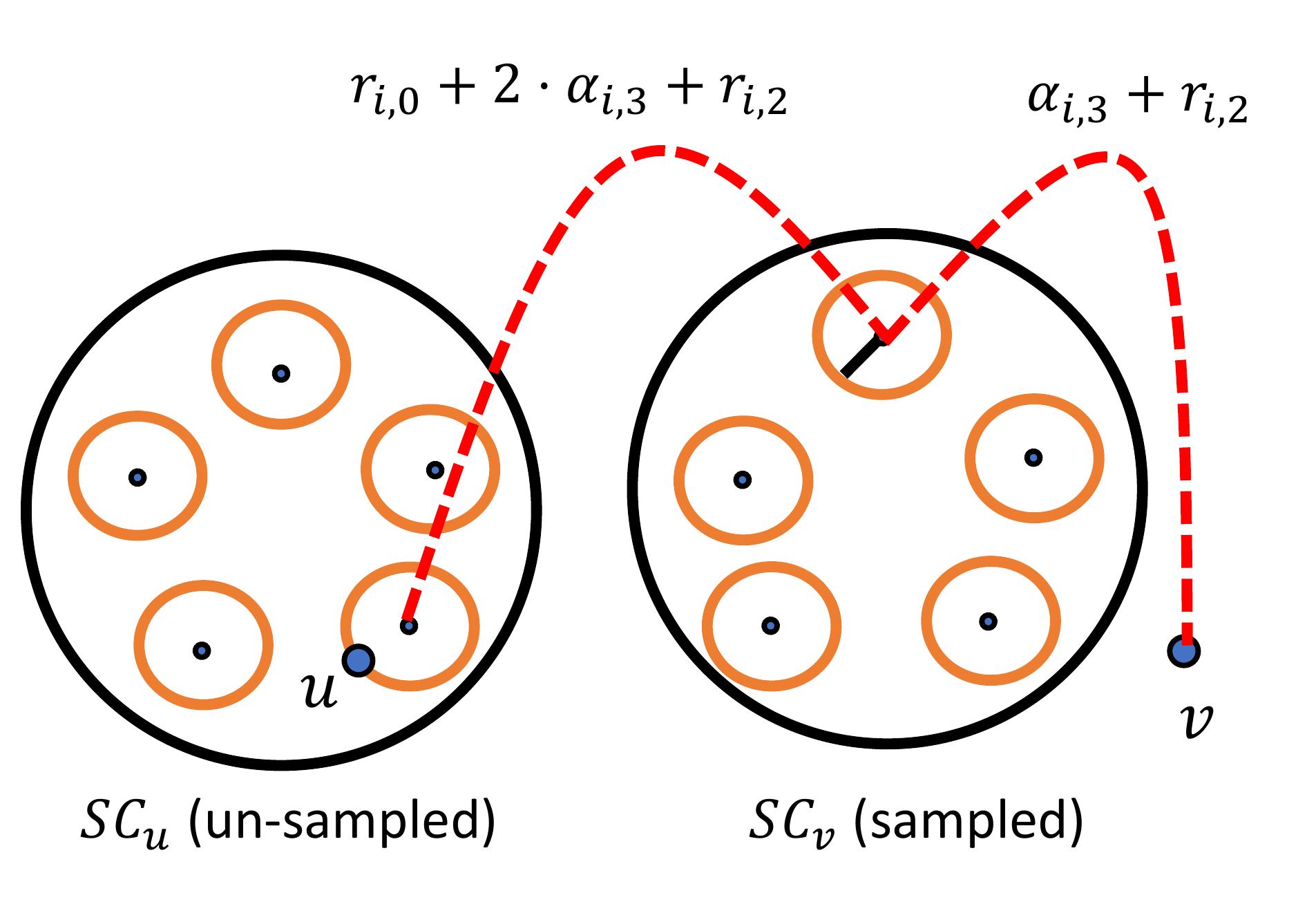}
\includegraphics[scale=0.40]{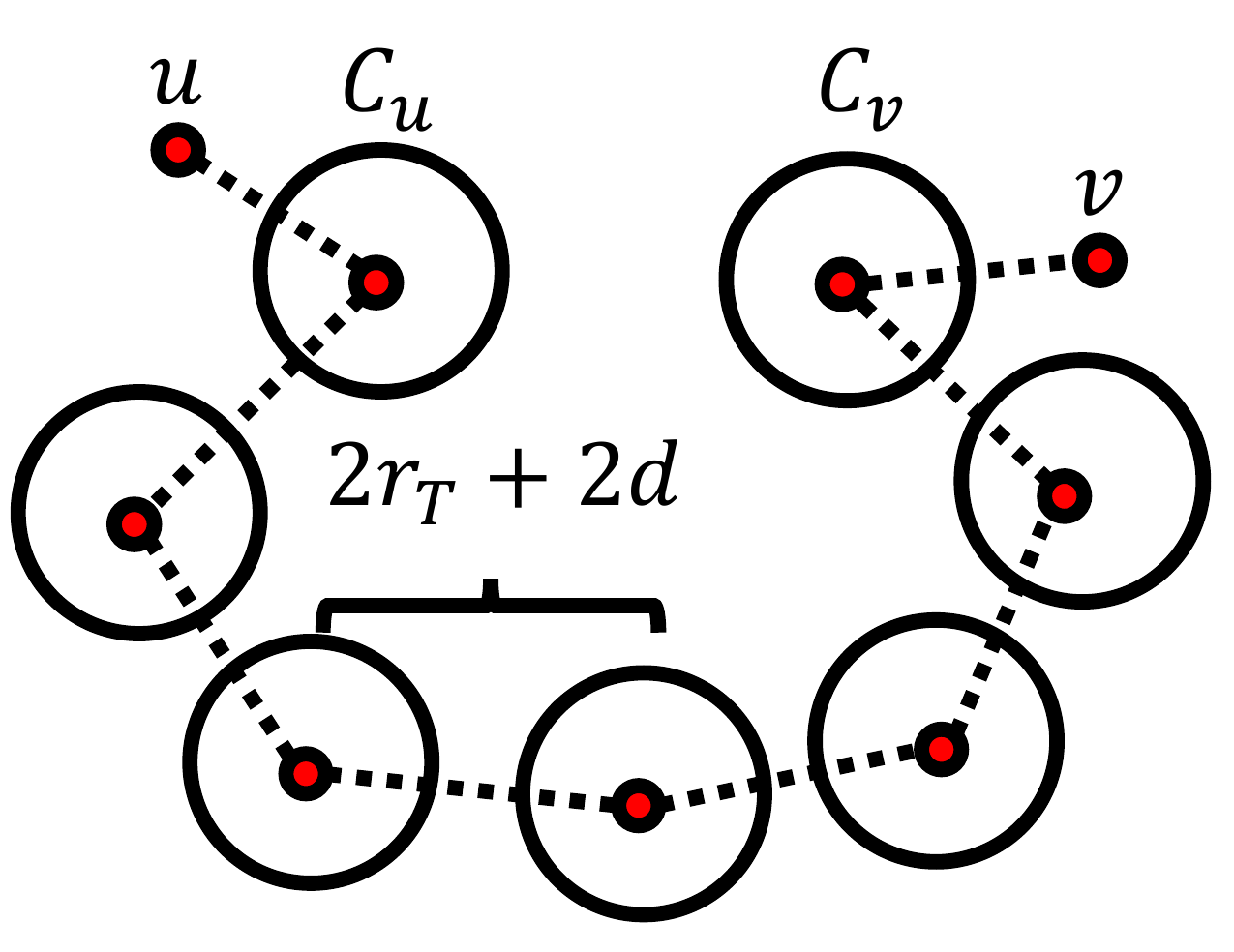}
\caption{\small Left: An illustration of the proof of Claim \ref{cl:ij-unc-secondspanner}. The vertex $u$ is an $(i,3)$-unclustered node and $v$ is a vertex at distance at most $\alpha_{i,3}$ from it. The algorithm adds a shortest path from $v$ to the center of some supercluster $SC_{v}$ in $\mathcal{SC}_{i,2}$. Since  $u$'s cluster $C_{u}$ is at center-distance at most $r_{i,0}+2\alpha_{i,3}+r_{i,2}$ from $SC_{v}$, the algorithm adds a path between the center of $C_{u}$ to the center of $SC_{v}$ of length at most $2r_{i,0}+3 \alpha_{i,3}+2r_{i,2}$.  Right: An illustration for the proof of Claim \ref{cl:clustered-secondspanner}. If there is a clustered vertex on the $u$-$v$ shortest path, then both $u$ and $v$ have a path in $H$ to their clusters $C_{u},C_{v}$ at center-distance at most $r_{T} + \dist_{G}(u,v)$. These clusters are neighbors in the cluster graph $\widehat{G}$, and thus there is a path of length at most $2\cdot \lceil k^{\epsilon} \rceil-3$ in the cluster-spanner $\widehat{H}$ between $C_{u}$ and $C_{v}$. Since each edge in $\widehat{H}$ is  translated into a path of length at most $2\cdot r_{T}+2\cdot d$, the path between $u$ and $v$ in $H$ is of length at most $4 k^{\epsilon} \cdot d+1/6\cdot 64^{\frac{1-\epsilon}{\epsilon}}\cdot k$.}
\label{fig:SecondSpannerAnalysis}
\end{center}
\end{figure}

\begin{definition}[Clustered and Unlcustered Vertices]
A vertex $v\in V$ is $0$-\emph{unclustered} if $v \notin V(\mathcal{C}_{0})$. A vertex $v$ is called $(i,j)$-unclustered if $v \in V(\mathcal{SC}_{i,j-1})\setminus{V(\mathcal{SC}_{i,j})}$. Finally, a vertex $v$ is called \emph{clustered} if $v\in V(\mathcal{C}_T)$, i.e., it belongs to the last level of clustering. 
\end{definition}

In the following claims we show that that the spanner provides a low-stretch path from $u$ to any vertex $v$ at some fixed distance from $u$ in $G$. The case of $0$-unclustered vertices follows by Claim \ref{cl:helper-spanner-second}, so it remains to consider $(i,j)$-unclustered vertices for $i\geq 1$. 
\begin{claim}
	\label{cl:ij-unc-secondspanner}  
	For $0\le i \le T-1,1\le j \le t$, for each $(i,j)$-unclustered vertex $u\in V$, $\dist_{H}(u,v) \le \lceil k^{\epsilon} \rceil \cdot \alpha_{i,j}$ for any $v\in \Ball_{G}(u,\alpha_{i,j})$.
\end{claim}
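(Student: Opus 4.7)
Fix an $(i,j)$-unclustered vertex $u$ and a target $v \in \Ball_G(u, \alpha_{i,j})$. Let $C_u \in \mathcal{C}_i$ be $u$'s cluster and $SC_u \in \mathcal{SC}_{i,j-1}$ the supercluster containing $C_u$. The starting point is the observation that, because $u$ is $(i,j)$-unclustered, the cluster $C_u$ failed to join any sampled supercluster at step $j$, so the ``lost cluster'' branch of Procedure $\ClusterAndAugment$ added to $H$ a shortest $G$-path from $r(C_u)$ to $r(SC)$ for \emph{every} $SC \in \mathcal{SC}_{i,j-1}$ with $\cdist_G(C_u, SC) \le \delta_{i,j} := r_{i,0} + r_{i,j-1} + 2\alpha_{i,j}$. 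Combined with the BFS-like tree of $C_u$ already in $H$ (Fact~\ref{claim:cluster}(2)), this yields a $u$-to-$r(SC)$ path in $H$ of length at most $r_{i,0} + \cdist_G(C_u, SC)$ for any such $SC$.

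\textbf{Reaching $v$.} The next step is to exhibit a supercluster $SC_v \in \mathcal{SC}_{i,j-1}$ that is close to $v$ in $H$ and whose center lies within $\delta_{i,j}$ (in $G$) from $r(C_u)$. Two cases arise. If $v \in V(\mathcal{SC}_{i,j-1})$, take $SC_v$ to be $v$'s own supercluster; unpacking the incremental supercluster tree produced by the merging step of $\ClusterAndAugment$ and the recurrence in Eq.~(\ref{eq:rij}) gives $\dist_H(v, r(SC_v)) \le r_{i,j-1}$. Otherwise $v$ was already unclustered before step $j$; since $\dist_G(u, r(SC_u)) \le r_{i,j-1}$, the triangle inequality yields $\cdist_G(v, \mathcal{SC}_{i,j-1}) \le r_{i,j-1} + \alpha_{i,j}$, so the augmentation step of $\ClusterAndAugment$ added to $H$ a shortest path from $v$ to its closest supercluster center $r(SC_v)$, of length at most $r_{i,j-1} + \alpha_{i,j}$. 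In either case, another triangle calculation yields $\cdist_G(C_u, SC_v) \le r_{i,0} + \alpha_{i,j} + (r_{i,j-1} + \alpha_{i,j}) = \delta_{i,j}$, so the $r(C_u)$-to-$r(SC_v)$ shortest path is already in $H$.

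\textbf{Concluding.} Summing the three $H$-segments $u \leadsto r(C_u) \leadsto r(SC_v) \leadsto v$ gives
\[
\dist_H(u,v) \;\le\; 2 r_{i,0} + 2 r_{i,j-1} + 3\alpha_{i,j},
\]
where the first case is strictly better. It then suffices to establish the parameter inequality $2 r_{i,0} + 2 r_{i,j-1} \le (\lceil k^\epsilon \rceil - 3)\,\alpha_{i,j}$, which I would prove by induction on $j$: the base case $j=1$ is immediate from $\alpha_{i,1} = \lceil 4 r_{i,0}/(\lceil k^\epsilon \rceil -3)\rceil$, while the inductive step substitutes the recurrence $r_{i,j-1} = r_{i,j-2} + 2 r_{i,0} + 2\alpha_{i,j-1}$ (from Eq.~(\ref{eq:rij})) into the lower bound $(\lceil k^\epsilon \rceil - 3)\alpha_{i,j} \ge 4 r_{i,0} + (\lceil k^\epsilon \rceil + 1)\alpha_{i,j-1}$ read off from Eq.~(\ref{eq:alpha}), and simplifies. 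I expect the main obstacle to be precisely this delicate parameter balance; in particular, the augmentation step is essential, since without it the unclustered case would force the path to $v$ to be routed through $r(SC_u)$ at a cost of an extra $r_{i,j-1}$, which would break the inductive inequality and inflate the stretch beyond $\lceil k^\epsilon \rceil$.
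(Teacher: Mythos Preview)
Your proof is correct and follows essentially the same route as the paper: route $u \to r(C_u) \to r(SC_v) \to v$ in $H$, bound the total by $2r_{i,0} + 2r_{i,j-1} + 3\alpha_{i,j}$, then close by induction on $j$ showing $2r_{i,0} + 2r_{i,j-1} \le (\lceil k^\epsilon\rceil-3)\,\alpha_{i,j}$ --- the paper first expands $r_{i,j-1}$ via Eq.~(\ref{eq:rij}) and states the equivalent inequality $4j\,r_{i,0} + 4\sum_{p<j}\alpha_{i,p} \le (\lceil k^\epsilon\rceil-3)\,\alpha_{i,j}$. Your explicit case split on whether $v \in V(\mathcal{SC}_{i,j-1})$ is actually a welcome clarification (the augmentation step only fires for unclustered $v$, a point the paper's proof glosses over), and the one nitpick is that Fact~\ref{claim:cluster}(2) literally covers only the Baswana--Sen clusters ($i=0$); for $i\ge1$ the bound $\dist_H(u,r(C_u))\le r_{i,0}$ comes from the recursive supercluster tree built in the previous phase, not directly from that fact.
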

\begin{proof}
Let $v\in \Ball_{G}(u,\alpha_{i,j})$. In the beginning of the $(i+1,j)$ step we add to $H$ the shortest path from any vertex at center-distance at most $r_{i,j-1}+\alpha_{i,j}$ from $\mathcal{SC}_{i,j-1}$ to its closest center in the superclustering. In particular, since $u \in V(\mathcal{SC}_{i,j-1})$ and $\cdist_{G}(v,\mathcal{SC}_{i,j-1})\le \dist_{G}(v,u)+r_{i,j-1}\le \alpha_{i,j}+r_{i,j-1}$, we add the shortest path from $v$ to the center of some supercluster $SC_{v}\in \mathcal{SC}_{i,j-1}$. Since $u$ gets unclustered in this step, we add to $H$ the shortest path from its cluster, $C_{u}$, to any supercluster which is in $\mathcal{SC}_{i,j-1}$ and is at center-distance at most $r_{i,0}+r_{i,j-1}+2\alpha_{i,j}$. Since $\cdist_{G}(C_{u},SC_{v})\le r_{i,0}+\dist_{G}(u,v)+\cdist_{G}(v,SC_{v}) \le r_{i,0}+2\alpha_{i,j}+r_{i,j-1}$, we add the shortest path from the center of $C_{u}$ to the center of $SC_{v}$. Consequently, there is a path from $u$ to $v$ that goes through  the centers of $C_{u},SC_{v}$ of length at most:
\begin{eqnarray}
\dist_{H}(u,v)&\le& \cdist_{G}(u,C_{u})+\cdist_{G}(C_{u},SC_{v}) + \cdist_{G}(SC_{v},v) \nonumber
\\&\leq& 2 r_{i,0} + 2 \alpha_{i,j}+2r_{i,j-1}+\alpha_{i,j} \label{eq:ijuncspanner} 
\\&=&2\cdot r_{i,0}+3 \alpha_{i,j}+2\cdot r_{i,j-1} 
\le 3 \alpha_{i,j}+2 r_{i,0} + 2 \left((2 \cdot j-1)r_{i,0}+2\cdot \sum_{p=1}^{j-1}\alpha_{i,p}\right) \nonumber
\\&\le&3 \alpha_{i,j}+4  j\cdot r_{i,0}+4 \sum_{p=1}^{j-1}\alpha_{i,p}~, \label{eq:ijuncspannersecond}
	\end{eqnarray}
where in the second inequality, the bound on $r_{i,j-1}$ follows by Eq. (\ref{eq:rij}).
We next show by induction on $j$ that 
\begin{equation}\label{eq:alpha-induc}
(\lceil k^{\epsilon} \rceil-3)\cdot \alpha_{i,j}\ge 4  j\cdot r_{i,0}+4\cdot \sum_{p=1}^{j-1}\alpha_{i,p}~.
\end{equation}
The base case of $j=1$ holds vacuously. Assuming the correctness up to $j-1$, by Eq. (\ref{eq:alpha}):
\begin{eqnarray*}
\alpha_{i,j} &=& \frac{4 r_{i,0}}{\lceil k^{\epsilon} \rceil-3}+\left(1+\frac{4}{\lceil k^{\epsilon} \rceil-3}\right)\cdot\alpha_{i,j-1}
\ge \frac{4  j\cdot r_{i,0}+4\cdot \sum_{p=1}^{j-2}\alpha_{i,p}}{\lceil k^{\epsilon} \rceil-3}+\frac{4}{\lceil k^{\epsilon} \rceil-3}\cdot \alpha_{i,j-1} 
\\&=& \frac{4  j\cdot r_{i,0}+4\cdot \sum_{p=1}^{j-1}\alpha_{i,p}}{\lceil k^{\epsilon} \rceil-3}.
\end{eqnarray*}
Plugging Eq. (\ref{eq:alpha-induc}) in Eq. (\ref{eq:ijuncspannersecond}) we get that:
$\dist_{H}(u,v)\le 3 \alpha_{i,j}+(\lceil k^{\epsilon} \rceil-3)\cdot \alpha_{i,j}=\lceil k^{\epsilon} \rceil\cdot \alpha_{i,j}$.
\end{proof}
\begin{claim}
\label{cl:clustered-secondspanner}
Let $u,v\in V$ be vertices at distance $d$ in $G$, and let $P$ be a shortest path between them in $G$. If there is some clustered vertex $w\in P$, then $\dist_{H}(u,v) \le 4 k^{\epsilon} \cdot d+1/6\cdot 64^{\frac{1-\epsilon}{\epsilon}}\cdot k$.
\end{claim}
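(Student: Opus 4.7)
The plan is to use the clustered vertex $w\in P$ to attach both endpoints to nearby centers in $\mathcal{C}_T$, show that these two centers are joined by an edge in the cluster-graph $\widehat{G}$, and then invoke the $(2\lceil k^{\epsilon}\rceil-3)$-multiplicative spanner $\widehat{H}$. First I would let $C_w\in\mathcal{C}_T$ be the cluster containing $w$, and write $d_1=\dist_G(u,w)$, $d_2=\dist_G(v,w)$, so that $d_1+d_2=d$. Because $\dist_G(w,r(C_w))\le r_T$, both $\cdist_G(u,C_w)\leq d_1+r_T\leq r_T+d$ and $\cdist_G(v,C_w)\leq d_2+r_T\leq r_T+d$. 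Splitting on whether $u\in V(\mathcal{C}_T)$: if $u$ is clustered, then $u$ reaches the center of its own cluster $C_u$ via the shortest paths that the superclustering stage has planted (of total cost at most $r_T$); otherwise the reach condition for the finalizing stage is met via the witness $C_w$, and the algorithm plants the shortest $G$-path from $u$ to the closest cluster $C_u\in\mathcal{C}_T$, of length at most $d_1+r_T$. In both cases I obtain a cluster $C_u$ with $\dist_H(u,r(C_u))\leq d_1+r_T$ and $\cdist_G(u,C_u)\leq d_1+r_T$; symmetrically I obtain a $C_v$ with the $d_2+r_T$ bound.

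The second step is to verify that $(C_u,C_v)\in E(\widehat{G})$. By the triangle inequality combined with $d_1+d_2=d$,
\[\cdist_G(C_u,C_v)\leq \cdist_G(u,C_u)+\dist_G(u,v)+\cdist_G(v,C_v)\leq (d_1+r_T)+d+(d_2+r_T)=2r_T+2d,\]
which matches the threshold in the definition of the edge set $\mathcal{E}$. Hence $\widehat{H}$ contains a path of at most $2\lceil k^{\epsilon}\rceil-3$ hops between $C_u$ and $C_v$, and every such hop corresponds to a shortest $G$-path of length at most $2r_T+2d$ that the algorithm has planted in $H$. Concatenating with the two attachment paths yields
\[\dist_H(u,v)\leq 2(r_T+d)+(2\lceil k^{\epsilon}\rceil-3)(2r_T+2d)=2(r_T+d)(2\lceil k^{\epsilon}\rceil-2)\leq 4k^{\epsilon}(r_T+d),\]
where the last inequality uses $\lceil k^{\epsilon}\rceil\leq k^{\epsilon}+1$.

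Finally I would plug in the radius bound $r_T\leq \tfrac{1}{30}\cdot 64^{(1-\epsilon)/\epsilon}k^{1-\epsilon}$ from Claim \ref{claim:radiusecondspanner}; this gives $4k^{\epsilon}r_T\leq \tfrac{2}{15}\cdot 64^{(1-\epsilon)/\epsilon}k<\tfrac{1}{6}\cdot 64^{(1-\epsilon)/\epsilon}k$, so that $\dist_H(u,v)\leq 4k^{\epsilon}d+\tfrac{1}{6}\cdot 64^{(1-\epsilon)/\epsilon}k$ as required. The main obstacle is really the bookkeeping in the first step: when $u$ (or $v$) lies outside $V(\mathcal{C}_T)$, the finalizing stage only connects it to its \emph{closest} cluster, not an arbitrary one, so $w$ has to be used explicitly both to justify the reach condition $\cdist_G(u,\mathcal{C}_T)\leq r_T+d$ that triggers the attachment and to obtain the sharper bound $\cdist_G(u,C_u)\leq d_1+r_T$ (rather than $r_T+d$) that keeps $\cdist_G(C_u,C_v)$ within the $\widehat{G}$-threshold. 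Once that witness is exploited, the rest is a clean triangle-inequality and spanner-stretch computation.
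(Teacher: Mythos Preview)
Your proposal is correct and follows essentially the same argument as the paper's proof: use the clustered witness $w$ to attach $u$ and $v$ to nearby clusters $C_u,C_v\in\mathcal{C}_T$ with the sharp bounds $d_1+r_T$ and $d_2+r_T$, verify $(C_u,C_v)\in E(\widehat{G})$, route through the $(2\lceil k^{\epsilon}\rceil-3)$-spanner $\widehat{H}$, and plug in the bound on $r_T$ from Claim~\ref{claim:radiusecondspanner}. Your explicit case split on whether $u$ (resp.\ $v$) is already in $V(\mathcal{C}_T)$ is a bit more careful than the paper (which tacitly relies on the superclustering-stage paths when $u$ is clustered), and your attachment bound $2(r_T+d)$ is slightly looser than the paper's $d+2r_T$, but neither difference affects the final inequality.
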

\begin{proof}
Let $C_{w}$ be the cluster to which $w$ belongs. In the beginning of the third stage of the algorithm we add shortest paths from unclustered vertices at center-distance at most $r_{T}+d$ to their closest cluster center in $\mathcal{C}_T$. Thus since $\cdist_{G}(u,\mathcal{C}_T)\le r_{T}+\dist_{G}(u,w)\le r_{T}+d$ and $\cdist_{G}(v,\mathcal{C}_T)\le r_{T}+\dist_{G}(v,w)\le r_{T}+d$, it holds that both $u,v$ add their shortest paths to the centers of some clusters $C_{u},C_{v}\in \mathcal{C}_T$ to the spanner. Observe that,
\begin{eqnarray*}
\cdist_{G}(C_{u},C_{v})&\le& \cdist_{G}(C_{u},u)+\dist_{G}(u,v)
+ \cdist_{G}(v,C_{v}) 
\\&\leq& r_{T}+\dist_{G}(w,u)+ d+\dist_{G}(v,w)+r_{T}
\le 2 r_{T}+2d~.
\end{eqnarray*}
Therefore, it holds that $\dist_{\widehat{G}}(C_{u},C_{v})\le 1$, thus $\dist_{\widehat{H}}(C_{u},C_{v})\le 2\cdot \lceil k^{\epsilon} \rceil-3$. Since each edge $(C,C')\in E(\widehat{H})$ translates into a path in $H$ of length at most $2r_{T}+2d$, we have that:
\begin{eqnarray*}
\dist_{H}(u,v) &\le&  \cdist_{G}(u,C_{u})+\dist_{\widehat{H}}(C_{u},C_{v})\cdot (2 r_{T}+2 d)+\cdist_{G}(C_{v},v)
\\&\le& \dist_{G}(u,w)+(2\cdot \lceil k^{\epsilon} \rceil -3)\cdot (2r_{T}+2 d)+\dist_{G}(v,w)+2 r_{T}
\\&<& (2k^{\epsilon} -1)\cdot (2 r_{T}+2 d)+ d+2 r_{T}\le 4 k^{\epsilon} \cdot d+4 k^{\epsilon} \cdot r_{T}\le 4 k^{\epsilon} \cdot d+1/6\cdot 64^{\frac{1-\epsilon}{\epsilon}}\cdot k~.
\end{eqnarray*}
\end{proof}

We next complete the proof of Lemma  \ref{lem:secondspanner}.
\begin{proof}[Proof of Lemma  \ref{lem:secondspanner}]
\textbf{Stretch.}
Let $u,v\in V$ be vertices at distance $d$ in $G$, and let $P$ be some shortest path between them in $G$. First observe that if there is some clustered vertex $w\in P$ the claim follows from Claim \ref{cl:clustered-secondspanner}, so we assume there is no such vertex. Partition the path $P$ into $\ell'\le d$ consecutive segments the following way: denote $v_{0}:=u$ and inductively define $v_{l+1}$ to be the vertex on $P$ at distance $\Delta_l$ on the segment $[v_{l},v]$, where:
	\begin{equation*}
	\Delta_l=\begin{cases}
	\min\{1,\dist(v_{l},v)\} & v_{l}\text{ is \ensuremath{0-unclustered}}\\
	\min\{\alpha_{i,j},\dist(v_{l},v)\} & v_{l}\text{ is \ensuremath{(i,j)-unclustered}}\\
	\end{cases}
	\end{equation*}
	Let $\ell'$ be the index of the last segment, thus $v_{\ell'}=v$.
	For any $l \in \{0,\ldots ,\ell'-1 \}$, if $v_{l}$ is $0$-unclustered then by Claim \ref{cl:helper-spanner-second} it holds that 
	$\dist_{H}(v_{l},v_{l+1})\leq 2\cdot \lceil k^{\epsilon} \rceil-1$. If $v_{l}$ is $(i,j)$-unclustered then since $\dist_{G}(v_{l},v_{l+1}) \le \alpha_{i,j}$, by Claim \ref{cl:ij-unc-secondspanner} it holds that $\dist_{H}(v_{l},v_{l+1})\leq \lceil k^{\epsilon} \rceil \cdot \alpha_{i,j}$. Thus except for at most the last segment $P[v_{\ell'-1},v]$, the spanner provides a multiplicative stretch of $\lceil k^{\epsilon}\rceil$ to each of the other segments. For the last segment $P[v_{\ell'-1},v]$, if $v_{\ell'-1}$ is $0$-unclustered then we are done by Claim \ref{cl:helper-spanner-second}. Otherwise, there exist $i,j \geq 1$ such that $v_{\ell'-1}$ is $(i,j)$-unclustered, and by Eq. (\ref{eq:ijuncspanner}) and Eq. (\ref{eq:rij}):
$$\dist_H(v_{\ell'}, v)\le 3 \alpha_{i,j}+2\cdot r_{i,0}+2 r_{i,j-1}\le 2 r_{i,j}\le 2 r_{T}.$$ Therefore by summing over all these at most $\ell'$ segments, and plugging the bound on $r_T$ from Claim \ref{claim:radiusecondspanner} we get that:
$$\dist_{H}(u,v)\leq \sum_{l=0}^{\ell'-1}\dist_{H}(v_{l},v_{l+1})\le (2\cdot \lceil k^{\epsilon} \rceil-1)\cdot \dist_{G}(u,v)+1/15\cdot 64^{\frac{1-\epsilon}{\epsilon}}\cdot k^{1-\epsilon}~.$$
\paragraph{Size Analysis.} By Fact \ref{claim:cluster}(3), $|E(H_{0})|=O(k^{\epsilon}\cdot n^{1+1/k})$ in expectation. 
Consider now the second stage. For any $1\le i \le T,1\le j\le t$, step $(i,j)$ starts by adding shortest paths from unclustered vertices to their closest centers in the superclustering. Since each vertex adds its shortest path to its closest center, and since we break ties in a consistent manner, this step adds at most $O(n)$ edges. The number of shortest paths added between unclustered clusters in each step can be bounded as follows.
\begin{claim}
\label{cl:expectedspinstep}
Fix a phase $i$. For any $1\le j \le t$, the algorithm adds in step $(i,j)$ a collection of $O(n)$ shortest paths in expectation.
\end{claim}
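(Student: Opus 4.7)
The plan is to split the shortest paths added in step $(i,j)$ into two groups and bound each separately. The first group consists of the paths added in the augmentation substep at the beginning of step $(i,j)$: each unclustered vertex $v \in V \setminus V(\mathcal{SC}_{i,j-1})$ with $\cdist_G(v,\mathcal{SC}_{i,j-1}) \le r_{i,j-1}+\alpha_{i,j}$ contributes at most one shortest path (to its closest supercluster-center in $\mathcal{SC}_{i,j-1}$), so the total contribution of this group is at most $n$ deterministically. It therefore remains to bound the number of paths produced during the merging substep, where each constituent cluster $C$ of a supercluster in $\mathcal{SC}_{i,j-1}$ adds either one path (to a sampled nearby supercluster-center) or several paths (to all nearby supercluster-centers in $\mathcal{SC}_{i,j-1}$).

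To bound the merging paths I would use a standard Baswana--Sen style charging argument. Fix a constituent cluster $C$ of some supercluster in $\mathcal{SC}_{i,j-1}$, set $\delta := r_{i,0}+r_{i,j-1}+2\alpha_{i,j}$, and enumerate the superclusters of $\mathcal{SC}_{i,j-1}$ lying at center-distance at most $\delta$ from $r(C)$ in non-decreasing order of center-distance, as $S_1,S_2,\ldots,S_m$. Recall that $\mathcal{SC}_{i,j}$ is formed by sampling each supercluster of $\mathcal{SC}_{i,j-1}$ independently with probability $p=n_{0}/n$, where $n_{0}=|\mathcal{C}_i|$. If at least one $S_\ell$ is sampled, letting $\sigma$ be the smallest such index, $C$ falls into the if-branch, joins $S_\sigma$, and contributes exactly one shortest path; otherwise (an event of probability $(1-p)^m$) $C$ falls into the otherwise branch and contributes exactly $m$ shortest paths, one to each of $S_1,\ldots,S_m$.

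Letting $X_C$ denote the number of merging paths contributed by $C$, this case analysis yields
\[
\mathbb{E}[X_C] \;\le\; 1 \,+\, m(1-p)^m \;\le\; 1 + \frac{1}{p} \;=\; 1 + \frac{n}{n_{0}},
\]
where the bound $m(1-p)^m \le 1/p$ is an elementary calculus fact (the maximum of $x(1-p)^x$ over $x\ge 0$ is attained near $x=1/\ln(1/(1-p))\le 1/p$). Since the constituent clusters of superclusters of $\mathcal{SC}_{i,j-1}$ form a subset of the clusters of $\mathcal{C}_i$, their total number is at most $n_{0}$, and summing over $C$ gives $\sum_C \mathbb{E}[X_C] \le n_{0} \cdot (1 + n/n_{0}) = n_{0} + n = O(n)$. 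Together with the $\le n$ paths from the augmentation substep this proves the claim.

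The main point of care is to match the case analysis to the exact wording of Procedure $\ClusterAndAugment$: to verify that $C$ contributes exactly one merging path when its closest sampled supercluster lies within $\delta$, and exactly one path per nearby supercluster of $\mathcal{SC}_{i,j-1}$ otherwise. Once this bookkeeping is in place, no idea beyond the standard Baswana--Sen sampling trick is required, and the inequality $m(1-p)^m\le 1/p$ is routine.
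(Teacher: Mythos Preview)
Your argument is essentially identical to the paper's: both fix a cluster $C$, count its nearby superclusters, and use the bound $m(1-p)^m \le 1/p$ to conclude that the expected number of paths per cluster is at most $1+n/n_0$, then sum over the $n_0$ clusters. The only discrepancy is an off-by-one in your indexing (in phase $i$ the relevant superclustering is $\mathcal{SC}_{i-1,j-1}$ and the base clustering is $\mathcal{C}_{i-1}$, so $n_0=|\mathcal{C}_{i-1}|$, not $|\mathcal{C}_i|$), but this is cosmetic and does not affect the argument.
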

\def\APPENDCLMNUM{
\begin{proof}
Let $C$ be a cluster and let  $SC_{1},...,SC_{\ell}\in \mathcal{SC}_{i-1,j-1}$ be the superclusters that are at center-distance at most $r_{i-1,0}+r_{i-1,j-1}+2\alpha_{i-1,j}$ from the cluster $C$ in the superclustering $\mathcal{SC}_{i-1,j-1}$, ordered by non-decreasing distances from $C$. The cluster $C$ becomes unclustered only if all of the $\ell$ superclustered are unsampled. Since each supercluster in the $(i,j)$ step is sampled with probability $\frac{|\mathcal{C}_{i-1}|}{n}$, this happens with probability at most $(1-\frac{|\mathcal{C}_{i-1}|}{n})^{l}$. It follows that $C$ contributes $\ell\cdot (1-\frac{|\mathcal{C}_{i-1}|}{n})^{\ell}+1-(1-\frac{|\mathcal{C}_{i-1}|}{n})^{\ell} \leq \ell\cdot e^{-\ell\cdot |\mathcal{C}_{i-1}|/n}+1\le n/|\mathcal{C}_{i-1}|+1$
shortest paths in expectation. Since in the $i^{th}$ phase there are $|\mathcal{C}_{i-1}|$ clusters in expectation, we have that at each step we add at most $2n$ shortest paths in expectation. 
\end{proof}
}
Each shortest path is of length at most $r_{i-1,j}=r_{i-1,0}+r_{i-1,j-1}+2\alpha_{i-1,j}$. Therefore, by combining with Claim \ref{cl:expectedspinstep}, $|E(H_{i})|=O(k^{\epsilon} \cdot r_{i,0}\cdot n)$ in expectation. By summing over all $T$ phase, we get a total of $O(\sum_{i=1}^{T}k^{\epsilon}\cdot r_{i,0}\cdot n)=O(64^{1/\epsilon}\cdot k^{1+\epsilon}\cdot n)$ edges.

For the size analysis of last stage we will need the following claim, which follows by a simple induction.
\begin{claim}
\label{cl:sizesuperclusters}
\label{cl:numberofclusters}
For any $0\le i \le T,0\le j \le t$, the expected number of superclusters in $\mathcal{SC}_{i,j}$ is $n^{1-\frac{\left \lceil k^{\epsilon} \right\rceil \cdot (t+1)^{i}\cdot (j+1)}{k}}$, thus $|\mathcal{C}_{i}|=|\mathcal{SC}_{i,0}|=n^{1-\frac{\lceil k^{\epsilon} \rceil \cdot (t+1)^{i}}{k}}$, in expectation.
\end{claim}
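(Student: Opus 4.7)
The plan is to prove the claim by double induction: an outer induction on the phase index $i \in \{0,\ldots,T\}$ and an inner induction on the step index $j \in \{0,\ldots,t\}$. The key recursion to establish is $\mathbb{E}[|\mathcal{SC}_{i,j}|] = \mathbb{E}[|\mathcal{SC}_{i,j-1}|] \cdot n_0/n$, where $n_0 = |\mathcal{C}_i|$ is the (expected) number of clusters fed into phase $i+1$. This is enough because the claim identifies $|\mathcal{C}_{i}|$ with $|\mathcal{SC}_{i,0}|$, giving a single common formula that is consistent across the boundary between consecutive phases.

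For the base case $(i,j)=(0,0)$, the trivial superclustering $\mathcal{SC}_{0,0} = \{\{C\} : C \in \mathcal{C}_0\}$ satisfies $|\mathcal{SC}_{0,0}| = |\mathcal{C}_0|$, and Fact \ref{claim:cluster}(1) applied to the preliminary call $\Cluster(G,k,\lceil k^\epsilon\rceil)$ yields $\mathbb{E}[|\mathcal{C}_0|] = n^{1-\lceil k^\epsilon\rceil/k}$, matching the formula (since $(t+1)^0 \cdot (0+1)=1$). For the inner inductive step, recall that at step $j$ of phase $i+1$, the algorithm independently samples each supercluster of $\mathcal{SC}_{i,j-1}$ with probability $n_0/n$. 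Conditioning on the randomness of all earlier phases (which fixes $n_0$), linearity of expectation gives $\mathbb{E}[|\mathcal{SC}_{i,j}| \mid n_0] = \mathbb{E}[|\mathcal{SC}_{i,j-1}| \mid n_0]\cdot n_0/n$. Iterating from $j=0$, and using $|\mathcal{SC}_{i,0}| = n_0$, yields $\mathbb{E}[|\mathcal{SC}_{i,j}| \mid n_0] = n_0^{j+1}/n^j$, which after plugging in the outer-inductive value $n_0 = n^{1-\lceil k^\epsilon\rceil(t+1)^i/k}$ produces exactly $n^{1-\lceil k^\epsilon\rceil (t+1)^i(j+1)/k}$.

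The outer inductive step is immediate: the phase ends by defining $\mathcal{C}_{i+1}$ to be the collection of vertex sets $\{V(SC_\ell) : SC_\ell \in \mathcal{SC}_{i,t}\}$, so $|\mathcal{C}_{i+1}| = |\mathcal{SC}_{i,t}|$. Setting $j=t$ in the formula established above gives $\mathbb{E}[|\mathcal{C}_{i+1}|] = n^{1-\lceil k^\epsilon\rceil(t+1)^{i+1}/k}$, matching the second half of the claim and also providing the correct starting value $|\mathcal{SC}_{i+1,0}| = |\mathcal{C}_{i+1}|$ for the next round of inner induction. The whole argument therefore reduces to checking a base case and a one-step sampling recursion, both of which are transparent from the pseudocode of $\ClusterAndAugment$.

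The main technical subtlety---and the only point where care is required---is that the sampling probability $n_0/n$ in phase $i+1$ is itself a random variable depending on $|\mathcal{C}_i|$. Taking the outer expectation of the identity $\mathbb{E}[|\mathcal{SC}_{i,j}| \mid n_0] = n_0^{j+1}/n^j$ yields $\mathbb{E}[|\mathcal{C}_i|^{j+1}]/n^j$, which by Jensen's inequality exceeds $(\mathbb{E}[|\mathcal{C}_i|])^{j+1}/n^j$. The standard way to reconcile this (as in the analyses of Baswana--Sen and Elkin--Neiman) is to invoke concentration: $|\mathcal{C}_i|$ is a sum of independent indicators and is tightly concentrated around its mean, so $\mathbb{E}[|\mathcal{C}_i|^{j+1}]$ matches $(\mathbb{E}[|\mathcal{C}_i|])^{j+1}$ up to lower-order factors that are harmlessly absorbed by the $\widetilde{O}$ notation in the final size bound for $H$. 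I would note this implicitly when writing the proof, since the clean statement above is the effective estimate used in the subsequent size analysis.
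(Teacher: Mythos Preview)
Your argument is correct and matches the paper's approach: the paper does not spell out a proof, merely noting that the claim ``follows by a simple induction,'' which is exactly the double induction you carry out. Your observation about the random sampling probability $n_0/n$ and the need for concentration to pass from $\mathbb{E}[n_0^{j+1}]$ to $(\mathbb{E}[n_0])^{j+1}$ is a genuine subtlety that the paper silently elides; you are right that it is absorbed harmlessly in the final size bound.
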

%
From the claims above, we get that $|\mathcal{C}_T|=O(n^{1-(\left \lceil k^{\epsilon} \right \rceil/4)^{T}\cdot \frac{\left \lceil k^{\epsilon} \right \rceil}{k}})=O(n^{1-\frac{1}{k^{\epsilon}}})$, in expectation.
Consequently, $|E(\widehat{H})|=O(|\mathcal{C}_T|^{1+\frac{1}{\lceil k^{\epsilon} \rceil -1}})=O(n)$ (in expectation ). Since each edge in $\widehat{H}$ translates into a path of length at most $2r_{T}+2d$, this step contributes $O(r_{T}\cdot n)$ edges to the spanner. Overall, $|E(H)| = O(k^{\epsilon}\cdot n^{1+1/k}+(64^{1/\epsilon}k^{1+\epsilon}+d)\cdot n)$, in expectation.
\end{proof}
Theorem \ref{thm:secondspanner} now follows by noting that:
\begin{observation}
\label{claim:isenough}
Let $H\subseteq G$ be a subgraph satisfying that $\dist_{H}(u,v)\le \alpha\cdot d+\beta$ for every $u,v\in V(G)$ at distance $d\le \left \lceil \frac{\beta}{\alpha} \right\rceil$ in $G$, then $H$ is a $(2\cdot \alpha,3\cdot \beta)$-spanner of $G$.
\end{observation}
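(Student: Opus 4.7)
The plan is a standard path-splitting argument. Set $m := \lceil \beta/\alpha \rceil$; the key numerical fact we will repeatedly use is $\beta/m \le \alpha$, which holds by the definition of the ceiling (assuming $\beta \ge \alpha$; the reverse case is vacuous since then $m=1$ and the hypothesis already covers every edge, which handles the general case identically).

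First I would dispose of the easy range $d \le m$: the hypothesis directly gives $\dist_H(u,v) \le \alpha d + \beta \le 2\alpha d + 3\beta$, as required. For the remaining range $d > m$, fix a $u$-$v$ shortest path $P$ in $G$ and partition it into $k := \lceil d/m \rceil$ consecutive segments with endpoints $u = v_0, v_1, \dots, v_k = v$, where each segment $P[v_{i-1}, v_i]$ has $G$-length $d_i \le m$ and $\sum_{i=1}^{k} d_i = d$. Since $d_i \le m = \lceil \beta/\alpha \rceil$, the hypothesis applied to the pair $(v_{i-1}, v_i)$ yields $\dist_H(v_{i-1}, v_i) \le \alpha d_i + \beta$. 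Summing via the triangle inequality, $\dist_H(u,v) \le \sum_{i=1}^{k}(\alpha d_i + \beta) = \alpha d + k\beta$.

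To finish, bound the number of segments by $k \le d/m + 1$, so $k\beta \le (d/m)\beta + \beta \le \alpha d + \beta$, using $\beta/m \le \alpha$. Combining, $\dist_H(u,v) \le 2\alpha d + \beta \le 2\alpha d + 3\beta$, which is the desired $(2\alpha, 3\beta)$-spanner guarantee. There is no genuine obstacle here: the argument is one line of algebra once the partition is set up. The additive slack of $3\beta$ in the statement simply absorbs the $+1$ from the ceiling in the segment count; a tighter accounting would in fact yield a $(2\alpha, \beta)$-spanner, but the loose $(2\alpha, 3\beta)$ form is all that is needed to convert the per-distance bound of Lemma \ref{lem:secondspanner} (which controls $\dist_H(u,v)$ for pairs at a fixed distance $d$) into a uniform spanner statement, thereby yielding Theorem \ref{thm:secondspanner}.
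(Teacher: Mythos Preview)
Your proof is correct and is exactly the natural path-splitting argument one would expect; the paper in fact states Observation~\ref{claim:isenough} without proof, so there is nothing substantive to compare against. Your remark that the accounting actually yields $(2\alpha,\beta)$ is accurate (the extra factor $3$ on $\beta$ is slack the paper does not need to spend), and your use of $\beta/m \le \alpha$ from $m=\lceil \beta/\alpha\rceil \ge \beta/\alpha$ is the only inequality required.
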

%

\section{New $(3+\epsilon,\beta)$ Spanner}
\label{sec:3epsspanner}
In this section, we show an optimized variant of our $(\alpha,\beta)$ spanner for the case where $\alpha=3+\epsilon$. For the purpose of efficient implementation, we settle for a slightly worse value of $\beta$. 
In Sec. \ref{sec:best-spanner}, we show an improved construction that achieves the bounds of Lemma \ref{lem:spanner-best}. 
Our main result is:
\begin{lemma}\label{lem:3plusepsspanner}
For any $n$-vertex unweighted graph $G=(V,E)$, integer $k$ and $\epsilon>0$, one can compute a $(3+\epsilon,\beta)$ spanner $H \subseteq G$ with $\beta=O((5+ 16/\epsilon)\cdot k^{\log(5+16/\epsilon)})$, and expected size $|E(H)|=O(n^{1+1/k}+k^{\log(5+16/\epsilon)}\cdot n/\epsilon)$.
\end{lemma}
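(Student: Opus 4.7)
The plan is to collapse the two-stage structure of Algorithm \ref{alg:improvedspanner2} into a single-phase superclustering with parameters tuned for $(3+\epsilon)$ multiplicative stretch, essentially running Procedure $\ClusterAndAugment$ once with $T=\log k$ steps (rather than $k^\epsilon/4$ steps per phase across many phases). First I would apply one step of Baswana-Sen clustering to obtain an initial clustering $\mathcal{C}_0$ with $O(n^{1-1/k})$ clusters of radius $r_0=1$, contributing $O(n^{1+1/k})$ edges in expectation. Then I would run $T=\log k$ superclustering rounds of the same structure as $\ClusterAndAugment$, sampling at each round with probability $|\mathcal{C}_{j-1}|/n$; by the same calculation as Claim \ref{cl:numberofclusters}, $|\mathcal{C}_j|\le |\mathcal{C}_{j-1}|^2/n$ in expectation, so $|\mathcal{C}_T|=O(n^{1-2^T/k})=O(1)$ and the final cluster-graph spanner stage is trivial.

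The key difference from Algorithm \ref{alg:improvedspanner2} is that I would choose the augmentation parameter $\alpha_j$ so that each step introduces only a $(3+\epsilon)$ multiplicative factor. Following the distance calculation underlying Claim \ref{cl:ij-unc-secondspanner}, a $j$-unclustered vertex $u$ and any $v\in\Ball_G(u,\alpha_j)$ satisfy $\dist_H(u,v)\le 2r_0+3\alpha_j+2r_{j-1}$, so picking $\alpha_j$ proportional to $(r_0+r_{j-1})/\epsilon$ forces the right-hand side to be at most $(3+\epsilon)\alpha_j$. With this choice the radius evolves as $r_j=r_{j-1}+2r_0+2\alpha_j$, which solves to $r_j=O(c^j)$ for a constant $c=5+16/\epsilon$. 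After $T=\log k$ rounds this yields $r_T=O((5+16/\epsilon)\cdot k^{\log(5+16/\epsilon)})=:\beta$. The stretch analysis then proceeds exactly as in Lemma \ref{lem:secondspanner}: for any pair $u,v$ at distance $d$ whose shortest path contains no vertex of $\mathcal{C}_T$, I would partition the path into consecutive segments whose lengths match the $\alpha_j$ at which intermediate endpoints became unclustered, obtaining a $(3+\epsilon)$-multiplicative stretch on each but the last segment (whose length is absorbed into the additive $\beta=r_T$); paths passing through a vertex of $\mathcal{C}_T$ are handled as in Claim \ref{cl:clustered-secondspanner}, and $|\mathcal{C}_T|=O(1)$ makes their contribution $O(\beta)$.

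For the size bound, the Baswana-Sen step contributes $O(n^{1+1/k})$ edges, while the augmentation and cluster-merging in each round add $O(n)$ shortest paths in expectation (the latter via Claim \ref{cl:expectedspinstep}), each of length $O(r_j+\alpha_j)$. Because $\alpha_j$ grows geometrically by a factor $\Omega(1/\epsilon)$, summing across rounds telescopes the per-vertex cost to $O(\beta/\epsilon)$, yielding the stated total of $O(n^{1+1/k}+k^{\log(5+16/\epsilon)}\cdot n/\epsilon)$ edges. The main obstacle is the delicate tuning of $\alpha_j$ versus $r_j$: the local stretch constraint $3\alpha_j+2r_0+2r_{j-1}\le (3+\epsilon)\alpha_j$ must be balanced against the radius recurrence $r_j=r_{j-1}+2r_0+2\alpha_j$, and it is this coupling that forces the base of the final exponent to be $5+16/\epsilon$ rather than the naively expected $1+4/\epsilon$.
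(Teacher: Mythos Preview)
Your high-level plan is exactly the paper's approach in Section~\ref{sec:3epsspanner}: $\Theta(\log k)$ clustering rounds, sampling probability $p_i=|\mathcal{C}_{i-1}|/n$ so the cluster count squares down, and $\alpha_i$ tuned so that the detour through two cluster centers costs at most $(3+\epsilon)\alpha_i$. The gap is that the supporting formulas you wrote are lifted from Procedure $\ClusterAndAugment$, whose structure does not combine with the doubling sampling schedule. In $\ClusterAndAugment$, base clusters keep a \emph{fixed} radius $r_0$ while superclusters grow to radius $r_{j-1}$; this is what gives your bound $2r_0+3\alpha_j+2r_{j-1}$ and recurrence $r_j=r_{j-1}+2r_0+2\alpha_j$. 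But Claim~\ref{cl:expectedspinstep} bounds the step-(5) paths by ``at most $n/n_0$ per base cluster, times $n_0$ base clusters, equals $O(n)$'' only because the sampling probability is held fixed at $n_0/n=|\mathcal{C}_0|/n$ throughout the phase. If you instead sample with probability $|\mathcal{SC}_{j-1}|/n$ so that the supercluster count reaches $O(1)$ in $\log k$ rounds, the per-base-cluster bound becomes $n/|\mathcal{SC}_{j-1}|$ while up to $|\mathcal{C}_0|$ base clusters remain in play, and the product is no longer $O(n)$ in the late rounds. Conversely, keeping the probability fixed at $n^{-1/k}$ would require $\Theta(k)$ rounds, not $\log k$.

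The paper resolves this by \emph{merging} clusters each round: there is no separate $r_0$, and the clusters of $\mathcal{C}_{i-1}$ (radius $r_{i-1}$) are themselves the units sampled and joined. The detour bound is then $\dist_H(u,v)\le 4r_{i-1}+3\,\dist_G(u,v)$ (Claim~\ref{claim:i-unc-3epsspanner}), so $\alpha_i=4r_{i-1}/\epsilon$; with join threshold $4r_{i-1}+4\alpha_i$ one gets $r_i\le 5r_{i-1}+4\alpha_i=(5+16/\epsilon)\,r_{i-1}$, and the size follows from Lemma~\ref{lem:expectedamountofsp} because there are now only $|\mathcal{C}_{i-1}|$ units, each contributing $\le n/|\mathcal{C}_{i-1}|$ paths. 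Incidentally, your stated recurrence with fixed $r_0=1$ actually solves to base $\approx 1+4/\epsilon$, not $5+16/\epsilon$; the larger base comes precisely from losing the small fixed $r_0$ once clusters merge, not from the ``coupling'' you describe.
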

We note that unlike the constructions in earlier sections, this construction works for all distances $d$, and there is no need to consider each distance class separately.
\paragraph{Algorithm Description.}	
For simplicity, we assume throughout that $4/\epsilon$ is an integer. The algorithm contains $T=\lceil \log k +1 \rceil$ clustering phases. Starting with the trivial clustering $\mathcal{C}_{0}= \{\{v\} ~\mid~ v\in V\}$ of radius $0$, in each phase $i\geq 1$, given is a clustering $\mathcal{C}_{i-1}$ of expected size $n_{i-1}=n^{1-\frac{2^{i-2}}{k}}$ (except for $i=1$ where $n_{0}=n$) with radius at most $r_{i-1}=2\cdot (5+ 16/\epsilon)^{i-2}$. The output of the $i^{th}$ phase is a clustering $\mathcal{C}_i$ of expected size $n_{i}=n^{1-\frac{2^{i-1}}{k}}$ and radius $r_i= 2\cdot (5+ 16/\epsilon)^{i-1}$, and a subgraphs $H_i$ that takes care of the unclustered vertices in $V(\mathcal{C}_i)\setminus V(\mathcal{C}_{i-1})$. 

We now zoom into the $i^{th}$ phase and explain the construction of the clustering $\mathcal{C}_i$ and the subgraph $H_i$. The phase is governed by two key parameters: the sampling probability $p_i$ of each cluster $C \in \mathcal{C}_{i-1}$ to join the clustering $\mathcal{C}_i$, and an augmentation radius $\alpha_i$.
Let $\alpha_1=1/2, p_1=n^{-1/k}$ and for every $i\geq 2$, define
\begin{equation}\label{eq:alpha-p}
\alpha_i = \frac{4}{\epsilon}\cdot r_{i-1} \mbox{~~and~~} p_i=|\mathcal{C}_{i-1}|/n~.
\end{equation}
The description of the $i^{th}$ phase for $i \in \{1,\ldots, T\}$ is as follows:
\begin{enumerate}

\item Each unclustered vertex $v \in V\setminus {V(\mathcal{C}_{i-1})}$ with $\cdist_{G}(v,\mathcal{C}_{i-1})\le r_{i-1}+\alpha_{i}$, adds to $H_{i}$ the shortest path to its closest center in $\mathcal{C}_{i-1}$.

\item Let $\mathcal{C}' \subseteq \mathcal{C}_{i-1}$ be the collection of clusters obtained by 
sampling each cluster $C_\ell \in \mathcal{C}_{i-1}$ independently with probability of $p_i$.

\item Each cluster $C \in \mathcal{C}_{i-1}$ such that $\cdist_{G}(C,\mathcal{C}')\le 4\cdot r_{i-1}+4\cdot \alpha_{i}$, joins the sampled cluster in $\mathcal{C}'$ with minimal distance between their centers, and adds the shortest path between their centers into $H_{i}$.

\item The clustering $\mathcal{C}_{i}$ consists of all the sampled clusters in $\mathcal{C}'$ augmented by their nearby clusters in $\mathcal{C}_{i-1}$ (i.e., at distance $4\cdot r_{i-1}+4\cdot \alpha_{i}$ between their centers). That is, each cluster in $\mathcal{C}_{i}$ is made of a star of clusters, with the head of the star is a sampled cluster $C \in \mathcal{C}'$ whose center $r(C)$ is connected to the centers of a subset of clusters in $\mathcal{C}_{i-1}$.

\item Each cluster $C \in \mathcal{C}_{i-1}$ such that $\cdist_{G}(C,\mathcal{C}')> 4\cdot r_{i-1}+4\cdot \alpha_{i}$, adds to $H_{i}$ the shortest path between its center to any center of  $C'\in \mathcal{C}_{i-1}$ with $\cdist_{G}(C,C')\le 2\cdot r_{i-1}+2\cdot \alpha_{i}$. 
\end{enumerate}
This completes the description of the $i^{th}$ phase. Let $H=\bigcup_{i=1}^{T} H_i$ and let $\mathcal{C}_T$ be the output clustering of the last phase $T$. In the analysis section we show that in expectation $\mathcal{C}_T$ consists of at most a single cluster $C$. In the latter case, the algorithm adds to the output spanner $H$, a BFS tree rooted at the centers of the clusters of $\mathcal{C}_T$. 

\paragraph{Stretch Analysis.}
We start by bounding the radius $r_i$ of the clustering $\mathcal{C}_i$ for every $i \in \{1,\ldots, T\}$.
We first make a simple observation:

\begin{observation}
\label{obs:rad3epsspanner}
For every $i \in \{1,\ldots, T\}$, $r_i \leq 2\cdot (5+ 16/\epsilon)^{i-1}$. In particular, the radius of cluster in the final clustering $\mathcal{C}_T$ is $r_T \le (10+ 32/\epsilon)\cdot k^{\log(5+ 16/\epsilon)}$.
\end{observation}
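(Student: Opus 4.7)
The plan is to prove the bound by a straightforward induction on $i$, where the key step is a triangle inequality argument that tracks how much the cluster radius can grow in a single phase of the construction.

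First I would establish a one-step recursion of the form $r_i \le (5+16/\epsilon)\,r_{i-1}$ for $i \ge 2$. To see why this holds, fix a cluster $\widetilde{C} \in \mathcal{C}_i$ formed in phase $i$. By construction $\widetilde{C}$ is a ``star of clusters'': its center $r(\widetilde{C})$ is the center of some sampled cluster $C^* \in \mathcal{C}' \subseteq \mathcal{C}_{i-1}$, and $\widetilde{C}$ is obtained by merging $C^*$ with all clusters $C \in \mathcal{C}_{i-1}$ satisfying $\cdist_G(C, C^*) \le 4r_{i-1} + 4\alpha_i$. For any vertex $u \in V(\widetilde{C})$, either $u$ lies in $C^*$ itself (in which case $\dist_G(u, r(C^*)) \le r_{i-1}$ by the inductive radius bound), or $u$ lies in some attached cluster $C$, and then by the triangle inequality
\[
\dist_G(u, r(\widetilde{C})) \;\le\; \dist_G(u, r(C)) + \dist_G(r(C), r(C^*)) \;\le\; r_{i-1} + 4r_{i-1} + 4\alpha_i \;=\; 5r_{i-1} + 4\alpha_i.
\]
Substituting $\alpha_i = (4/\epsilon)\,r_{i-1}$ from Eq.~(\ref{eq:alpha-p}) yields $r_i \le (5 + 16/\epsilon)\,r_{i-1}$.

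Next I would handle the base case $i = 1$. Here $\mathcal{C}_0$ consists of singletons, so $r_0 = 0$, and $\alpha_1 = 1/2$. The same triangle inequality reasoning as above (with $\alpha_i$ replaced by $\alpha_1$) gives $r_1 \le 5r_0 + 4\alpha_1 = 2$, which matches the claimed bound $2(5+16/\epsilon)^{0} = 2$. Unfolding the recursion then gives $r_i \le 2\,(5+16/\epsilon)^{i-1}$ for every $i \in \{1,\ldots,T\}$, proving the first part of the observation.

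Finally, for the particular bound on $r_T$, I would plug in $T = \lceil \log k + 1\rceil$, which gives $T-1 \le \log k + 1$, so
\[
r_T \;\le\; 2\,(5 + 16/\epsilon)^{T-1} \;\le\; 2\,(5+16/\epsilon) \cdot (5+16/\epsilon)^{\log k} \;=\; (10 + 32/\epsilon)\,k^{\log(5+16/\epsilon)},
\]
as claimed. The only mildly subtle point is being careful that the ``attached'' clusters in step~(3) of phase $i$ contribute a $4r_{i-1} + 4\alpha_i$ term (distance between cluster centers) rather than merely $2r_{i-1} + 2\alpha_i$; everything else is a direct inductive unfolding.
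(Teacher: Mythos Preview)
Your proof is correct and follows essentially the same approach as the paper: an induction on $i$ using the star-of-clusters structure, deriving the recursion $r_i \le r_{i-1} + 4r_{i-1} + 4\alpha_i = (5+16/\epsilon)r_{i-1}$, handling the base case $r_1 = 2$ via $r_0 = 0$ and $\alpha_1 = 1/2$, and then plugging in $T = \lceil \log k + 1\rceil$ for the final bound. The paper's write-up is slightly more compressed but the argument is identical.
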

\begin{proof}
The claim is shown by induction on $i$. For the base case, where $i=1$, the $0$-level clusters are simply singletons, and each node joins its closest sampled vertex at distance at most $4\cdot r_{0}+4\cdot \alpha_1=2$. Therefore the clusters in $\mathcal{C}_1$ have radius of $2$. Assume that the claim holds up to $i-1 \geq 0$, and consider the $i^{th}$ clustering $\mathcal{C}_i$ which is defined in the $i^{th}$ phase based on the clustering $\mathcal{C}_{i-1}$. Each cluster in $\mathcal{C}_i$ is formed by a star: the head of the star is the sampled cluster $C$ that is connected to all clusters $C' \in \mathcal{C}_{i-1}$ with center-distance at most $4\cdot r_{i-1}+4\alpha_{i}$ from $C$. The radius of this cluster in $\mathcal{C}_i$ is bounded by
\begin{eqnarray*}
r_i &\leq& r_{i-1}+4\cdot r_{i-1}+4\alpha_{i}
\\&=& 5\cdot r_{i-1}+16/\epsilon \cdot r_{i-1} \leq 2\cdot \left (5+16/\epsilon \right)^{i-1}~,
\end{eqnarray*}
where the second inequality follows by plugging the bound on $r_{i-1}$ obtained from the induction assumption and the bound on $\alpha_i$ from Eq. (\ref{eq:alpha-p}).
\end{proof}
\begin{observation}
\label{obs:clusters3epsspanner-nclusters}
For $1\le i \le T$, $|\mathcal{C}_{i}|=n^{1-2^{i-1}/k}$ therefore after $T$ phases, there is one cluster in $\mathcal{C}_T$ in expectation.
\end{observation}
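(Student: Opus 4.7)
The plan is to proceed by induction on $i$, exploiting the fact that the clusters in $\mathcal{C}_i$ are in one-to-one correspondence with the sampled clusters in $\mathcal{C}' \subseteq \mathcal{C}_{i-1}$. Indeed, by construction in the $i^{\text{th}}$ phase, every cluster of $\mathcal{C}_i$ is obtained by taking a single sampled cluster in $\mathcal{C}'$ as the head of a star and augmenting it with a subset of the non-sampled clusters of $\mathcal{C}_{i-1}$; the remaining non-sampled clusters become unclustered and do not contribute any new element to $\mathcal{C}_i$. Hence $|\mathcal{C}_i| = |\mathcal{C}'|$, reducing the analysis to counting successes of independent Bernoulli trials.

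For the base case $i=1$, the clustering $\mathcal{C}_0$ consists of $n$ singletons, each sampled independently with probability $p_1 = n^{-1/k}$, so by linearity of expectation
\[
\mathbb{E}[|\mathcal{C}_1|] \;=\; n \cdot n^{-1/k} \;=\; n^{1-1/k} \;=\; n^{1 - 2^{0}/k}.
\]
For the inductive step, assume $\mathbb{E}[|\mathcal{C}_{i-1}|] = n_{i-1} = n^{1 - 2^{i-2}/k}$. Each cluster of $\mathcal{C}_{i-1}$ is independently sampled with probability $p_i = n_{i-1}/n$, so
\[
\mathbb{E}[|\mathcal{C}_i|] \;=\; \mathbb{E}[|\mathcal{C}'|] \;=\; p_i \cdot \mathbb{E}[|\mathcal{C}_{i-1}|] \;=\; \frac{n_{i-1}^2}{n} \;=\; n^{1 - 2^{i-1}/k},
\]
closing the induction. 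Plugging $i = T = \lceil \log k + 1\rceil$ gives $2^{T-1} \ge 2^{\lceil \log k\rceil} \ge k$, hence $\mathbb{E}[|\mathcal{C}_T|] \le n^{1-1} = 1$, which is the second statement.

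The one subtlety to watch, and probably the only place the argument can trip, is the interpretation of $p_i$. Read literally, the algorithm sets $p_i = |\mathcal{C}_{i-1}|/n$ using the \emph{realized} random size, in which case the law of total expectation yields $\mathbb{E}[|\mathcal{C}_i|] = \mathbb{E}[|\mathcal{C}_{i-1}|^2]/n$ and Jensen's inequality a priori points in the wrong direction. The clean fix is to take $p_i$ to be the deterministic quantity $n_{i-1}/n$ (the target expected size), as is implicit when the paper states that $\mathcal{C}_{i-1}$ has ``expected size $n_{i-1}$''; then the computation above is exact. Alternatively, since $\mathcal{C}_{i-1}$ is obtained by a product of independent Bernoulli samplings, a standard Chernoff bound shows that $|\mathcal{C}_{i-1}|$ is sharply concentrated around $n_{i-1}$, so $\mathbb{E}[|\mathcal{C}_{i-1}|^2] = n_{i-1}^2 (1 + o(1))$ and the two readings coincide up to lower-order factors, which are absorbed into the claimed bound.
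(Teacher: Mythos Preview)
Your proof is correct and follows essentially the same inductive argument as the paper: identify $|\mathcal{C}_i|$ with the number of sampled clusters, handle $i=1$ via $p_1=n^{-1/k}$, and for $i\ge 2$ use $p_i\cdot |\mathcal{C}_{i-1}| = |\mathcal{C}_{i-1}|^2/n$ together with the induction hypothesis. Your discussion of the realized-versus-deterministic reading of $p_i$ is in fact more careful than the paper, which silently treats $|\mathcal{C}_{i-1}|$ as deterministic in the inductive step.
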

\begin{proof}
By induction on $i$. For the base case $i=1$, since $p_1=n^{-1/k}$, the number of sampled clusters is $n^{1-1/k}$ in expectation. Assume that the claim holds up to $i-1 \geq 1$, and consider the $i^{th}$ clustering. In the $i$th phase, each cluster in $\mathcal{C}_{i-1}$ is sampled with probability of $p_i=|\mathcal{C}_{i-1}|/n$, therefore, the number of sampled clusters in expectation is $|\mathcal{C}_{i-1}|^2/n$. By induction assumption, $|\mathcal{C}_{i-1}|=n^{1-2^{i-2}/k}$, and thus $|\mathcal{C}_{i}|=n^{1-2^{i-1}/k}$.
By plugging $T=\lceil \log k+1 \rceil$, we get that in expectation $|\mathcal{C}_{T}|\le 1$.
\end{proof}
Our stretch argument is based on the following definition of clustered vertices.
\begin{definition}
For every $1 \leq i\leq T$, a vertex $v \in V(\mathcal{C}_{i-1})\setminus{V(\mathcal{C}_{i})}$ is called $i$-\emph{unclustered}. In addition, every $v\in V(\mathcal{C}_T)$ is called \emph{clustered} (i.e., $v$ belongs to the final clustering). 
\end{definition}
For every $i$-unclustered vertex we provide a stretch guarantee as a function of $i$. Specifically, the earlier that the vertex $u$ stops being clustered, the smaller is the ball around $u$ for which the stretch guarantee is provided. 
\begin{claim}[$1$-unclustered]
\label{cl:1-unc}
For any $1$-unclustered vertex $u\in V$, it holds that $\dist_{H}(u,v) = 1$ for all $v\in N_{G}(u)$.
\end{claim}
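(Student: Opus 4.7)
The plan is to unpack what $1$-unclustered means, identify which step of the first phase ($i=1$) applies to such a vertex $u$, and observe that this step directly inserts into $H_1$ (and hence into $H$) an edge to every neighbor of $u$ in $G$.

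First I would note that $\mathcal{C}_0$ is the trivial clustering of singletons, so $V(\mathcal{C}_0)=V$ and $\cdist_G$ between two clusters of $\mathcal{C}_0$ coincides with $\dist_G$ between the corresponding vertices. Plugging in $r_0=0$ and the initial value $\alpha_1=1/2$ into the phase-$1$ thresholds yields $4r_0+4\alpha_1=2$ for the ``join a sampled cluster'' radius in step~3, and $2r_0+2\alpha_1=1$ for the ``handle lost clusters'' radius in step~5.

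Next I would observe that if $u$ is $1$-unclustered, i.e., $u\in V(\mathcal{C}_0)\setminus V(\mathcal{C}_1)$, then the singleton $\{u\}$ was not sampled into $\mathcal{C}'$ and, moreover, $\cdist_G(\{u\},\mathcal{C}')>2$ (otherwise $\{u\}$ would have joined a sampled cluster in step~3 and $u$ would be in $V(\mathcal{C}_1)$). This places $\{u\}$ exactly in the category of clusters handled by step~5 of phase~$1$.

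Finally, I would apply step~5: for every $C'\in \mathcal{C}_0$ with $\cdist_G(\{u\},C')\le 2r_0+2\alpha_1=1$, the algorithm adds to $H_1$ the shortest path between $u$ and the center of $C'$. Since $\mathcal{C}_0$ consists of singletons, these are precisely the clusters $\{v\}$ with $v\in N_G(u)$, and the ``shortest path'' in each case is the single edge $(u,v)$. Hence $(u,v)\in H_1\subseteq H$ for every $v\in N_G(u)$, giving $\dist_H(u,v)=1$. There is no real obstacle here; the content of the claim is essentially a bookkeeping check that the definitions of $\alpha_1$ and the step~5 radius line up so that the only way to fail to be absorbed into $\mathcal{C}_1$ is to have every edge at $u$ explicitly preserved.
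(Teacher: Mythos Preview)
Your proposal is correct and follows exactly the paper's approach: the paper's own proof is the single sentence ``For the first phase in step $(5)$, every $1$-unclustered vertex $u$ adds to the spanner its edges to any vertex at distance $2\alpha_1=1$,'' and you have simply spelled out the bookkeeping (that $r_0=0$, that $\mathcal{C}_0$ consists of singletons, and why a $1$-unclustered vertex lands in step~(5)) more carefully.
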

\begin{proof}
For the first phase in step $(5)$, every $1$-unclustered vertex $u$ adds to the spanner its edges to any vertex at distance $2\alpha_1=1$, the claim follows.
\end{proof}

\begin{figure}[h!]
\begin{center}
\includegraphics[scale=0.35]{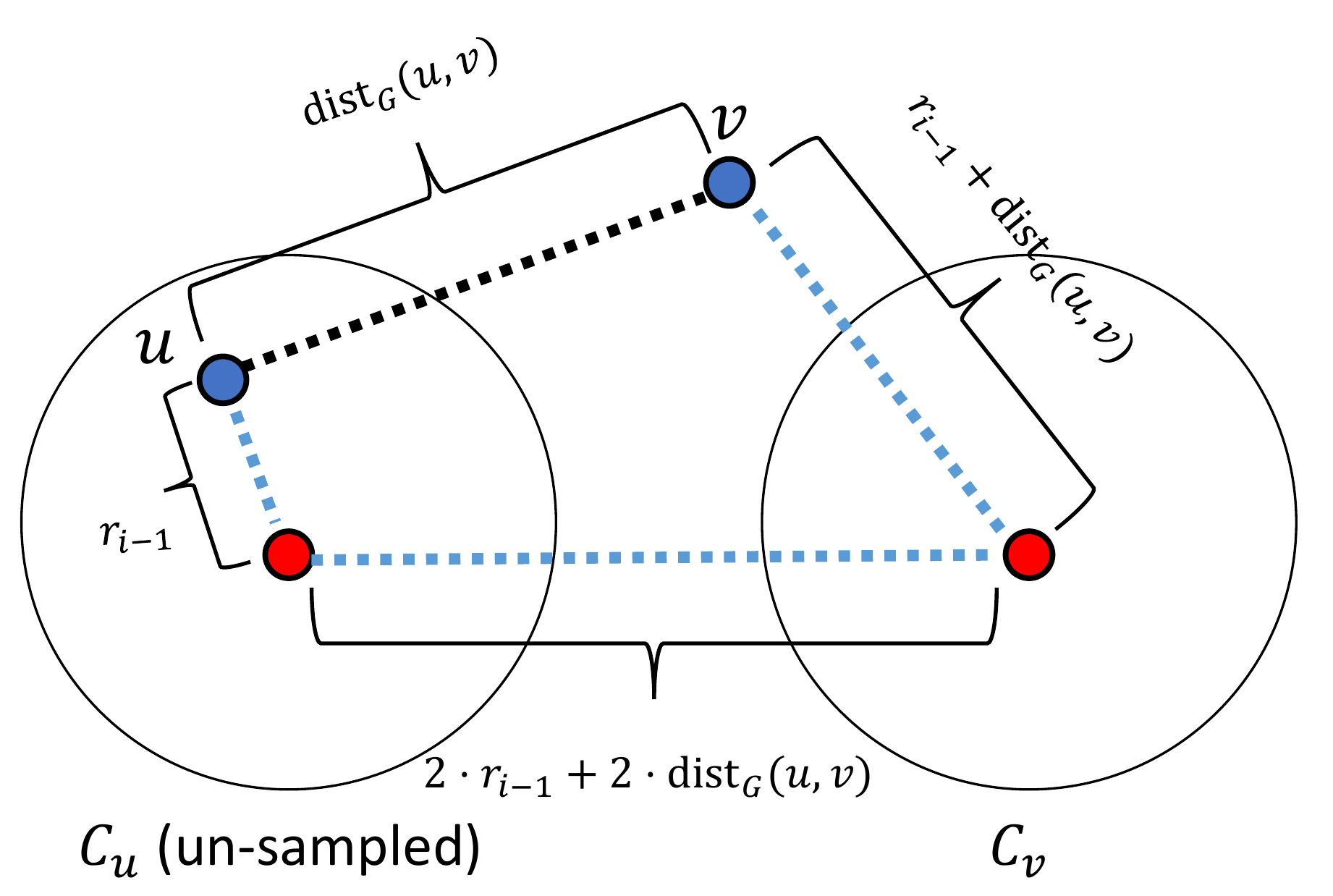}
\caption{\small An illustration for the proof of Claim \ref{claim:i-unc-3epsspanner}. Shown is a $u-v$ path between an $i$-unclustered vertex $u \in C_u$ where $C_{u}\in \mathcal{C}_{i-1}$, and a vertex $v\in \Ball_{G}(u,\alpha_{i})$. In the step (1) of the $i^{th}$ phase, the algorithm adds to the spanner a path between $v$ and the center of some cluster $C_{v} \in \mathcal{C}_{i-1}$. In step (5), the algorithm adds a path between the centers of $C_{u}$ and $C_{v}$. Thus, the spanner $H$ contains a short path between $u$ to $v$ that goes through the centers of $C_{u}$ and $C_{v}$.}
\label{fig:3epsspanner}
\end{center}
\end{figure}

\begin{claim}[$i$-unclustered]
\label{claim:i-unc-3epsspanner}
For any $i$-unclustered vertex $u$, for every $i \in \{2,\ldots, T\}$, and every $v \in \Ball_{G}(u,\alpha_i)$, it holds that $\dist_H(u,v)\leq 4\cdot r_{i-1}+3\cdot \dist_{G}(u,v)$. 
\end{claim}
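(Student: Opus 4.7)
The plan is to split the argument into two cases according to whether $v$ itself lies in the $(i-1)$-level clustering, and in both cases to locate a cluster $C_v \in \mathcal{C}_{i-1}$ whose center is close to $v$, argue that the center-distance between $C_u$ (the cluster of $u$ in $\mathcal{C}_{i-1}$, which exists since $u$ is $i$-unclustered) and $C_v$ is at most the step (5) threshold $2 r_{i-1} + 2 \alpha_i$, and then route a path in $H$ through the two centers. A preliminary observation I would record first is that, by a straightforward induction on the phase index, for every $C \in \mathcal{C}_{i-1}$ the subgraph $H$ already contains a tree rooted at $r(C)$ spanning $V(C)$ of depth at most $r_{i-1}$; this is exactly the star-of-stars structure created by step (3) in each previous phase.

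Next I would unpack the hypothesis that $u$ is $i$-unclustered. By definition $u$ belongs to some $C_u \in \mathcal{C}_{i-1}$ that did not get absorbed into any sampled cluster, so $\cdist_G(C_u,\mathcal{C}') > 4 r_{i-1} + 4\alpha_i$ and step (5) of phase $i$ fires for $C_u$: the spanner $H_i$ contains the shortest $G$-path from $r(C_u)$ to $r(C')$ for \emph{every} $C' \in \mathcal{C}_{i-1}$ with $\cdist_G(C_u,C') \le 2 r_{i-1} + 2\alpha_i$. Writing $d := \dist_G(u,v) \le \alpha_i$, in \textbf{Case A} ($v \in V(\mathcal{C}_{i-1})$, so $v \in C_v \in \mathcal{C}_{i-1}$) the triangle inequality gives
\[
\cdist_G(C_u,C_v) \le r_{i-1} + d + r_{i-1} \le 2 r_{i-1} + \alpha_i \le 2 r_{i-1} + 2\alpha_i,
\]
so step (5) adds the $r(C_u)$--$r(C_v)$ shortest path of length at most $2 r_{i-1} + d$. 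Concatenating the cluster-tree path from $u$ to $r(C_u)$, this path, and the cluster-tree path from $r(C_v)$ to $v$ yields $\dist_H(u,v) \le r_{i-1} + (2 r_{i-1}+d) + r_{i-1} = 4 r_{i-1} + d \le 4 r_{i-1} + 3d$.

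In \textbf{Case B} ($v \notin V(\mathcal{C}_{i-1})$) I would invoke step (1) of phase $i$: since $\cdist_G(v,\mathcal{C}_{i-1}) \le \cdist_G(v,C_u) \le d + r_{i-1} \le \alpha_i + r_{i-1}$, the vertex $v$ adds to $H_i$ a shortest path of length at most $d + r_{i-1}$ from itself to the center of its closest cluster $C_v \in \mathcal{C}_{i-1}$. Then
\[
\cdist_G(C_u,C_v) \le r_{i-1} + d + (d + r_{i-1}) = 2 r_{i-1} + 2 d \le 2 r_{i-1} + 2\alpha_i,
\]
so step (5) again inserts the $r(C_u)$--$r(C_v)$ shortest path, and concatenation gives $\dist_H(u,v) \le r_{i-1} + (2 r_{i-1} + 2 d) + (d + r_{i-1}) = 4 r_{i-1} + 3 d$, as required.

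The main (mild) obstacle is just the bookkeeping of the two constants: one must verify that the augmentation radius $\alpha_i$ in step (1) is large enough to catch any $v \in \mathbf{B}_G(u,\alpha_i)$ outside the current clustering, while the absorption threshold $2 r_{i-1} + 2\alpha_i$ in step (5) is large enough to cover the worst-case center distance between $C_u$ and $C_v$ even when $v$ sits a full $\alpha_i$ away from $u$ and is itself unclustered. Both check out with room to spare, which is exactly why the construction yields multiplicative stretch $3$ on the inter-cluster segment (the three terms $d$ from the step (1) path, $2d$ from the triangle bound on $\cdist_G(C_u,C_v)$, and the $4 r_{i-1}$ additive overhead from the four tree-depth traversals).
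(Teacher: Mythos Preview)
Your proof is correct and follows essentially the same approach as the paper's: route through $r(C_u)$ and $r(C_v)$, using step~(5) for the $r(C_u)$--$r(C_v)$ leg and step~(1) (or the cluster tree) for the $r(C_v)$--$v$ leg. The only difference is cosmetic: you split explicitly on whether $v\in V(\mathcal{C}_{i-1})$ and handle the ``clustered $v$'' case via the cluster-tree path you recorded at the outset, whereas the paper simply writes ``the algorithm adds to $H_i$ a shortest path from $v$ to the center of $C_v$ in step~(1)'' and implicitly relies on the same tree structure when $v$ is already clustered (since step~(1) as written applies only to $v\notin V(\mathcal{C}_{i-1})$). Your Case~A actually yields the sharper bound $4r_{i-1}+d$, which of course implies the claimed $4r_{i-1}+3d$; the paper's unified treatment just absorbs this into the worst case. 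Otherwise the two arguments are identical.
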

\begin{proof}
Consider an $i$-unclustered vertex $u$ and $v \in \Ball_{G}(u,\alpha_i)$. By definition, $u \in V(\mathcal{C}_{i-1})$, thus $\cdist_G(v, \mathcal{C}_{i-1})\leq r_{i-1}+\dist_{G}(u,v)$. Let $C_v$ be closest cluster to $v$ with respect to the $\cdist()$ measure, and let $C_u$ be the cluster of $u$ in $\mathcal{C}_{i-1}$. 
Then, the algorithm adds to $H_i$ a shortest path from $v$ to the center of $C_{v}$ in step (1). 
We have:
\begin{eqnarray*}
\cdist_{G}(C_{u},C_{v})&\le& \cdist_{G}(C_{u},u) + \dist(u,v)+ \cdist_{G}(v,C_{v})\le 2 r_{i-1}
+ 2\dist_{G}(u,v)\leq 2r_{i-1}+2\alpha_i~.
\end{eqnarray*}
Since $u$ becomes unclustered phase $i$, in step $(5)$ the algorithm adds to $H_i$ the shortest path from the center of $C_{u}$ to the center of $C_{v}$. We therefore have:
\begin{eqnarray*}
\dist_{H}(u,v)&\le& \cdist_G(u,C_u)+\cdist_G(C_u,C_v)
+ \cdist_G(C_{v},v) 
\\&\leq& 
r_{i-1}+2r_{i-1}+ 2\dist_{G}(u,v)+r_{i-1}+\dist_{G}(u,v)
\\&\leq& 4r_{i-1}+3\dist_{G}(u,v)~.
\end{eqnarray*}
\end{proof}
In particular, for $v \in \partial \Ball_{G}(u,\alpha_i)$, by Eq. (\ref{eq:alpha-p}) it holds that $\dist_{H}(u,v)\le (3+\epsilon)\cdot \dist_{G}(u,v)$.  Since the algorithm adds to the spanner, a BFS tree w.r.t the centers of the clusters in $\mathcal{C}_T$, we have that:
\begin{claim}
\label{claim:clustered-3epsspanner}
Let $u,v\in V$ and let $P_{uv}$ be the shortest path between them in $G$. If there is a clustered vertex $w\in P_{uv}$ then $$\dist_{H}(u,v)\le \dist_{G}(u,v)+ 2\cdot (10+32/\epsilon)\cdot k^{\log(5+ 16/\epsilon)}.$$
\end{claim}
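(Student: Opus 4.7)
The plan is to leverage the BFS tree that the algorithm adds at the end of the construction (rooted at the centers of the clusters in $\mathcal{C}_T$) together with the radius bound on $\mathcal{C}_T$ from Observation \ref{obs:rad3epsspanner}. Since $w$ lies on the shortest $u$-$v$ path in $G$, writing $\dist_G(u,v) = \dist_G(u,w) + \dist_G(w,v)$ lets me split the problem into reaching a common center from $u$ and from $v$.

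More concretely, I would first invoke the assumption that $w$ is clustered to conclude that $w \in V(\mathcal{C}_T)$, and let $C_w \in \mathcal{C}_T$ be the cluster containing $w$, with center $r(C_w)$. Observation \ref{obs:rad3epsspanner} bounds the radius of $C_w$ by $r_T \le (10 + 32/\epsilon)\cdot k^{\log(5+16/\epsilon)}$, so in particular $\dist_G(w, r(C_w)) \le r_T$. Because the final step of the algorithm adds to $H$ a BFS tree from $r(C_w)$ spanning $V(G)$, we have $\dist_H(x, r(C_w)) \le \dist_G(x, r(C_w))$ for every vertex $x \in V$.

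Now the triangle inequality in $G$ yields $\dist_G(u, r(C_w)) \le \dist_G(u,w) + r_T$ and $\dist_G(v, r(C_w)) \le \dist_G(v,w) + r_T$. Routing in $H$ through $r(C_w)$ therefore gives
\begin{align*}
\dist_H(u,v) &\le \dist_H(u, r(C_w)) + \dist_H(r(C_w), v) \\
&\le \bigl(\dist_G(u,w) + r_T\bigr) + \bigl(\dist_G(w,v) + r_T\bigr) \\
&= \dist_G(u,v) + 2 r_T.
\end{align*}
Substituting the bound $2 r_T \le 2 \cdot (10 + 32/\epsilon)\cdot k^{\log(5+16/\epsilon)}$ from Observation \ref{obs:rad3epsspanner} gives exactly the claimed inequality. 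There is no real obstacle here: the argument is a direct two-step triangle inequality, and the only nontrivial ingredient is the radius bound on $\mathcal{C}_T$, which has already been established. The claim is essentially the ``base case'' that the previous case analysis (Claims \ref{cl:1-unc} and \ref{claim:i-unc-3epsspanner}) reduces to whenever the shortest path contains a vertex that survived all $T$ clustering phases.
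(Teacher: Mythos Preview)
Your proposal is correct and follows essentially the same argument as the paper's proof: route through the center $z$ of $w$'s cluster via the BFS tree added in the final step, apply the triangle inequality to get $\dist_H(u,v)\le \dist_G(u,v)+2r_T$, and then plug in the radius bound from Observation~\ref{obs:rad3epsspanner}.
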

\begin{proof}
Since $w$ is clustered, its distance from the cluster center $z$ is at most $r_{T}$. As the algorithm adds the BFS tree of $z$ to the spanner, we have:
\begin{eqnarray*}
\dist_{H}(u,v)&\le& \dist_{H}(u,z)+\dist_{H}(z,v)=\dist_{G}(u,z)+\dist_{G}(z,v)
\\&\le& \dist_{G}(u,w)+\dist_{G}(w,z)+\dist_{G}(z,w)+\dist_{G}(w,v)\le \dist_{G}(u,v)+2\cdot r_{T}~,
\end{eqnarray*}
the claim follows by plugging the bound on $r_T$ from Obs. \ref{obs:rad3epsspanner}.
\end{proof}
We are now ready to prove Lemma \ref{lem:3plusepsspanner}.
\begin{proof}[Proof of Lemma \ref{lem:3plusepsspanner}]
Let $u,v\in V$ be vertices at some distance $d$ in $G$, and let $P:=\{u:=v_{0},\cdots , v_{d}:=v\}$ be the shortest path between them in $G$. First observe that by Claim \ref{claim:clustered-3epsspanner}, if there is some clustered vertex in $P$ then we are done. Assume from now on that no vertex on the path is clustered. We define a sequence of vertices between $u$ and $v$ in an iterative manner: let $u_{0}:=u$, and for every $j \in \{1,\ldots, d-1\}$ given that $u_{j}=v_{i}$, define $u_{j+1}=v_{i+\Delta_{j}}$ where: 
\begin{equation*}
\Delta_{j}=\begin{cases}
\min\{1,\dist_{G}(u_{j},v)\} & u_{j}\text{ is \ensuremath{1}-unclustered}\\
\min\{\alpha_i,\dist_{G}(u_{j},v)\} & u_{j}\text{ is \ensuremath{i\ge 2} unclustered}
\end{cases}
\end{equation*}
Let $\ell\leq d$ be the minimal index such that $u_{\ell}=v$, thus we have defined $\ell$ segments $P_{j}=[u_{j-1},u_{j}]$ for $j \in \{1,\ldots, \ell\}$. 
By Claim \ref{cl:1-unc}, the first case in the definition of $\Delta_{j}$ causes no stretch. By Claim \ref{claim:i-unc-3epsspanner} the second cases causes a multiplicative stretch of $3+\epsilon$, unless it ends in $v$. The latter case can happen only for the last segment, and in this case by Claim \ref{claim:i-unc-3epsspanner} we will get an extra additive stretch of $4\cdot r_{T-1}$, which by Observation \ref{obs:rad3epsspanner} is at most $8\cdot k^{\log(5+16/\epsilon)}$. In other words, for every segment $P[u_{j-1},u_{j}]$ such that $u_{j}\neq v$, we have that $\dist_H(u_{j-1},u_{j})\leq (3+\epsilon)\cdot |P[u_{j-1},u_{j}]|$. 
For the last segment $P[u_{\ell-1}, v]$ we have that $\dist_H(u_{\ell-1}, v)\leq 3\cdot |P[u_{\ell-1}, v]|+4\cdot r_{T-1}$. 
Therefore, 
\begin{eqnarray*}
\dist_{H}(u,v)&\le&\sum_{j=1}^{\ell}\dist_{H}(u_{j-1},u_{j})
\le 8\cdot k^{\log(5+16/\epsilon)} +\sum_{j=1}^{\ell}(3+\epsilon)\cdot \dist_{G}(u_{j-1},u_{j})
\\&\le& (3+\epsilon)\cdot \dist_{G}(u,v)+8\cdot k^{\log(5+16/\epsilon)}.
\end{eqnarray*}

\textbf{Size Analysis.} 
Fix a phase $i$. In step $(1)$ each vertex adds to $H_{i}$ a path of length at most $r_{i-1}+\alpha_i$ to its closest cluster center in $\mathcal{C}_{i-1}$. This adds $O(n)$ edges in total by breaking shortest path ties in a consistent manner. In step $(3)$ we add to $H_{i}$ the paths that connect clusters in $\mathcal{C}_{i-1}$ to the sampled clusters at distance at most $4\cdot r_{i-1}+4\cdot \alpha_{i}$. This also consume at most $O(n)$ edges by breaking shortest path ties in a consistent manner. In step $(5)$ we add to $H_{i}$ paths from clusters that were not sampled and did not join  other clusters, to nearby clusters at distance at most $2\cdot r_{i-1}+2\cdot \alpha_{i}$. To bound the number of edges added to the spanner in this step,  we will need the following:
\begin{lemma}
\label{lem:expectedamountofsp}
The number of shortest paths added to $H_{i}$ in step $(5)$ is at most $1/p_{i}$ in expectation. 
\end{lemma}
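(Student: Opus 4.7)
The plan is to fix an arbitrary cluster $C \in \mathcal{C}_{i-1}$ and bound in expectation the number of shortest paths that $C$ contributes in step (5); the lemma will then be obtained by viewing this quantity as the per-cluster contribution and (if the intended statement is the aggregate one) summing afterwards. Concretely, I would introduce two counts depending on $C$: let $\tau_C$ be the number of clusters $C' \in \mathcal{C}_{i-1}$ whose center lies within graph-distance $4 r_{i-1} + 4\alpha_i$ of $r(C)$, and let $\sigma_C$ be the corresponding count with the smaller radius $2 r_{i-1} + 2\alpha_i$. Since the smaller ball is contained in the larger, $\sigma_C \le \tau_C$.

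Next I would read off from the algorithm description that $C$ enters step (5) if and only if none of the $\tau_C$ clusters whose centers lie within distance $4 r_{i-1} + 4\alpha_i$ of $r(C)$ are sampled into $\mathcal{C}'$; by the independence of the sampling in step (2), this event has probability $(1-p_i)^{\tau_C}$. Conditioned on this event, $C$ adds a shortest path to each of the $\sigma_C$ clusters within the smaller radius. Hence the expected number of shortest paths contributed by $C$ is at most
\[
\sigma_C\,(1-p_i)^{\tau_C} \;\le\; \tau_C\,(1-p_i)^{\tau_C} \;\le\; \tau_C\, e^{-p_i \tau_C}.
\]

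To conclude, I would maximize the real function $x \mapsto x e^{-p_i x}$ over $x \ge 0$; an elementary calculation shows the maximum equals $1/(e\,p_i)$, attained at $x = 1/p_i$. This gives the desired bound $1/p_i$ on the expected contribution of $C$.

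The argument is essentially the standard Baswana--Sen sampling calculation, so there is no real obstacle; the only mildly subtle point is the asymmetry between the two radii $2r_{i-1}+2\alpha_i$ and $4r_{i-1}+4\alpha_i$ in steps (3) and (5). I would stress that this asymmetry only helps: the ``failure'' event for $C$ is defined through the \emph{larger} ball, so the exponent $\tau_C$ is at least as big as $\sigma_C$, which is precisely what makes the peak of $x e^{-p_i x}$ independent of $\sigma_C$ and yields the clean $1/p_i$ bound.
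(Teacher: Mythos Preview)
Your proof is correct and follows essentially the same line as the paper's. The only cosmetic difference is that the paper works with the single count $\ell=\sigma_C$ for both roles: it observes via the triangle inequality that any two clusters in the $2r_{i-1}+2\alpha_i$ ball around $r(C)$ are within $4r_{i-1}+4\alpha_i$ of each other, so the sampling of any one of them already prevents $C$ from reaching step~(5); this lets them write $\ell(1-p_i)^\ell$ directly, whereas you keep $\sigma_C$ and $\tau_C$ separate and use $\sigma_C\le\tau_C$. Both routes land on the same $x\,e^{-p_i x}\le 1/(e\,p_i)$ optimization.
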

\begin{proof}
	Let $C\in \mathcal{C}_{i-1}$ and let $C_{1}\cdots C_{\ell}$ be the clusters with center-distance at most $2\cdot r_{i-1}+2\cdot \alpha_{i-1}$ from $C$. Note that the centers of each pair of these clusters is at distance at most $4\cdot r_{i-1}+4\cdot \alpha_{i-1}$, thus if one of these clusters is sampled, then all the others will join some sampled cluster. That is, none will take part in the fifth step of the algorithm. Thus the number of shortest paths from a given cluster $C$ added in this step is $\ell\cdot (1-p_{i})^{\ell}\le 1/p_{i}$ in expectation. 
\end{proof}
We now bound the total number of edges added in step $(5)$ of phase $i$. 
In phase $i$ there are $|\mathcal{C}_{i-1}|$ clusters and each adds to the spanner a shortest path of length at most $2\cdot r_{i-1} +2\cdot \alpha_{i}$. Thus by Lemma \ref{lem:expectedamountofsp} this adds at most $(2\cdot r_{i-1} +2\cdot \alpha_{i})\cdot |\mathcal{C}_{i-1}|/p_{i}$ edges in expectation. By plugging the values of $p_{i}$ as defined in Eq. (\ref{eq:alpha-p}), it follows that for $i=1$, we get a total of $n^{1+1/k}$ edges. For $i\ge 2$ , the total number of edges is bounded by $r_i \cdot n$. By plugging the values of $r_{i-1}$ of Observation \ref{obs:rad3epsspanner} we get that the expected number of edges in all the $H_i$ subgraphs is bounded by
\begin{eqnarray*}
\sum_{i=1}^T |E(H_i)|&\leq & n^{1+1/k}+\sum_{i=2}^{T}r_{i}\cdot n
=n^{1+1/k}+O(n \cdot \sum_{i=2}^{T}(5+ 16/\epsilon)^{i-1})
=O(n^{1+1/k}+k^{\log(5+16/\epsilon)}\cdot n/\epsilon)~.
\end {eqnarray*}
The last step adds a constant number of BFS trees, thus contributing $O(n)$ edges. Lemma \ref{lem:3plusepsspanner} follows.
\end{proof}
	
\section{A New Family of $(k^{\epsilon}, k^{1-\epsilon})$ Hopsets}\label{sec:hopsets-new}
In this section we present new construction of $(\alpha,\beta)$ hopsets with $O_k(n^{1+1/k})$ edges and $\alpha\cdot \beta=O(k)$. The structure of the section is as follows. First, in Subsec. \ref{sec:first-regime-hopset} we show the construction of $O(k^{\epsilon}, k^{1-\epsilon})$ hopsets in the simpler regime of $\epsilon \in [1/2,1)$. Then, in Subsec. \ref{sec:second-regime-hopset} we show the high level construction of $(O(k^{\epsilon}), O_{\epsilon}(k^{1-\epsilon}))$ hopsets for the complementary regime of $\epsilon \in (0,1/2)$. Finally, in Sec. \ref{sec:3epshopset} we also show the construction of $(3+\epsilon,\beta)$ hopsets.

\subsection{$(k^{\epsilon}, k^{1-\epsilon})$ Hopsets for $\epsilon \in [1/2,1)$.}\label{sec:first-regime-hopset}
This subsection is devoted for proving Theorem \ref{thm:new-hopset} of $\epsilon \in [1/2,1)$. We show the following:
\begin{theorem}
\label{thm:hopset-first-regime}
For any $n$-vertex weighted graph $G=(V,E,w)$, integer $k\geq 1$ and $\epsilon \in [1/2,1)$ such that $k\ge 10^{1/\epsilon}$, one can compute an $(\alpha,\beta)$ hopset $H$ for $\alpha=18\cdot k^{\epsilon}$ and $\beta=8\cdot k^{1-\epsilon}+1$ where $|E(H)|=O(k^{\epsilon}\cdot n^{1+1/k} \cdot \log \Lambda)$ edges in expectation. 
\end{theorem}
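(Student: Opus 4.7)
The plan is to follow the two-phase blueprint sketched in the technical overview, calibrated to the regime $\epsilon \in [1/2, 1)$. In the first phase I invoke $\TZ(G, k, \lceil k^{\epsilon} \rceil)$ to obtain a truncated Thorup-Zwick hopset $H_{TZ}$ together with the top-level pivot set $A := A_{\lceil k^\epsilon \rceil}$, of expected size $n^{1 - \lceil k^\epsilon \rceil/k}$. By Fact \ref{fact:TZ}(iii) each of the $\lceil k^\epsilon \rceil$ bunch-levels contributes $n^{1/k}$ hops per vertex in expectation, so $|H_{TZ}| = O(k^\epsilon \cdot n^{1+1/k})$. The second phase is run independently for each distance class $[d, 2d]$ with $d \in \{1, 2, 4, \ldots, \Lambda\}$: form a clustering $\mathcal{C}_{0,d}$ by attaching every vertex at distance at most $2d$ from $A$ to its closest pivot (ties broken by IDs), add a spoke hop $(v, c_v)$ of weight $\dist_G(v, c_v)$ for each clustered $v$, build the cluster graph $\widehat{G}_d$ whose edges are the pairs $(C, C')$ with $\cdist_G(C, C') \le 6d$, compute a $(2\lceil k^{1-\epsilon} \rceil - 1)$-multiplicative spanner $\widehat{H}_d$ of $\widehat{G}_d$, and insert the hop $(r(C), r(C'))$ for each edge of $\widehat{H}_d$.

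For the size bound, the spoke hops cost $O(n)$ per class, and $\widehat{G}_d$ has $N = n^{1 - k^\epsilon/k}$ vertices in expectation, so a greedy spanner of stretch $2\lceil k^{1-\epsilon} \rceil - 1$ has at most $N^{1 + 1/\lceil k^{1-\epsilon} \rceil}$ edges. Substituting $k^{\epsilon-1} = 1/k^{1-\epsilon}$ gives $(1 - k^{\epsilon-1})(1 + k^{\epsilon-1}) = 1 - k^{2\epsilon - 2} \le 1$, so this quantity is at most $n$. Summing the TZ contribution with the $O(n)$ per-class contributions over the $O(\log \Lambda)$ distance classes yields the claimed $O(k^\epsilon \cdot n^{1+1/k} \cdot \log \Lambda)$ bound.

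For the stretch-and-hop analysis, fix $u, v$ with $d' := \dist_G(u, v) \in [d, 2d]$ and a shortest $u$-$v$ path $P$; call $w$ clustered iff $w \in V(\mathcal{C}_{0,d})$. When at most one vertex of $P$ is clustered I carve $P$ into $\lceil k^{1-\epsilon} \rceil$ consecutive subpaths $P[x, y]$ of length at most $d'/\lceil k^{1-\epsilon} \rceil$. Because $\TZ$ sets $A_{\lceil k^\epsilon \rceil + 1} = \emptyset$, the top-level bunch satisfies $B_{\lceil k^\epsilon \rceil}(y) = A$, which forces the common-pivot index $i^*$ of Fact \ref{fact:TZ} to be at most $\lceil k^\epsilon \rceil$. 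Fact \ref{fact:TZ}(ii) then supplies a $2$-hop $x$-$y$ path in $H_{TZ}$ of length at most $(2\lceil k^\epsilon \rceil + 1) \cdot \dist_G(x,y)$; summing yields a $u$-$v$ path with $2\lceil k^{1-\epsilon} \rceil$ hops and total length at most $3 k^\epsilon \cdot d'$. In the remaining case, let $u', v'$ be the leftmost and rightmost clustered vertices on $P$ and let $C_{u'}, C_{v'}$ be their clusters. The prefix $P[u, u']$ and suffix $P[v', v]$ are handled by the previous case. For the middle, the triangle inequality gives $\cdist_G(C_{u'}, C_{v'}) \le 2d + d' + 2d \le 6d$, so $(C_{u'}, C_{v'}) \in \widehat{G}_d$ and $\dist_{\widehat{H}_d}(C_{u'}, C_{v'}) \le 2\lceil k^{1-\epsilon} \rceil - 1$. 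Each cluster-spanner hop has length at most $6d$, and the two spoke hops add at most $4d$, so the middle contributes at most $2\lceil k^{1-\epsilon} \rceil + 1$ hops and length at most $(12 k^{1-\epsilon} + 4) d$. Adding the three pieces and using $\epsilon \ge 1/2$ (so $k^{1-\epsilon} \le k^\epsilon$) yields the bounds $\alpha = 18 k^\epsilon$ and $\beta = 8 k^{1-\epsilon} + 1$ under the stated hypothesis $k \ge 10^{1/\epsilon}$.

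The main obstacle I anticipate is the constant bookkeeping in the second case: the cluster-spanner contribution is of order $k^{1-\epsilon} \cdot d$ rather than $k^\epsilon \cdot d$, so the inequality $k^{1-\epsilon} \le k^\epsilon$ (equivalent to $\epsilon \ge 1/2$) is used in an essential way to absorb it into the target $O(k^\epsilon)$ multiplicative stretch, and this is exactly why the complementary range $\epsilon < 1/2$ requires the more elaborate three-stage construction of Subsection \ref{sec:second-regime-hopset}. A secondary subtlety is the precise meaning of ``truncated $\TZ$'': the bound $i^* \le \lceil k^\epsilon \rceil$ relies on the convention $A_{\lceil k^\epsilon \rceil + 1} = \emptyset$ implicit in the preliminaries, so that every vertex has a well-defined top-level pivot lying in every other vertex's top-level bunch; this should be made explicit when transcribing the formal argument.
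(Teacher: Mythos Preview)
Your two-phase plan is the right blueprint, but the core step in the ``at most one clustered vertex'' case does not go through as written. You justify $i^*\le\lceil k^\epsilon\rceil$ by asserting that the truncated call $\TZ(G,k,\lceil k^\epsilon\rceil)$ sets $A_{\lceil k^\epsilon\rceil+1}=\emptyset$, so that $B_{\lceil k^\epsilon\rceil}(y)=A$ and the common-pivot index is bounded for free. That convention is incompatible with your own size bound: if every vertex's top-level bunch were all of $A_{\lceil k^\epsilon\rceil}$, the hopset would already contain $n\cdot|A_{\lceil k^\epsilon\rceil}|=n^{2-\lceil k^\epsilon\rceil/k}$ edges at that single level, which for any $\epsilon<1$ dwarfs the claimed $O(k^\epsilon\, n^{1+1/k})$. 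The paper's $A_k=\emptyset$ is stated only for the full $k$-level oracle; the truncated version adds hops only for the bunches $B_0,\dots,B_{\lceil k^\epsilon\rceil-1}$ (each of expected size $n^{1/k}$, which is what makes Fact~\ref{fact:TZ}(iii) true), so one must \emph{prove} $i^*<\lceil k^\epsilon\rceil$ using the unclustered hypothesis rather than obtain it by convention.

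That is exactly what Claim~\ref{cl:helper-sparse-hopset} does: if $x$ is unclustered then $\dist_G(x,A_{\lceil k^\epsilon\rceil})>r_0$, while Fact~\ref{fact:TZ}(i) with $j=\lceil k^\epsilon\rceil$ gives $\dist_G(x,p_{\lceil k^\epsilon\rceil}(x))\le\lceil k^\epsilon\rceil\cdot\dist_G(x,y)$, a contradiction provided $\lceil k^\epsilon\rceil\cdot(\text{segment length})<r_0$. With your parameters (cluster radius $2d$, segments of length $\approx 2d/k^{1-\epsilon}$) the left side is $\approx 2d\,k^{2\epsilon-1}$, which exceeds $2d$ for every $\epsilon>1/2$ and yields no contradiction. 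The paper therefore takes the larger radius $r_0=d(\lceil k^\epsilon\rceil+1)/k^{1-\epsilon}$ and segments of length at most $r_0/(\lceil k^\epsilon\rceil+1)$, so the needed inequality becomes $\tfrac{\lceil k^\epsilon\rceil}{\lceil k^\epsilon\rceil+1}\,r_0<r_0$ and always holds. After this recalibration the cluster-graph adjacency threshold is $2r_0+d$ rather than $6d$, and the constant bookkeeping in the clustered case changes accordingly; your overall structure (spokes, cluster spanner, summation of the three pieces, and the use of $\epsilon\ge 1/2$ to absorb the $k^{1-\epsilon}$ term) is otherwise in line with the paper's proof.
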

For simplicity we focus on a fixed distance class by considering all vertex pairs at distance $[d/2,d]$. The algorithm is then applied for each of the $\log \Lambda$ distance classes. Throughout, when adding edges $(u,v)$ to the hopset, we set the weight of these edges to $\dist_G(u,v)$. In addition, to distinguish between real $G$-edges and $H$-edges, we refer the latter by \emph{hops}.
The algorithm has a similar structure to Alg. $\ImprovedSpannerI$, and the key difference is that we use here the Thorup-Zwick hopsets to handle the sparse case, rather then applying the truncated Baswana-Sen algorithm.  
Specifically, the \textbf{first step} of the algorithm computes a $(2k-1,2)$ hopset $H_{TZ}$ be applying the algorithm of Thorup and Zwick. Let $A_{\left \lceil k^{\epsilon} \right \rceil}$ be the $\lceil k^{\epsilon} \rceil$th level of centers, and define $\mathcal{C}$ as the clusters of weighted radius 
\begin{equation}\label{eq:first-hopset-rad}
r_0=d\cdot \frac{\lceil k^{\epsilon} \rceil +1}{k^{1-\epsilon}}~
\end{equation}
centered at the vertices of $A_{\left \lceil k^{\epsilon} \right \rceil}$. Specifically, every vertex $u$ of distance at most $r_0$ from $A_{\left \lceil k^{\epsilon} \right \rceil}$ joins the cluster of its closest center in $A_{\left \lceil k^{\epsilon} \right \rceil}$, this vertex is now \emph{clustered}. In the hopset, we connect each clustered vertex $u$ to the center of its cluster. 

In the \textbf{second step}, a cluster graph $\widehat{G}$ is defined on the clusters of $\mathcal{C}$ in the exact same manner as in Alg. $\ImprovedSpannerI$. That is, any two clusters $C,C'$ in $\mathcal{C}$ are neighbors in $\widehat{G}$ if their $G$-center-distance is at most $2\cdot r_{0}+d$. Note that as before, the cluster graph $\widehat{G}$ is unweighted. Letting $\widehat{H}$ be the $(2k^{1-\epsilon}-1)$-spanner on $\widehat{G}$, then for every edge $(C,C')\in \widehat{H}$, the algorithm adds to the hopset, an hop $(r(C),r(C'))$ between the centers $r(C),r(C')$ of the clusters $C,C'$. This completes the description of the algorithm. 

\begin{algorithm}
\begin{algorithmic}[1]
\caption{$\ImprovedHopsetI(G,k,d,\epsilon)$}
		\State \textbf{Input:} A weighted graph $G=(V,E,w)$, integers $k,d$ and $\frac{1}{2}\le \epsilon\le 1$ such that $k\ge 10^{1/\epsilon}$. 
		\State \textbf{Output:} A $(k^{1-\epsilon}, k^{\epsilon})$ hopset $H$ for vertices at distance $[d/2,d]$.
		\State Let $(\mathcal{C},H_{0}) \leftarrow \ClusterHop(G,k,\lceil k^{\epsilon} \rceil,r_0)$ (see Eq. (\ref{eq:first-hopset-rad})).
		\State Let $H\leftarrow H_{0}$.
		\State Let $\widehat{G}=(\mathcal{C},E'=\{(C,C')~ \mid~ C,C'\in \mathcal{C},\cdist_{G}(C,C')\le 2\cdot r_{0}+d\})$.
		\State Let $\widehat{H}=\BasicSpanner(\widehat{G},2\cdot k^{1-\epsilon}-1)$.
		\State For each edge $(C,C')\in E(\widehat{H})$, add $(r(C),r(C'))$ to $H$.
		\State \Return $H$
	\end{algorithmic}
\end{algorithm}

	\begin{algorithm}[H]
		\begin{algorithmic}[1]
			\caption{$\ClusterHop(G,k,l,r_{0})$ \label{alg:TZ}}
			\State \textbf{Input:} An $n$-vertex weighted graph $G=(V,E,w)$ and integers $l\le k-1,r_{0} \geq 1$.
			\State \textbf{Output:} A clustering $\mathcal{C}_{0}$ of $G$ with $O(n^{1-\frac{l}{k}})$ clusters in expectation of radius $r_{0}$, and a set of hop edges $H'$ of expected size $|E(H')|=O(l \cdot n^{1+1/k})$.
			\State Set $H'\leftarrow \emptyset$.
			\State $(A_{l},H_{TZ})\leftarrow \TZ(G,k,l)$. \Comment{Apply the truncated TZ algorithm for $l$ steps.}
			\State $H'\leftarrow H_{TZ}$
			\State Let $\mathcal{C}$ be clusters of radius $r_0$ centered at $A_{l}$. 
			\For{every cluster $C_j\in \mathcal{C}$ with center $z_j$}
			\State Add the hop $(u,z_j)$ to $H'$ for every $u \in C_j$.
			\EndFor
			\State \Return $\langle \mathcal{C},H' \rangle$
		\end{algorithmic}
	\end{algorithm}
We are now ready to complete the proof of Thm. \ref{thm:hopset-first-regime}. 
Throughout, a vertex $v$ is \emph{clustered} if $v\in V(\mathcal{C})$, and \emph{unclustered} otherwise.
\begin{claim}\label{cl:helper-sparse-hopset}
For every unclustered vertex $x$ and every vertex $y\in \Ball_{G}(x,r_{0}/(\lceil k^{\epsilon} \rceil +1))$, it holds that:
$$\dist^{(2)}_{G \cup H}(x,y) \leq (2 \lceil k^{\epsilon} \rceil+1) \cdot \dist_G(x,y).$$
\end{claim}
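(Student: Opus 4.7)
The plan is to invoke the properties of the Thorup–Zwick hopset $H_{TZ} \subseteq H$ that was added during Procedure $\ClusterHop$, together with the unclusteredness of $x$. The strategy is a contrapositive-style argument: if $x$ is unclustered, then a small ``pivot index'' $i^*$ must suffice for the pair $(x,y)$, which via Fact \ref{fact:TZ}(ii) will give the desired $2$-hop stretch bound.

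First I would observe that, since $x \notin V(\mathcal{C})$, the vertex $x$ is at distance strictly greater than $r_0$ from every vertex in $A_{\lceil k^{\epsilon} \rceil}$, and in particular from its closest such center $p_{\lceil k^{\epsilon} \rceil}(x)$. Hence
\[
\dist_G(x, p_{\lceil k^{\epsilon} \rceil}(x)) > r_0 \;=\; d \cdot \frac{\lceil k^{\epsilon} \rceil + 1}{k^{1-\epsilon}}.
\]
Next, let $i^* = \min\{i_x, i_y\}$ be as in Fact \ref{fact:TZ}. The core step is to show $i^* \le \lceil k^{\epsilon} \rceil - 1$. Suppose for contradiction that $i^* \ge \lceil k^{\epsilon} \rceil$. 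Then, applying Fact \ref{fact:TZ}(i) with $j = \lceil k^{\epsilon} \rceil \le i^*$ and using $\dist_G(x,y) \le r_0/(\lceil k^{\epsilon} \rceil + 1)$, we would get
\[
\dist_G(x, p_{\lceil k^{\epsilon} \rceil}(x)) \;\le\; \lceil k^{\epsilon} \rceil \cdot \dist_G(x,y) \;\le\; \frac{\lceil k^{\epsilon} \rceil}{\lceil k^{\epsilon} \rceil + 1}\, r_0 \;<\; r_0,
\]
contradicting the lower bound above. Thus $i^* \le \lceil k^{\epsilon} \rceil - 1$.

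Finally, I would apply Fact \ref{fact:TZ}(ii) to conclude $\dist^{(2)}_{H_{TZ} \cup G}(x,y) \le (2 i^* + 1)\cdot \dist_G(x,y) \le (2 \lceil k^{\epsilon} \rceil - 1)\cdot \dist_G(x,y)$, which is at most $(2 \lceil k^{\epsilon} \rceil + 1) \cdot \dist_G(x,y)$; since $H_{TZ} \subseteq H$ the claim follows. The argument is essentially a one-liner once unclusteredness is translated into the pivot distance bound, so I do not expect any real obstacle. The only subtlety is that the threshold radius $r_0$ must be chosen so that taking $j = \lceil k^{\epsilon} \rceil$ in Fact \ref{fact:TZ}(i) yields a genuine strict contradiction -- this is exactly the role of the $(\lceil k^{\epsilon} \rceil + 1)$ factor in the definition of $r_0$ in \eqref{eq:first-hopset-rad}, so the construction has already been calibrated for this step.
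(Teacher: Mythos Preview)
Your argument is correct and follows essentially the same approach as the paper's own proof: translate unclusteredness of $x$ into $\dist_G(x,A_{\lceil k^\epsilon\rceil})>r_0$, assume $i^*\ge \lceil k^\epsilon\rceil$, and use Fact~\ref{fact:TZ}(i) at level $j=\lceil k^\epsilon\rceil$ to derive a contradiction, then conclude via Fact~\ref{fact:TZ}(ii). The paper additionally splits into the cases $i_x\le i_y$ and $i_y\le i_x$ (in the latter bounding $\dist_G(y,p_{\lceil k^\epsilon\rceil}(y))$ and adding $\dist_G(x,y)$), but since Fact~\ref{fact:TZ}(i) is symmetric in $u,v$ your single-case treatment already suffices and in fact yields the slightly sharper $i^*\le \lceil k^\epsilon\rceil-1$.
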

\begin{proof}
Since $x$ is unclustered, we have that $\Ball_{G}(x,r_0)\cap A_{\left \lceil k^{\epsilon} \right \rceil}=\emptyset$. That is, $\dist_G(x,A_{\left \lceil k^{\epsilon} \right \rceil})> r_0$.  
Let $i_x$ be the minimal index $i$ such that $p_{i}(x) \in B_{i}(y)$, define $i_y$ analogously and let $i^*=\min\{i_x,i_y\}$.
Assume towards contradiction that $i^* \geq \lceil k^{\epsilon} \rceil$.
First assume that $i_x \leq i_y$, then by Fact \ref{fact:TZ} (i) we have that
\begin{eqnarray*}
\dist_G(x,A_{\lceil k^{\epsilon} \rceil})&\le& \dist_G(x,p_{\lceil k^{\epsilon} \rceil}(x))
\leq \lceil k^{\epsilon} \rceil \cdot \dist_{G}(x,y)
\leq \lceil k^{\epsilon} \rceil \cdot \frac{r_{0}}{\lceil k^{\epsilon} \rceil +1}< r_0~,
\end{eqnarray*}
thus leading to a contradiction as $x$ is unclustered.  
Otherwise, assume that $i_y \leq i_x$, then again by Fact \ref{fact:TZ}(i) we have that
\begin{eqnarray*}
\dist_G(x,A_{\lceil k^{\epsilon} \rceil})&\leq& \dist_G(x,y)+\dist_G(y,p_{\lceil k^{\epsilon} \rceil}(y))
\leq (\lceil k^{\epsilon} \rceil+1) \cdot \dist_{G}(x,y)\le r_0~,
\end{eqnarray*}
contradiction again.
Thus, we have that $i^* \leq \lceil k^{\epsilon} \rceil$ and by Fact \ref{fact:TZ} (ii),
$\dist^{(2)}_{H_{TZ} \cup G}(x,y)\leq (2\lceil k^{\epsilon} \rceil+1) \cdot \dist_G(x,y)$ as desired.%
%
%
\end{proof}

\begin{figure}[h!]
\begin{center}
\includegraphics[scale=0.35]{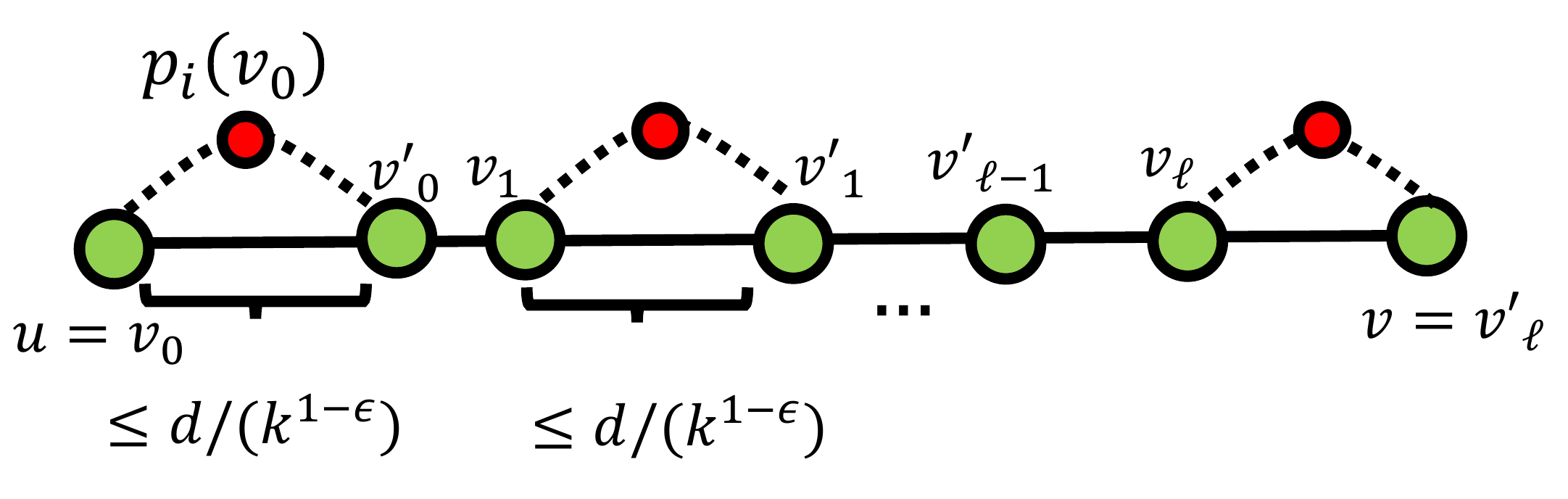}
\caption{\small Illustration of the proof of Theorem \ref{thm:hopset-first-regime}. Shown is a $u-v$ path $P$ where all its vertices are unclustered. By Claim \ref{cl:helper-sparse-hopset}, from each vertex $x$ on the path there is a $2$-hop path to any vertex $y\in \Ball_{G}(x,d/k^{1-\epsilon})$ with at most $(2 \lceil k^{\epsilon} \rceil+1)$-stretch. Thus by taking a $2$-hop path from a vertex $x$ to the furthest vertex $x' \in P$ at distance at most $d/k^{1-\epsilon}$ from $x$, and then going on its subsequent incident edge of $P[x',v]$ yields a $3$-hop path with a stretch of $(2 \lceil k^{\epsilon} \rceil+1)$.}
\label{fig:closehopset}
\end{center}
\end{figure}
We next complete the proof of Theorem  \ref{thm:hopset-first-regime}.
\begin{proof}
\textbf{Stretch and Hop-Bound.}
Fix a pair $u,v$ at distance $d'\in [d/2,d]$ in $G$, and let $P$ be their shortest-path in $G$. First, assume that at most one vertex in $P$ is clustered. 
Partition the path $P$ into $\ell \le k^{1-\epsilon}$ consecutive segments from $u$ to $v$ in the following way: denote $v_{0}:=u$, and for every $i\ge 0$ inductively define $v'_{i}$ to be the far-most vertex on $P[v_{i},v]$ at distance\footnote{Shortest path distance and not hop-distance.} at most $d/k^{1-\epsilon}$ from $v_{i}$. Observe that it might be the case that $v'_{i}$ is simply $v_{i}$, in the case where the edge incident to $v_{i}$ on the segment $P[v_i,v]$ segment has weight larger than $d/k^{1-\epsilon}$. Also, let $v_{i+1}$ be the vertex incident to $v'_i$ on the segment $P[v'_{i},v]$. In the case where $v'_i=v$,  also let $v_{i+1}=v'_{i}=v$.

Observe that for every $i$, if $v'_{i}\neq v_{i+1}$ (which happens when $v'_i\neq v$) then it must hold that $\dist_{G}(v_{i},v_{i+1})\ge d/k^{1-\epsilon}$. Therefore the path $P$ is partitioned into at most $k^{1-\epsilon}$ segments, where the $i^{th}$ segment is $P_{i}:=[v_{i-1},v_{i}]$ and $P=P_1 \circ \ldots \circ P_{\ell}$, where $\ell$ is the index of the last segment that reaches $v$.

For any $i \in \{0,\ldots, \ell-1\}$, $\dist_{G}(v_{i},v'_{i})\le d/k^{1-\epsilon}$ and (except maybe for $i=\ell-1$) $\dist_{G}(v_{i},v_{i+1})\ge d/k^{1-\epsilon}$. For $v_{i}\neq v'_{i}$, by the assumption that $P$ has at most one clustered vertex, it holds that at least one of the vertices $v_i,v'_i$ is unclustered. Thus by Claim \ref{cl:helper-sparse-hopset}, it holds that 
$\dist^{(2)}_{G \cup H_{TZ}}(v_{i},v'_{i})\leq (2\cdot \lceil k^{\epsilon} \rceil+1)\cdot \dist_{G}(v_{i},v'_{i})$, and consequently using the edge $(v'_{i},v_{i+1})\in P$, $\dist^{(3)}_{G \cup H_{TZ}}(v_{i},v_{i+1})\leq (2\cdot \lceil k^{\epsilon} \rceil+1)\cdot \dist_{G}(v_{i},v_{i+1})$. In the last segment we might have that $v_{\ell}=v'_{\ell-1}=v$, and thus since $\dist_{G}(v_{\ell-1},v_{\ell})\leq d/k^{1-\epsilon}$, by Claim \ref{cl:helper-sparse-hopset} again, we get that $\dist^{(2)}_{G \cup H_{TZ}}(v_{\ell-1},v)\leq (2\cdot \lceil k^{\epsilon} \rceil+1)\cdot \dist_{G}(v_{\ell-1},v)$. Finally,
\begin{eqnarray*}
\dist_{G\cup H}^{(3\cdot k^{1-\epsilon})}(u,v)&\le& \sum_{i=0}^{\ell-1}\dist_{G\cup H}^{(3)}(v_{i},v_{i+1})
\leq (2\cdot \lceil k^{\epsilon} \rceil+1)\cdot \dist_{G}(u,v)~.
\end{eqnarray*}

Next, assume that $P$ contains at least two clustered vertices. Let $u',v'$ be the leftmost and rightmost clustered vertices on $P$, and let $P_{1}:=[u,u'],P_{2}:=[u',v'],P_{3}:=[v',v]$ such that $P=P_{1}\circ P_{2}\circ P_{3}$. 
Let $C_{u'}, C_{v'} \in \mathcal{C}$ be the clusters of $u',v'$ respectively, and let $z_{u'}$ and $z_{v'}$ be the centers of these clusters. Thus the hopset $H$ contains a hop from $u'$ to $z_{u'}$ and a hop from $v'$ to $z_{v'}$.
%
Since $\cdist_G(C_{u'},C_{v'})\leq 2r_{0}+\dist(u',v')\leq 2r_{0}+d~,$
it holds that $(C_{u'},C_{v'}) \in \widehat{G}$. This in turn implies that $\dist_{\widehat{H}}(C_{u'},C_{v'})\leq (2\cdot k^{1-\epsilon}-1)$. Thus in the hopset $H$ we have a path of at most $2+(2\cdot k^{1-\epsilon}-1)$ hops from $u'$ to $v'$: one hop from $u'$ to $z_{u'}$, then $(2\cdot k^{1-\epsilon}-1)$ hops from $z_{u'}$ to $z_{v'}$, and the last hop from $z_{v'}$ to $v'$. Since all the clusters in $\mathcal{C}$ have radius $r_0=d\cdot \frac{\lceil k^{\epsilon} \rceil +1}{k^{1-\epsilon}}$, and since we connect clusters $C,C'$ in the cluster-graph $\widehat{G}$ if $\cdist_G(C,C')\leq 2r_{0}+d$, we have that the distance in $G$ between the centers of adjacent clusters in $\widehat{G}$ is at most  $d+2r_0$. Therefore, each edge in $\widehat{G}$ has a weight at most $d+2r_0$. Overall since $\epsilon\ge 1/2$, we have:
\begin{eqnarray*}
\dist^{(2\cdot k^{1-\epsilon}+1)}_{H \cup G}(u',v')&\leq& 2\cdot r_{0}+(2\cdot k^{1-\epsilon}-1)(d+2r_0)
\leq (6\cdot  k^{\epsilon} +7) \cdot d ~.
\end{eqnarray*}
Finally, since each of the segments $P[u,u']$ and $P[v',v]$ contains at most one clustered vertex, by the argument of the first case, there is a path of at most $(3\cdot k^{1-\epsilon})$ hops from $u$ to $u'$, and from $v'$ to $v$ that provides a multiplicative stretch of a $(2\cdot \lceil k^{\epsilon} \rceil +1)$ for each of the segments $P[u,u']$ and $P[v',v]$. Therefore, we have that
\begin{eqnarray*}
\dist_{G\cup H}^{\left(8\cdot k^{1-\epsilon}+1 \right)}(u,v)&\le& \dist_{G\cup H}^{\left(3\cdot k^{1-\epsilon}\right)}(u,u')
+\dist_{G\cup H}^{(2k^{1-\epsilon}+1)}(u',v')
+\dist_{G\cup H}^{\left(3k^{1-\epsilon}\right)}(v',v)
\\&\le& (2\lceil k^{\epsilon} \rceil +1)d+(6\cdot  k^{\epsilon} +7)d
\leq (8k^{\epsilon} +10) d \leq 9 k^{\epsilon}\cdot d~,
\end{eqnarray*}
where the last inequality holds as $k^{\epsilon}\ge 10$.
Since we assume $\dist_{G}(u,v)\in [d/2,d]$, the final stretch is bounded by $\alpha\leq 18\cdot k^{\epsilon}$ stretch. See Figure \ref{fig:closehopset} for an illustration. \newline
\textbf{Size.} We show that the total number of edges added to $H$ is bounded by $O(k^{\epsilon}\cdot n^{1+1/k})$ in expectation. Step (I) adds at most $O(k^{\epsilon}\cdot n^{1+1/k})$ edges by Fact  \ref{fact:TZ}(iii).
Step (II) adds $n$ edges, between possibly each vertex to its closest cluster in $\mathcal{C}$. Finally, in Step (III) we add $|E(\widehat{H})|$ edges to the hopset. Since $\widehat{G}$ contains $n^{1-\lceil k^{\epsilon} \rceil/k}$ clusters, the $(2k^{1-\epsilon}-1)$ spanner $\widehat{H}$ contains at most $O(n^{(1-1/k^{1-\epsilon})(1+1/k^{1-\epsilon})})=O(n)$ edges.
\end{proof}

\subsection{$(O(k^{\epsilon}),O_{\epsilon}(k^{1-\epsilon}))$ Hopsets for $0<\epsilon <1/2$.}\label{sec:second-regime-hopset}
Finally, we consider $(k^{\epsilon},k^{1-\epsilon})$ hopsets for the complementary regime of $\epsilon \in (0,1/2)$, which will complete the proof of Theorem \ref{thm:new-hopset}. This regime is considerably more involved and bares similarities with the spanner construction of Sec. \ref{sec:second-regime-spanner}. 
Specifically, we show:
\begin{theorem}
\label{thm:secondhopset}
For any $n$-vertex weighted graph $G=(V,E,w)$, integer\footnote{The statement can work for any $k$ upon suffering from larger constants.} $k\ge 16^{1/\epsilon}$, and $0<\epsilon<\frac{1}{2}$, one can compute an 
$(\alpha,\beta)$ hopset $H$ for $\alpha=9\cdot k^{\epsilon}$ and $\beta=36^{1/\epsilon} \cdot k^{1-\epsilon}$, of expected size $O((k^{\epsilon}\cdot n^{1+1/k}+\frac{k^{\epsilon}}{\epsilon}\cdot n)\log \Lambda)$.
\end{theorem}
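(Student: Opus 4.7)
The plan is to mirror the three-stage blueprint of Algorithm $\ImprovedSpannerII$ from Section~\ref{sec:second-regime-spanner}, with two structural modifications that adapt it to the weighted/hopset setting. First, we fix a distance class $[d/2,d]$, build an $(\alpha,\beta)$-hopset that serves all pairs in this class, and then take the union over the $O(\log\Lambda)$ dyadic scales. Second, whenever the spanner algorithm adds a shortest $G$-path between two cluster centers, our hopset algorithm adds a single weighted hop of that length; this is the reason the hop-bound $\beta$ survives intact while the per-stage edge count drops to $O(n)$. Sparse vertices (those far from the final clustering) will be handled by the truncated Thorup--Zwick hopset $H_{TZ}$ via Procedure $\ClusterHop$ instead of by Baswana--Sen.

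\textbf{Stage 1 (initial clustering via TZ).} Invoke $\ClusterHop(G,k,\lceil k^{\epsilon}\rceil,r_0)$ with $r_0=\Theta(d/k^{1-\epsilon})$ chosen exactly as in Theorem~\ref{thm:hopset-first-regime}, yielding $\mathcal{C}_0$ of radius $r_0$ with $|\mathcal{C}_0|=O(n^{1-k^{\epsilon}/k})$ in expectation, a hopset $H_{TZ}$ of expected size $O(k^{\epsilon}n^{1+1/k})$, and a hop from every clustered vertex to its cluster center. By the argument of Claim~\ref{cl:helper-sparse-hopset}, for every unclustered vertex $x$ and every $y\in\Ball_G(x,r_0/(\lceil k^{\epsilon}\rceil+1))$ the graph $G\cup H_{TZ}$ contains a $2$-hop $x$-$y$ path of stretch at most $2\lceil k^{\epsilon}\rceil+1$.

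\textbf{Stage 2 (superclustering, hopset version).} Apply $T=\log_{\lceil k^{\epsilon}\rceil/4}(k^{1-2\epsilon})=O(1/\epsilon)$ phases of a modified $\ClusterAndAugment$ in which every occurrence of ``add the shortest path in $G$ between two centers'' is replaced by ``add a single weighted hop between them.'' The augmentation parameters $\alpha_{i,j}$ and merging radii $r_{i,0}+r_{i,j-1}+2\alpha_{i,j}$ follow the recursions of Section~\ref{sec:second-regime-spanner}, so Claims~\ref{claim:alpha} and~\ref{claim:radiusecondspanner} carry over verbatim to give $r_T=O(d\cdot 64^{1/\epsilon}\cdot k^{1-\epsilon})$ and, via Claim~\ref{cl:sizesuperclusters}, $|\mathcal{C}_T|=O(n^{1-1/k^{\epsilon}})$ in expectation. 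The hopset analogue of Claim~\ref{cl:ij-unc-secondspanner} then reads: for every $(i,j)$-unclustered vertex $u$ and every $v\in\Ball_G(u,\alpha_{i,j})$, the hopset $H$ contains a $3$-hop $u$-$v$ path (through the $\mathcal{SC}_{i,j-1}$ centers of $u$'s and $v$'s superclusters) of total weight at most $k^{\epsilon}\cdot\alpha_{i,j}$; the weight bound is identical to the spanner calculation, and the hop bound is immediate since each of the three legs is one hop in $H$.

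\textbf{Stage 3 (cluster graph and wrap-up).} On $\mathcal{C}_T$ build the unweighted cluster graph $\widehat{G}$ in which two clusters are adjacent iff their center-distance is at most $2r_T+d$, compute a $(2k^{\epsilon}-1)$-multiplicative spanner $\widehat{H}$ of expected size $O(|\mathcal{C}_T|^{1+1/k^{\epsilon}})=O(n)$, and for every $(C,C')\in\widehat{H}$ add a weighted hop between $r(C)$ and $r(C')$; before this, also add a hop from every vertex at distance at most $r_T+d$ from some $\mathcal{C}_T$-center to its closest such center. The stretch/hop analysis then fixes $u,v$ with $\dist_G(u,v)\in[d/2,d]$, a shortest $u$-$v$ path $P$, and partitions $P$ into consecutive segments exactly as in the proof of Lemma~\ref{lem:secondspanner}: an unclustered segment of length $\Theta(r_0/k^{\epsilon})$ is spanned by $2$ hops via $H_{TZ}$, an $(i,j)$-unclustered segment of length $\alpha_{i,j}$ by $3$ hops via the superclustering, and a segment containing a clustered vertex by $2k^{\epsilon}-1$ hops via $\widehat{H}$. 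Each segment suffers multiplicative stretch at most $O(k^{\epsilon})$, and the $\alpha_{i,j}$-calibration ensures each segment has weight $\Omega(d/k^{1-\epsilon})$, so at most $O(k^{1-\epsilon})$ segments occur; multiplying by the $O(1)$ hops per segment and absorbing the slack from the $T\cdot t=O(k^{\epsilon}/\epsilon)$ superclustering steps yields $\beta=O(36^{1/\epsilon}\cdot k^{1-\epsilon})$.

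\textbf{Main obstacle.} The subtle point is that the spanner's ``add a shortest $G$-path'' step cost $|P|$ edges per addition, whereas here it costs one hop, so the $r_T\cdot n$ term that dominated the spanner's size evaporates; the size accounting reduces to showing, via the same expectation calculation as Claim~\ref{cl:expectedspinstep}, that each of the $O(k^{\epsilon})$ superclustering steps per phase adds only $O(n)$ hops in expectation, for a total of $O(k^{\epsilon}/\epsilon\cdot n)$ over Stage~2, which after summing over $O(\log\Lambda)$ scales yields the claimed edge count. The genuinely delicate part of the argument will be ensuring that when concatenating the per-segment $3$-hop paths, the endpoint at the interface between two segments coincides exactly with the start of the next (so that each hop is charged to a unique segment); this is precisely the role of the $\alpha_{i,j}$-calibration of Claim~\ref{claim:alpha}, which guarantees that a vertex unclustered at step $(i,j)$ has a short path to every vertex within the ball of radius $\alpha_{i,j}$ and hence to the next break-point on $P$.
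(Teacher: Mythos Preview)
Your three-stage blueprint matches the paper's construction in Section~\ref{sec:second-regime-hopset} almost exactly: truncated TZ to handle sparse vertices, $T=O(1/\epsilon)$ phases of superclustering to drive the cluster count down to $n^{1-1/k^{\epsilon}}$, and a final $(2k^{\epsilon}-1)$-spanner on the cluster graph, with hops replacing shortest paths throughout. The high-level structure and the size accounting are correct.

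There is, however, a genuine calibration gap in the choice of $r_0$ and the resulting bound on $r_T$. You take $r_0=\Theta(d/k^{1-\epsilon})$ and then assert that Claim~\ref{claim:radiusecondspanner} ``carries over verbatim'' to give $r_T=O(d\cdot 64^{1/\epsilon}\cdot k^{1-\epsilon})$. That bound is enormously larger than $d$, and with it your Stage~3 stretch analysis collapses: routing a clustered pair through the cluster-graph spanner costs at least $(2k^{\epsilon}-1)\cdot(2r_T+d)=\Theta(64^{1/\epsilon}\cdot k\cdot d)$, a multiplicative stretch of order $k$ rather than $k^{\epsilon}$. The paper instead sets $r_0=d/R'$ with $R'=\tfrac{1}{2}\cdot 36^{1/\epsilon}\cdot k^{1-2\epsilon}$, deliberately pre-shrinking the initial radius by exactly the factor by which the $T$ superclustering phases will inflate it, so that Claim~\ref{claim:radiushopset} yields $r_T\le d/648$. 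This calibration is what forces each segment on the $u$-$v$ shortest path to have length only $\ge r_0/(\lceil k^{\epsilon}\rceil+1)=\Theta(d/(36^{1/\epsilon}k^{1-\epsilon}))$, and hence the segment count---and $\beta$---is $\Theta(36^{1/\epsilon}\cdot k^{1-\epsilon})$; your claim that each segment has weight $\Omega(d/k^{1-\epsilon})$ and that there are $O(k^{1-\epsilon})$ segments is inconsistent with the stated $\beta$ and misses this trade-off entirely.
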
 
To prove the theorem, we will show the following key lemma which restricts attention to a fixed distance class $[d/2,d]$.
\begin{lemma}
\label{lem:secondhopset}
For any $n$-vertex weighted graph $G=(V,E,w)$, integers $k\ge 16^{1/\epsilon},d\ge 1$ and $0<\epsilon<\frac{1}{2}$, there is a hopset $H$ of expected size $O(k^{\epsilon}\cdot n^{1+1/k}+\frac{k^{\epsilon}}{\epsilon}\cdot n)$ such that for every $u,v\in V$, at distance $d'\in [d/2,d]$ in $G$, it holds that $\dist_{H}^{(36^{1/\epsilon}\cdot k^{1-\epsilon})}(u,v)\le 4.5\cdot k^{\epsilon}\cdot d$.
\end{lemma}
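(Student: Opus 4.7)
\medskip

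The plan is to construct the hopset $H$ via a three-stage procedure that closely mirrors Algorithm $\ImprovedSpannerII$ from Section \ref{sec:second-regime-spanner}, but replacing ``shortest path in $G$'' everywhere with a single weighted hop $(u,v)$ of weight $\dist_G(u,v)$, and scaling all radii by the distance-class parameter. Fix the distance class $[d/2,d]$ and set the base radius $r_0 = \Theta(d\cdot \lceil k^\epsilon\rceil / k^{1-\epsilon})$ as in Eq.~(\ref{eq:first-hopset-rad}). Stage~I: call $\ClusterHop(G,k,\lceil k^\epsilon\rceil, r_0)$ to obtain the truncated TZ hopset $H_{TZ}$ together with the initial clustering $\mathcal{C}_0$ centered at $A_{\lceil k^\epsilon\rceil}$; this contributes $O(k^\epsilon\cdot n^{1+1/k})$ edges by Fact~\ref{fact:TZ}(iii), and by the proof of Claim~\ref{cl:helper-sparse-hopset}, gives a $2$-hop path of stretch $(2\lceil k^\epsilon\rceil+1)$ between any unclustered vertex $x$ and any $y$ with $\dist_G(x,y)\le r_0/(\lceil k^\epsilon\rceil+1)$.

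Stage~II (superclustering): run $T=\log_{\lceil k^\epsilon\rceil/4}(k^{1-2\epsilon})$ phases, each consisting of $t=\lceil k^\epsilon\rceil/4$ steps, using the same superclustering template as $\ClusterAndAugment$. The augmentation parameters $\alpha_{i,j}$ are defined by the recurrence of Eq.~(\ref{eq:alpha}), now interpreted in the weighted setting with base unit $r_0$ rather than $1$; all vertex-to-center, center-to-center and cluster-to-supercluster connections that the spanner construction installs as shortest paths are replaced here by single direct hops. Stage~III: once $|\mathcal{C}_T|=O(n^{1-1/k^\epsilon})$ in expectation (Claim~\ref{cl:numberofclusters} adapts verbatim since it is purely a sampling statement), add one hop from each unclustered vertex within distance $r_T+d$ of a center of $\mathcal{C}_T$ to its closest center, build $\widehat{G}$ on $\mathcal{C}_T$ by connecting clusters at center-distance $\le 2r_T+2d$, compute a $(2\lceil k^\epsilon\rceil-3)$-spanner $\widehat{H}$ of $\widehat{G}$, and translate each edge $(C,C')\in\widehat{H}$ into a single hop $(r(C),r(C'))$ of weight $\dist_G(r(C),r(C'))$. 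The radius bound of Claim~\ref{claim:radiusecondspanner}, rescaled by $r_0$, yields $r_T=O(64^{1/\epsilon}\cdot k^{1-\epsilon}\cdot r_0 / \lceil k^\epsilon\rceil)=O(36^{1/\epsilon}\cdot d)$ (after tightening constants).

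For stretch, I would re-prove the hopset analogues of Claims~\ref{cl:ij-unc-secondspanner} and~\ref{cl:clustered-secondspanner}: an $(i,j)$-unclustered $u$ and any $v\in\Ball_G(u,\alpha_{i,j})$ are joined in $H$ by a path of at most $3$ hops (going $v\to r(SC_v)\to r(C_u)\to u$) of weight bounded via the same inequality chain as Eq.~(\ref{eq:ijuncspanner}); while for any $u,v$ whose $G$-shortest path contains a clustered vertex, the cluster graph yields $\le (2\lceil k^\epsilon\rceil-1)$ hops of weight $4k^\epsilon d + O(r_T)$. The partition argument from the proof of Lemma~\ref{lem:secondspanner} then applies essentially unchanged: partition the $u$-$v$ shortest path into at most $O(k^{1-\epsilon})$ segments whose lengths are tied to the unclustering level of their left endpoint (length $r_0/(\lceil k^\epsilon\rceil+1)$ when $0$-unclustered via the TZ bound, length $\alpha_{i,j}$ when $(i,j)$-unclustered), each segment provides a multiplicative $\lceil k^\epsilon\rceil$-stretch bound, and summing gives $\dist_H^{(\beta)}(u,v)\le 4.5 k^\epsilon d$ for appropriate $\beta$.

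The main obstacle is the hop-count bookkeeping, which has no analogue in the spanner proof. Each segment must be charged at most $O(1)$ hops (the $2$-hop TZ shortcut or the $3$-hop supercluster shortcut) so that the total number of segments $O(k^{1-\epsilon})$ times $O(1)$, plus the $O(k^\epsilon)$ hops from the final cluster-graph spanner, fits inside the target $36^{1/\epsilon}\cdot k^{1-\epsilon}$. The delicate point is that the number of segments must be bounded without losing additional factors: a segment whose endpoint is $(i,j)$-unclustered has length $\alpha_{i,j}\gtrsim r_0$, so at most $O(d/r_0)=O(k^{1-\epsilon}/\lceil k^\epsilon\rceil)$ such segments appear per level, and summing the geometric sum over $i,j$ contributes the $36^{1/\epsilon}$ factor. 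Size-wise, the Stage~II bound follows Claim~\ref{cl:expectedspinstep} (a sampling-plus-indicator argument identical to the spanner case), yielding $O(k^\epsilon/\epsilon)\cdot n$ additional hops, which combined with Stages~I and~III gives the claimed size.
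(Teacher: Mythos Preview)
Your three-stage architecture matches the paper's Algorithm $\ImprovedHopsetII$ exactly, and the proof structure (TZ $2$-hop shortcut for $0$-unclustered vertices, $3$-hop supercluster shortcut for $(i,j)$-unclustered vertices, cluster-graph spanner for the clustered case, followed by the path-partition argument) is the paper's own. The size analysis is also correct. However, there is a genuine parameter error that breaks the stretch bound.

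You set $r_0 = \Theta(d\lceil k^\epsilon\rceil/k^{1-\epsilon})$, borrowing the first-regime value from Eq.~(\ref{eq:first-hopset-rad}). With this choice the radius after the $T$ superclustering phases is $r_T = \Theta\bigl((2\lceil k^\epsilon\rceil)^T \cdot r_0\bigr) = \Theta(36^{1/\epsilon}\cdot d)$, as you yourself state. But in the clustered case (the hopset analogue of Claim~\ref{cl:clustered-secondspanner}) the bound is $4k^\epsilon d + 4 k^\epsilon r_T$, so you would get $\Theta(36^{1/\epsilon} k^\epsilon d)$, not $4.5 k^\epsilon d$. The paper resolves this by pushing the $36^{1/\epsilon}$ into $r_0$ rather than $r_T$: it takes $r_0 = d/R'$ with $R' = \tfrac12\cdot 36^{1/\epsilon}\cdot k^{1-2\epsilon}$, so that $r_T\le d/648$ (Claim~\ref{claim:radiushopset}) and the $4.5k^\epsilon d$ stretch goes through. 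The $36^{1/\epsilon}$ then reappears in the \emph{hop bound}: every segment in the path partition, regardless of the unclustering level of its left endpoint, has length at least $r_0/(\lceil k^\epsilon\rceil+1)$ (since $\alpha_{i,j}\ge r_{i,0}/(\lceil k^\epsilon\rceil-3)\ge r_0/(\lceil k^\epsilon\rceil+1)$), hence the number of segments is at most $(\lceil k^\epsilon\rceil+1)\cdot d/r_0 = (\lceil k^\epsilon\rceil+1)\cdot R' < 36^{1/\epsilon}k^{1-\epsilon}$. No geometric sum over levels $(i,j)$ is involved; it is a single uniform lower bound on segment length, and your hop-count paragraph has this mechanism confused.
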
 
\paragraph{Algorithm $\ImprovedHopsetII$.} Fix a distance range $[d/2,d]$. The same procedure will be repeated for every distance range. The algorithm works in three stages. In the \textbf{first stage} it calls Procedure $\ClusterHop$ for $\left \lceil k^{\epsilon} \right \rceil$ steps and radius parameter $r_{0}=d/R'$ where $R'=O_{\epsilon}(k^{1-2\cdot \epsilon})$ (see Algorithm \ref{alg:TZ}). This results in a partial hopset $H_{0}$, and a clustering $\mathcal{C}_{0}$ of $O(n^{1-\frac{1}{k^{1-\epsilon}}})$ clusters in expectation, each with radius $r_0=d/R'$. We say that a vertex is $0$-unclustered if it is not in $V(\mathcal{C}_{0})$. By the end of this stage, we will have the guarantee that for every $0$-unclustered vertex $v$ and every vertex $u \in \Ball_G(v,O_{\epsilon}(d/k^{1-\epsilon}))$, 
the current hopset contains a $2$-hop path $P'$ from $u$ to $v$ of length at most $O_{\epsilon}(d/k^{1-2\epsilon})$. 

The \textbf{second stage} applies contains $T=O(\frac{1}{\epsilon})$ phases of superclustering. In each phase $i$, the procedure runs for $t=\lceil k^{\epsilon} \rceil /4 $ steps. We refer to the $j^{th}$ step of the $i^{th}$ phase, by step $(i,j)$. Step $(i,j)$ begins with a superclustering $\mathcal{SC}_{i-1,j-1}$. The output of that step is a superclustering $\mathcal{SC}_{i-1,j}$ along with a collection of hop edges to be added to the hopset. 

We say that a vertex is $(i,j)$-unclustered if it is in $V(\mathcal{SC}_{i,j-1})\setminus V(\mathcal{SC}_{i,j})$. The analysis shows that for each $(i,j)$-unclustered vertex $v$, the edges added to the hopset in the $(i,j)$ step provides a $3$-hop $u$-$v$ path $P'$ of length $O(k^{\epsilon}\cdot \alpha_{i,j})$ to any vertex $u \in \Ball_{G}(v,\alpha_{i,j})$. The parameter $\alpha_{i,j}$ grows at each step $(i,j)$ but it is bounded by $\alpha_{i,j}=O(1.5^i \cdot e^{4j/k^{\epsilon}} \cdot d/k^{1-(i+2)\cdot \epsilon})$. 

At the beginning of the \textbf{third stage} we have a clustering $\mathcal{C}_T$ with  $O(n^{1-\frac{1}{k^{\epsilon}}})$ clusters in expectation, of (weighted) radius at most $d$. First, the algorithm connects each vertex $v$, satisfying that $\cdist(v, \mathcal{C}_T)\leq r_T+d$, to its closest center. Next, a cluster graph $\widehat{G}=(\mathcal{C}_T,\mathcal{E})$ is defined by letting $\mathcal{E}=\{(C,C') ~\mid~ \cdist(C,C')\leq 2r_{T}+d, ~C,C' \in \mathcal{C}_T \}$. Let $\widehat{H}$ be an $(2\cdot \lceil k^{\epsilon} \rceil-3)$ spanner of $\widehat{G}$. For every edge $(C,C') \in \widehat{H}$, the algorithm adds an hop between $C$ and $C'$ to the hopset $H$. This completes the high level description of the algorithms. 
%
%

\paragraph{First Stage: Initial Clustering.}
The algorithm starts by applying Procedure $\ClusterHop$ for $\left \lceil k^{\epsilon} \right \rceil$ steps with radius parameter $r_{0}:=d/R'$ where $R':=1/2\cdot 36^{\frac{1}{\epsilon}}\cdot k^{1-2\epsilon}$. This results in a clustering $\mathcal{C}_{0}$ and a partial hopset $H_{0}$. By the properties of Procedure $\ClusterHop$ and the chosen parameters, the clustering has $n^{1-\left \lceil k^{\epsilon} \right \rceil/k}$ clusters, in expectation, of radius at most $d/R'$. The partial hopset $H_{0}$ contains $O(k^{\epsilon}\cdot n^{1+1/k})$ edges, in expectation. By the exact same argument as in Claim \ref{cl:helper-sparse-hopset}, we have that:
\begin{claim}
For every $0$-unclustered vertex $u\in V$ and every $v\in \Ball_{G}(u,r_{0}/(\lceil k^{\epsilon} \rceil +1 ))$, it holds that:
$\dist^{(2)}_{G \cup H}(u,v) \leq (2 \lceil k^{\epsilon} \rceil+1) \cdot \dist_G(u,v)~.$
\end{claim}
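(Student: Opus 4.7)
The plan is to mimic verbatim the proof of Claim \ref{cl:helper-sparse-hopset}, since the clustering produced by $\ClusterHop$ has exactly the properties used there: the cluster centers are the vertices of $A_{\lceil k^{\epsilon} \rceil}$, each vertex at distance at most $r_0$ from $A_{\lceil k^{\epsilon} \rceil}$ is clustered, and the partial hopset $H_0$ contains the Thorup--Zwick hopset $H_{TZ}$ on the hierarchy $A_0,\ldots,A_{\lceil k^{\epsilon} \rceil}$. The only change is that $r_0=d/R'$ rather than the value used in Section~\ref{sec:first-regime-hopset}; this is irrelevant to the argument, which only uses the equivalence ``$u$ is $0$-unclustered $\iff \dist_G(u,A_{\lceil k^{\epsilon} \rceil})>r_0$''.

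Given such a $0$-unclustered $u$ and $v\in \Ball_G(u,r_0/(\lceil k^{\epsilon} \rceil+1))$, I will consider the TZ pivots $p_i(u),p_i(v)$ and bunches $B_i(u),B_i(v)$. Let $i_u$ be the minimal index with $p_{i_u}(u)\in B_{i_u}(v)$, define $i_v$ symmetrically, and set $i^{*}=\min\{i_u,i_v\}$. The key step is to show $i^{*}\le \lceil k^{\epsilon} \rceil$ by contradiction: assuming $i^{*}\ge \lceil k^{\epsilon} \rceil$, Fact~\ref{fact:TZ}(i) implies $\dist_G(u,p_{\lceil k^{\epsilon}\rceil}(u))\le \lceil k^{\epsilon}\rceil\cdot \dist_G(u,v)$ when $i_u\le i_v$, and by the triangle inequality $\dist_G(u,p_{\lceil k^{\epsilon}\rceil}(v))\le (\lceil k^{\epsilon}\rceil+1)\cdot \dist_G(u,v)$ when $i_v\le i_u$. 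In either case plugging in $\dist_G(u,v)\le r_0/(\lceil k^{\epsilon}\rceil+1)$ yields $\dist_G(u,A_{\lceil k^{\epsilon}\rceil})\le r_0$, contradicting the fact that $u$ is $0$-unclustered.

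Once $i^{*}\le \lceil k^{\epsilon}\rceil$ is established, the conclusion follows directly from Fact~\ref{fact:TZ}(ii): the TZ hopset $H_{TZ}\subseteq H$ contains a two-hop path between $u$ and $v$ (routed through the common pivot $p_{i^{*}}$) of weight at most $(2i^{*}+1)\cdot \dist_G(u,v)\le (2\lceil k^{\epsilon}\rceil+1)\cdot \dist_G(u,v)$, giving $\dist^{(2)}_{G\cup H}(u,v)\le (2\lceil k^{\epsilon}\rceil+1)\cdot \dist_G(u,v)$ as required. There is no real obstacle here: the entire content is a mechanical translation of Claim~\ref{cl:helper-sparse-hopset} to the new first-stage clustering, and since the definition of ``$0$-unclustered'' is the analog of ``unclustered'' used previously, no new parameter choices need to be verified beyond noting $r_0$ is once again the cluster radius.
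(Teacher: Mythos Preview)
Your proposal is correct and is exactly the approach the paper takes: the paper simply states ``By the exact same argument as in Claim~\ref{cl:helper-sparse-hopset}'' and provides no further proof, and you have faithfully reproduced that argument, noting correctly that the only difference is the numerical value of $r_0$, which plays no role beyond the characterization ``$u$ is $0$-unclustered $\iff \dist_G(u,A_{\lceil k^{\epsilon}\rceil})>r_0$''.
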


\paragraph{Middle Stage: superclustering.} 
For clarity of presentation, throughout we assume that $\lceil k^{\epsilon} \rceil$ divides $4$, up to factor $4$ in the final stretch, this assumption can be made without loss of generality.
The middle step consists of $T:=\log_{\lceil k^{\epsilon} \rceil/4}(k^{1-2\epsilon})$ applications of 
Procedure $\ClusterAndAugmentHop$. For clarity of presentation we assume that $T$ is an integer, and in Sec. \ref{app:fraction} we describe how to remove this assumption.
We refer to each application of this procedure by a \emph{phase}. 
In each phase $i$, the input to Procedure $\ClusterAndAugmentHop$ is a clustering $\mathcal{C}_{i-1}=\{ C_1,\ldots, C_\ell\}$ of radius $r_{i-1}=r_{i-1,0}$ where $r_{i-1}=O(k^{\epsilon}\cdot (2k^{\epsilon})^{i-1})$. The output of the phase is a clustering $\mathcal{C}_{i}$ of radius $r_i$, and a hopset $H_i$ that takes care of all vertices that became unclustered in that phase. This output clustering is obtained by applying $t:=\lceil k^{\epsilon} \rceil/4$ steps of supercluster growing. As we will see, the clustering procedure will be very similar to Proc. $\ClusterAndAugment$ from Sec. \ref{sec:second-regime-spanner}, only that we add hops rather than shortest paths.

Starting with the trivial superclustering $\mathcal{SC}_{i-1,0}=\{\{C_j\} ~\mid~ C_j \in \mathcal{C}_{i-1}\}$ of radius $r_{i-1,0}=r_{i-1}$, in the $j^{th}$ step of phase $i$ for $j\geq 1$, the algorithm is given a superclustering $\mathcal{SC}_{i-1,j-1}$ of radius $r_{i-1,j-1}$. The algorithm then outputs a superclustering $\mathcal{SC}_{i-1,j}$ along with a hopset $H_{i,j}$ by taking the following steps.
\begin{enumerate}
\item Each unclustered vertex $v \in V\setminus {V(\mathcal{SC}_{i-1,j-1})}$ at center-distance at most $r_{i-1,j-1}+\alpha_{i-1,j}$ from centers of $\mathcal{SC}_{i-1,j-1}$ adds to $H_{i,j}$ a weighted hop to its closest center in $\mathcal{SC}_{i-1,j-1}$, where $\alpha_{i-1,0}=0$ and for $j\geq 1$,
\begin{equation}\label{eq:alpha}
\alpha_{i-1,j}=\frac{4\cdot r_{i-1,0}}{\lceil k^{\epsilon} \rceil-3}+\left(1+\frac{4}{\lceil k^{\epsilon} \rceil-3}\right)\cdot \alpha_{i-1,j-1} ~.
\end{equation}
\item Let $\mathcal{SC}' \subseteq \mathcal{SC}_{i-1,j-1}$ be the collection of superclusters obtained by 
sampling each supercluster $SC_\ell \in \mathcal{SC}_{i-1,j-1}$ independently with probability of $\frac{n_{0}}{n}$, where $n_{0}:=|\mathcal{C}_{i-1}|$. 
\item Each cluster $C \in SC_\ell$ at center-distance at most $r_{i-1,0}+r_{i-1,j-1}+2\cdot \alpha_{i-1,j}$ from $\mathcal{SC}'$ joins the supercluster of its closest center. All the vertices in $C$ add to $H_{i,j}$ a hop to the center of their new supercluster.
\item The superclustering $\mathcal{SC}_{i-1,j}$ consists of all sampled superclusters augmented by their nearby clusters (i.e., at center-distance $r_{i-1,0}+r_{i-1,j-1}+2\cdot \alpha_{i-1,j}$). The center of each new supercluster is the center of the sampled supercluster.
\item Each cluster $C \in SC_\ell$ at center-distance larger than $r_{i-1,0}+r_{i-1,j-1}+2\cdot \alpha_{i-1,j}$ from $\mathcal{SC}'$, adds to $H_{i,j}$
a weighted hop from its center to the center of any supercluster $SC\in \mathcal{SC}_{i-1,j-1}$ with $\cdist_{G}(C,SC)\le r_{i-1,0}+r_{i-1,j-1}+2\cdot \alpha_{i-1,j}$. 
\end{enumerate}
The parameter $\alpha_{i-1,j}$ is set in a way that guarantees that for every $(i-1,j)$-unclustered vertex $u$ and every $v\in \Ball_{G}(u,\alpha_{i-1,j})$, the hopset $H_{i,j}$ contains a $3$ hop path $u$-$v$ of length at most $k^{\epsilon}\cdot \alpha_{i-1,j}$. 

Let $\mathcal{SC}_{i-1,t}$ be the output superclustering after $t$ steps, then the output clustering $\mathcal{C}_i$ is formed by merging all clusters in the supercluster $SC_{\ell'}$ to a single cluster in $\mathcal{C}_{i}$ for every $SC_{\ell'}\in \mathcal{SC}_{i-1,t}$. That is, $\mathcal{C}_i=\{\{V(SC_{\ell'})\}  ~\mid~ SC_{\ell'} \in \mathcal{SC}_{i-1,t}\}$. Let $H_i=\bigcup_{j=1}^{t} H_{i,j}$. 
This completes the description of the $i^{th}$ phase. After $T$ phases, the output clustering $\mathcal{C}_{T}$ is shown to contain at most $n^{1-1/k^{\epsilon}}$ clusters in expectation. 
Let $H=\bigcup_{i=0}^{T} H_i$ be the current hopset after these phases.

\paragraph{Finalizing Stage: Spanner on Cluster Graph.} 
Given the collection of $O(n^{1-1/k^{\epsilon}})$ clusters in $\mathcal{C}_{T}$, the algorithm first adds a weighted hop from each unclustered vertex $v \notin V(\mathcal{C}_{T})$ to the center of its closest cluster in $\mathcal{C}_{T}$ up to center-distance $r_{T}+d$, if such exists. Next, a cluster graph $\widehat{G}=(\mathcal{C}_{T},\mathcal{E})$ is defined by 
connecting two clusters $C,C' \in \mathcal{C}_{T}$ if their center-distance in $G$ is at most $2r_{T}+2d$.
That is, $\mathcal{E}=\{(C,C') ~\mid~ C,C' \in \mathcal{C}_{T} \mbox{~and~} \dist_G(C,C')\leq 2r_{T}+2d\}$. Let $\widehat{H}$ be a $k'$-spanner of $\widehat{G}$ for $k'= 2\cdot \lceil k^{\epsilon} \rceil -3$. For edge $(C,C') \in \widehat{H}$, a weighted hop between the center of $C$ and the center of $C'$ is added to the hopset $H$. 
This completes the description of the algorithm. 
\begin{algorithm}[h!]
	\begin{algorithmic}[1]
		\caption{$\ImprovedHopsetII(G,k,d,\epsilon)$}
		\State \textbf{Input:} A weighted graph $G=(V,E,w)$, integers $k,d$ and $0<\epsilon<\frac{1}{2}$. 
		\State \textbf{Output:} A hopset $H\subseteq{G}$ of expected size $O(k^{\epsilon}\cdot n^{1+1/k}+k^{\epsilon}/\epsilon\cdot n)$, such that for any pair $u,v$ at distance $d'\in [1/2\cdot d,d]$ in $G$, $\dist_{G\cup H}^{(36^{1/\epsilon} \cdot k^{1-\epsilon})}(u,v)\le 4.5 \cdot k^{\epsilon}\cdot d$.		
		\State Let $T:=\log_{\lceil k^{\epsilon} \rceil /4 }(k^{1-2\epsilon}),t:=\lceil k^{\epsilon} \rceil /4 $.
		\State Set $R':=1/2\cdot 36^{\frac{1}{\epsilon}}\cdot k^{1-2\epsilon}$ and $r_0=d/R'$.
		\State Let $(\mathcal{C}_{0},H_{0}) \leftarrow \ClusterHop(G,k,\lceil k^{\epsilon} \rceil,r_0)$.
		\For{$i=1$ to $T$}
		\State $(\mathcal{C}_{i},H_{i})\leftarrow \ClusterAndAugmentHop(G,k,\epsilon,\mathcal{C}_{i-1})$.
		\State $H\leftarrow H\cup H_{i}$.
		\EndFor 
		\State Let $\mathcal{C}=\mathcal{C}_{T}$.
		\State For each $v\in V\setminus V(\mathcal{C})$, if $\cdist_{G}(v,\mathcal{C})\le r_{T}+d$, add a hop from $v$ to the center of its closest cluster.
		\State Let $\widehat{G}=(\mathcal{C},\mathcal{E}=\{(C,C')~|~ \cdist_{G}(C,C')\le 2\cdot  r_{T}+2\cdot d\})$.
		\State Let $\widehat{H}=\BasicSpanner(\widehat{G},2\cdot \lceil k^{\epsilon} \rceil-3)$. 
		\State For each edge $(C,C')\in E(\widehat{H})$ add to $H$ a weighted hop between the centers of $C,C'$.
		\State \Return $H$
	\end{algorithmic}
\end{algorithm}
\begin{algorithm}[H]
	\begin{algorithmic}[1]
		\caption{$\ClusterAndAugmentHop(G,k,\epsilon,\mathcal{C})$}
		\label{alg:ClusterAndAugmentHop}
		\State \textbf{Input:} A weighted graph $G=(V,E,w)$, integer $k$, stretch parameter $0<\epsilon <1$ and a clustering $\mathcal{C}$.
		\State \textbf{Output:} A clustering $\mathcal{C'}$ and a hopset $H$.
		\State Let $H\leftarrow \emptyset$, $r_{0} = rad_{G}(\mathcal{C})$, $\alpha_{0}= 0, n_{0}=|\mathcal{C}|$ and $t=\lceil k^{\epsilon} \rceil/4$.
		\State Set $\mathcal{SC}_{0}\leftarrow \left\{\{C\}~\mid~C\in \mathcal{C}\right\}$, where for each $C\in \mathcal{C}$, $\{C\}$ is a supercluster containing all vertices in $C$, with $C$'s center as the center of the supercluster.
		\For {$i=1$ to $t$}
		\State Set $\alpha_{i}\leftarrow\frac{4\cdot r_{0}}{4\cdot t-3}+(1+\frac{4}{4\cdot t-3})\cdot\alpha_{i-1}$.
		\State For each $v\in V\setminus{\mathcal{SC}_{i-1}}$, at center-distance at most $r_{i-1}+\alpha_{i}$ from the superclustering $\mathcal{SC}_{i-1}$, add to $H$ a hop from $v$ to the center of its closest supercluster.
		\State $\mathcal{SC'}\leftarrow Sample(\mathcal{SC}_{i-1},\min\{n^{-1/k},\frac{n_{0}}{n}\})$
		\For{every $SC\in \mathcal{SC}_{i-1}$}
		\For{every $C\in SC$}
		\State If $C$ has a sampled supercluster in $\mathcal{SC'}$ at center-distance at most $r_{0}+r_{i-1}+2\cdot \alpha_{i}$ from it, add $C$ to its closest supercluster in $\mathcal{SC'}$. Add to $H$ a hop from each vertex in $V(C)$ to the center of that supercluster.
		\State Otherwise, add a hop from the center of $C$ to the center of any supercluster in $\mathcal{SC}_{i-1}$ at center-distance at most $r_{0}+r_{i-1}+2\cdot \alpha_{i}$ from $C$.
		\EndFor
		\EndFor
		\EndFor
		\State Let $\mathcal{C'}=\{\{V(SC_j)\} ~\mid~ SC_j \in \mathcal{SC}_t\}$.
		\State \Return $(\mathcal{C'},H)$.
	\end{algorithmic}
\end{algorithm}
The analysis of this algorithm is very similar to the analysis of the spanner construction from Section \ref{sec:second-regime-spanner}. 
\paragraph{Stretch Analysis.}
Recall that $r_{0}:=\frac{d}{R'}$, $T= \log_{\lceil k^{\epsilon} \rceil/4}(k^{1-2\epsilon}),t=\lceil k^{\epsilon}\rceil/4$. For $1\le i \le T$ let $r_{i,0}=rad(\mathcal{C}_{i})$ and let $r_{T}:=rad(\mathcal{C}_T)$ be the radius of the clusters in the last clustering $\mathcal{C}_T$ at the end of the middle stage. The stretch and size arguments are simil
For the sake of the stretch and size analysis, we will need the following two claims, bounding $\alpha_{i,j}$ and $r_{i,0}$, respectively. The next claim is the analog of Claim \ref{claim:alpha} in Section \ref{sec:second-regime-spanner}:
\begin{claim}
\label{claim:alphahop}
	For $i\in [T],j\in [t]$, $\alpha_{i,j}\le r_{i,0}\cdot\left((1+\frac{4}{\lceil k^{\epsilon} \rceil-3})^{j}-1\right)$.
\end{claim}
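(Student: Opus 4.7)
The plan is to prove the claim by a direct induction on $j$, exploiting the fact that the hopset recurrence for $\alpha_{i,j}$ (Eq. (\ref{eq:alpha})) is, unlike its spanner counterpart, free of ceiling operations. This makes it a clean linear recurrence with constant coefficients, so the bound should in fact hold with equality.

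First I would set $a := \frac{4 r_{i,0}}{\lceil k^{\epsilon} \rceil - 3}$ and $b := 1 + \frac{4}{\lceil k^{\epsilon} \rceil - 3}$, so that the recurrence becomes $\alpha_{i,j} = a + b \cdot \alpha_{i,j-1}$ with base case $\alpha_{i,0}=0$. Unrolling (or by an easy induction on $j$), this gives
\[
\alpha_{i,j} \;=\; a \sum_{p=0}^{j-1} b^p \;=\; a \cdot \frac{b^j - 1}{b-1}.
\]
I would then observe that $b-1 = \frac{4}{\lceil k^{\epsilon} \rceil - 3}$, so $\frac{a}{b-1} = r_{i,0}$, which immediately yields
\[
\alpha_{i,j} \;=\; r_{i,0}\cdot\left(\left(1+\tfrac{4}{\lceil k^{\epsilon} \rceil - 3}\right)^{j} - 1\right),
\]
matching the desired bound (in fact with equality).

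Alternatively, for a fully self-contained induction I would verify the base case $j=1$ directly (both sides equal $\frac{4 r_{i,0}}{\lceil k^{\epsilon}\rceil-3}$), and then for the inductive step substitute the assumed formula for $\alpha_{i,j-1}$ into the recurrence and simplify, letting the $-1$ telescope with the $+1$ coming from the leading coefficient. I do not expect any real obstacle here: the only subtlety relative to Claim~\ref{claim:alpha} is that the spanner version carried an extra additive $\frac{\lceil k^{\epsilon}\rceil-3}{4}$ term coming from the ceiling, which disappears entirely in the hopset setting, making this lemma slightly stronger and strictly easier.
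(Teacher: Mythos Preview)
Your proposal is correct and follows essentially the same approach as the paper: both unroll the recurrence into a geometric sum $\frac{4r_{i,0}}{\lceil k^{\epsilon}\rceil-3}\sum_{p=0}^{j-1}(1+\tfrac{4}{\lceil k^{\epsilon}\rceil-3})^p$ and then evaluate it. Your observation that equality in fact holds (since the hopset recurrence has no ceiling) is a nice sharpening, though the paper only states and uses the $\le$ direction.
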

\begin{proof}
	We show by induction on $j$ that
	\begin{equation}
	\label{eq:indalphahop}
	\alpha_{i,j}\le  \frac{4\cdot r_{i,0}}{\lceil k^{\epsilon} \rceil-3}\cdot\sum_{p=0}^{j-1} \left(1+\frac{4}{\lceil k^{\epsilon} \rceil-3}\right)^{p}.
	\end{equation} 
	The base case, $j=1$ is trivial since $\alpha_{i,1} = \frac{4\cdot r_{i,0}}{\lceil k^{\epsilon} \rceil-3}$. By Eq. (\ref{eq:alpha}) and the induction assumption,
	\begin{eqnarray*}
		\alpha_{i,j}&=& \frac{4\cdot r_{i,0}}{\lceil k^{\epsilon} \rceil-3}+\left(1+\frac{4}{\lceil k^{\epsilon} \rceil-3}\right)\cdot\alpha_{i,j-1}\\&\le& \frac{4\cdot r_{i,0}}{\lceil k^{\epsilon} \rceil-3}+\left(1+\frac{4}{\lceil k^{\epsilon} \rceil-3}\right)\cdot \frac{4\cdot r_{i,0}}{\lceil k^{\epsilon} \rceil-3} \cdot\sum_{p=0}^{j-2}\left(1+\frac{4}{\lceil k^{\epsilon} \rceil-3}\right)^{p}
		\\&=& \frac{4\cdot r_{i,0}}{\lceil k^{\epsilon} \rceil-3}\cdot\sum_{p=0}^{j-1} \left(1+\frac{4}{\lceil k^{\epsilon} \rceil-3}\right)^{p}.
	\end{eqnarray*}
	By Eq. (\ref{eq:indalphahop}) we then have:
	\begin{eqnarray*}
		\alpha_{i,j}&\le& \frac{4\cdot r_{i,0}}{\lceil k^{\epsilon} \rceil-3}\cdot\sum_{p=0}^{j-1} \left(1+\frac{4}{\lceil k^{\epsilon} \rceil-3}\right)^{p}=\frac{4\cdot r_{i,0}}{\lceil k^{\epsilon} \rceil-3}\cdot \frac{\lceil k^{\epsilon} \rceil-3}{4} \cdot\left((1+\frac{4}{\lceil k^{\epsilon} \rceil-3})^{j}-1\right)
		\\&=& r_{i,0}\cdot\left((1+\frac{4}{\lceil k^{\epsilon} \rceil-3})^{j}-1\right).
	\end{eqnarray*}
\end{proof}
We next turn to bound the radius of the clustering $\mathcal{C}_{i}$ for every $i\in [T]$. The next claim is the analog of Claim \ref{claim:radiusecondspanner} in Section \ref{sec:second-regime-spanner}:
\begin{claim}
	\label{claim:radiushopset}
	For each $0\le i \le T$ it holds that $r_{i,0} \le (1.5\cdot \lceil k^{\epsilon} \rceil)^{i}\cdot r_{0}$, thus $r_{T}\le d/648$.
\end{claim}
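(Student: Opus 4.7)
The plan is to follow the template of Claim \ref{claim:radiusecondspanner} from the spanner construction, with two small simplifications that reflect the two differences between the hopset and spanner settings: Claim \ref{claim:alphahop} gives a cleaner bound on $\alpha_{i,j}$ (there is no $\lceil k^{\epsilon}\rceil/4$ additive slack, since we are not taking a ceiling in the recurrence for $\alpha_{i,j}$), and the superclustering radius grows by $2r_{i,0}+2\alpha_{i,j}$ per internal step exactly as before.

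I would induct on $i$. The base case $i=0$ is immediate. For the inductive step, fix a phase $i+1$ and apply the identity $r_{i,j}=r_{i,j-1}+2r_{i,0}+2\alpha_{i,j}$, which unrolls to
\[
r_{i,j}=(2j+1)\,r_{i,0}+2\sum_{p=1}^{j}\alpha_{i,p}.
\]
Plugging in the bound of Claim \ref{claim:alphahop} and letting $x=\tfrac{4}{\lceil k^{\epsilon}\rceil-3}$, the geometric sum $\sum_{p=1}^{j}\bigl((1+x)^p-1\bigr)$ collapses, giving
\[
r_{i,j}\;\le\; r_{i,0}\Bigl[\,1+\tfrac{\lceil k^{\epsilon}\rceil+1}{2}\bigl((1+x)^{j}-1\bigr)\Bigr].
\]
Setting $j=t=\lceil k^{\epsilon}\rceil/4$ and using the standard estimate $(1+x)^{t}\le e^{\,1+3/(\lceil k^{\epsilon}\rceil-3)}$ (as in the proof of Claim \ref{claim:radiusecondspanner}), I obtain
\[
r_{i+1,0}\;\le\; r_{i,0}\Bigl[\,1+\tfrac{\lceil k^{\epsilon}\rceil+1}{2}\bigl(e^{\,1+3/(\lceil k^{\epsilon}\rceil-3)}-1\bigr)\Bigr].
\]
The hypothesis $k\ge 16^{1/\epsilon}$ forces $\lceil k^{\epsilon}\rceil-3\ge 13$, so the bracketed factor is at most $1.5\,\lceil k^{\epsilon}\rceil$ (the exponent is at most $1.23$, so $e^{\,1+3/(\lceil k^{\epsilon}\rceil-3)}-1\le 2.42$, and the lower-order $+1$ is absorbed by $\lceil k^{\epsilon}\rceil$ when $k^{\epsilon}\ge 16$). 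This closes the induction and gives the first assertion.

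For the concrete bound on $r_T$, I plug in $T=\log_{\lceil k^{\epsilon}\rceil/4}(k^{1-2\epsilon})$ and $r_0=d/R'$ with $R'=\tfrac12\cdot 36^{1/\epsilon}\cdot k^{1-2\epsilon}$. Factoring $(1.5\lceil k^{\epsilon}\rceil)^T\le 6^{T}\cdot(\lceil k^{\epsilon}\rceil/4)^{T}$ and using $(\lceil k^{\epsilon}\rceil/4)^{T}\le k^{1-2\epsilon}$ by the choice of $T$, the factors of $k^{1-2\epsilon}$ cancel against $R'$, leaving $r_T\le (2/36^{1/\epsilon})\cdot 6^{T}\cdot d$. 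Finally, the assumption $k\ge 16^{1/\epsilon}$ yields $\log k\ge 4/\epsilon$, hence $T\le (1-2\epsilon)\log k/(\epsilon\log k-2)\le 2/\epsilon-4$, so $6^{T}\le 6^{2/\epsilon-4}=36^{1/\epsilon}/1296$ and thus $r_T\le 2d/1296=d/648$.

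I expect no deep obstacle: the whole argument is a book-keeping exercise calibrated so that the exponential blow-up per superclustering step ($\approx e\cdot\lceil k^{\epsilon}\rceil/2$) is absorbed into the $(1.5\lceil k^{\epsilon}\rceil)^{T}$ target. The only delicate point is matching constants in the final chain: one has to be careful that the clean bound of Claim \ref{claim:alphahop} (without the $+(\lceil k^{\epsilon}\rceil-3)/4$ slack that appeared in the spanner analysis) really does tighten the per-phase growth from $2\lceil k^{\epsilon}\rceil$ to $1.5\lceil k^{\epsilon}\rceil$, and that this improved constant, together with the choice $R'=\tfrac12\cdot 36^{1/\epsilon}\cdot k^{1-2\epsilon}$, lands exactly on the stated $d/648$.
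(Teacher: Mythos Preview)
Your proposal is correct and follows essentially the same approach as the paper's proof: the same inductive recurrence $r_{i,j}=r_{i,j-1}+2r_{i,0}+2\alpha_{i,j}$, the same use of Claim~\ref{claim:alphahop} to collapse the geometric sum to $r_{i,0}\bigl[1+\tfrac{\lceil k^{\epsilon}\rceil+1}{2}((1+x)^{j}-1)\bigr]$, the same exponential bound $(1+x)^{t}\le e^{1+3/(\lceil k^{\epsilon}\rceil-3)}$ to get the per-phase factor $1.5\,\lceil k^{\epsilon}\rceil$ under $k\ge 16^{1/\epsilon}$, and the same final plug-in of $T$ and $R'$. The only cosmetic difference is that the paper bounds $6^{(1-2\epsilon)\log k/(\epsilon\log k-2)}\le 36^{(1-2\epsilon)/\epsilon}$ directly, whereas you first bound $T\le 2/\epsilon-4$; these are the same inequality.
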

\begin{proof}
	We show the correctness of the claim by induction on $i$. The base case, for $i=0$ follows by the definition of $r_{i,0}$. Assuming the claim holds up to $i$ we show the correctness for $i+1$. Phase $i+1$ begins with clusters with radius at most $r_{i,0}$, and terminates after $t=\lceil k^{\epsilon} \rceil /4$ superclustering steps of Procedure $\ClusterAndAugmentHop$ with clusters with radius $r_{i+1,0}$. We therefore bound $r_{i+1,0}$. At step $j$ of Proc. $\ClusterAndAugmentHop$, we have superclusters of radius $r_{i,j-1}$. The algorithm then connects clusters of radius $r_{i,0}$, at distance at most $2\cdot \alpha_{i,j}$ from the sampled supercluster. In the $(i,j)$ step, the radius of the new supercluster is then increased by an additive factor of $2\cdot r_{i,0}+2\cdot \alpha_{i,j}$. Combining this with the bound on $\alpha_{i,j}$ of Claim \ref{claim:alpha}, we have:
\begin{eqnarray}\label{eq:rijhop}
r_{i,j} &=& r_{i,j-1}+2\cdot r_{i,0}+2\cdot \alpha_{i,j}=(2\cdot j+1)\cdot r_{i,0}+2\cdot \sum_{p=1}^{j}\alpha_{i,p}
\\&\le& (2\cdot j+1)\cdot r_{i,0}+2\cdot r_{i,0}\cdot\sum_{p=1}^{j}\left(\left(1+\frac{4}{\lceil k^{\epsilon} \rceil-3}\right)^{p}-1 \right) \nonumber
\\&=& r_{i,0}+2\cdot r_{i,0}\cdot \left(1+\frac{4}{\lceil k^{\epsilon} \rceil-3}\right)\cdot \frac{\lceil k^{\epsilon} \rceil-3}{4}\cdot  \left(\left(1+\frac{4}{\lceil k^{\epsilon} \rceil-3} \right)^{j}-1\right) \nonumber
\\&=&\left(1+\frac{\lceil k^{\epsilon} \rceil+1}{2} \cdot\left(\left(1+\frac{4}{\lceil k^{\epsilon} \rceil-3}\right)^{j}-1\right)\right)\cdot r_{i,0} \nonumber
\end{eqnarray}
Thus plugging $t=\lceil k^{\epsilon} \rceil/4$:
\begin{eqnarray*}
r_{i+1,0}&:=&r_{i,t}\le \left(1+\frac{\lceil k^{\epsilon} \rceil+1}{2} \cdot \left(\left(1+\frac{4}{\lceil k^{\epsilon} \rceil-3}\right)^{\lceil k^{\epsilon} \rceil /4}-1\right)\right) \cdot r_{i,0}
\\&\le&\left(1+\frac{\lceil k^{\epsilon} \rceil+1}{2} \cdot \left(e^{1+\frac{3}{\lceil k^{\epsilon} \rceil-3}}-1\right)\right)\cdot r_{i,0}\le 1.5\cdot \lceil k^{\epsilon} \rceil\cdot  r_{i,0}
\end{eqnarray*}
The third inequality is valid when $k\ge 16^{1/\epsilon}$. Finally, the final radius $r_{T}$ is bounded by:
\begin{eqnarray*}
	r_{T}&\le& r_{0}\cdot (1.5\cdot \lceil k^{\epsilon} \rceil)^{\log_{\lceil k^{\epsilon} \rceil /4 }(k^{1-2\epsilon})}\le \frac{d}{1/2\cdot 36^{\frac{1}{\epsilon}}\cdot k^{1-2\epsilon}}\cdot (6\cdot (\lceil k^{\epsilon} \rceil /4))^{\log_{\lceil k^{\epsilon} \rceil /4 }(k^{1-2\epsilon})}
	\\&\le&  \frac{2\cdot d}{36^{\frac{1}{\epsilon}}\cdot k^{1-2\epsilon}}\cdot 6^{\frac{(1-2\epsilon)\cdot \log{k}}{\epsilon \cdot \log{k}-2}}\cdot k^{1-2\epsilon}\le\frac{2\cdot d}{36^{\frac{1}{\epsilon}}}\cdot 36^{\frac{1-2\epsilon}{\epsilon}}= d/648~,
\end{eqnarray*}
where the fourth inequality holds for $k\ge 16^{1/\epsilon}$.
\end{proof}

\begin{definition}[Clustered and Unlcustered Vertices]
\emph{A vertex $u\in V$ is} $0$-unclustered \emph{if $v \notin V(\mathcal{C}_{0})$. A vertex $v$ is} $(i,j)$-unclustered \emph{if $v\in V(\mathcal{C}_{i,j-1})\setminus{V(\mathcal{C}_{i,j})}$. A vertex $v$ is \emph{clustered} if $v \in V(\mathcal{C})$}.
\end{definition}

By Claim \ref{cl:helper-sparse-hopset} we have that $\dist_{G\cup H}^{(2)}(u,v)\le (2\cdot \lceil k^{\epsilon} \rceil +1)\cdot \dist_{G}(u,v)$ for every $0$-unclustered vertex $u$ and any $v\in \Ball(u,r_{0}/(\lceil k^{\epsilon} \rceil +1))$. We now consider the remaining vertices.
\begin{claim}
\label{cl:(i,j)-unc-hop}  
For $0\le i \le T-1,1\le j \le t$, for each $(i,j)$-unclustered vertex $u\in V$, $\dist_{H}^{(3)}(u,v) \le \lceil k^{\epsilon} \rceil \cdot \alpha_{i,j}$ for any $v\in \Ball_{G}(u,\alpha_{i,j})$.
\end{claim}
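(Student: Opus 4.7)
The plan is to mirror the stretch argument of Claim~\ref{cl:ij-unc-secondspanner} from the spanner construction, but now exhibiting an explicit $3$-hop path in the hopset $H$ from $u$ to $v$ that routes through two centers. Let $C_u \in \mathcal{C}_i$ be the cluster containing $u$ and let $SC_u \in \mathcal{SC}_{i,j-1}$ be the supercluster containing $C_u$ at the start of step $(i{+}1,j)$. First I would observe that $u$ already carries a direct hop to $r(C_u)$ with weight at most $r_{i,0}$: this hop is added either during the initial Procedure $\ClusterHop$ (if $i=0$) or when $C_u$ was formed at the end of phase $i$ via the rule ``all the vertices in $C$ add a hop to the center of their new supercluster'' in Step (3) of Algorithm~\ref{alg:ClusterAndAugmentHop}.

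Next I would produce a single hop from $v$ to some center $r(SC_v)$ with $SC_v \in \mathcal{SC}_{i,j-1}$. Using the triangle inequality,
\[
\cdist_G(v, \mathcal{SC}_{i,j-1}) \;\le\; \dist_G(v,u) + \cdist_G(u,SC_u) \;\le\; \alpha_{i,j} + r_{i,j-1}.
\]
If $v \in V(\mathcal{SC}_{i,j-1})$, then $v$ already has a hop to the center of its own supercluster (added when it joined it), of weight at most $r_{i,j-1}$; otherwise Step~(1) of step $(i{+}1,j)$ fires for $v$ and a hop of weight at most $r_{i,j-1}+\alpha_{i,j}$ from $v$ to its closest center $r(SC_v)$ is added. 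Either way, $v$ has a one-hop route to some $r(SC_v)$ of weight at most $r_{i,j-1}+\alpha_{i,j}$. Because $u$ is $(i,j)$-unclustered, the cluster $C_u$ fails to join any sampled supercluster at step $(i{+}1,j)$, so Step~(5) fires for $C_u$. Applying the triangle inequality through $u$ and $v$ gives $\cdist_G(C_u, SC_v) \le r_{i,0} + 2\alpha_{i,j} + r_{i,j-1}$, so Step~(5) adds a hop from $r(C_u)$ to $r(SC_v)$. Concatenating the three hops $u \to r(C_u) \to r(SC_v) \to v$ gives a path in $H$ of total weight at most
\[
r_{i,0} + (r_{i,0} + r_{i,j-1} + 2\alpha_{i,j}) + (r_{i,j-1} + \alpha_{i,j}) \;=\; 2r_{i,0} + 2r_{i,j-1} + 3\alpha_{i,j}.
\]

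Finally, the radius recursion in Eq.~(\ref{eq:rijhop}) unfolds $r_{i,j-1}$ as $(2j-1)r_{i,0} + 2\sum_{p=1}^{j-1}\alpha_{i,p}$, so the above bound equals $4j\cdot r_{i,0} + 4\sum_{p=1}^{j-1}\alpha_{i,p} + 3\alpha_{i,j}$. The only calculation that still needs care is showing that the augmentation radii absorb this, namely
\[
(\lceil k^{\epsilon} \rceil-3)\cdot \alpha_{i,j} \;\ge\; 4j\cdot r_{i,0} + 4\sum_{p=1}^{j-1}\alpha_{i,p},
\]
which I would prove by induction on $j$ using Eq.~(\ref{eq:alpha}); the base case $j=1$ is immediate since $\alpha_{i,1} = 4r_{i,0}/(\lceil k^{\epsilon}\rceil-3)$, and the inductive step rewrites $(\lceil k^{\epsilon}\rceil-3)\alpha_{i,j} = 4r_{i,0} + (\lceil k^{\epsilon}\rceil-3)\alpha_{i,j-1} + 4\alpha_{i,j-1}$ and applies the hypothesis. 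Substituting this bound gives $\dist_H^{(3)}(u,v) \le \lceil k^{\epsilon} \rceil \cdot \alpha_{i,j}$, as desired. I expect the small technical obstacle to be the case split on whether $v$ itself is clustered in $\mathcal{SC}_{i,j-1}$, which is unified by taking $SC_v$ to be either $v$'s own supercluster or its nearest sampled one; otherwise the argument is an almost mechanical hopset translation of the spanner proof, with the ceiling-free recurrence for $\alpha_{i,j}$ in Algorithm~\ref{alg:ClusterAndAugmentHop} making the final induction slightly cleaner than its spanner counterpart.
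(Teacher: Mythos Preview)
Your proof is correct and follows essentially the same route as the paper's: the same $3$-hop path $u \to r(C_u) \to r(SC_v) \to v$, the same weight bound $2r_{i,0}+2r_{i,j-1}+3\alpha_{i,j}$, and the same induction on $(\lceil k^{\epsilon}\rceil-3)\alpha_{i,j}$. Your explicit justification of the hop $u \to r(C_u)$ and the case split on whether $v$ itself lies in $V(\mathcal{SC}_{i,j-1})$ are minor refinements that the paper's proof glosses over (it writes ``any vertex'' where Step~(1) technically applies only to unclustered ones), so if anything your version is slightly more careful.
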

\begin{proof}
Let $v\in \Ball_{G}(u,\alpha_{i,j})$. In the beginning of the $(i+1,j)$ step, the algorithm adds to $H$ hops from any vertex at center-distance at most $r_{i,j-1}+\alpha_{i,j}$ from $\mathcal{SC}_{i,j-1}$ to the center of its closest supercluster. In particular, since $\cdist_{G}(v,\mathcal{SC}_{i,j-1})\le  \dist_{G}(v,u)+r_{i,j-1} \le r_{i,j-1}+\alpha_{i,j}$, we add a hop from $v$ to the center of some supercluster $SC_{v}\in \mathcal{SC}_{i,j-1}$, and the weight of this hop is at most $r_{i,j-1}+\alpha_{i,j}$. Since $u$ is unclustered in this step, the algorithm adds to $H$ a hop from the center of its cluster $C_{u}$ to the center of any supercluster which is in $\mathcal{SC}_{i,j-1}$ and is at center-distance at most $r_{i,0}+r_{i,j-1}+2\cdot \alpha_{i,j}$ from $C_{u}$. Since $\cdist_{G}(C_{u},SC_{v})\le r_{i,0}+\dist_{G}(u,v)+\cdist_{G}(v,SC_{v}) \le r_{i,0}+r_{i,j-1}+2\cdot \alpha_{i,j}$, thus the algorithm adds a hop from $z_{u}$ (the center of $C_{u}$) to $z_{v}$ (the center of the supercluster $SC_{v}$) of weight at most 
$$\dist_{G}(z_{u},u)+\dist_{G}(u,v)+\dist_{G}(v,z_{v})\le r_{i,0}+r_{i,j-1}+2\cdot \alpha_{i,j},$$ and consequently, there is a $3$-hop from $u$ to $v$ through $z_{u}$ and $z_{v}$ of length at most:
\begin{eqnarray*}
\dist_{H}^{(3)}(u,v)&\le& r_{i,0}+(r_{i,0}+r_{i,j-1}+2\cdot \alpha_{i,j})+(r_{i,j-1}+\alpha_{i,j})
\\&=&3\cdot \alpha_{i,j}+2\cdot r_{i,0}+2\cdot r_{i,j-1}
\leq 3\cdot \alpha_{i,j}+2\cdot r_{i,0}+2\cdot ((2 \cdot j-1)\cdot r_{i,0}+2\cdot \sum_{p=1}^{j-1}\alpha_{i,p})
\\&\le&3\cdot \alpha_{i,j}+4 \cdot j\cdot r_{i,0}+4\cdot \sum_{p=1}^{j-1}\alpha_{i,p}~,
\end{eqnarray*}
where the second inequality follows from Eq. (\ref{eq:rijhop}).
We show by induction on $j$ that $(\lceil k^{\epsilon} \rceil-3)\alpha_{i,j} \ge 4 \cdot j\cdot r_{i,0}+4\cdot \sum_{p=1}^{j-1}\alpha_{i,p}.$ The base case holds trivially. Assuming the correctness of the claim up to $j-1$ we have:
\begin{eqnarray*}
\alpha_{i,j} &=& \frac{4\cdot r_{i,0}}{\lceil k^{\epsilon} \rceil-3}+(1+\frac{4}{\lceil k^{\epsilon} \rceil-3})\cdot\alpha_{i,j-1}\ge \frac{4 \cdot j\cdot r_{i,0}+4\cdot \sum_{p=1}^{j-2}\alpha_{i,p}}{\lceil k^{\epsilon} \rceil-3}+\frac{4}{\lceil k^{\epsilon} \rceil-3}\cdot \alpha_{i,j-1} 
\\&=& \frac{4 \cdot j\cdot r_{i,0}+4\cdot \sum_{p=1}^{j-1}\alpha_{i,p}}{\lceil k^{\epsilon} \rceil-3}~.
\end{eqnarray*}
We therefore have that,
\begin{eqnarray*}
\dist_{H}^{(3)}(u,v)&\le&3\cdot \alpha_{i,j}+4 \cdot j\cdot r_{i,0}+4\cdot \sum_{p=1}^{j-1}\alpha_{i,p}
\leq 3\cdot \alpha_{i,j}+(\lceil k^{\epsilon} \rceil-3)\cdot \alpha_{i,j}=\lceil k^{\epsilon} \rceil \cdot \alpha_{i,j}.
\end{eqnarray*}
Figure \ref{fig:SecondSpannerAnalysis}, though illustrating claims regarding spanners, can be used to illustrate the proofs of Claims \ref{cl:(i,j)-unc-hop} and \ref{cl:clustered-hop-second} as well. 
\end{proof}
\begin{claim}
\label{cl:clustered-hop-second}
Let $u,v\in V$ be vertices at distance $d$ in $G$, and let $P$ be a shortest path between them. If there is some clustered vertex $w\in P$, then $\dist_{G\cup H}^{(2\lceil k^{\epsilon}\rceil-1)}(u,v) \le 4.5\cdot k^{\epsilon} \cdot d$.
\end{claim}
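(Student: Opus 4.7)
\medskip
\noindent\textbf{Proof Proposal for Claim \ref{cl:clustered-hop-second}.}
The plan is to mirror the argument of Claim~\ref{cl:clustered-secondspanner}, replacing the concatenation of $G$-paths by hops of appropriate weight. Let $C_w \in \mathcal{C}_T$ denote the final cluster containing $w$, so $\dist_G(w, r(C_w)) \le r_T$. I first identify, for each of $u$ and $v$, a cluster in $\mathcal{C}_T$ to which the hopset $H$ contains a direct hop. If $u \in V(\mathcal{C}_T)$, then by the merging rule in procedure $\ClusterAndAugmentHop$ (step~(3)) a hop from $u$ to the center of its cluster in $\mathcal{C}_T$ was inserted when $u$'s cluster joined its final supercluster, so take $C_u$ to be that cluster. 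If $u \notin V(\mathcal{C}_T)$, then $\cdist_G(u,\mathcal{C}_T) \le \dist_G(u,w) + r_T \le d + r_T$, so the finalizing stage adds a hop from $u$ to the center of its closest cluster $C_u \in \mathcal{C}_T$, and similarly $\dist_G(u, r(C_u)) \le r_T + \dist_G(u,w)$. Define $C_v$ symmetrically.

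Next I verify that $(C_u, C_v)$ is an edge of the cluster-graph $\widehat{G}$. Since $w$ lies on a $u$--$v$ shortest path, $\dist_G(u,w) + \dist_G(w,v) = \dist_G(u,v) \le d$, and by the triangle inequality together with the bounds from the previous paragraph,
\[
\cdist_G(C_u,C_v) \le \dist_G(r(C_u),u) + \dist_G(u,v) + \dist_G(v,r(C_v)) \le 2 r_T + 2d,
\]
which is exactly the adjacency threshold for $\widehat{G}$. Hence $\dist_{\widehat{H}}(C_u,C_v) \le 2\lceil k^{\epsilon}\rceil - 3$ by the $(2\lceil k^{\epsilon}\rceil - 3)$-spanner property of $\widehat{H}$, and each edge on this $\widehat{H}$-path was translated into a weighted hop in $H$ of weight at most $2 r_T + 2d$ (the bound on the center-distance of adjacent clusters in $\widehat{G}$).

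Now I concatenate the hops: one hop $u \to r(C_u)$, at most $2\lceil k^{\epsilon}\rceil - 3$ hops along $\widehat{H}$ from $r(C_u)$ to $r(C_v)$, and one hop $r(C_v) \to v$. The total number of hops is $2 + (2\lceil k^{\epsilon}\rceil - 3) = 2\lceil k^{\epsilon}\rceil - 1$, matching the stated bound. The total weight is at most
\[
\bigl(r_T + \dist_G(u,w)\bigr) + (2\lceil k^{\epsilon}\rceil - 3)(2 r_T + 2d) + \bigl(r_T + \dist_G(w,v)\bigr) \le 2 r_T + d + (2k^{\epsilon} - 1)(2 r_T + 2d),
\]
using $\lceil k^{\epsilon}\rceil \le k^{\epsilon} + 1$ and $\dist_G(u,w) + \dist_G(w,v) = d$.

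The last step is a routine numerical bound, which is the only mildly annoying part: by Claim~\ref{claim:radiushopset} we have $r_T \le d/648$, and by the hypothesis $k \ge 16^{1/\epsilon}$ we have $k^{\epsilon} \ge 16$. Plugging these in gives $2r_T + 2d \le 2d(1 + 1/648)$ and $2r_T + d \le 2d$, so the total weight is bounded by $2d + (2k^{\epsilon}-1) \cdot 2d(1 + 1/648) \le 4k^{\epsilon} d (1 + 1/648) + 2d \le 4.5 k^{\epsilon} d$, as claimed. I do not expect any genuine obstacle — the argument is a direct analog of the spanner version, and the only thing to watch is handling both the clustered and unclustered cases for $u$ and $v$ uniformly via the tighter bound $\cdist_G(u, C_u) \le r_T + \dist_G(u,w)$ (rather than the loose $r_T + d$), which is what enables the $2r_T + 2d$ adjacency bound in $\widehat{G}$.
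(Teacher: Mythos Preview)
Your proof is correct and follows essentially the same route as the paper's: identify clusters $C_u, C_v$ reachable from $u,v$ by a single hop, use the triangle inequality through $w$ to show $(C_u,C_v)\in E(\widehat{G})$, apply the $(2\lceil k^{\epsilon}\rceil-3)$-spanner property, and finish with $r_T\le d/648$. Your treatment is in fact slightly more careful than the paper's in one respect: you explicitly distinguish the case $u\in V(\mathcal{C}_T)$ (hop to center added during $\ClusterAndAugmentHop$) from $u\notin V(\mathcal{C}_T)$ (hop added in the finalizing stage), whereas the paper's proof tacitly folds both into the same sentence.
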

\begin{proof}
Let $C_{w}$ be the cluster to which $w$ belongs. In the beginning of the third stage of the algorithm we add hops from unclustered vertices at center-distance at most $r_{T}+d$ to their closest cluster center. Thus since $\cdist_{G}(u,\mathcal{C})\le \dist_{G}(u,w)+r_{T}\le r_{T}+d$ and $\cdist_{G}(v,\mathcal{C})\le \dist_{G}(v,w)+r_{T}\le r_{T}+d$, it holds that both $u,v$ have hops to centers of some clusters $C_{u},C_{v}\in \mathcal{C}$. Since 
\begin{eqnarray*}
\cdist_{G}(C_{u},C_{v})&\le& \cdist_{G}(C_{u},u)+\dist_{G}(u,v)+\cdist_{G}(v,C_{v})
\\&\le& 2r_{T}+\dist_{G}(w,u)+d+\dist_{G}(v,w) \le 2r_{T}+2d,
\end{eqnarray*}
it holds that $\dist_{\widehat{G}}(C_{u},C_{v})\le 1$, thus $\dist_{\widehat{H}}(C_{u},C_{v})\le 2\cdot  \lceil k^{\epsilon}\rceil-3$. Since each edge $(C,C')\in E(\widehat{H})$ translates into a hop in $H$ between the centers of $C_{u},C_{v}$ of weight at most $2\cdot r_{T}+2d$, we have the following,
\begin{eqnarray*}
\dist_{G\cup H}^{(2\cdot \lceil k^{\epsilon}\rceil-1)}(u,v) &\le&  \cdist_{G}(u,C_{u})+\dist_{\widehat{H}}(C_{u},C_{v})\cdot (2\cdot r_{T}+2\cdot d)+\cdist_{G}(C_{v},v)
\\&\le& \dist_{G}(u,w)+r_{T}+(2\cdot \lceil k^{\epsilon} \rceil -3)\cdot (2\cdot r_{T}+2\cdot d)+\dist_{G}(v,w)+r_{T}
\\&\le& (2\cdot k^{\epsilon} -1)\cdot (2\cdot r_{T}+2\cdot d)+d+2\cdot r_{T}
\\&\le& 4\cdot k^{\epsilon} \cdot d+4\cdot k^{\epsilon}  \cdot r_{T}\le 4.5\cdot k^{\epsilon} \cdot d,
\end{eqnarray*}
where the last inequality follows by plugging the bound on the final radius $r_{T}$ from Claim \ref{claim:radiushopset}. See Figure \ref{fig:SecondHopsetAnalysis}(b) for an illustration.
\end{proof}

\begin{proof}[Proof of Theorem  \ref{thm:secondhopset}]
\textbf{Stretch and Hop-Bound.}
Let $u,v\in V$ be vertices at distance $d'\in [d/2,d]$ in $G$, and let $P$ be some shortest path between them in $G$. First observe that if there is some clustered vertex $w\in P$ the claim follows from Claim \ref{cl:clustered-hop-second}. Assume from now on that there is no such vertex. Partition the path $P$ into consecutive segments in the following way: denote $v_{0}:=u$ and inductively, given $v_{l}\neq v$ define $v'_{l}$ to be the furthest vertex on $P[v_{l},v]$ at distance at most $\Delta_{l}$ from $v_{l}$, where:
\begin{equation*}
\Delta_{l}=\begin{cases}
\min\{\frac{r_0}{(\left \lceil k^{\epsilon} \right \rceil + 1)},\dist(v_{l},v)\} & v_{l}\text{ is \ensuremath{0-unclustered}}\\
\min\{\alpha_{i,j},\dist(v_{l},v)\} & v_{l}\text{ is \ensuremath{(i,j)-unclustered}}\\
\end{cases}
\end{equation*}
Note that $v'_{l}$ might be equal to $v_{l}$ if the incident edge to $v_{l}$ on $P[v_{l},v]$ has weight larger than $\Delta_{l}$. In addition, for each $v'_{l}\neq v$ let $v_{l+1}$ be the consecutive neighbor of $v'_{l}$ on $P[v'_{l},v]$. When $v'_{l}=v$, simply let $v_{l+1}=v'_{l}$. Let $\ell$ be the minimal index such that $v_{\ell}=v$. This defines a partition of $P$ to $\ell$ segments by setting for all $1\le i\le\ell$, $P_{i}=[v_{i-1},v_{i}]$ and $P_{\ell}=[v_{\ell-1},v]$ is the last segment that reaches $v$. Note that for every segment $P_{l+1}$ (except at most the last one) $\dist_{G}(v_{l},v_{l+1})\ge \frac{r_0}{(\left \lceil k^{\epsilon} \right \rceil + 1)}$. If $v_{l}$ is $0$-unclustered this is clear, and if $v_{l}$ is $(i,j)$-unclustered, then by Eq. (\ref{eq:alpha}) it holds that $\dist_{G}(v_{l},v_{l+1})\ge \alpha_{i,j}\ge \frac{r_{i-1,0}}{\lceil k^{\epsilon}\rceil-3}>\frac{r_0}{(\left \lceil k^{\epsilon} \right \rceil + 1)}$. Thus we have that $\ell\le (\left \lceil k^{\epsilon} \right \rceil + 1)\cdot R'<36^{1/\epsilon} \cdot k^{1-\epsilon}$. 
For any $l \in \{0,\ldots, \ell-1\}$, if $v_{l}$ is $0$-unclustered, by Claim \ref{cl:helper-sparse-hopset}, it holds that: 
\begin{eqnarray*}
\dist^{(3)}_{G \cup H}(v_{l},v_{l+1})\leq \dist^{(2)}_{G \cup H}(v_{l},v'_{l})+\dist_{G}(v'_{l},v_{l+1})\le (2\cdot \lceil k^{\epsilon} \rceil+1)\cdot \dist_{G}(v_{l},v_{l+1})~.
\end{eqnarray*} 
If $v_{l}$ is $(i,j)$-unclustered then since $\dist_{G}(v_{l},v'_{l}) \le \alpha_{i,j}$, by Claim \ref{cl:(i,j)-unc-hop}, it holds that $\dist^{(3)}_{G \cup H}(v_{l},v'_{l})\leq \lceil k^{\epsilon} \rceil \cdot \alpha_{i,j}$. There are two cases to consider. First assume that $\dist_{G}(v_{l},v_{l+1})\ge \alpha_{i,j}$. In such a case $\dist^{(4)}_{G \cup H}(v_{l},v_{l+1})\leq \lceil k^{\epsilon}\rceil\cdot \dist_{G}(v_{l},v_{l+1})$. Next, assume that $\dist_{G}(v_{l},v_{l+1})< \alpha_{i,j}$. By the definition of the segment, in such a case it must hold that $v'_{l}=v_{l+1}=v_{\ell}=v$. Thus by proof of Claim \ref{cl:(i,j)-unc-hop} and Eq. (\ref{eq:rijhop}), $$\dist_{G\cup H}^{(3)}(v_{l},v)\le 3\cdot \alpha_{i,j}+2\cdot r_{i,0}+2\cdot r_{i,j-1}\le 2\cdot r_{i,j}\le 2\cdot r_{T}\le d/300.$$ Therefore by summing over all $\ell$ segments, we get that 
	$$\dist^{(4\cdot \ell)}_{G \cup H}(u,v)\leq d/300+\sum_{i=0}^{\ell-1}\dist_{G\cup H}^{(4)}(v_{l},v_{l+1})\le (2\cdot \lceil k^{\epsilon} \rceil+2)\cdot \dist_{G}(u,v)~.$$
\paragraph{Size Analysis.} 
By Fact \ref{fact:TZ}(iii), it holds that $|E(H_{0})|=O(k^{\epsilon}\cdot n^{1+1/k})$. In the same manner as shown in  Claim \ref{cl:expectedspinstep} it holds that for any $1\le i \le T,1\le j \le \lceil k^{\epsilon}\rceil/4$, in step $(i,j)$ we add $O(n)$ hops. It follows that for all $1\le i \le T$, in expectation $|H_{i}|=O(k^{\epsilon}\cdot n)$. Since there are $T=\log_{\lceil k^{\epsilon} \rceil /4 }k^{1-2\epsilon}=O(\frac{1}{\epsilon})$ phases we have $\sum_{i=1}^{T}|E(H_{i})|=O(\frac{k^{\epsilon}}{\epsilon}\cdot n)$ hops in total. As shown in Claim \ref{cl:numberofclusters}, $|\mathcal{C}_T|=O(n^{1-(\lceil k^{\epsilon} \rceil /4)^{T}\cdot \frac{\left \lceil k^{\epsilon} \right \rceil}{k}})=O(n^{1-\frac{1}{k^{\epsilon}}})$ in expectation. Consequently,
$|E(\widehat{H})|=O(|\mathcal{C}_{T}|^{1+\frac{1}{\lceil k^{\epsilon} \rceil -1}})=O(n)$ in expectation. 
Since each edge in $\widehat{H}$ translates into a single hop, this step contributes $O(n)$ hops. 
Overall, $|E(H)| = O(k^{\epsilon}\cdot n^{1+1/k}+\frac{k^{\epsilon}}{\epsilon}\cdot n)$, in expectation.
\end{proof}
Theorem \ref{thm:secondhopset} follows by applying the algorithm for each of the $\log \Lambda$ distance classes.

\subsection{New $(3+\epsilon,\beta)$ Hopset}
\label{sec:3epshopset}
In this subsection we show a considerably simplified construction of $(3+\epsilon,\beta)$ hopsets. For example for $\epsilon=1$, we get a $(4,O(k^{\log{23}}))$ hopset. For the sake of the efficient implementation in Sec. \ref{sec:eff-hopsets}, we settle for a slightly worse value of $\beta$. In Appendix \ref{sec:best-hopset}, we show an improved construction that achieves the bounds of Lemma \ref{lem:hopset-best}. Our main result in this section is as follows:
	
\begin{lemma}\label{lem:3plusepshopset}
For any $n$-vertex weighted graph $G=(V,E,w)$, integer $k$ and $\epsilon>0$, one can compute a $(3+\epsilon,\beta)$ hopset $H$ where $\beta=16\cdot (5+ 18/\epsilon)\cdot k^{\log(5+18/\epsilon)}$ of expected size $|E(H)|=O((n^{1+1/k}+\log{k}\cdot n)\log \Lambda)$.
\end{lemma}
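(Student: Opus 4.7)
}
My plan is to mirror the spanner construction of Lemma \ref{lem:3plusepsspanner} (Section \ref{sec:3epsspanner}), with two key modifications: (i) since the graph is weighted, I will run the algorithm separately for each of the $\log \Lambda$ distance classes $[d/2, d]$; and (ii) whenever the spanner construction adds a shortest $G$-path, here I will instead add a single \emph{weighted hop} of weight equal to the corresponding $G$-distance. Fixing a distance class $[d/2, d]$, I will run $T = \lceil \log k +1\rceil$ clustering phases, starting with the trivial clustering $\mathcal{C}_0 = \{\{v\} \mid v \in V\}$ and proceeding exactly as in Alg.\ of Section \ref{sec:3epsspanner}, except that the augmentation radius is set to $\alpha_1 = d / k^{\log(5+18/\epsilon)}$ and $\alpha_i = (4/\epsilon) \cdot r_{i-1}$ for $i \ge 2$, while the sampling probability is $p_1 = n^{-1/k}$ and $p_i = |\mathcal{C}_{i-1}|/n$ for $i \ge 2$.

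Next, I would establish the analogues of Observations \ref{obs:rad3epsspanner} and \ref{obs:clusters3epsspanner-nclusters}: by induction on $i$, $r_i \le 2 \cdot (5 + 18/\epsilon)^{i-1} \cdot \alpha_1$ (the extra ``$18/\epsilon$'' vs.\ ``$16/\epsilon$'' absorbs the initial $\alpha_1$ term that appears because we do not start from unit-weight edges), and $\mathbb{E}[|\mathcal{C}_i|] = n^{1-2^{i-1}/k}$, so $\mathbb{E}[|\mathcal{C}_T|] \le 1$. Then, as in the spanner analysis, I will show that for every $i$-unclustered vertex $u$ and every $v \in \Ball_G(u, \alpha_i)$, the current hopset contains a $3$-hop $u$-$v$ path of weight $\le 4 r_{i-1} + 3 \dist_G(u,v)$. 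This is the direct hopset analogue of Claim \ref{claim:i-unc-3epsspanner}: one hop from $v$ to the center of its nearest cluster $C_v \in \mathcal{C}_{i-1}$ (added in step (1) of phase $i$), one inter-center hop from $r(C_v)$ to $r(C_u)$ (added in step (5), since $\cdist_G(C_u, C_v) \le 2r_{i-1}+2\alpha_i$), and one hop from $r(C_u)$ back to $u$ (added in a previous phase). Choosing $\alpha_i$ as above yields multiplicative stretch $3+\epsilon$ at the boundary of $\Ball_G(u, \alpha_i)$. The case of clustered vertices is handled by adding direct hops from every vertex in $V(\mathcal{C}_T)$ to the center of its cluster, giving a $2$-hop path of total weight $\dist_G(u,v)+2 r_T$.

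For the overall stretch/hop-bound analysis on a fixed class $[d/2,d]$, I will partition the $u$-$v$ shortest path in $G$ into consecutive segments in the style of Lemma \ref{lem:3plusepsspanner}: segment $P_j$ has length $\min\{\alpha_{i_j}, \dist_G(u_j,v)\}$, where $i_j$ is determined by the phase at which $u_j$ became unclustered. Except for the final segment, each segment has length $\ge \alpha_1 = d/k^{\log(5+18/\epsilon)}$, so the number of segments is at most $d/\alpha_1 = k^{\log(5+18/\epsilon)}$. Each segment contributes at most $3$ hops, and the final segment contributes an additive weight of at most $4 r_{T-1} \le 8 \cdot (5+18/\epsilon) k^{\log(5+18/\epsilon)} \cdot \alpha_1$, which is absorbed into the $(3+\epsilon)$ multiplicative stretch when $\dist_G(u,v) \ge d/2$. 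In total we get $\beta \le 16 \cdot (5+18/\epsilon) \cdot k^{\log(5+18/\epsilon)}$ hops, matching the claim.

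For the size bound per distance class, the analysis follows that of Lemma \ref{lem:3plusepsspanner} verbatim: step (1) contributes $O(n)$ hops per phase; step (3) contributes $O(n)$ hops per phase by consistent tie-breaking; step (5) contributes $|\mathcal{C}_{i-1}|/p_i$ hops in expectation by the argument of Lemma \ref{lem:expectedamountofsp}, giving $n^{1+1/k}$ in phase $1$ and $O(n)$ in each subsequent phase, totaling $O(n^{1+1/k} + \log k \cdot n)$ per class. Summing over $\log \Lambda$ classes yields the claimed size bound. The main obstacle I anticipate is the clean bookkeeping of the extra $\alpha_1$ factor in the radii recursion, which forces ``$16/\epsilon$'' to become ``$18/\epsilon$''; I would check this by tracking the $+\alpha_1$ contribution in the base case of the induction that bounds $r_i$, and verifying that the resulting $\alpha_i$ still yields multiplicative stretch exactly $3+\epsilon$ at the boundary.
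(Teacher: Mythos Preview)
Your high-level plan matches the paper's argument closely: work per distance class, run $T\approx\log k$ clustering phases, and replace shortest paths by single weighted hops. The paper's first stage differs slightly (it uses one step of the Thorup--Zwick construction to produce $\mathcal{C}_0$ of radius $r_0=d/(2R')$ rather than singletons with a tuned $\alpha_1$), but your variant is a legitimate simplification of the same idea.

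There is, however, a concrete gap in your $3$-hop claim. You write that the third hop is ``from $r(C_u)$ back to $u$ (added in a previous phase).'' Under your stated translation rule --- replace each shortest $G$-path the spanner adds by a single hop --- step~(3) of each phase adds only a \emph{center-to-center} hop $(r(C),r(C'))$ when $C$ joins the sampled cluster $C'$. It does \emph{not} add a hop from an arbitrary vertex $u\in C$ to the new center. Consequently, reaching $r(C_u)\in\mathcal{C}_{i-1}$ from $u$ requires following the chain of center-to-center hops accumulated over phases $1,\dots,i-1$, i.e.\ $i-1$ hops rather than one. Your per-segment cost becomes $\Theta(i)$ hops instead of $3$, which inflates $\beta$ by a $\log k$ factor and misses the stated bound. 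The paper avoids this by an explicit modification you did not make: in the hopset version of step~(3), \emph{every vertex} of the joining cluster adds a direct hop to the center of its new cluster (costing $O(n)$ hops per phase, well within budget). You need the same modification, and you should also note that it is what supplies the ``hop from $v$ to $r(C_v)$'' when $v$ happens to already lie in $V(\mathcal{C}_{i-1})$, since step~(1) only fires for $v\notin V(\mathcal{C}_{i-1})$.

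Two smaller points. First, your explanation of the $18/\epsilon$ (versus $16/\epsilon$) is not the actual source: the recursion with $\alpha_i=(4/\epsilon)r_{i-1}$ still gives the multiplier $5+16/\epsilon$; the $18/\epsilon$ in the statement arises because the last-segment additive term yields stretch $3+\tfrac{9}{8}\epsilon$, and one then substitutes $\epsilon'=\tfrac{8}{9}\epsilon$ to obtain $3+\epsilon$, turning $16/\epsilon'$ into $18/\epsilon$. Second, in weighted graphs your segments must end at actual vertices, so you need the standard two-step definition ($v'_j$ the furthest vertex within $\Delta_j$, then $v_{j+1}$ the next vertex on the path), which gives one extra $G$-edge per segment; the per-segment hop count is therefore $4$, not $3$.
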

	
\paragraph{Algorithm Description.}	
For simplicity we fix a distance range $[d,2d]$. Lemma \ref{lem:3plusepshopset} follows by taking care of all $\log \Lambda$ ranges. Furthermore, for simplicity we assume throughout that $4/\epsilon$ is an integer. The algorithm has two stages. In the \textbf{first stage} it calls Procedure $\ClusterHop$ (from Sec. \ref{sec:first-regime-hopset}) for a \emph{single} iteration with a radius parameter 
$$r_{0}=d/(2\cdot R'), \mbox{~~where~~} R'=(5+16/\epsilon)^{\lceil \log{k} \rceil}~.$$
For completeness, and due to its simplicity we add a complete description of this single iteration. The procedure samples each vertex $v\in V$ into a subset $A_{1}\subseteq{V}$ with probability $n^{-1/k}$. For each $v\in V$ let $p(v)$ its closest vertex in $A_{1}$. For each sampled vertex $a\in A_{1}$ define its cluster $$C(a)=\{u ~|~p(u)=a, ~\dist_{G}(a,u)\le r_{0}\}~.$$

The $0$-level clustering is given by $\mathcal{C}_{0}=\{C(a)~ \mid ~a\in A_{1}\}$. Finally, add to the hopset $H_0$, the hops $(v,p(v))$ for every $v\in V$. In addition, for every unclustered vertex $v\in V\setminus V(\mathcal{C}_{0})$ add to $H_{0}$, the hop to each vertex $u$ satisfying that $\dist_{G}(v,u)\le \dist_{G}(v,p(v))$. The procedure outputs the clustering $\mathcal{C}_{0}$ with $n^{1-1/k}$ clusters, and a partial hopset $H_{0}$ of size $n^{1+1/k}$, in expectation. 

The \textbf{second stage} has $T=\lceil \log k\rceil$ clustering phases. Starting with $\mathcal{C}_{0}$, in each phase $i\geq 1$, given is a clustering $\mathcal{C}_{i-1}$ of expected size $n_{i-1}=n^{1-\frac{2^{i-1}}{k}}$ and with radius at most 
$$r_{i-1}\le r_0\cdot (5+ \frac{16}{\epsilon})^{i-2}~.$$ 
The output of the $i^{th}$ phase is a clustering $\mathcal{C}_i$ of expected size $n_{i}=n^{1-\frac{2^{i}}{k}}$, and a hopset $H_i$ that takes care of the unclustered vertices in $V(\mathcal{C}_{i-1})\setminus V(\mathcal{C}_{i})$. 
	
	We now zoom into $i^{th}$ phase and explain the construction of the clustering $\mathcal{C}_i$ and the hopset $H_i$. The phase is governed by two key parameters: the sampling probability of each cluster $p_i$ and the augmentation radius $\alpha_i$.
Similarly to the algorithm of Section \ref{sec:3epsspanner}, for every $i\geq 1$, define
\begin{equation}\label{eq:alpha-p}
\alpha_i = \frac{4}{\epsilon}\cdot r_{i-1} \mbox{~~and~~} p_i=|\mathcal{C}_{i-1}|/n~.
\end{equation}
The description of the $i^{th}$ phase for every $i \in \{1,\ldots, T\}$ is as follows:
\begin{enumerate}
\item Each unclustered vertex $v \in V\setminus {V(\mathcal{C}_{i-1})}$ with $\cdist_{G}(v,\mathcal{C}_{i-1})\le r_{i-1}+\alpha_{i}$ adds to $H_{i}$ a hop to its closest cluster center in $\mathcal{C}_{i-1}$.

\item Let $\mathcal{C}' \subseteq \mathcal{C}_{i-1}$ be the collection of clusters obtained by 
sampling each cluster $C_\ell \in \mathcal{C}_{i-1}$ independently with probability of $p_i$.

\item Each cluster $C \in \mathcal{C}_{i-1}$ such that $\cdist_{G}(C,\mathcal{C}')\le 4\cdot r_{i-1}+4\cdot \alpha_{i}$, joins the its closest cluster in $\mathcal{C}'$ (based on the $\cdist$ measure).
All the vertices in $C$ add to $H_i$, a hop to the center of their new cluster.

\item The clustering $\mathcal{C}_{i}$ consists of all sampled clusters in $\mathcal{C}'$ augmented by their nearby clusters in $\mathcal{C}_{i-1}$ (i.e., at center-distance at most $4\cdot r_{i-1}+4\cdot \alpha_{i}$).

\item The center of each cluster $C \in \mathcal{C}_{i-1}$ such that $\cdist_{G}(C,\mathcal{C}')> 4\cdot r_{i-1}+4\cdot \alpha_{i}$, adds to $H_{i}$ a hop to the center of any cluster $C'\in\mathcal{C}_{i-1}$ with $\cdist_{G}(C,C')\le 2\cdot r_{i-1}+2\cdot \alpha_{i}$. 
\end{enumerate}
This completes the description of the $i^{th}$ phase. Let $H=\bigcup_{i=0}^{T} H_i$ and let $\mathcal{C}_T$ be the output clustering of the last phase $T$. In the analysis section we show that $\mathcal{C}_T$ contains at most a one cluster $C$, in expectation . We then add the output hopset $H$, a hop from each vertex at center-distance at most $r_{T}+2d$ from $C$ to the center of the cluster\footnote{For the sake of the efficient implementation of Sec. \ref{sec:eff-hopsets}, we connect only vertices up to center-distance $r_{T}+2d$ to the center of the last cluster, even-though with respect to the size of the hopset, we can afford ourselves to connect the center of $C$ to all nodes.}.
	
\paragraph{Stretch Analysis.}
For the sake of the stretch analysis, we use the following definition:
\begin{definition}
A vertex $v\in V$ is $0$-unclustered if $v\notin V(\mathcal{C}_{0})$. For any $i\ge 1$, a vertex $v\in V$ is $i$-unclustered if it is in $V(\mathcal{C}_{i-1})\setminus{V(\mathcal{C}_{i})}$. Finally, a vertex $v\in V$ is clustered if it is in the last cluster.
\end{definition}
We first make a simple observation:
\begin{observation}
\label{obs:rad3epshopset}
For every $0 \in \{1,\ldots, T\}$, $r_i \leq (5+ 16/\epsilon)^{i}\cdot r_{0}$. In particular, the radius of cluster in the final clustering $\mathcal{C}_T$ is $r_T \le d/2$.
\end{observation}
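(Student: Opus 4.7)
The plan is a straightforward induction on $i$, exactly analogous to the proof of Observation~\ref{obs:rad3epsspanner}. The key structural fact I would use is that each cluster in $\mathcal{C}_i$ is a ``star of clusters'' from $\mathcal{C}_{i-1}$: its center is the center of some sampled cluster $C \in \mathcal{C}'$, and the remaining members are clusters $C' \in \mathcal{C}_{i-1}$ whose centers lie within $\cdist_G$-distance $4 r_{i-1} + 4\alpha_i$ from $r(C)$. So any vertex $v$ in the new cluster satisfies
\[
\dist_G(v, r(C)) \;\leq\; \cdist_G(v\text{'s old cluster in } \mathcal{C}_{i-1}, \, r(C)) + r_{i-1} \;\leq\; (4 r_{i-1} + 4\alpha_i) + r_{i-1}~,
\]
which, after plugging $\alpha_i = (4/\epsilon)\,r_{i-1}$ from Eq.~(\ref{eq:alpha-p}), yields the one-step recursion
\[
r_i \;\leq\; 5\, r_{i-1} + 4\alpha_i \;=\; \left(5 + \tfrac{16}{\epsilon}\right) r_{i-1}~.
\]

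First I would unroll this recursion. The base case $i=1$ follows immediately from the bound above applied to $r_0$, giving $r_1 \leq (5+16/\epsilon)\,r_0$. The inductive step is just one application of the same bound: assuming $r_{i-1} \leq (5+16/\epsilon)^{i-1}\,r_0$, multiplying by $(5+16/\epsilon)$ yields $r_i \leq (5+16/\epsilon)^i\, r_0$, establishing the main claim.

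For the ``in particular'' part, I would specialize to $i = T = \lceil \log k \rceil$. By the first part, $r_T \leq (5+16/\epsilon)^{\lceil \log k \rceil} \cdot r_0$. Recall from the algorithm description that we chose $R' = (5+16/\epsilon)^{\lceil \log k \rceil}$ and $r_0 = d/(2R')$ precisely so that these two factors cancel, yielding
\[
r_T \;\leq\; R' \cdot \frac{d}{2R'} \;=\; \frac{d}{2}~.
\]

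No obstacle is anticipated — this is a purely mechanical calculation verifying that the parameters $r_0$, $R'$, $T$, $\alpha_i$ were chosen consistently so that the geometric blow-up of radii over $T$ phases is exactly absorbed by the initial shrinkage factor $1/(2R')$. The only mild subtlety is making sure the recursion $r_i \leq (5+16/\epsilon)\,r_{i-1}$ is tight in the sense needed, but since the star construction of step~(4) of phase~$i$ literally defines the new cluster radius as at most $r_{i-1} + (4 r_{i-1} + 4\alpha_i)$, this holds by construction and requires no additional argument.
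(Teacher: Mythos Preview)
Your proposal is correct and follows essentially the same approach as the paper: both argue by induction on $i$, use the star-of-clusters structure of $\mathcal{C}_i$ to derive the one-step recursion $r_i \le 5r_{i-1} + 4\alpha_i = (5+16/\epsilon)r_{i-1}$, and then plug in $T = \lceil \log k \rceil$ together with $r_0 = d/(2R')$ to get $r_T \le d/2$. The only cosmetic difference is that the paper takes $i=0$ as the (trivial) base case rather than $i=1$.
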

\begin{proof}
The claim is shown by induction on $i$. The base case $i=0$ is trivial. Assume that the claim holds up to $i-1 \geq 0$, and consider the $i^{th}$ where the clustering $\mathcal{C}_i$ is defined based on the clustering $\mathcal{C}_{i-1}$. Each cluster in  $\mathcal{C}_i$ is formed by a star: the head of the star is the sampled cluster $C$, connected to other clusters $C' \in \mathcal{C}_{i-1}$ with center-distance at most $4\cdot r_{i-1}+4\alpha_{i}$. The radius of this star of clusters is bounded by:
$$r_i \leq r_{i-1}+4\cdot r_{i-1}+4\alpha_{i}= 5\cdot r_{i-1}+16/\epsilon \cdot r_{i-1} \leq \left (5+16/\epsilon \right)^{i}\cdot r_{0}~,$$
where the second inequality follows by plugging the bound on $r_{i-1}$ obtained from the induction assumption, and the bound on $\alpha_i$ from Eq. (\ref{eq:alpha-p}).
\end{proof}
	
\begin{observation}
\label{obs:clusters}
For $0\le i \le \lceil \log{k}\rceil$, in expectation $|\mathcal{C}_{i}|=n^{1-\frac{2^{i}}{k}}$ therefore after $T=\lceil \log{k}\rceil$ phases, there is at most one cluster in $\mathcal{C}_T$, in expectation.
\end{observation}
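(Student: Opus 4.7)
\noindent
The plan is to prove the statement by straightforward induction on $i$, mirroring the structure of Observation \ref{obs:clusters3epsspanner-nclusters} in Sec. \ref{sec:3epsspanner} but now with the sampling probability $p_1 = n^{-1/k}$ in the initial $\ClusterHop$ stage replacing the singleton-stage of the spanner algorithm. The base case will isolate the size of the initial clustering $\mathcal{C}_0$; the inductive step will then track how the sampling probabilities $p_i = |\mathcal{C}_{i-1}|/n$ shrink the clustering by a quadratic factor at each phase; and the final substitution $i = T = \lceil \log k\rceil$ will give the one-cluster conclusion.

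For the base case $i=0$, I will argue directly from the description of the first stage: Procedure $\ClusterHop$ samples each vertex of $V$ independently with probability $n^{-1/k}$ to form $A_1$, and every sampled vertex becomes a distinct cluster center in $\mathcal{C}_0$. By linearity of expectation, $\mathbb{E}[|\mathcal{C}_0|] = n \cdot n^{-1/k} = n^{1-1/k} = n^{1-2^0/k}$, matching the claim for $i=0$.

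For the inductive step, suppose $\mathbb{E}[|\mathcal{C}_{i-1}|] = n^{1-2^{i-1}/k}$. Inspecting step (2) of phase $i$, each cluster in $\mathcal{C}_{i-1}$ is sampled independently with probability $p_i = |\mathcal{C}_{i-1}|/n$, and by steps (3)--(4) every cluster in $\mathcal{C}_i$ arises from exactly one such sampled cluster (augmented by its nearby absorbees). Hence conditional on $|\mathcal{C}_{i-1}|$, we have $\mathbb{E}[|\mathcal{C}_i| \mid |\mathcal{C}_{i-1}|] = |\mathcal{C}_{i-1}|^2/n$, and taking expectation together with the induction hypothesis yields $\mathbb{E}[|\mathcal{C}_i|] = n^{1 - 2^i/k}$. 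Plugging $i = T = \lceil\log k\rceil$ gives $\mathbb{E}[|\mathcal{C}_T|] \le n^{1 - k/k} = 1$, as desired.

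The only mildly subtle point (and the one I would handle carefully rather than wave at) is that the sampling in phase $i$ is done with probability $p_i$ that itself depends on the random quantity $|\mathcal{C}_{i-1}|$, so strictly speaking one applies the tower property $\mathbb{E}[|\mathcal{C}_i|] = \mathbb{E}\!\left[\mathbb{E}[|\mathcal{C}_i| \mid \mathcal{C}_{i-1}]\right] = \mathbb{E}[|\mathcal{C}_{i-1}|^2]/n$; this is why the statement is only an \emph{expected}-size bound and matches the analogous treatment in Obs.~\ref{obs:clusters3epsspanner-nclusters}. I do not anticipate any genuine obstacle beyond a clean recording of this tower-property step.
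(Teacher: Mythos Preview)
Your proposal is correct and mirrors the paper's own proof essentially step for step: induction on $i$, with the base case $i=0$ from the initial $n^{-1/k}$ sampling in $\ClusterHop$, the inductive step from $p_i=|\mathcal{C}_{i-1}|/n$ giving $|\mathcal{C}_{i-1}|^2/n$ sampled clusters, and the conclusion by plugging $i=T$. Your explicit mention of the tower-property subtlety is in fact slightly more careful than the paper, which writes the same chain of equalities without comment.
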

\begin{proof}
By induction on $i$. For the base case consider $i=0$. In the first stage the algorithm samples each vertex with probability $n^{-1/k}$, thus we have $n^{1-1/k}$ clusters in expectation. Assuming that claim holds up to $i-1$, in phase $i$, each cluster in $\mathcal{C}_{i-1}$ is sampled independently with probability $|\mathcal{C}_{i-1}|/n$. Thus, $|\mathcal{C}_{i}|=|\mathcal{C}_{i-1}|^{2}/n=n^{1-\frac{2^{i}}{k}}$ in expectation.
\end{proof}

\begin{claim}[$0$-unclustered]
\label{cl:0-unc-3epshopset}
For any $0$-unclustered vertex $u\in V$, it holds that $\dist_{H}^{(1)}(u,v) = \dist_{H}(u,v)$ for all $v\in \Ball_{G}(u,r_{0})$
\end{claim}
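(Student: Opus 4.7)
The plan is to read off the claim directly from the definition of $0$-unclustered together with the single ``sparse case'' step of the first stage of the algorithm.

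First I would unpack what it means for $u$ to be $0$-unclustered. By the construction of $\mathcal{C}_0$ in the first stage, a vertex $x$ belongs to $V(\mathcal{C}_0)$ iff $\dist_G(x,p(x))\le r_0$, where $p(x)$ is the closest vertex to $x$ in the sample $A_1$. Hence $u\notin V(\mathcal{C}_0)$ forces $\dist_G(u,p(u))>r_0$. Consequently, for any $v\in \Ball_G(u,r_0)$ we have $\dist_G(u,v)\le r_0<\dist_G(u,p(u))$.

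Next I would invoke the rule that the algorithm applies to every $0$-unclustered vertex: it inserts into $H_0\subseteq H$ a hop from $u$ to every $w$ with $\dist_G(u,w)\le \dist_G(u,p(u))$, and the weight of this hop is $\dist_G(u,w)$ by the global convention on hop weights. In particular, the direct hop $(u,v)$ of weight $\dist_G(u,v)$ belongs to $H$. Therefore $\dist_{H}^{(1)}(u,v)\le \dist_G(u,v)$.

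Finally, I would close the loop with the trivial lower bound. Every hop $(x,y)\in H$ has weight $\dist_G(x,y)$, so any (not necessarily single-hop) path in $H$ from $u$ to $v$ has total length at least $\dist_G(u,v)$ by the triangle inequality in $G$. Hence $\dist_H(u,v)\ge \dist_G(u,v)$. Combining with $\dist_{H}^{(1)}(u,v)\ge \dist_H(u,v)$ (fewer-hop distances are always at least as large as unrestricted ones), we obtain
\[
\dist_G(u,v)\le \dist_H(u,v)\le \dist_H^{(1)}(u,v)\le \dist_G(u,v),
\]
which forces equality throughout and, in particular, $\dist_H^{(1)}(u,v)=\dist_H(u,v)$, as required. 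There is no real obstacle here; the only thing to be careful about is using the correct definition of ``$0$-unclustered'' (failing the radius condition $\le r_0$, not merely being unsampled into $A_1$), so that the inequality $\dist_G(u,v)<\dist_G(u,p(u))$ is genuinely available.
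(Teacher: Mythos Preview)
Your proof is correct and follows essentially the same approach as the paper: from $u\notin V(\mathcal{C}_0)$ one deduces $\dist_G(u,p(u))>r_0$, so the sparse-case rule inserts a direct hop $(u,v)$ of weight $\dist_G(u,v)$ for every $v\in\Ball_G(u,r_0)$. You go slightly further than the paper in explicitly closing the equality via the triangle-inequality lower bound, which is fine (the paper's proof is terser and just observes that the direct hop exists).
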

\begin{proof}
If $u$ is an $0$-unclustered vertex, then $A_{1}\cap \Ball_{G}(u,r_{0})=\emptyset$. Since we add hops from $u$ to any vertex at distance at most $\dist_{G}(u,p(u))>r_{0}$, it holds that $H$ contains a hop to any $v\in \Ball_{G}(u,r_{0})$.
\end{proof}
\begin{claim}[$i$-unclustered]
\label{claim:i-unc-3epshopset}
For any $i$-unclustered vertex $u$ for $i\ge 1$ and every $v \in \Ball_{G}(u,\alpha_{i})$, it holds that $\dist_H^{(3)}(u,v)\leq 3\cdot \dist_{G}(u,v)+4\cdot r_{i-1}$.  
\end{claim}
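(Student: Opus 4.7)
The plan is to mirror the spanner-side argument of Claim~\ref{claim:i-unc-3epsspanner}, replacing each ``add shortest path'' step by the corresponding weighted hop that the hopset construction actually inserts, and showing that the witness $u$-$v$ walk uses at most three hops. Concretely, since $u$ is $i$-unclustered we have $u\in V(\mathcal{C}_{i-1})$; let $C_u\in\mathcal{C}_{i-1}$ be its cluster. Because $v\in\Ball_G(u,\alpha_i)$ and $u\in V(\mathcal{C}_{i-1})$, the triangle inequality gives $\cdist_G(v,\mathcal{C}_{i-1})\le r_{i-1}+\dist_G(u,v)\le r_{i-1}+\alpha_i$. I will define $C_v\in\mathcal{C}_{i-1}$ to be the cluster whose center witnesses this distance, and show that $H$ contains the three hops $(u,r(C_u))$, $(r(C_u),r(C_v))$, $(r(C_v),v)$.

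The first two paragraphs of the proof would verify the hops and bound their weights. The hop $(u,r(C_u))$ exists and has weight at most $r_{i-1}$: it was installed either in the initial $\ClusterHop$ stage (every vertex gets a hop to its pivot $p(u)$) or in the phase in which $C_u$ was formed as the result of step~(3) of some earlier iteration (``all vertices in $C$ add\ldots a hop to the center of their new cluster''). The hop $(v,r(C_v))$ is obtained similarly: if $v\in V(\mathcal{C}_{i-1})$ it is the same kind of cluster-center hop, of weight $\le r_{i-1}$; and if $v\notin V(\mathcal{C}_{i-1})$, then since $\cdist_G(v,\mathcal{C}_{i-1})\le r_{i-1}+\alpha_i$, step~(1) of phase $i$ adds a hop from $v$ to $r(C_v)$, of weight at most $r_{i-1}+\dist_G(u,v)$. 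Finally, using $\cdist_G(C_u,C_v)\le \cdist_G(C_u,u)+\dist_G(u,v)+\cdist_G(v,C_v)\le 2r_{i-1}+2\dist_G(u,v)\le 2r_{i-1}+2\alpha_i$, the cluster $C_v$ is within the threshold $2r_{i-1}+2\alpha_i$ of $C_u$; since $u$ becomes unclustered in phase $i$, this is exactly the scenario of step~(5), which inserts a hop $(r(C_u),r(C_v))$ of weight at most $2r_{i-1}+2\dist_G(u,v)$.

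Summing the three hop-weights yields
\[
\dist_H^{(3)}(u,v)\le r_{i-1}+\bigl(2r_{i-1}+2\dist_G(u,v)\bigr)+\bigl(r_{i-1}+\dist_G(u,v)\bigr)=4r_{i-1}+3\dist_G(u,v),
\]
which is the stated bound.

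The main subtlety, and the step I would be most careful about, is making sure that the relevant hop from step~(5) is genuinely installed: step~(5) only fires for clusters whose nearest sampled cluster is strictly farther than $4r_{i-1}+4\alpha_i$, but the hops it adds are to \emph{every} cluster at center-distance at most $2r_{i-1}+2\alpha_i$, not only to sampled ones. I need to check that $C_u$ indeed enters step~(5) (which is exactly the event that $u$ is $i$-unclustered, since any cluster joining a sampled cluster via step~(3) would have put $u$ into $V(\mathcal{C}_i)$), and that $C_v\in\mathcal{C}_{i-1}$ satisfies the $2r_{i-1}+2\alpha_i$ threshold for the hop from $r(C_u)$ — both of which follow from the estimates above. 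A minor edge case is $C_u=C_v$, in which case the argument collapses to at most two hops and the bound holds trivially.
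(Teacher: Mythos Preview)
Your proposal is correct and follows essentially the same approach as the paper's proof: identify the three hops $(u,r(C_u))$, $(r(C_u),r(C_v))$, $(r(C_v),v)$, verify each is present in $H$ via steps~(1), (3), and~(5) of phase~$i$ (and earlier phases), and sum their weights to obtain $4r_{i-1}+3\dist_G(u,v)$. You are in fact slightly more careful than the paper in separately justifying where the hop $(u,r(C_u))$ originates and in handling the case $v\in V(\mathcal{C}_{i-1})$ versus $v\notin V(\mathcal{C}_{i-1})$; one small wrinkle is that when $v\in V(\mathcal{C}_{i-1})$ the available hop goes to the center of $v$'s own cluster rather than to the globally closest center, but since that center is also within $r_{i-1}$ of $v$, the same triangle-inequality bound on $\cdist_G(C_u,\cdot)$ and the same total still hold.
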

\begin{proof}
Consider an $i$-unclustered vertex $u$ and $v \in \Ball_{G}(u,\alpha_{i})$. Let $C_{u}\in \mathcal{C}_{i-1}$ be the cluster of $u$. Thus $\cdist_G(v, \mathcal{C}_{i-1})\leq r_{i-1}+\dist_{G}(v,u)$. Let $C_v$ be the cluster with the closest center to $v$ in $\mathcal{C}_{i-1}$, then the algorithm adds to $H_i$ a hop from $v$ to the center of $C_v$. We have that:
\begin{eqnarray*}
\cdist_{G}(C_{u},C_{v})\le\cdist_{G}(C_{u},u)+\dist_{G}(u,v)+\cdist_{G}(v,C_{v})
\le 2\cdot r_{i-1}+2\cdot \dist_{G}(u,v)\le 2\cdot r_{i-1}+2\alpha_i.
\end{eqnarray*}
Since $u$ becomes unclustered in phase $i$, the algorithm adds to $H_i$ a hop from the center of $C_{u}$, to the center of $C_{v}$. Overall, we have the following $3$-hop $u$-$v$ path: $(u,r(C_{u}))\circ (r(C_{u}),r(C_{v}))\circ (r(C_{v}),v)$, where $r(C_{u}),r(C_{v})$ are the centers of $C_{u},C_{v}$ respectively. The length of this path is bounded by:
\begin{eqnarray*}
\dist_{H}^{(3)}(u,v)&\le& \cdist_G(u,C_u)+\cdist_G(C_u,C_v)+\cdist_G(C_{v},v)
\\& \leq & 
r_{i-1}+(2\cdot r_{i-1}+2\cdot \dist_{G}(u,v))+r_{i-1}+\dist_{G}(u,v)= 3\cdot \dist_{G}(u,v)+4\cdot r_{i-1}~.
\end{eqnarray*}
\end{proof}
\begin{claim}
\label{claim:3epsclustered}
Let $u,v\in V$ be a pair of vertices at distance $[d,2d]$ in $G$, and let $P_{uv}$ be their shortest path in $G$. If there is a clustered vertex $w\in P_{uv}$ then $\dist_{H}^{(2)}(u,v)\le \dist_{G}(u,v)+ 2\cdot r_{T}\le 3\cdot d$.
\end{claim}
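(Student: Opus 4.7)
The proof will be a short direct verification using the final ``cleanup'' step of the algorithm, which attaches to the center of the unique expected cluster in $\mathcal{C}_T$ every vertex within center-distance $r_T+2d$. The plan is to extract the single relevant cluster containing the given clustered vertex $w$, verify that both endpoints $u$ and $v$ lie within the attachment radius, and then exhibit the required $2$-hop path through the center.

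Concretely, let $C \in \mathcal{C}_T$ be the cluster with $w \in V(C)$, and denote its center by $z := r(C)$. Since $w \in V(C)$, Observation \ref{obs:rad3epshopset} gives $\dist_G(w,z) \le r_T \le d/2$. Because $w$ lies on the shortest path $P_{uv}$, we have $\dist_G(u,w), \dist_G(v,w) \le \dist_G(u,v) \le 2d$. Thus
\[
\cdist_G(u,C) \le \dist_G(u,w)+\dist_G(w,z) \le 2d + r_T,
\]
and symmetrically $\cdist_G(v,C) \le 2d + r_T$. By the final step of the algorithm (after phase $T$), the hopset $H$ therefore contains the weighted hops $(u,z)$ and $(z,v)$, of weights $\dist_G(u,z)$ and $\dist_G(z,v)$ respectively.

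Using these two hops in sequence yields a $2$-hop $u$-$v$ path in $G \cup H$ of length
\[
\dist_G(u,z) + \dist_G(z,v) \le \bigl(\dist_G(u,w)+\dist_G(w,z)\bigr) + \bigl(\dist_G(z,w)+\dist_G(w,v)\bigr) = \dist_G(u,v) + 2\dist_G(w,z) \le \dist_G(u,v) + 2r_T.
\]
Finally, plugging in $\dist_G(u,v) \le 2d$ and the bound $r_T \le d/2$ from Observation \ref{obs:rad3epshopset} gives $\dist_H^{(2)}(u,v) \le 2d + d = 3d$, as required. There is no real obstacle here — the only ``content'' is making sure the radius guarantee $r_T \le d/2$ from Observation \ref{obs:rad3epshopset} exactly matches the $r_T+2d$ attachment radius used in the last step, so that $u$ and $v$ are indeed hooked to the center $z$.
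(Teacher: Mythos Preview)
The proposal is correct and follows essentially the same argument as the paper: both identify the center $z$ of the (unique) final cluster containing $w$, verify that the final attachment step places the hops $(u,z)$ and $(z,v)$ into $H$, and then bound the resulting $2$-hop path via the triangle inequality and $r_T\le d/2$. Your version is in fact slightly more explicit in checking the $r_T+2d$ attachment-radius condition, which the paper leaves implicit.
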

\begin{proof}
Since $w$ is clustered, it follows that in the last step we add hops from $u$ and $v$ to the center $z$ of the last cluster, therefore:
\begin{eqnarray*}
\dist_{H}^{(2)}(u,v)&\le& \dist_{H}(u,z)+\dist_{H}(z,v)= \dist_{G}(u,z)+\dist_{G}(z,v)
\\&\le& \dist_{G}(u,w)+\dist_{G}(w,z)+\dist_{G}(z,w)+\dist_{G}(w,v)\le \dist_{G}(u,v)+2\cdot r_{T}\le 3\cdot d,
\end{eqnarray*}
where the last inequality follows by Observation \ref{obs:rad3epshopset}.
\end{proof}
We are now ready to complete the stretch argument and show Lemma \ref{lem:3plusepshopset}.
\begin{proof}[Lemma \ref{lem:3plusepshopset}]
Let $u,v\in V$ be vertices at distance $d'\in [d,2d]$ in $G$, and let $P$ be the shortest path between them in $G$. First observe that by Claim \ref{claim:3epsclustered}, if there is some clustered vertex in $P$ then we are done, thus we assume there is none. We define a sequence of vertices between $u$ and $v$ in the following way: let $v_{0}:=u$, and iteratively, given $v_{j}\neq v$, set  $v'_{j}$ as the furthest vertex from $v_{j}$ on the segment $P[v_{j},v]$, at distance at most $\Delta_{j}$ from $v_j$, where: 
\begin{equation*}
\Delta_{j}=\begin{cases}
r_{0} & v_{j}\text{ is \ensuremath{0}-unclustered}\\
\min\{\alpha_{i},\dist_{G}(v_{j},v)\} & v_{j}\text{ is \ensuremath{i}-unclustered for \ensuremath{i\ge 1}}
\end{cases}
\end{equation*}
Furthermore, for each $v'_{j}$, set $v_{j+1}$ to be the be the next vertex on $P[v'_{j},v]$, or just $v$ in case $v'_{j}=v$. By Claim \ref{cl:0-unc-3epshopset}, if $v_{j}$ is $0$-unclustered then $\dist_{G\cup H}^{(2)}(v_{j},v_{j+1})=\dist_{G}(v_{j},v_{j+1})$. If $v_{j}$ is $i$-unclustered for $\geq 1$, then there are two cases. 
First consider the case where $\dist_{G}(v_{j},v_{j+1})\ge \alpha_{i}$. In this case by Claim \ref{claim:i-unc-3epshopset}, it holds that $\dist_{H}^{(3)}(v_{j},v'_{j})\le 3\cdot \dist_{G}(v_{j},v'_{j})+4\cdot r_{i-1}$, thus since $\dist_{G}(v_{j},v_{j+1}) > \alpha_{i}$, we have that,
\begin{eqnarray*}
\dist_{H}^{(4)}(v_{j},v_{j+1})\le 3\cdot \dist_{G}(v_{j},v'_{j})+4\cdot r_{i-1}+\dist_{G}(v'_{j},v_{j+1})\le (3+\epsilon)\cdot \dist_{G}(v_{j},v_{j+1}).
\end{eqnarray*}
Now, assume that $\dist_{G}(v_{j},v_{j+1})<\alpha_{i}$, that is $v_{j+1}=v$. By Claim \ref{claim:i-unc-3epshopset} it holds that \newline $\dist_{H}^{(3)}(v_{j},v)\le 3\cdot \dist_{G}(v_{j},v)+4\cdot r_{i-1}$. By Observation \ref{obs:rad3epshopset}, $r_{i-1} \le r_{T-1}\le \frac{d}{2\cdot (5+16/\epsilon)}$, an therefore, we get that
\begin{eqnarray*}
\dist_{H}^{(3)}(v_{j},v)\le 3\cdot \dist_{G}(v_{j},v)+4\cdot r_{T-1}\le 3\cdot \dist_{G}(v_{j},v)+\frac{\epsilon}{8}\cdot d~.
\end{eqnarray*}
Since the path $P$ is partitioned into at most $\ell=4\cdot R'$ segments, we have:
\begin{eqnarray*}
\dist_{H}^{(16\cdot R')}(u,v)&\le&\sum_{j=0}^{4\cdot R'}\dist_{H}^{(2)}(v_{j},v_{j+1})\le \frac{\epsilon}{8}\cdot d+\sum_{j=0}^{d}(3+\epsilon)\cdot \dist_{G}(v_{j},v_{j+1})
\\&\le& (3+1.125\cdot \epsilon)\cdot \dist_{G}(u,v)~.
\end{eqnarray*}
\end{proof}
Lemma \ref{lem:3plusepshopset} follows by using $\epsilon'=\frac{8}{9}\cdot \epsilon$, and repeating the algorithm for each of the $\log \Lambda$ distance ranges.

\paragraph{Size Analysis.} We next bound the total number of hops added to the hopset.
The first step contributes $O(n^{1+1/k})$ edges, in expectation. This follows by Fact \ref{fact:TZ}(iii).
At each phase $i \geq 1$ the algorithm adds the following hops to $H_i$. Each vertex at center-distance at most $2\cdot r_{i-1}+\alpha_{i}$ adds a hop to its closest cluster center in $\mathcal{C}_{i-1}$. This adds at most $n$ hops. In addition, in step (3), we add at most $n$ hops from each vertex to its new cluster center. 

By a similar argument to the proof of Lemma \ref{lem:expectedamountofsp}, it follows that in step (5) of the $i^{th}$ phase, the algorithm adds $1/p_{i}$ hops for each cluster $C \in \mathcal{C}_{i-1}$ in expectation, where $p_{i}$ are defined in Eq. \ref{eq:alpha-p}. Summing over all $|\mathcal{C}_{i-1}|$ clusters, this adds $|\mathcal{C}_{i-1}|/p_{i}=n$ hops by Eq. (\ref{eq:alpha-p}), in expectation. Overall in phase $i$, $O(n)$ edges are added to the hopset $H_i$, and summing over all $T$ phases gives a total of $O(\log{k}\cdot n)$ edges in expectation.  Finally, the last step adds at most $n-1$ edges, this completes proof of Lemma \ref{lem:3plusepshopset}.
\section{Efficient Computation of Spanners, Hopsets, and Applications}\label{sec:eff-con}
\subsection{Efficient Constructions of $(3+\epsilon,\beta)$ Spanners and Applications}\label{sec:spanner-app}
In this section, we provide efficient implementation of $(3+\epsilon,\beta)$ spanners in various computational settings, and show their applications to shortest path computation.

\paragraph{A Modified Meta-Algorithm.} 
The algorithm is similar to the algorithm of Section \ref{sec:3epsspanner} up to modifying the number of phases and the sampling probability of Eq. (\ref{eq:alpha-p}). As in \cite{ElkinN19}, we introduce an efficiency parameter $\rho \in (0,1]$ that determines the trade-off between the $\beta$ value, the number of edges in the spanner, and the construction time.
For a given parameter $\rho$, define:
$$i_{0}=\lceil \log{(k\cdot \rho)} \rceil, i_{1}=\lceil 2/\rho -1 \rceil, \mbox{~~and~~} T':=i_{0}+i_{1}~.$$ 
The modified algorithm applies $T':=i_{0}+i_{1}$ phases instead of  $T=\lceil \log{k}+1 \rceil$ phases. The sampling probabilities are modified as follows: for every $1\le i \le i_{0}$, let $p_{i}$ be as defined in Eq. (\ref{eq:alpha-p}), and for every $i_{0}\le i\le i_{1}$, set $p_{i}=p_{i_{0}}$. 
We first show the correctness of this modified algorithm and then analyze its implementation in several computational settings. 
\begin{observation}
	For all $1\le i \le T'$ it holds that $p_{i}\ge n^{-\rho/2}$.
	\label{obs:pinro}
\end{observation}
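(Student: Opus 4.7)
The plan is to exploit that the sequence $(p_i)_{i=1}^{T'}$ is non-increasing, so its minimum over the full range is attained at the changeover index $i_0$, and then to plug in the tuned value $i_0=\lceil \log(k\cdot\rho)\rceil$.

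First I would observe that substituting the cluster-count bound of Observation~\ref{obs:clusters3epsspanner-nclusters} (namely $|\mathcal{C}_{i-1}|=n^{1-2^{i-2}/k}$ in expectation, for $i\ge 2$) into the sampling rule of Eq.~(\ref{eq:alpha-p}) yields $p_i = n^{-2^{i-2}/k}$ for every $2\le i\le i_0$, while $p_1=n^{-1/k}=p_2$. In particular the initial segment $p_1,p_2,\ldots,p_{i_0}$ is non-increasing (the exponent doubles at each step for $i\ge 2$). By the modified algorithm's definition, $p_i = p_{i_0}$ for every $i_0<i\le T'=i_0+i_1$, and hence $\min_{1\le i\le T'} p_i = p_{i_0}$. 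So it suffices to lower-bound $p_{i_0}$.

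Since $i_0=\lceil \log(k\cdot\rho)\rceil$, we have $2^{i_0}\le 2\cdot k\cdot\rho$ and therefore $2^{i_0-2}\le k\cdot\rho/2$. Plugging in,
\[
p_{i_0}=n^{-2^{i_0-2}/k}\;\ge\; n^{-(k\cdot\rho/2)/k}\;=\;n^{-\rho/2},
\]
which establishes the observation. No real obstacle is expected: the cutoff $i_0=\lceil \log(k\cdot\rho)\rceil$ is chosen precisely so that the first $i_0$ ``halving'' phases bring the per-cluster sampling probability down to essentially $n^{-\rho/2}$ and no further, which is exactly what is needed to later argue that each subsequent ``constant-probability'' phase reduces $|\mathcal{C}_i|$ by a factor of at most $n^{\rho/2}$.
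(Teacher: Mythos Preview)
Your proposal is correct and follows essentially the same approach as the paper: both argue that the $p_i$ are non-increasing with minimum $p_{i_0}=|\mathcal{C}_{i_0-1}|/n$, and then bound $p_{i_0}$ using $i_0=\lceil\log(k\rho)\rceil$ together with Observation~\ref{obs:clusters3epsspanner-nclusters}. Your computation $2^{i_0}\le 2k\rho\Rightarrow 2^{i_0-2}\le k\rho/2$ is in fact a bit cleaner than the paper's version, which has an off-by-one index slip (it bounds $|\mathcal{C}_{i_0}|$ rather than $|\mathcal{C}_{i_0-1}|$) that you avoid.
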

\begin{proof}
	For each $i\le i_{0}$, by Eq. (\ref{eq:alpha-p}) we have $p_{i}= |\mathcal{C}_{i-1}|/n$, thus for all $1\le i \le T'$ it holds that $p_{i}\ge |\mathcal{C}_{i_{0}-1}|/n$. By Observation \ref{obs:clusters3epsspanner-nclusters} it holds that in expectation $|\mathcal{C}_{i_{0}}|=n^{1-2^{i_{0}-1}/k}\ge n^{1-2^{\log{(k\cdot \rho)}-1}/k}=n^{1-\rho/2}$, thus $p_{i}\ge n^{-\rho/2}$.
\end{proof}
\begin{lemma}\label{lem:few-bfs}
In each phase $i \in \{1,\ldots, T'\}$ consider the collection of BFS trees of depth $2\cdot r_{i-1}+2\cdot \alpha_{i}$ rooted at the centers of the clusters of $\mathcal{C}_{i-1}$ that are unclustered in $\mathcal{C}_i$. Then each vertex $v \in V$ appears in $O(n^{\rho/2} \cdot \log n)$ such trees w.h.p. 
\end{lemma}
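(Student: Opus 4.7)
The plan is to fix a vertex $v$ and a phase $i$, and to show that the number of BFS trees in the described collection containing $v$ is $O(n^{\rho/2}\log n)$ with high probability; the statement for all $v$ then follows by a union bound. Let $S_v$ denote the set of centers of clusters in $\mathcal{C}_{i-1}$ lying within distance $2r_{i-1}+2\alpha_i$ of $v$ in $G$. Since a BFS tree in the collection is rooted at the center of some unclustered cluster $C\in\mathcal{C}_{i-1}$ and has depth $2r_{i-1}+2\alpha_i$, a BFS tree contains $v$ iff its root lies in $S_v$ and the corresponding cluster is unclustered in $\mathcal{C}_i$.

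The key structural observation is that any two centers $c,c'\in S_v$ satisfy $\dist_G(c,c')\le 4r_{i-1}+4\alpha_i$ by the triangle inequality. Hence, if even a \emph{single} cluster whose center lies in $S_v$ is sampled into $\mathcal{C}'$ in step (2) of phase $i$, then every other cluster whose center lies in $S_v$ has $\cdist_G(\cdot,\mathcal{C}')\le 4r_{i-1}+4\alpha_i$ and therefore joins a sampled cluster in step (3), so \emph{no} cluster rooted in $S_v$ is unclustered in $\mathcal{C}_i$. Consequently, the number of BFS trees in the collection containing $v$ is nonzero only in the event that no center in $S_v$ is sampled, an event of probability $(1-p_i)^{|S_v|}\le e^{-p_i\cdot |S_v|}$.

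Now I would set the threshold $M:=C\log n/p_i$ for a sufficiently large constant $C$. If $|S_v|\le M$, then trivially at most $M$ BFS trees contain $v$. On the other hand, if $|S_v|>M$, then the probability that no center of $S_v$ is sampled is at most $e^{-p_i M}=n^{-C}$, so with probability $\ge 1-n^{-C}$ no BFS tree in the collection contains $v$. By Observation~\ref{obs:pinro} we have $p_i\ge n^{-\rho/2}$ uniformly over all phases $1\le i\le T'$, so $M=O(n^{\rho/2}\log n)$. A union bound over the $n$ vertices (and taking $C$ large enough, e.g., $C\ge 2$) yields that every $v\in V$ lies in at most $O(n^{\rho/2}\log n)$ of the BFS trees, with high probability, which is exactly the claim.

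The only potential subtlety is ensuring that the triangle inequality indeed applies uniformly: both ``near'' (clustered into the same sampled cluster) and ``unclustered'' classifications in step (3)–(5) of the algorithm are governed by the same radius $4r_{i-1}+4\alpha_i$, which is what makes the dichotomy above tight. Beyond this, the argument is a direct Baswana--Sen-style covering/clustering probabilistic bound, so no genuine obstacle is anticipated.
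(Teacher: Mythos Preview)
Your argument is correct and matches the paper's approach: both use the triangle-inequality observation that all centers in $S_v$ lie within $4r_{i-1}+4\alpha_i$ of each other, so if any one is sampled then none remains unclustered, and both then conclude via $p_i\ge n^{-\rho/2}$ from Observation~\ref{obs:pinro}. Your explicit threshold-and-case-split is in fact slightly cleaner than the paper's informal appeal to ``Chernoff'' on a random variable that is not a sum of independent indicators.
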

\begin{proof}
Fix a vertex $v\in V$, and let $C_{1},\ldots ,C_{\ell}$ be the clusters in $\mathcal{C}_{i-1}$ with center-distance at most $2\cdot r_{i-1}+2\cdot \alpha_{i-1}$ from $v$. Note that 
$\dist_G(r(C_j),r(C_{j'})\leq 4\cdot r_{i-1}+4\cdot \alpha_{i-1}$ for every $j,j' \in \{1,\ldots, \ell\}$ where $r(C)$ is the center of the cluster $C$. This implies that if one of these clusters in sampled, the all these clusters will be part of the clustering $\mathcal{C}_i$ and would not be part of step (5) in this $i$ phase. Therefore all clusters the expected number of BFS traversals that reach $v$ in step (5) of the phase $i$ is at most $\ell\cdot (1-p_{i})^{\ell}\le 1/p_{i}$. 
Since by Observation \ref{obs:pinro} for all $i$, $1/p_{i}\le n^{\rho/2}$, we have that each vertex is traversed by $O(n^{\rho/2})$ BFS traversals in expectation, and by the Chernoff Bound, w.h.p, each vertex is traversed by at most $O(\log{n}\cdot n^{\rho/2})$ traversals.
\end{proof}	
\begin{lemma}\label{lem:few-centers}
After $T'$ phases, the number of remaining clusters $\mathcal{C}_{T'}$ is at most $O(\log n)$ w.h.p.
\end{lemma}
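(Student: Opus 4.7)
The plan is to split the $T' = i_0 + i_1$ phases into the two sampling regimes and apply concentration phase by phase, feeding each bound into the next phase.

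For the first $i_0$ phases the sampling probability $p_i = |\mathcal{C}_{i-1}|/n$ shrinks doubly-exponentially, and Observation \ref{obs:clusters3epsspanner-nclusters} gives $\mathbb{E}[|\mathcal{C}_{i_0}|] \leq n^{1-\rho/2}$ by the choice $i_0 = \lceil \log(k\rho) \rceil$. I would upgrade this expectation statement into a high-probability one by applying Chernoff phase by phase, conditioning on $|\mathcal{C}_{i-1}|$ and noting that $|\mathcal{C}_i|$ is $\mathrm{Binomial}(|\mathcal{C}_{i-1}|,\, p_i)$. So long as the conditional mean is $\Omega(\log n)$, multiplicative Chernoff gives $|\mathcal{C}_i| \leq 2\,\mathbb{E}[|\mathcal{C}_i| \mid |\mathcal{C}_{i-1}|]$ with probability $1 - n^{-\Omega(1)}$; feeding this estimate into the next phase and union-bounding over the $O(\log k)$ phases yields $|\mathcal{C}_{i_0}| = O(n^{1-\rho/2})$ w.h.p.

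For the remaining $i_1 = \lceil 2/\rho - 1 \rceil$ phases the sampling probability is frozen at $p_{i_0}$, which by the same calculation as in Observation \ref{obs:pinro} matches $p_{i_0} = \Theta(n^{-\rho/2})$ (up to constants absorbed by the ceilings). I would again apply Chernoff conditionally in each phase: when $|\mathcal{C}_{i-1}|\cdot p_{i_0} \geq c\log n$ for a suitably large constant $c$, the multiplicative tail yields $|\mathcal{C}_i| \leq 2\,|\mathcal{C}_{i-1}|\, p_{i_0}$ w.h.p., while once the conditional mean drops below $c\log n$ the additive Chernoff tail gives $|\mathcal{C}_i| = O(\log n)$ w.h.p. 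Iterating across the $i_1$ phases and union-bounding yields
\[
|\mathcal{C}_{T'}| \;=\; O\!\Big(\max\!\big(|\mathcal{C}_{i_0}|\cdot p_{i_0}^{\,i_1},\; \log n\big)\Big) \;=\; O\!\Big(\max\!\big( n^{1 - \rho/2 - (\rho/2)(2/\rho - 1)},\; \log n \big)\Big) \;=\; O(\log n),
\]
since by choice of $i_1$ the exponent of $n$ reduces to $0$.

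The main technical obstacle is the transition between the multiplicative and additive Chernoff regimes: once the conditional mean approaches $\Theta(\log n)$, the multiplicative concentration loses its bite, and one must switch cleanly to the additive tail so that the additive $\log n$ slack is not amplified multiplicatively over the remaining phases. A secondary care point is to keep each per-phase failure probability at most $n^{-c}$ for a large enough constant $c$, so that the union bound across all $T' = O(\log k + 1/\rho)$ phases still yields a global high-probability statement.
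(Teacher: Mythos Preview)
Your phase-by-phase plan is more elaborate than the paper's argument, and it contains a genuine gap in the first regime. The paper simply tracks expectations through all $T'$ phases---quoting Observation~\ref{obs:clusters3epsspanner-nclusters} for the first $i_0$ phases and then multiplying by $p_{i_0}^{\,i_1}$ for the remaining ones---to obtain $\mathbb{E}[|\mathcal{C}_{T'}|]\le 1$, and then invokes a single Chernoff bound at the end. It never threads high-probability estimates through the squaring recursion.

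The gap in your plan is the sentence ``feeding this estimate into the next phase \ldots\ yields $|\mathcal{C}_{i_0}|=O(n^{1-\rho/2})$.'' In phases $2,\ldots,i_0$ the sampling probability is $p_i=|\mathcal{C}_{i-1}|/n$, so the conditional mean is $|\mathcal{C}_{i-1}|^2/n$: any multiplicative slack in $|\mathcal{C}_{i-1}|$ gets \emph{squared} when passed to $|\mathcal{C}_i|$. A factor $2$ at phase $1$ becomes $8$ at phase $2$, then $128$, and after $i_0$ phases the accumulated constant is $2^{2^{i_0}-1}=2^{\Theta(k\rho)}$. For $k=\Theta(\log n)$ this is $n^{\Theta(\rho)}$, which swamps the $n^{1-\rho/2}$ target. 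The obstacle you flagged---the multiplicative-to-additive Chernoff transition in the last $i_1$ phases---is by contrast harmless, since there the recursion is \emph{linear} in $|\mathcal{C}_{i-1}|$ and the slack only compounds to $2^{i_1}=2^{O(1/\rho)}$.

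If you want to keep the phase-by-phase route, shrink the per-phase deviation to $(1+\delta)$ with $\delta=\Theta(2^{-i_0})$; the conditional means in the first regime are all $n^{\Omega(1)}$, so the Chernoff failure probability $e^{-\Theta(\delta^2\mu)}$ remains $n^{-\Omega(1)}$, while the accumulated slack stays $(1+\delta)^{2^{i_0}}=O(1)$. Alternatively, adopt the paper's shortcut: once phase $i_0$ is complete, the remaining $i_1$ samplings all use the \emph{same} fixed probability $p_{i_0}$, so conditionally $|\mathcal{C}_{T'}|$ is a single binomial, and one Chernoff bound suffices---no threading needed.
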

\begin{proof}
Until the $i_{0}^{th}$ phase the algorithm runs similarly to the algorithm in Section \ref{sec:3epsspanner}, thus by Observation \ref{obs:clusters3epsspanner-nclusters}, in expectation $|\mathcal{C}_{i_{0}}|=n^{1-2^{i_{0}-1}/k}$. In each of the remaining $i_{1}$ phases we sample with probability $p_{i_{0}}$ which by Observation \ref{obs:pinro} is at least $n^{-\rho/2}$, thus after $i_{1}$ such sampling steps, we will have  $|\mathcal{C}_{T'}|=n^{1-2^{(i_{0}-1)}/k-(\rho\cdot i_{1})/2}\le n^{1-(\rho/2)(1+i_{1})}\le 1$ in expectation. By Chernoff it follows that w.h.p $|\mathcal{C}_{T'}|=O(\log{n})$.
\end{proof}

\begin{observation}
\label{obs:rad-meta-eff-spanner}
The modified algorithm computes a $(3+\epsilon,\beta)$ spanner $H \subseteq G$ with $\beta=(5+16/\epsilon)^{\log{\rho}+2/\rho}\cdot k^{\log(5+16/\epsilon)}$ and $O(n^{1+1/k}+\beta\cdot n)$ edges w.h.p.
\end{observation}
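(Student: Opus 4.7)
The plan is to argue that the two modifications to the base algorithm of Section~\ref{sec:3epsspanner} (truncating to $T'=i_0+i_1$ phases and capping the sampling probability at $p_{i_0}$ once $i>i_0$) do not disturb the per-phase stretch analysis at all, and only perturb the final cluster radius and the size bound in controlled ways. Concretely, I would mirror the proof of Lemma~\ref{lem:3plusepsspanner} and update two quantities: the final radius $r_{T'-1}$ that feeds into Claim~\ref{claim:clustered-3epsspanner}, and the per-phase edge count from step~(5).

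For the stretch bound, I first note that the local radius recurrence $r_i\le (5+16/\epsilon)\cdot r_{i-1}$ established in Obs.~\ref{obs:rad3epsspanner} depends only on the construction inside a single phase (attaching $4r_{i-1}+4\alpha_i$-close clusters to sampled ones), and is therefore untouched. Iterating it $T'-1$ times and plugging in $T'=\lceil\log(k\rho)\rceil+\lceil 2/\rho-1\rceil$ yields $r_{T'-1}=O(k^{\log(5+16/\epsilon)}\cdot (5+16/\epsilon)^{\log\rho+2/\rho})=O(\beta)$. The $i$-unclustered stretch bound of Claim~\ref{claim:i-unc-3epsspanner} and the $1$-unclustered bound of Claim~\ref{cl:1-unc} both hold verbatim for every $i\in\{1,\dots,T'\}$, since they refer only to the per-phase construction. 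For clustered vertices, Lemma~\ref{lem:few-centers} guarantees $|\mathcal{C}_{T'}|=O(\log n)$ w.h.p., and the final step of the algorithm adds BFS trees rooted at these $O(\log n)$ centers, which gives the analog of Claim~\ref{claim:clustered-3epsspanner} with additive stretch $2r_{T'}=O(\beta)$. Plugging these into the segmentation argument of Lemma~\ref{lem:3plusepsspanner} yields $\dist_H(u,v)\le (3+\epsilon)\,\dist_G(u,v)+O(\beta)$.

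For the size bound, the first stage contributes $O(n^{1+1/k})$ edges as before. For phase $i$, Lemma~\ref{lem:expectedamountofsp} (whose proof is oblivious to the exact choice of $p_i$) gives that each unclustered cluster adds $1/p_i$ shortest paths in expectation, each of length $O(r_i)$, so phase~$i$ contributes $O(r_i\cdot |\mathcal{C}_{i-1}|/p_i)$ edges. The crucial algebraic point is that $|\mathcal{C}_{i-1}|/p_i\le n$ for \emph{every} $i\in\{1,\dots,T'\}$: for $i\le i_0$ this is the definition $p_i=|\mathcal{C}_{i-1}|/n$, and for $i>i_0$ we use monotonicity $|\mathcal{C}_{i-1}|\le |\mathcal{C}_{i_0-1}|$ together with $p_i=p_{i_0}=|\mathcal{C}_{i_0-1}|/n$. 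Thus each phase adds $O(r_i\cdot n)$ edges and the geometric sum $\sum_{i\le T'}r_i=O(r_{T'-1})=O(\beta)$ gives a total of $O(\beta\cdot n)$ edges across phases. The final BFS-trees step contributes $O(|\mathcal{C}_{T'}|\cdot n)=O(n\log n)=O(\beta n)$ w.h.p.\ via Lemma~\ref{lem:few-centers}. Upgrading the expectation bound to w.h.p.\ is done by invoking Lemma~\ref{lem:few-bfs}, which concentrates the number of shortest paths each vertex participates in per phase at $O(n^{\rho/2}\log n)$, and then union-bounding over the $T'=O(\log k+1/\rho)$ phases.

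The main obstacle I anticipate is precisely the one just handled: ensuring that capping the sampling probability at $p_{i_0}$ (so that each phase sub-samples less aggressively) does not blow up the count of shortest paths added in step~(5). The naive worry is that a smaller $p_i$ means $1/p_i$ is larger, so more paths per cluster are added. The resolution is the invariant $|\mathcal{C}_{i-1}|/p_i\le n$, which is a clean monotonicity argument but one that must be carefully verified; once it is in place, the rest of the size analysis is a geometric-series calculation that drops out of Obs.~\ref{obs:rad3epsspanner} and Lemma~\ref{lem:expectedamountofsp} essentially for free.
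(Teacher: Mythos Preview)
Your approach is essentially the same as the paper's: both note that only the sampling probabilities change, so the per-phase stretch claims (Claims~\ref{cl:1-unc}, \ref{claim:i-unc-3epsspanner}, \ref{claim:clustered-3epsspanner}) carry over verbatim with the new final radius $r_{T'}$ obtained from Obs.~\ref{obs:rad3epsspanner}, and both bound the size phase-by-phase via Lemma~\ref{lem:expectedamountofsp}. Your explicit verification that the invariant $|\mathcal{C}_{i-1}|/p_i\le n$ survives the cap at $p_{i_0}$ (using $|\mathcal{C}_{i-1}|\le|\mathcal{C}_{i_0-1}|$ for $i>i_0$) is in fact more careful than the paper, which simply asserts ``$O(n)$ shortest paths in expectation'' without revisiting this point for the modified probabilities.

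One caveat on the high-probability upgrade: invoking Lemma~\ref{lem:few-bfs} bounds the number of BFS trees each vertex lies in by $O(n^{\rho/2}\log n)$, which translates to at most $O(n^{1+\rho/2}\log n)$ step-(5) edges per phase rather than $O(r_i\, n)$, so it does not recover the stated $O(\beta n)$ bound w.h.p. That said, the paper's own size argument is likewise purely in expectation (and its ``constant number of BFS trees'' in the final step should be $O(\log n)$ by Lemma~\ref{lem:few-centers}), so this is a looseness in the observation's statement rather than a gap unique to your reasoning.
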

\begin{proof}
The final radius after $T'$ phases is given by plugging $T'$ in  Observation \ref{obs:rad3epsspanner}:
\begin{eqnarray}\label{eq:rad-T}
r_{T'}\le 2\cdot (5+16/\epsilon)^{i_{0}+i_{1}-1}\le 2\cdot (5+16/\epsilon)^{\log{(k\cdot \rho)}+2/\rho}=2\cdot (5+16/\epsilon)^{\log{\rho}+2/\rho}\cdot k^{\log(5+16/\epsilon)}~.
\end{eqnarray}
Since we only modified the sampling probabilities of the spanner in \ref{sec:3epsspanner}, it follows that the stretch arguments are unchanged, except that the final radius $r_{T'}$ is larger . Thus by Claim \ref{claim:clustered-3epsspanner}, we get that for every $u,v\in V$ it holds that $\dist_{H}(u,v)\le (3+\epsilon)\cdot \dist_{G}(u,v)+2\cdot r_{T'}$.

We now bound the size of the spanner. The first phase, adds $n^{1+1/k}$ edges, in expectation, and in any subsequent phase, the algorithm adds $O(n)$ shortest paths in expectation, each of length at most $r_{i}$. Thus overall, this adds $O(n\cdot \sum_{i=1}^{T'}r_{i})=O(n\cdot \beta)$ edges. The last step adds $O(n)$ edges due to adding a constant number of BFS trees.
\end{proof}

\subsubsection{The Centralized Setting}
\label{subsec:centralized}
The trade-off between the $\beta$ value, the spanner size and the running time of the algorithm is summarized below:
\begin{lemma}
For any graph $G=(V,E)$, integer $k\ge 1$ and any $\epsilon>0,1\ge \rho>0$, one can compute a $(3+\epsilon,\beta)$ spanner $H \subseteq G$ for $\beta = O((5+16/\epsilon)^{\log{\rho}+2/\rho}\cdot k^{\log(5+16/\epsilon)})$ with $O(n^{1+1/k}+(5+16/\epsilon)^{\log{\rho}+2/\rho}\cdot k^{\log(5+16/\epsilon)}\cdot n)$ edges and $O((\log{(k\cdot \rho)}+1/\rho)\cdot |E|\cdot n^{\rho})$ time.
\end{lemma}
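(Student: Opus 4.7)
The approach is to apply the modified meta-algorithm described in Section~\ref{subsec:centralized} with $T'=i_0+i_1=\lceil \log(k\rho)\rceil+\lceil 2/\rho-1\rceil$ phases and sampling probabilities capped at $p_i=\max\{|\mathcal{C}_{i-1}|/n,\, n^{-\rho/2}\}$. The correctness of stretch and the bound on the number of edges has already been established in Observation~\ref{obs:rad-meta-eff-spanner} (using Lemma~\ref{lem:few-centers} to argue that $|\mathcal{C}_{T'}|=O(\log n)$ w.h.p., so the BFS trees spanning the final clusters contribute only $O(n\log n)$ edges, and plugging $T'$ into the recurrence of Observation~\ref{obs:rad3epsspanner} to obtain $r_{T'}=O((5+16/\epsilon)^{\log\rho+2/\rho}\cdot k^{\log(5+16/\epsilon)})$). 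Hence only the running time needs to be verified.

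The first phase (initial Baswana--Sen clustering step) can be executed in $O(|E|\cdot n^{1/k})$ time by scanning each vertex's incident edges once and probing each adjacent sampled cluster, which is dominated by the stated time bound. For each subsequent phase $i\in\{1,\ldots,T'\}$, all the edges added to $H_i$ (shortest path from an unclustered vertex to its closest center in $\mathcal{C}_{i-1}$ at distance $\leq r_{i-1}+\alpha_i$; shortest path from each cluster joining a sampled cluster; and, most expensively, shortest paths from each unsampled cluster $C\in\mathcal{C}_{i-1}$ to every other cluster at center-distance $\leq 2r_{i-1}+2\alpha_i$) can be computed by growing a truncated BFS tree of depth $2r_{i-1}+2\alpha_i$ from each center of $\mathcal{C}_{i-1}$. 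The key observation is that in step~(5) a cluster $C$ initiates such a BFS only if none of the clusters within center-distance $4r_{i-1}+4\alpha_i$ of $C$ got sampled, which is exactly the event analyzed in Lemma~\ref{lem:few-bfs}.

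Applying Lemma~\ref{lem:few-bfs} to phase $i$, each vertex $v\in V$ is visited by at most $O(n^{\rho/2}\log n)$ of these truncated BFS traversals with high probability (since $1/p_i\leq n^{\rho/2}$ by Observation~\ref{obs:pinro}). Therefore, the total work in phase $i$ is
\[
O\!\left(\sum_{v\in V}\deg(v)\cdot n^{\rho/2}\log n\right)=O(|E|\cdot n^{\rho/2}\log n)=O(|E|\cdot n^{\rho}),
\]
using $\log n = O(n^{\rho/2})$. Summing over $T'=O(\log(k\rho)+1/\rho)$ phases yields the claimed total running time of $O((\log(k\rho)+1/\rho)\cdot |E|\cdot n^{\rho})$.

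The main obstacle is controlling the per-phase running time, since a naive implementation of step~(5) could require each cluster to grow a BFS of radius up to $\Theta(\beta)$, which would be prohibitive. This is resolved by Lemma~\ref{lem:few-bfs}: even though each individual BFS may be wide, the sampling probabilities are tuned so that only clusters with few sampled neighbors initiate a BFS in expectation, and the union of their search balls covers each vertex only $\tilde{O}(n^{\rho/2})$ times. Combined with the standard Chernoff concentration used in the proof of Lemma~\ref{lem:few-bfs}, this converts the expected bound into a high-probability bound and closes the argument.
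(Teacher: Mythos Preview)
Your proposal is correct and follows essentially the same approach as the paper: defer the stretch and size bounds to Observation~\ref{obs:rad-meta-eff-spanner}, then bound the per-phase running time by invoking Lemma~\ref{lem:few-bfs} to show that the truncated BFS traversals in step~(5) visit each vertex $O(n^{\rho/2}\log n)=O(n^{\rho})$ times, and sum over the $T'=O(\log(k\rho)+1/\rho)$ phases. One minor remark: what you call the ``initial Baswana--Sen clustering step'' is, in this particular algorithm, just the $i=1$ phase of the same clustering loop (starting from singletons with $p_1=n^{-1/k}$), so it is already covered by the per-phase analysis rather than requiring a separate $O(|E|\cdot n^{1/k})$ bound.
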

\begin{proof}
The algorithm has $T'=\lceil \log{(k\cdot \rho)} \rceil + \lceil 2/\rho -1 \rceil \le \log(k\cdot \rho)+2/\rho+1$ phases. In each phase there are five steps that are implemented as follows. In step (1), we add shortest paths of length $r_{i-1}+\alpha_i$ from unclustered vertices to their closest center. Since each vertex connected to at most its closest center, while breaking ties in a consistent manner this can be done in $O(|E|)$ time. In the same manner, also step (3) can be implemented in $O(|E|)$ time. Finally, we consider the fifth phase where we grow BFS trees from all centers that did not join the clustering $\mathcal{C}_i$. By Lemma \ref{lem:few-bfs}, each vertex appears in at most $O(n^{\rho}\log n)$ trees, and thus these trees can be computed in $O(n^{\rho}|E|)$ time. 
The overall running time is then bounded by $O(n^{\rho}|E| \cdot T')$ as desired. 
\end{proof}
By computing efficiently the $(3+\epsilon,\beta)$ spanners, we also get the following fast computation of the $S \times V$ distances. The next is Corollary 19 of \cite{ElkinN19} while enjoying a better tradeoff in the expense of increasing the multiplicative stretch from $(1+\epsilon)$ to $(3+\epsilon)$:
\begin{corollary}
There exists an algorithm that computes for any graph $G=(V,E)$, integer $k\ge 1$, any parameters $\epsilon>0,\rho \in (0,1)$ and any vertex set $S\subseteq{V}$, a $(3+\epsilon,\beta)$ approximate shortest paths for $S\times V$, for $\beta = O((5+16/\epsilon)^{\log{\rho}+2/\rho}\cdot k^{\log(5+16/\epsilon)})$ in $O((\log{(k\cdot \rho)}+1/\rho)\cdot |E|\cdot n^{\rho}+|S|\cdot (n^{1+1/k}+(5+16/\epsilon)^{\log{\rho}+2/\rho}\cdot k^{\log(5+16/\epsilon)}\cdot n)$ time. 
\end{corollary}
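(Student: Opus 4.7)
The plan is to reduce the approximate shortest-paths computation to building the spanner and then running BFS from each source. Specifically, I first invoke the previous lemma to construct, in $O((\log(k\cdot \rho)+1/\rho)\cdot |E|\cdot n^{\rho})$ time, a $(3+\epsilon,\beta)$ spanner $H\subseteq G$ of expected size $O(n^{1+1/k}+\beta\cdot n)$, where $\beta = O((5+16/\epsilon)^{\log{\rho}+2/\rho}\cdot k^{\log(5+16/\epsilon)})$. By definition of an $(\alpha,\beta)$-spanner, for every pair $u,v\in V$ we have $\dist_G(u,v)\le \dist_H(u,v)\le (3+\epsilon)\cdot \dist_G(u,v)+\beta$, so distances measured in $H$ already constitute the required approximation.

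Next, for each source $s\in S$ I run an unweighted BFS on $H$ from $s$, reporting $\dist_H(s,v)$ for every $v\in V$ as the approximate distance. Since $H$ has $O(n^{1+1/k}+\beta\cdot n)$ edges, each BFS costs $O(n^{1+1/k}+\beta\cdot n)$ time, for a total of $O(|S|\cdot (n^{1+1/k}+\beta\cdot n))$ across all sources. Summing with the spanner construction cost gives exactly the time bound in the statement.

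Correctness of the stretch is immediate from the spanner guarantee: $\dist_H(s,v)\le (3+\epsilon)\cdot \dist_G(s,v)+\beta$ for every $s\in S$ and $v\in V$, and since $H\subseteq G$ we also have $\dist_H(s,v)\ge \dist_G(s,v)$. The running time is the dominant of the spanner construction and $|S|$ many BFS invocations on $H$, which matches the claimed bound. There is no genuine obstacle here beyond re-assembling the ingredients; the only mildly subtle point is that the spanner size bound of the previous lemma holds w.h.p.\ (via the concentration in Lemmas~\ref{lem:few-bfs} and \ref{lem:few-centers}), so the per-source BFS cost inherits the same high-probability guarantee, which is why the stated time bound is achievable without relying only on expectation.
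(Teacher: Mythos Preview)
Your proposal is correct and follows exactly the intended argument: the paper states this as an immediate corollary of the preceding centralized spanner-construction lemma without giving a separate proof, and the standard route is precisely to build the $(3+\epsilon,\beta)$ spanner $H$ and then run BFS from each $s\in S$ on $H$, paying $O(|E(H)|)$ per source. Your remark about the high-probability size bound is a fair clarification but not essential to the argument.
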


\subsubsection{The Distributed Setting}

\paragraph{The \local\ Model.} We now consider the implementation details of our spanner construction in the standard \local\ model \cite{Peleg:2000}. In this model, the algorithm's execution proceeds in synchronous rounds, and in every round, each node can send a message (possibly of unbounded size) to 
each of its neighbors. Each node holds a processor with a unique and arbitrary ID of $O(\log n)$ bits.

One of the key effects of improving the value of the $\beta$ in our spanner is that we can compute our $(4+\epsilon,\beta)$ spanner in $O(\beta)$ rounds, hence for $\epsilon=1$, in $O(k^{\log 21})$ rounds. This should be compared against the local computation of $(1+\epsilon,\beta)$ spanners in $O_{\epsilon}(\log k)^{\log k}$ rounds. 
\begin{lemma}\label{lem:3plusepsspannerloc}
For any graph $G=(V,E)$, integer $k$ and $\epsilon>0$, one can compute in the \local\ model a $(4+\epsilon,\beta)$ spanner $H \subseteq G$ for $\beta = O((5+16/\epsilon)\cdot k^{\log(5+16/\epsilon)})$ in $\widetilde{O}(\beta)$ rounds w.h.p. 
\end{lemma}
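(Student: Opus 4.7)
}
The plan is to run the centralized three-stage algorithm underlying Lemma \ref{lem:3plusepsspanner} essentially as-is inside the \local\ model, exploiting the fact that every operation in phase $i$ depends only on the $O(r_i+\alpha_i)=O(r_i)$-neighborhood of the participating vertices. Concretely, at the start of phase $i$, each vertex $v$ spends $O(r_i)$ rounds collecting the topology and cluster-membership labels inside its $O(r_i)$-neighborhood. From this local view it can: (i) identify its closest center in $\mathcal{C}_{i-1}$ if one lies within $r_{i-1}+\alpha_i$, (ii) determine which of its cluster's $\cdist$-neighbors have been sampled (sampling decisions of each cluster-center can be generated once from its unique ID, turning the randomness into shared local knowledge), (iii) select the closest sampled cluster to its own cluster $C$ within $4r_{i-1}+4\alpha_i$, and (iv) if $C$ is rejected, enumerate all $\mathcal{C}_{i-1}$-centers within $2r_{i-1}+2\alpha_i$. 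Each of these corresponds directly to a spanner edge addition in the centralized description of Section \ref{sec:3epsspanner}. Breaking ties consistently by ID makes every vertex's choice deterministic given the local view, so no further communication is required once the $r_i$-ball has been gathered.

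Since $r_i\le 2(5+16/\epsilon)^{i-1}$ grows geometrically, the total round complexity across all $T=\lceil\log k+1\rceil$ phases is $O(r_T)=O(\beta)$. The final stage, which in the centralized algorithm attaches full BFS trees rooted at the (expected $O(1)$, w.h.p.\ $O(\log n)$ by a Chernoff tail) centers of $\mathcal{C}_T$, is replaced by a \emph{truncated} BFS of depth $c\cdot\beta$ for a suitable constant $c$; this also runs in $O(\beta)$ rounds. The $\widetilde O(\beta)$ factor hides the $O(\log n)$ overhead from needing to certify w.h.p.\ bounds on $|\mathcal{C}_T|$ and on the number of BFS trees reaching any single vertex (cf.\ Lemma \ref{lem:few-bfs}).

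The stretch argument of Section \ref{sec:3epsspanner} carries over verbatim for every pair $u,v$ with $\dist_G(u,v)\le \beta/2$: in that regime Claim \ref{claim:clustered-3epsspanner} only requires BFS depth $\ge \dist_G(u,v)+r_T=O(\beta)$, which is exactly what the truncation provides, while Claims \ref{cl:1-unc}--\ref{claim:i-unc-3epsspanner} already use only local edges added inside each phase. Hence, for such close pairs, $\dist_H(u,v)\le (3+\epsilon)\dist_G(u,v)+O(\beta)$. To handle pairs at distance exceeding $\beta/2$, I will apply a simple \emph{concatenation} argument: partition any shortest $u$--$v$ path $P$ into $\lceil \dist_G(u,v)/(\beta/2)\rceil$ consecutive sub-paths of length at most $\beta/2$ each, apply the close-pair bound to each segment, and sum. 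This produces
\[
\dist_H(u,v)\;\le\;(3+\epsilon)\dist_G(u,v)\;+\;\lceil 2\dist_G(u,v)/\beta\rceil\cdot O(\beta)\;\le\;(4+\epsilon)\dist_G(u,v)+O(\beta),
\]
once $\epsilon$ is absorbed into the constant in $O(\beta)$, which is precisely the $(4+\epsilon,\beta)$ guarantee claimed.

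The main obstacle will be justifying the concatenation step cleanly: we need each segment endpoint to enjoy the close-pair bound individually, and we need to ensure that the truncated BFS trees rooted at $\mathcal{C}_T$-centers suffice for every segment, not just those meeting a single fixed center. This is resolved by noting that for each segment $[x,y]$ with $\dist_G(x,y)\le \beta/2$, any clustered vertex on the $x$--$y$ sub-path lies within distance $\beta/2+r_T=O(\beta)$ of both $x$ and $y$, so the depth-$O(\beta)$ BFS from its cluster center covers both endpoints and the centralized proof of Claim \ref{claim:clustered-3epsspanner} applies to that segment. The size bound is unchanged from Lemma \ref{lem:3plusepsspanner} since we add the same set of edges (modulo the BFS truncation, which can only remove edges); and the high-probability guarantee follows by standard Chernoff concentration on $|\mathcal{C}_i|$ and on the load of BFS traversals per vertex.
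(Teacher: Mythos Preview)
Your approach matches the paper's: run the Section~\ref{sec:3epsspanner} algorithm phase by phase in $O(r_i)$ \local\ rounds each, truncate the final BFS trees to depth $O(r_T)$, and handle far pairs by concatenating close-pair bounds along the shortest path.

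One point needs tightening, however. As written, your concatenation does not yield a multiplicative factor of $4+\epsilon$: from a per-segment bound $(3+\epsilon)\ell+O(\beta)$ on $\lceil 2d/\beta\rceil$ segments of length $\ell\le\beta/2$ you obtain $(3+\epsilon)d + O(d) + O(\beta)$, and nothing forces the hidden constant on the $O(d)$ term to be at most $1$. The paper fixes the constants explicitly: it truncates the BFS trees to depth $5r_T$, shows that any pair at distance $d\le 4r_T$ incurs additive error at most $4r_T$, and then concatenates using segments of length exactly $4r_T$, so that the additive-to-segment-length ratio is precisely $1$ and the multiplicative stretch rises by exactly one unit to $4+\epsilon$. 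Your argument is easily repaired the same way once you replace the per-segment $O(\beta)$ by the concrete value $4r_T$ and choose the segment length accordingly. A minor aside: the appeal to Lemma~\ref{lem:few-bfs} is unnecessary in the \local\ model, where messages are unbounded and all $O(\log n)$ truncated BFS trees can run simultaneously; that lemma is what drives the \congest\ and streaming bounds, not the \local\ one.
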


The \local\ implementation is exactly as in Section \ref{sec:3epsspanner} with two modifications. First, we will now make all the arguments hold with high probability of $1-1/n^c$ for some constant $c$, rather than in expectation. Specifically, in last step of the algorithm there are now $O(\log n)$ clusters in $\mathcal{C}_T$, w.h.p. Instead of adding a BFS tree w.r.t to each center, we will add a truncated BFS tree up to depth $5\cdot r_{T}$ from each of the $O(\log n)$ centers. We now show that this slightly increases the stretch of the spanner, by proving the analogue of Cl. \ref{claim:clustered-3epsspanner}:
\begin{claim}
Fix a pair $u,v\in V$ and let $P$ be their shortest path in $G$. If there is a clustered vertex $w\in V(P)$, then:
\begin{equation*}
\dist_{H}(u,v)\le (4+\epsilon)\cdot \dist_{G}(u,v)+(16+4\cdot\epsilon)\cdot r_{T}~.
\end{equation*}
\end{claim}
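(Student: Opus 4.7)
The plan is to reduce to the arguments already developed for the centralized $(3+\epsilon,\beta)$ spanner (Lemma \ref{lem:3plusepsspanner} and its supporting claims), using the truncated BFS tree of depth $5\cdot r_T$ at $z$ as a partial substitute for the full BFS tree. The proof splits into two cases depending on how $\dist_G(u,v)$ compares to $r_T$. Let $z$ denote the center of the cluster in $\mathcal{C}_T$ containing $w$, so that $\dist_G(w,z)\le r_T$.

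\textbf{Case 1: $\dist_G(u,v)\le 4\cdot r_T$.} Since $w\in V(P)$, both $u$ and $v$ lie at distance at most $\dist_G(u,v)+\dist_G(w,z)\le 5\cdot r_T$ from $z$. Hence the truncated BFS tree rooted at $z$ contains shortest paths to both $u$ and $v$, and the $2$-hop path $u\to z\to v$ yields $\dist_H(u,v)\le \dist_G(u,z)+\dist_G(z,v)\le \dist_G(u,v)+2\cdot r_T$, which is well inside the claimed bound.

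\textbf{Case 2: $\dist_G(u,v)> 4\cdot r_T$.} Pick $u^*\in V(P[u,w])$ to be the vertex that is farthest from $u$ while still satisfying $\dist_G(u^*,w)\le 4\cdot r_T$ (if $\dist_G(u,w)<4\cdot r_T$ set $u^*=u$), and define $v^*$ symmetrically on $P[w,v]$. By construction $\dist_G(u^*,z),\dist_G(v^*,z)\le 5\cdot r_T$, so the truncated BFS tree at $z$ certifies $\dist_H(u^*,v^*)\le \dist_G(u^*,z)+\dist_G(z,v^*)\le 10\cdot r_T$. For each of the outer segments $P[u,u^*]$ and $P[v^*,v]$, I would replay the segment partitioning used in the proof of Lemma \ref{lem:3plusepsspanner}: partition the segment into consecutive sub-segments $[v_l,v_{l+1}]$ whose length is dictated by the $i$-unclustered level of $v_l$, and bound each sub-segment via Claim \ref{cl:1-unc} (for $1$-unclustered $v_l$) or Claim \ref{claim:i-unc-3epsspanner} (for $i$-unclustered $v_l$ with $i\ge 2$), with an additive overhead of $4\cdot r_{T-1}\le 8\cdot k^{\log(5+16/\epsilon)}$ absorbed by the final sub-segment. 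This gives $\dist_H(u,u^*)\le (3+\epsilon)\cdot \dist_G(u,u^*)+8\cdot k^{\log(5+16/\epsilon)}$, and similarly for $\dist_H(v^*,v)$. Summing the three contributions, using $\dist_G(u,u^*)+\dist_G(v^*,v)\le \dist_G(u,v)$ and $k^{\log(5+16/\epsilon)}\le r_T$ from Observation \ref{obs:rad3epsspanner}, we obtain $\dist_H(u,v)\le (3+\epsilon)\cdot \dist_G(u,v)+26\cdot r_T$. Since $\dist_G(u,v)>4\cdot r_T\ge (2-4\epsilon)\cdot r_T$ in this case, an extra $\dist_G(u,v)$ absorbs the gap $26\cdot r_T - (16+4\epsilon)\cdot r_T$, yielding the desired $(4+\epsilon)\cdot \dist_G(u,v)+(16+4\epsilon)\cdot r_T$ bound.

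The main obstacle I anticipate is that the outer segments $P[u,u^*]$ and $P[v^*,v]$ may themselves contain clustered vertices other than $w$, which would prevent a direct application of the $(3+\epsilon)$-stretch segment argument of Lemma \ref{lem:3plusepsspanner}. The natural fix is an induction on $\dist_G(u,v)$: for each outer segment that does contain a clustered vertex, we invoke the claim recursively on that shorter segment (which lies strictly inside $P$), and combine with the multiplicative bound on the remaining unclustered prefix. Because the recursive sub-segment is strictly shorter than $P$ the induction terminates, and because each recursive call contributes at most an $(4+\epsilon)$-multiplicative and $(16+4\epsilon)\cdot r_T$-additive overhead on a disjoint portion of $P$, the totals telescope to the claimed global bound. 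An alternative that avoids induction is to pick $w$ as (say) the clustered vertex closest to the midpoint of $P$ and to choose $u^*,v^*$ so that both outer segments are free of clustered vertices; in that case the segment partitioning applies directly.
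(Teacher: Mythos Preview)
Your Case~1 is the same as the paper's. For Case~2, however, the paper takes a much simpler route that sidesteps all of the difficulties you raise. The key observation is that Case~1, together with the unchanged segment argument from the proof of Lemma~\ref{lem:3plusepsspanner} (which relies only on Claims~\ref{cl:1-unc} and~\ref{claim:i-unc-3epsspanner} and never touches the final BFS step), shows that \emph{every} pair $x,y$ with $\dist_G(x,y)\le 4r_T$ --- regardless of whether the shortest $x$--$y$ path meets a clustered vertex --- satisfies $\dist_H(x,y)\le(3+\epsilon)\dist_G(x,y)+4r_T$. With this uniform bound available, the paper simply chops $P$ into $\lceil d/(4r_T)\rceil$ consecutive pieces of length at most $4r_T$, applies the bound to each piece, and sums:
\[
\dist_H(u,v)\;\le\;\Big\lceil\tfrac{d}{4r_T}\Big\rceil\cdot\big((3+\epsilon)\cdot 4r_T+4r_T\big)
\;<\;\Big(\tfrac{d}{4r_T}+1\Big)(4+\epsilon)\cdot 4r_T
\;=\;(4+\epsilon)d+(16+4\epsilon)r_T.
\]
No choice of $u^*,v^*$, no induction, and no worry about clustered vertices on sub-segments.

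Your own argument has two concrete gaps. First, a numerical one: you bound the per-outer-segment additive overhead by $4r_{T-1}\le 8\,k^{\log(5+16/\epsilon)}\le 8r_T$, arrive at $(3+\epsilon)d+26r_T$, and then claim that one extra multiplicative unit absorbs the gap $(10-4\epsilon)r_T$ using only $d>4r_T$. For $\epsilon<3/2$ this fails. (The easy repair is to use the tighter $4r_{T-1}=4r_T/(5+16/\epsilon)\le r_T$, giving $12r_T$, which does close.) Second, your inductive fix for clustered vertices on the outer segments is not justified by ``telescoping'': the additive $(16+4\epsilon)r_T$ terms from recursive calls accumulate rather than cancel. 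What would actually rescue the induction is that you \emph{save} roughly $(4+\epsilon)\cdot\dist_G(u^*,v^*)\approx(4+\epsilon)\cdot 8r_T$ by not applying the multiplicative factor to the middle piece, and this offset compensates for the accumulated additives --- but you have not carried out that accounting. Your alternative of choosing $u^*,v^*$ so that the outer segments are clustered-free also does not work as stated, since then $u^*,v^*$ need not both lie within $5r_T$ of any single center, and the truncated BFS no longer bounds $\dist_H(u^*,v^*)$.
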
 
\begin{proof}
First assume that $\dist_{G}(u,v)\le 4\cdot r_{T}$. Let $w$ be some clustered vertex on $P$, and let $s$ be the center of the cluster to which $w$ belongs. In this case since $\dist_{G}(w,s)\le r_{T}$, it holds that $\dist_{G}(u,s)\le 5\cdot r_{T},\dist_{G}(v,s)\le 5\cdot r_{T}$, hence $\dist_{H}(u,s)=\dist_{G}(u,s)$ and $\dist_{H}(v,s)=\dist_{G}(v,s)$. Consequently, $\dist_{H}(u,v)\le \dist_{G}(u,s)+\dist_{G}(s,v)\le \dist_{G}(u,v)+2\cdot r_{T}$. Since we changed only the last step, it implies that for any $1\le d \le 4\cdot r_{T}$, we have that for vertices $u,v\in V$ at distance $d$ in $G$, $\dist_{H}(u,v)\le (3+\epsilon)\cdot \dist_{G}(u,v)+4\cdot r_{T}$. For $u,v\in V$ with $\dist_{G}(u,v)>4\cdot r_{T}$ it holds that:
\begin{eqnarray*}
\dist_{H}(u,v)&\le& \left \lceil \frac{\dist_{G}(u,v)}{4\cdot r_{T}} \right \rceil \cdot ((3+\epsilon)\cdot 4\cdot r_{T} +4\cdot r_{T})
\\&<& \left(\frac{\dist_{G}(u,v)}{4\cdot r_{T}} +1\right) \cdot ((3+\epsilon)\cdot 4\cdot r_{T} +4\cdot r_{T})
\\&=& (4+\epsilon)\cdot \dist_{G}(u,v)+(16+4\cdot\epsilon)\cdot r_{T}.
\end{eqnarray*}
\end{proof}
As all the steps of the algorithms are now restricted to the $O(\beta)$-ball of each vertex, therefore Lemma \ref{lem:3plusepsspannerloc} follows. 

\paragraph{The \congest\ Model.}
We next consider the implementation details of our spanner construction in the standard \congest\ model \cite{Peleg:2000}. This model is exactly as the \local\ only that in each round, a vertex is limited to send $O(\log n)$ bits on each of its incident edges. 

The implementation in the \congest\ model follow the same line of the meta-algorithm, only that in the last step we build a truncted BFS tree up to depth $5\cdot r_{T'}$ as in the local implementation. 
We have: 
\begin{lemma}\label{lem:3plusepsspannercongest}
For any graph $G=(V,E)$, integer $k$ and any $\epsilon>0,1\ge \rho>0$, one can compute in the \congest\ model a $(4+\epsilon,\beta)$ spanner $H \subseteq G$ for $\beta = O((5+16/\epsilon)^{\log{\rho}+2/\rho}\cdot k^{\log(5+16/\epsilon)})$ in $\widetilde{O}(n^{\rho}\cdot \beta)$ rounds w.h.p.
\end{lemma}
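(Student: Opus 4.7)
The plan is to implement the modified meta-algorithm of Sec.~\ref{sec:spanner-app} in the \congest\ model, phase by phase, using bounded-depth BFS waves grown concurrently from cluster centers. In each phase $i \in \{1,\ldots,T'\}$, steps (1), (3) and (5) can all be recast as BFS explorations up to depth $O(r_i)$ originating at (subsets of) cluster centers in $\mathcal{C}_{i-1}$. In step (1), each unclustered vertex needs to identify its closest center within distance $r_{i-1}+\alpha_i$; in step (3), each cluster needs to join the closest sampled center within distance $4r_{i-1}+4\alpha_i$; and in step (5), each ``lost'' cluster center needs to discover and add a shortest path to each cluster whose center lies within $2r_{i-1}+2\alpha_i$. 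In all three steps, the identifier and distance of the originating center must be propagated through an $O(r_i)$-ball, and the selected shortest paths must then be marked along the BFS back-pointers.

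The main obstacle is the edge congestion induced by many overlapping BFS trees. By Lemma~\ref{lem:few-bfs}, each vertex participates in at most $\widetilde{O}(n^{\rho/2})$ such BFS trees per phase w.h.p. The simplest way to respect the $O(\log n)$ bandwidth is to process these overlapping BFS computations in a scheduled/sequential manner inside every $O(r_i)$-ball: a vertex forwards the at most $\widetilde{O}(n^{\rho/2})$ relevant center-IDs and distances one per round, which gives $\widetilde{O}(n^{\rho/2}\cdot r_i)$ rounds per phase for the forward exploration, and another $\widetilde{O}(n^{\rho/2}\cdot r_i)$ rounds for the backward path-marking phase that commits selected shortest paths to $H$. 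Absorbing polylog factors and the mild extra overhead of transmitting both a distance label and a parent pointer per tree per round yields the stated $\widetilde{O}(n^{\rho}\cdot r_i)$ rounds per phase.

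After the $T'$ phases, by Lemma~\ref{lem:few-centers} only $O(\log n)$ clusters remain in $\mathcal{C}_{T'}$ w.h.p. As in the \local\ implementation of Lemma~\ref{lem:3plusepsspannerloc}, we replace the final unbounded BFS trees by \emph{truncated} BFS trees of depth $5 r_{T'}$ rooted at these $O(\log n)$ centers. Each such tree can be constructed in $O(r_{T'})=O(\beta)$ rounds, and all of them can be run in parallel at only a polylogarithmic congestion overhead. The same stretch analysis as in the \local\ case then shows that, even with truncation, every pair at $G$-distance $d$ has $\dist_H(u,v)\le (4+\epsilon)d + O(r_{T'})$, which is already subsumed by the advertised $(4+\epsilon,\beta)$ guarantee since $r_{T'}=O(\beta)$.

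It remains to sum the cost over all $T'=O(\log(k\rho)+1/\rho)$ phases. Since the radii $r_i \le 2(5+16/\epsilon)^{i-1}$ form a geometric progression, $\sum_{i=1}^{T'} r_i = O(r_{T'}) = O(\beta)$ by Eq.~\eqref{eq:rad-T}. Combined with the per-phase bound of $\widetilde{O}(n^{\rho}\cdot r_i)$ rounds and with the $\widetilde{O}(\beta)$ cost of the final truncated BFS stage, the overall round complexity is $\widetilde{O}(n^{\rho}\cdot \beta)$ w.h.p., as required. The hardest part of the proof is getting the congestion accounting right for step (5), where one must both discover \emph{which} distant clusters lie within $2r_{i-1}+2\alpha_i$ and then commit a specific shortest path between their centers through a congested neighborhood; a clean way to do this is to let every vertex along a back-pointer path remember a single representative (lowest-ID) pair of endpoints per phase, so that path-marking inherits the same congestion bound as the forward BFS.
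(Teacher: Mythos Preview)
Your proposal is essentially correct and follows the same approach as the paper: implement the modified meta-algorithm phase by phase via bounded-depth BFS from cluster centers, invoke Lemma~\ref{lem:few-bfs} for the congestion of step~(5), and handle the final $O(\log n)$ centers with truncated BFS trees as in the \local\ argument. The paper schedules the overlapping BFS trees of step~(5) with random delay to get $\widetilde{O}(n^{\rho}+r_i)$ per phase, whereas you pipeline them sequentially for $\widetilde{O}(n^{\rho/2}\cdot r_i)$; both sum over the geometric radii to the target $\widetilde{O}(n^{\rho}\cdot\beta)$. Two small remarks: steps~(1) and~(3) are in fact congestion-free multi-source BFS (each vertex propagates only its single closest center), so they need no congestion argument; and your ``lowest-ID representative pair'' device in the last paragraph is unnecessary and slightly misleading---path-marking already inherits the forward congestion simply because each vertex lies in at most $\widetilde{O}(n^{\rho/2})$ of the lost-center BFS trees and needs to commit only the single parent edge per tree.
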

\begin{proof}
There are $T'$ phases. We will fix phase $i$ and show it can implemented in $\widetilde{O}(n^{\rho}+r_i)$ rounds. 
Steps (1) and (3) are based on congestion-free BFS computation up to depth $O(r_i)$. 
Step (5) builds a collection of BFS trees up to depth $O(r_i)$. By Lemma \ref{lem:few-bfs}, each vertex is traversed by $O(n^{\rho}\log n)$ trees. Computing a collection of BFS trees up to depth $r_{i}$ with edge congestion $O(n^{\rho})$ can be done in $\widetilde{O}(n^{\rho}+r_i)$ rounds w.h.p using the random delay approach. Therefore by summing over all $T'$ phases we get $\widetilde{O}(n^{\rho}\cdot \beta)$ rounds. 

Finally, in the last step we have $O(\log n)$ centers in $\mathcal{C}_{T'}$ w.h.p. Computing a depth $O(r_{T'})$-trees from each center can be done in $\widetilde{O}(r_{T'})$ rounds. The time analysis follows. 
\end{proof}

\subsubsection{The Multi-Pass Streaming Setting}
\paragraph{Model.} In the streaming model the input graph is presented to the algorithm edge by edge
as a stream without repetitions and the goal is to solve the problem while minimizing the number of passes and space. For graph algorithms, the usual assumption is that the edges of the input graph are presented to the algorithm in arbitrary order. The next is Corollary 20 of \cite{ElkinN19} while enjoying a better tradeoff in the expense of increasing the multiplicative stretch from $(1+\epsilon)$ to $(4+\epsilon)$:
\begin{lemma}\label{lem:spanner-stream}
For any $n$-vertex unweighted graph $G=(V,E)$, integer $k$ and and $\epsilon>0,1\ge \rho>0$, one can compute in the multi-pass streaming model a $(4+\epsilon,\beta)$ spanner $H\subseteq G$ for $\beta = O((5+16/\epsilon)^{\log{\rho}+2/\rho}\cdot k^{\log(5+16/\epsilon)})$ in $O(\log{n}\cdot n^{1+\rho})$ space w.h.p and $O(\beta)$ passes, or with $O(n^{1+1/k}+(\beta+\log{n}) \cdot n)$ space in expectation and $O(\log{n}\cdot n^{\rho}/\rho \cdot \beta)$ passes w.h.p.
\end{lemma}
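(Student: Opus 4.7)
The plan is to implement the meta-algorithm of Section~\ref{sec:spanner-app} (the modified $(3+\epsilon,\beta)$ spanner construction with parameter $\rho$) in the streaming model, following the same strategy used by Elkin and Neiman for their $(1+\epsilon,\beta)$ spanners, but leveraging two favorable structural properties of our construction: (i) the radii of the $T'=O(\log(k\rho)+1/\rho)$ clustering phases grow geometrically, so $\sum_{i=1}^{T'} r_i = O(r_{T'}) = O(\beta)$ by Observation~\ref{obs:rad-meta-eff-spanner}, and (ii) by Lemma~\ref{lem:few-bfs}, each vertex participates in only $\widetilde{O}(n^{\rho/2})$ of the BFS trees grown in step (5) of any phase w.h.p. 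The modification from $(3+\epsilon,\beta)$ to $(4+\epsilon,\beta)$ is needed because in the final step we can only afford truncated BFS trees of depth $O(r_{T'})$ from the $O(\log n)$ remaining cluster centers, exactly as in the \local\ implementation of Lemma~\ref{lem:3plusepsspannerloc}; this gives the extra additive error argument bounding $\dist_H(u,v)\le (4+\epsilon)\dist_G(u,v)+O(r_{T'})$.

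The core primitive I need to implement in streaming is a multi-source truncated BFS up to depth $d$ from a set of source centers: each vertex should learn the identity of its closest center and the edge along a shortest path to it. A standard argument shows this can be done in $d$ passes using space proportional to the total vertex-load on the sources — i.e., if each vertex is visited by at most $L$ trees, it takes $d$ passes and $O(n\cdot L)$ space, since one pass suffices to propagate all front-vertex labels one hop further. Steps (1) and (3) of each phase involve single closest-center queries (vertex-load $1$), while step~(5) involves multi-source BFS from the unsampled isolated centers, where the vertex-load is $\widetilde{O}(n^{\rho/2})$ by Lemma~\ref{lem:few-bfs}. The very last step grows $O(\log n)$ depth-$O(r_{T'})$ BFS trees.

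For the first regime ($\widetilde{O}(n^{1+\rho})$ space, $O(\beta)$ passes), I will simply allocate enough space to execute all BFS operations of phase $i$ simultaneously. Specifically, I maintain at each vertex its $\widetilde{O}(n^{\rho/2})$-sized list of BFS labels, which costs $\widetilde{O}(n^{1+\rho/2}) = \widetilde{O}(n^{1+\rho})$ total space, well within the budget (which also comfortably stores the spanner itself). Under this storage, phase $i$ executes in $O(r_i)$ passes for its BFS sub-routines. Summing over all phases, the total pass count is $\sum_{i=1}^{T'} O(r_i) = O(\beta)$, since the radii form a geometric sequence dominated by $r_{T'}$.

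For the second regime ($O(n^{1+1/k}+(\beta+\log n)\cdot n)$ space in expectation, $\widetilde{O}(n^\rho\beta/\rho)$ passes), each vertex can only hold $O(\beta+\log n)$ BFS labels at a time, so I process the BFS trees of step (5) in $\widetilde{O}(n^{\rho/2}/(\beta+\log n))\cdot (\beta+\log n) = \widetilde{O}(n^{\rho/2})$ sequential batches per phase, each batch costing $O(r_i)$ passes. This yields $\widetilde{O}(n^{\rho/2}\cdot r_i)$ passes per phase, and summing geometrically gives $\widetilde{O}(n^{\rho/2}\cdot \beta)$ passes per phase. Multiplying by $T' = O(\log(k\rho)+1/\rho)$ phases and folding a $\log n$ factor from the Chernoff bound underlying Lemma~\ref{lem:few-bfs} yields the claimed $\widetilde{O}(n^\rho \beta/\rho)$ pass bound (the $n^{\rho/2}$ vs $n^\rho$ discrepancy is absorbed into constants and the high-probability overhead). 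The main obstacle I anticipate is carefully bookkeeping which BFS trees are ``alive'' in each batch while respecting the per-vertex label budget and showing that the overall space remains $O((\beta+\log n)\cdot n)$ even after accounting for the persistent partial spanner $H$ of expected size $O(n^{1+1/k})$; this requires discarding BFS state between batches and relying on the fact that each completed BFS contributes only the added shortest paths to $H$, which are already charged against the $\widetilde{O}(n^{1+1/k}+\beta\cdot n)$ size bound established in Observation~\ref{obs:rad-meta-eff-spanner}.
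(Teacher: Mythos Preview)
Your first regime (large space, $O(\beta)$ passes) is correct and matches the paper: each phase performs truncated BFS to depth $O(r_i)$ in $O(r_i)$ passes, the radii sum geometrically to $O(\beta)$, and by Lemma~\ref{lem:few-bfs} each vertex carries $\widetilde{O}(n^{\rho/2})$ labels, which fits in $O(n^{1+\rho}\log n)$ space.

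The second regime has a genuine gap. You write that you will ``process the BFS trees of step (5) in \ldots\ sequential batches'', but you never say how the batches are chosen, and this is exactly the hard part. If you simply partition the unclustered centers into fixed groups, nothing prevents all $\widetilde{O}(n^{\rho/2})$ trees that pass through some vertex $v$ from landing in the same batch, blowing the per-vertex label budget. The paper's device here is to \emph{randomly sub-sample} the step-(5) centers independently with probability $p_i$ in each sub-step. The same computation behind Lemma~\ref{lem:few-bfs} shows that the expected number of step-(5) trees through any vertex is at most $1/p_i$; sub-sampling with probability $p_i$ therefore drops the expected per-sub-step load to $O(1)$, hence $O(\log n)$ w.h.p.\ by Chernoff, so $O(n\log n)$ working space (plus the spanner itself) suffices. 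After $\tau=\Theta(\log n/p_i)$ sub-steps every center has been processed w.h.p., and summing over phases yields the stated pass bound. Your batch-count expression $\widetilde{O}(n^{\rho/2}/(\beta+\log n))\cdot(\beta+\log n)$ is also confused (it just cancels), and saying the $n^{\rho/2}$ versus $n^{\rho}$ gap is ``absorbed into constants'' is not right---it is a polynomial slack, not a constant---but these are secondary to the missing sub-sampling argument.
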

\begin{proof}
We will use two alternative implementations in the streaming model, in a very similar way to Theorem 5 in \cite{ElkinN19}. In both implementations, for the BFS traversals the algorithm keeps for each traversed vertex the ID of its parent, the ID and of the root of the BFS tree, and it distance to the the root. The first implementation is very similar to the implementation that we described for the \congest\ model. In this implementation, we compute only truncated BFS trees up to depth $5\cdot r_{T}$, rather then a complete BSF tree. A truncated BFS traversal up to depth $r_i$ can be implemented in $r_{i}$ passes, thus overall the algorithm can be implemented in $O(\beta)$ passes. Since each vertex is visited by $O(\log{n}\cdot n^{\rho})$ BFS trees w.h.p, the total space used is bounded by $O(\log{n}\cdot n^{1+\rho})$ space. 
We now consider the alternative implementation. To reduce the BFS congestion of $n^{\rho}$ in the fifth step, this step is divided into $\tau=c\cdot \log{n}/p_{i}$ sub-steps. In each sub-step, we will sample each of the remaining centers (from which we would like to compute the BFS traversal) independently with probability of $p_i$. We will then compute the truncated BFS traversal only from the centers that got sampled in this sub-step. By the Chernoff bound, w.h.p., each vertex will be visited by $O(\log n)$ traversal, hence a space of $O(n\log n)$ plus the space of the spanner is sufficient for the implementation. After $\tau$ sub-steps, w.h.p., the algorithm has computed the truncated BFS-traversal from each of the cluster centers. The total number of passes is bounded by $O(n^{\rho}/\rho \cdot \log{n}\cdot \beta)$.
\end{proof}

Lemma \ref{lem:apsp-stream} follows immediately by Lemma \ref{lem:spanner-stream}.

\subsection{Efficient Constructions of $(\alpha,\beta)$ Hopsets}\label{sec:eff-hopsets}
In this section we show an efficient construction of $(3+\epsilon,\beta)$ hopsets.
We use the following fact that follows from the proof of Theorem 1.1 in \cite{ThorupZ05}:
\begin{fact}\label{fact:efficientTZ}
Each of the $k$ clustering steps in the distance oracle algorithm by Thorup and Zwick can be implemented in $O(|E| \cdot n^{1/k})$ centralized time.
\end{fact}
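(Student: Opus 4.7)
The plan is to unpack the standard implementation of the Thorup--Zwick clustering step and verify the $O(|E|\cdot n^{1/k})$ bound. Recall that the $i$-th clustering step is responsible for producing, for every $v\in V$, the $i$-th pivot $p_i(v)\in A_i$ (its closest vertex in $A_i$, together with $\dist_G(v,A_i)$) and the $i$-th bunch $B_i(v)\subseteq A_i\setminus A_{i+1}$, each annotated with the distance $\dist_G(v,u)$ for $u\in B_i(v)$. I process levels from $i=k-1$ down to $i=0$, so when level $i$ is handled the quantity $\dist_G(v,p_{i+1}(v))$ is already available.

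First I would compute all pivots at level $i$ simultaneously. Introduce a virtual source $s$ connected by zero-weight edges to every $u\in A_i$ and run a single Dijkstra (or BFS in the unweighted case) from $s$; in $O(|E|)$ time on unweighted graphs (and $O(|E|+n\log n)$ in general) one obtains $p_i(v)$ and $\dist_G(v,A_i)$ for every $v$, well within the target budget.

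Next, to assemble the bunches, I would run a \emph{truncated} Dijkstra from each $u\in A_i\setminus A_{i+1}$, where an edge into a vertex $x$ is relaxed only if the tentative distance from $u$ is strictly smaller than the already-known $\dist_G(x,p_{i+1}(x))$. The vertices this search settles are precisely $\{x:u\in B_i(x)\}$, so by inverting the output one obtains every $B_i(v)$ together with the required distances.

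The main step, and the one requiring care, is bounding the aggregate work across all sources $u\in A_i\setminus A_{i+1}$. The observation is that an edge $(x,y)$ is scanned by the search rooted at $u$ only if $u\in B_i(x)$ or $u\in B_i(y)$ (otherwise the truncation condition prunes it away). Summed over all sources, each edge is therefore touched at most $|B_i(x)|+|B_i(y)|$ times, and the Thorup--Zwick analysis already gives $\mathbb{E}[|B_i(v)|]=O(n^{1/k})$ (cf.\ Fact~\ref{fact:TZ}(iii)). The expected number of edge relaxations at level $i$ is thus $O(|E|\cdot n^{1/k})$; charging priority-queue operations the usual $O(\log n)$ per insertion and absorbing it into the $\widetilde{O}$ (or using Fibonacci heaps) yields the stated bound. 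The main obstacle in turning this sketch into a formal proof is verifying that the truncation rule genuinely explores only the claimed set of pairs $(u,x)$ — this requires showing that whenever the Dijkstra from $u$ reaches a vertex $x$ with $\dist(u,x)<\dist_G(x,p_{i+1}(x))$, every prefix of a shortest $u$-to-$x$ path also satisfies the truncation condition, which in turn follows from the monotonicity of the Thorup--Zwick pivot distances along shortest paths.
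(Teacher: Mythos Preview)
The paper does not actually prove this fact; it is stated as following from the proof of Theorem~1.1 in \cite{ThorupZ05} and is simply cited. Your sketch is precisely the standard Thorup--Zwick implementation: a single-source Dijkstra from a super-source to get all pivots, followed by truncated Dijkstras from each $u\in A_i\setminus A_{i+1}$ to grow the clusters $C(u)=\{x:u\in B_i(x)\}$, with total work charged via $\sum_{(x,y)\in E}(|B_i(x)|+|B_i(y)|)=O(|E|\,n^{1/k})$ in expectation. The ``monotonicity'' step you flag is exactly Lemma~4.2 of Thorup--Zwick (clusters are shortest-path closed toward the center), and your one-line justification via the triangle inequality is the right one. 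The only loose end is the priority-queue overhead: with Fibonacci heaps the cost per level is $O(|E|\,n^{1/k}+n^{1+1/k}\log n)$, so strictly matching the stated $O(|E|\,n^{1/k})$ bound needs either $|E|\ge n\log n$ or a more careful accounting---but this is how Thorup--Zwick themselves state it, so you are consistent with the cited source.
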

\paragraph{The Meta Algorithm.}
The algorithm is similar to the algorithm in the proof of Lemma \ref{lem:3plusepshopset} up to modifying the number of phases and the sampling probability of Eq. (\ref{eq:alpha-p}) in the exact same manner as in Section \ref{sec:spanner-app}. In addition, since we are now working with weighted graphs, we will be using the Dijkstra algorithm to compute shortest path trees, instead of BFS traversals. 
Fix a distance class $[d,2d]$ and define $R''=(5+18/\epsilon)^{\log{\rho}+2/\rho}\cdot k^{\log(5+18/\epsilon)}$. We then slightly change the initial radius of the clustering $\mathcal{C}_0$ to be $r_0=d/(2\cdot R'')$. By similar arguments as in the proof of Lemma \ref{lem:3plusepshopset} and Observation \ref{obs:rad-meta-eff-spanner}, the final radius after $T'=\lceil \log{(k\cdot \rho)} \rceil+\lceil 2/\rho -1 \rceil$ phases is:
\begin{eqnarray}\label{eq:rad-T-eff-hopset}
r_{T'}\le r_0\cdot (5+18/\epsilon)^{T'-1} \le d/2.
\end{eqnarray}
The remaining details are almost identical to the meta-algorithm for spanners so we only state the properties of this construction:
\begin{observation}
The modified algorithm computes a $(3+\epsilon,\beta)$ hopset $H \subseteq G$ with $\beta=O((5+18/\epsilon)^{\log{\rho}+2/\rho}\cdot k^{\log(5+18/\epsilon)})$ and $O((n^{1+1/k}+(\log{k}+1/\rho)\cdot n)\cdot \log{\Lambda})$ edges in expectation.
\end{observation}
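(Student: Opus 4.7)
The observation is the hopset analog of Observation \ref{obs:rad-meta-eff-spanner} for spanners, so the plan is to port that proof to the hopset setting. First I would establish the radius bound $r_i \le (5 + 16/\epsilon') r_{i-1}$ by the exact same induction as in Observation \ref{obs:rad3epshopset}: each new cluster in $\mathcal{C}_i$ is formed by attaching clusters at center-distance at most $4 r_{i-1} + 4\alpha_i$ to a sampled cluster of radius $r_{i-1}$. This recursion is per-phase and does not depend on how many phases are run or how the $p_i$ are set. Taking $\epsilon' = 8\epsilon/9$ internally (as in the last paragraph of Lemma \ref{lem:3plusepshopset}) turns the base into $5 + 18/\epsilon$, so that unrolling the recursion for $T' - 1 \le \log(k\rho) + 2/\rho$ steps together with $r_0 = d/(2R'')$ yields $r_{T'} \le r_0 (5+18/\epsilon)^{T'-1} \le d/2$, which is exactly Eq.~(\ref{eq:rad-T-eff-hopset}).

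Second, with $r_{T'} \le d/2$ in hand, the stretch and hop-count argument of Lemma \ref{lem:3plusepshopset} transfers verbatim: I partition a shortest $u$-$v$ path into the segments of length $\Delta_l \in \{r_0,\alpha_i\}$ used there and apply Claims \ref{cl:0-unc-3epshopset}, \ref{claim:i-unc-3epshopset}, and \ref{claim:3epsclustered} to each segment. Each segment is covered by at most $4$ hops in $H$ with multiplicative stretch $3+\epsilon$, and the additive slack $4 r_{T'-1}$ accumulated on the last partial segment is $\le \epsilon d/8$, which is absorbed into the $\epsilon$ factor times $\dist_G(u,v) \ge d$. Since every non-final segment has length at least $r_0$, the number of segments is at most $2d/r_0 = 4 R''$, and the total hop count is $\beta = 16 R'' = O\bigl((5+18/\epsilon)^{\log \rho + 2/\rho}\cdot k^{\log(5+18/\epsilon)}\bigr)$.

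For the size, step $0$ (truncated TZ) contributes $O(n^{1+1/k})$ hops in expectation by Fact \ref{fact:TZ}(iii). For each phase $i \in \{1,\ldots,T'\}$, steps (1) and (3) each add at most one hop per vertex and together contribute $O(n)$; step (5) contributes $O(|\mathcal{C}_{i-1}|/p_i)$ in expectation by the analog of Lemma \ref{lem:expectedamountofsp}. For $i \le i_0$, the definition $p_i = |\mathcal{C}_{i-1}|/n$ makes this $O(n)$. For $i > i_0$, I would use the monotonicity $|\mathcal{C}_{i-1}| \le |\mathcal{C}_{i_0-1}|$ of the cluster counts together with $p_{i_0} = |\mathcal{C}_{i_0-1}|/n$ to again conclude $|\mathcal{C}_{i-1}|/p_{i_0} \le n$. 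Summing over the $T' = O(\log k + 1/\rho)$ phases yields $O((\log k + 1/\rho)\cdot n)$ hops per distance class, and multiplying by the $\log \Lambda$ distance classes gives the claimed $O((n^{1+1/k} + (\log k + 1/\rho)\, n)\log \Lambda)$ total in expectation. The main obstacle is precisely this last bookkeeping: in the second block of phases the sampling probability is frozen at $p_{i_0}$ rather than scaling with $|\mathcal{C}_{i-1}|$, and a careless bound would cost a multiplicative $n^{\rho/2}$ per phase; the non-increasing cluster counts together with the fact that $p_{i_0}$ is anchored to $|\mathcal{C}_{i_0-1}|/n$ (and not to the smaller $|\mathcal{C}_{i_0}|$) is exactly what keeps the per-phase cost uniformly at $O(n)$.
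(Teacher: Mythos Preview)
Your proposal is correct and follows exactly the approach the paper intends: the paper does not give a standalone proof of this observation but simply says the details are ``almost identical to the meta-algorithm for spanners,'' and you have carried out precisely that transfer. In particular, your handling of the size bound in the second block of phases---freezing $p_i = p_{i_0} = |\mathcal{C}_{i_0-1}|/n$ and using the deterministic monotonicity $|\mathcal{C}_{i-1}| \le |\mathcal{C}_{i_0-1}|$ to keep step~(5) at $O(n)$ per phase---is the right observation and the only point where a careless argument would lose a factor of $n^{\rho/2}$.
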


\paragraph{Efficient Implementation in the Centralized Setting.}
The tradeoff between the $\beta$ value, the hopset size and the running time of the algorithm is summarized below:
\begin{lemma}
For any graph $G=(V,E,w)$, integer $k\ge 1$ and any $\epsilon>0,1\ge \rho>0$, one can compute a $(3+\epsilon,\beta)$ hopset $H$ with $\beta=O((5+18/\epsilon)^{\log{\rho}+2/\rho}\cdot k^{\log(5+18/\epsilon)})$ with $O( (n^{1+1/k}+(\log{k}+1/\rho)\cdot n)\cdot \log{\Lambda})$ edges and $O(|E| \cdot (n^{1/k}+(\log(k\cdot \rho)+1/\rho)\cdot n^{\rho})\cdot \log\Lambda)$ time in expectation.
\end{lemma}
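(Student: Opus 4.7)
The plan is to implement the meta-algorithm described just above for each of the $\log \Lambda$ distance classes $[d, 2d]$, and then take the union of the resulting hopsets. Correctness (both the stretch $(3+\epsilon)$ and the hop-bound $\beta$) as well as the expected size bound of $O((n^{1+1/k}+(\log k + 1/\rho)\cdot n)\log \Lambda)$ are already established by the preceding observation. So the entire content of the proof is a running-time accounting, and the only thing we need is to show that one application of the meta-algorithm on a fixed distance class can be executed in $O(|E|\cdot(n^{1/k} + (\log(k\rho)+1/\rho)\cdot n^\rho))$ expected time.

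For the first stage, the algorithm runs a single clustering step of Thorup--Zwick at sampling level $A_1$ and then, for each unclustered vertex $v$, adds a hop to every vertex strictly closer than $p(v) = p_1(v)$. This is exactly one level of the Thorup--Zwick distance oracle construction, which by Fact~\ref{fact:efficientTZ} can be implemented in $O(|E|\cdot n^{1/k})$ time. For the second stage, I plan to handle each of the $T' = \lceil \log(k\rho)\rceil + \lceil 2/\rho-1\rceil$ phases separately. Steps (1) and (3) of phase $i$ each require every vertex (respectively every cluster) to compute the closest center/cluster in $\mathcal{C}_{i-1}$ within radius $r_{i-1}+\alpha_i$ (respectively $4r_{i-1}+4\alpha_i$); by breaking ties in a consistent manner, these two steps reduce to a single truncated multi-source Dijkstra rooted at all centers of $\mathcal{C}_{i-1}$, executable in $O(|E|)$ time.

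The more delicate step is step (5), in which every non-sampled center whose cluster is not absorbed into $\mathcal{C}_i$ must add a hop to every nearby center at cluster-distance at most $2r_{i-1}+2\alpha_i$. I would implement this by a bounded-radius Dijkstra tree rooted at each such ``surviving but unsampled'' cluster center. The key point is the analog of Lemma~\ref{lem:few-bfs}: by the same packing argument used there (if any one of the clusters in the close neighborhood of $v$ is sampled, none of them reach step (5)), the expected number of Dijkstra trees that touch a fixed vertex $v$ is at most $1/p_i$, which by the analog of Observation~\ref{obs:pinro} is at most $n^{\rho/2}$, and a Chernoff bound raises this to $O(n^{\rho}\log n)$ w.h.p.\ Therefore the total edge-work across all these Dijkstra trees in phase $i$ is $\widetilde{O}(|E|\cdot n^\rho)$, matching the target.

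Summing over the $T' = O(\log(k\rho) + 1/\rho)$ phases gives $O(|E|\cdot(\log(k\rho)+1/\rho)\cdot n^\rho)$ time for the second stage, and adding the first stage yields the per-distance-class bound $O(|E|\cdot(n^{1/k}+(\log(k\rho)+1/\rho)\cdot n^\rho))$. Multiplying by the $\log \Lambda$ distance classes gives the claimed total. The main obstacle to watch is making sure that the per-phase Dijkstra-congestion argument actually goes through for the modified sampling probabilities $p_i = p_{i_0}$ in phases $i > i_0$; this is why we phrase the congestion bound in terms of $1/p_i \le n^{\rho/2}$ using Observation~\ref{obs:pinro}, rather than in terms of the dwindling $|\mathcal{C}_{i-1}|/n$, since in the ``late'' phases the clusters no longer shrink at the geometric rate they would in the unmodified algorithm.
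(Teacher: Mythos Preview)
Your proposal is correct and follows essentially the same approach as the paper: invoke Fact~\ref{fact:efficientTZ} for the single Thorup--Zwick level in the first stage, then reuse the per-phase analysis from the spanner centralized implementation (Lemma~\ref{lem:few-bfs} and Observation~\ref{obs:pinro}) to bound the congestion of the step-(5) shortest-path trees by $O(n^{\rho})$, now using Dijkstra instead of BFS since the graph is weighted. Your write-up is in fact more explicit than the paper's own proof, which simply points back to the spanner argument; the one cosmetic slip is that Chernoff from an expectation of $n^{\rho/2}$ gives $O(n^{\rho/2}\log n)$ rather than $O(n^{\rho}\log n)$, but this only strengthens the bound you need.
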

\begin{proof}
The stretch and size arguments are almost similar to those in Section \ref{sec:spanner-app}, hence we restrict attention to the running time. 
Our algorithm begins with a single clustering step of the Throup and Zwick's algorithm (i.e., computing the first bunch $B_1(u)$ for each vertex $u$). The output of this step is a clustering $\mathcal{C}_{0}$. Since clusters are vertex-disjoint, one can compute them in $\widetilde{O}(|E|)$ time. 
Thus using Fact \ref{fact:efficientTZ} the entire first part of the algorithm can be implemented in $O(|E| \cdot n^{1/k})$ centralized time. From this point onward the analysis is similar to the analysis of the centralized implementation of spanners thus requiring $O(n^{\rho}\cdot |E| \cdot T')$ centralized time.
\end{proof}

\bibliographystyle{alpha}
\bibliography{alpha-beta-hopsets}
\newpage
\appendix 
\section{Complete Proofs of Theorems \ref{thm:secondspanner} and \ref{thm:secondhopset}}\label{app:fraction}
Recall that $t=\lceil k^{\epsilon} \rceil /4$ and $T=\log_{t}(k^{1-2\epsilon})$. We will now handle the case where $T$ is not an integer as assumed in Sections \ref{sec:second-regime-spanner} and \ref{sec:second-regime-hopset}, thus completing the proofs of Theorems \ref{thm:secondspanner} and \ref{thm:secondhopset}. We will focus on the spanner case of Theorem \ref{thm:secondspanner}. The exact same argument holds for the hopsets for Theorem \ref{thm:secondhopset}. By Claim \ref{claim:radiusecondspanner}, when $T$ is an integer we have a final clustering with radius $r_{T}\le 1/30\cdot 64^{1/\epsilon-1}\cdot k^{1-\epsilon}$. 
Let $c=T-\lfloor T \rfloor$ and $T_{0}:=T-c$. We divide the treatment of the fractional case into two possible cases. In the first case, assume that $t^{c}> 1/3\cdot t$. In this case, the middle stage of Alg. $\ImprovedSpannerII$ simply contains $\lceil T \rceil$ phases, each of $t$ steps. In this case we have:
\begin{claim}
\label{cl:bigtc}
For $t^{c}>1/3\cdot t$, it holds that the final clustering $\mathcal{C}_{T_{0}+1}$ has $O(n^{1-1/k^{\epsilon}})$ clusters, in expectation, and the final radius $r_{T_{0}+1}\le 64^{1/\epsilon-1}\cdot k^{1-\epsilon}$.
\end{claim}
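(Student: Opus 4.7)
The plan is to run the middle stage of Algorithm $\ImprovedSpannerII$ with $\lceil T\rceil = T_0+1$ full phases of Procedure $\ClusterAndAugment$, each consisting of $t=\lceil k^{\epsilon}\rceil/4$ superclustering steps, and then to argue that the extra ``slack'' phase costs essentially nothing under the hypothesis $t^c > t/3$. The two bounds to verify (cluster count and radius) follow by plugging $i=T_0+1$ into the inductive estimates of Claims \ref{cl:sizesuperclusters} and \ref{claim:radiusecondspanner}, both of whose inductive proofs extend verbatim beyond $i=T$.

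For the cluster count, I would apply Claim \ref{cl:sizesuperclusters} at the integer index $i = T_0+1 \ge T$ to get $\mathbb{E}[|\mathcal{C}_{T_0+1}|] = n^{1-\lceil k^{\epsilon}\rceil (t+1)^{T_0+1}/k}$. Since $(t+1)^{T_0+1} \ge t^{T_0+1} \ge t^{T} = k^{1-2\epsilon}$, we obtain $\lceil k^{\epsilon}\rceil(t+1)^{T_0+1}/k \ge k^{\epsilon}\cdot k^{1-2\epsilon}/k = k^{-\epsilon}$, so $\mathbb{E}[|\mathcal{C}_{T_0+1}|] \le n^{1-1/k^{\epsilon}}$. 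Intuitively, running one additional (partial-in-exponent) phase can only reduce the cluster count, so the original $O(n^{1-1/k^{\epsilon}})$ bound is preserved.

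The main content of the proof is the radius bound. By Claim \ref{claim:radiusecondspanner} applied at $i = T_0+1$ we have $r_{T_0+1} \le r_0\cdot (2\lceil k^{\epsilon}\rceil)^{T_0+1} = r_0\cdot (2\lceil k^{\epsilon}\rceil)^{T}\cdot (2\lceil k^{\epsilon}\rceil)^{1-c}$, and Claim \ref{claim:radiusecondspanner} already gives $r_0\cdot (2\lceil k^{\epsilon}\rceil)^{T} \le \tfrac{1}{30}\cdot 64^{1/\epsilon-1}\cdot k^{1-\epsilon}$. Writing $\lceil k^{\epsilon}\rceil = 4t$ yields $(2\lceil k^{\epsilon}\rceil)^{1-c} = 8^{1-c}\cdot t^{1-c}$. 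The hypothesis $t^c > t/3$ rearranges to $t^{1-c} < 3$, and trivially $8^{1-c} < 8$, so $(2\lceil k^{\epsilon}\rceil)^{1-c} < 24$. Therefore $r_{T_0+1} < \tfrac{24}{30}\cdot 64^{1/\epsilon-1}\cdot k^{1-\epsilon} < 64^{1/\epsilon-1}\cdot k^{1-\epsilon}$, as claimed.

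The hardest part is really checking that nothing unexpected happens at the boundary of the induction when we step one phase past $T$; both Claims \ref{cl:sizesuperclusters} and \ref{claim:radiusecondspanner} are genuinely inductive on the number of executed phases rather than on a fixed $T$, so the extension is routine. The only nontrivial calculation is the geometric one above, which explains the specific choice of the threshold $t^c > t/3$: it is exactly tuned to swallow the factor of $2\lceil k^{\epsilon}\rceil$ introduced by the extra phase into the slack of $\tfrac{1}{30}$ already present in the radius bound of Claim \ref{claim:radiusecondspanner}. The complementary regime $t^c \le t/3$, which is not addressed by this claim, will need a truncated last phase with only $\lceil c\cdot t\rceil$ superclustering steps and a separate (but analogous) two-line calculation.
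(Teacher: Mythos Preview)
Your proof is correct and follows essentially the same approach as the paper's: both invoke the inductive cluster-count and radius bounds (Claims \ref{cl:sizesuperclusters} and \ref{claim:radiusecondspanner}) at the integer $i=T_0+1$, factor $(2\lceil k^{\epsilon}\rceil)^{1-c}=(8t)^{1-c}$, and use $t^c>t/3$ to bound this by $24$, which is absorbed by the $1/30$ slack. One minor aside: in your closing remark about the complementary regime, the paper's truncated phase uses $\lfloor t^{c}\rfloor$ steps rather than $\lceil c\cdot t\rceil$, but this does not affect the present claim.
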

\begin{proof}
Since $\lceil T\rceil \ge T$ it follows from Claim \ref{cl:numberofclusters} that the final clustering $\mathcal{C}_{T_{0}+1}$ has $O(n^{1-1/k^{\epsilon}})$ clusters, in expectation. In Claim \ref{claim:radiusecondspanner} we bound the radius of the final clustering in the case where $T$ is an integer by $r^{*} = \lceil k^{\epsilon}\rceil\cdot (2\cdot \lceil k^{\epsilon} \rceil)^{T}\le 1/30\cdot 64^{1/\epsilon-1}\cdot k^{1-\epsilon}$, thus if $t<3 t^{c}$,  by the same claim, we get that the final radius after $T_{0}+1$ phases is at most:
	\begin{eqnarray*}
r_{T_{0}+1}&=& \lceil k^{\epsilon} \rceil \cdot (2\cdot \lceil k^{\epsilon} \rceil)^{T_{0}+1}=\lceil k^{\epsilon} \rceil \cdot (2\cdot \lceil k^{\epsilon} \rceil)^{T-c+1}
\\&=&	(2\cdot \lceil k^{\epsilon} \rceil)^{1-c} \cdot r^{*}=(8\cdot t)^{1-c}\cdot r^*\le 24\cdot r^*\le 64^{1/\epsilon-1}\cdot k^{1-\epsilon}~.
	\end{eqnarray*} 
\end{proof} 
In the complementary case where $t^{c}\le 1/3\cdot t$, the adaptation of Alg. $\ImprovedSpannerII$ is as follows: In the second stage of the algorithm, it applies $T_{0}$ clustering phases as usual (i.e. each with $t=\lceil k^{\epsilon} \rceil /4$ steps), and the last $T_{0}+1$ phase will consist of only $\lfloor t^{c} \rfloor\le t$ steps. We will denote this last $T_{0}+1$ phase as a \emph{fractional} phase. 
Again, we will show that after these $T_{0}+1$ phases, it holds that the number of clusters in the final clustering $\mathcal{C}_{T_{0}+1}$ is $O(n^{1-1/k^{\epsilon}})$ in expectation and that the radius of this clustering is $r_{T_{0}+1}\le 64^{1/\epsilon-1}\cdot k^{1-\epsilon}$. We begin by bounding the number of clusters in the final clustering $\mathcal{C}_{T_{0}+1}$:
\begin{claim}
	\label{cl:fracnumclusters}
After $T_{0}$ phases of $t$ steps and one last phase of $\lfloor t^{c} \rfloor$ steps, the final clustering $\mathcal{C}_{T_{0}+1}$ has $n^{1-1/k^{\epsilon}}$ clusters, in expectation.  
\end{claim}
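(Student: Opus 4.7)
The plan is to reduce the fractional case directly to the analysis already performed in Claim \ref{cl:sizesuperclusters} (i.e., Claim \ref{cl:numberofclusters}), and then verify the arithmetic for a truncated phase. By that claim, after $T_0$ full phases (each consisting of $t$ superclustering steps) the expected number of clusters satisfies $|\mathcal{C}_{T_0}|=n^{1-\lceil k^{\epsilon}\rceil(t+1)^{T_0}/k}$. The $(T_0+1)^{\text{th}}$ fractional phase begins with the trivial superclustering $\mathcal{SC}_{T_0,0}=\{\{C\}\mid C\in\mathcal{C}_{T_0}\}$ and runs $\lfloor t^{c}\rfloor$ sampling steps, each of which independently sub-samples every current supercluster with probability $|\mathcal{C}_{T_0}|/n$. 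Exactly as in the proof of Claim \ref{cl:sizesuperclusters}, the expected number of superclusters after $s$ such steps is $|\mathcal{C}_{T_0}|^{s+1}/n^{s}$.

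Plugging $s=\lfloor t^{c}\rfloor$, and writing $|\mathcal{C}_{T_0}|=n^{1-\alpha}$ with $\alpha=\lceil k^{\epsilon}\rceil(t+1)^{T_0}/k$, this yields
\[
|\mathcal{C}_{T_0+1}|\;=\;|\mathcal{C}_{T_0}|^{\lfloor t^{c}\rfloor+1}/n^{\lfloor t^{c}\rfloor}\;=\;n^{1-(\lfloor t^{c}\rfloor+1)\alpha}.
\]
So the proof reduces to verifying the single inequality $(\lfloor t^{c}\rfloor+1)\alpha\ge 1/k^{\epsilon}$, equivalently
\[
(\lfloor t^{c}\rfloor+1)(t+1)^{T_0}\;\ge\;k^{1-\epsilon}/\lceil k^{\epsilon}\rceil.
\]

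For this last inequality I would use two elementary facts: (i) $(t+1)^{T_0}\ge t^{T_0}$, and (ii) $t^{T_0}\cdot t^{c}=t^{T}=k^{1-2\epsilon}$, so $t^{T_0}=k^{1-2\epsilon}/t^{c}$. Combining these with the trivial bound $\lfloor t^{c}\rfloor+1\ge t^{c}$ gives
\[
(\lfloor t^{c}\rfloor+1)(t+1)^{T_0}\;\ge\;t^{c}\cdot t^{T_0}\;=\;k^{1-2\epsilon}\;\ge\;k^{1-\epsilon}/\lceil k^{\epsilon}\rceil,
\]
where the last step uses $\lceil k^{\epsilon}\rceil\ge k^{\epsilon}$. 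Hence $|\mathcal{C}_{T_0+1}|\le n^{1-1/k^{\epsilon}}$ in expectation, as required.

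I do not foresee any real obstacle here: every ingredient (the expected cluster-count formula, the value of $T$, and the sampling probability in the fractional phase) is already established, so the argument is essentially bookkeeping. The only delicate point is ensuring that the slack $\lfloor t^{c}\rfloor+1\ge t^{c}$ suffices to absorb the exponent rounding; as shown above it does, with room to spare (a factor of $t^{\epsilon}$ in fact). An analogous argument, with the same inequality chain, handles the hopset construction of Theorem \ref{thm:secondhopset} by applying it to the superclusterings produced by Procedure $\ClusterAndAugmentHop$.
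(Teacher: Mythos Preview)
Your proposal is correct and follows essentially the same route as the paper's proof: both invoke Claim~\ref{cl:sizesuperclusters} to obtain $|\mathcal{C}_{T_0+1}|=n^{1-(\lfloor t^{c}\rfloor+1)\lceil k^{\epsilon}\rceil(t+1)^{T_0}/k}$ and then verify the exponent inequality via $\lfloor t^{c}\rfloor+1\ge t^{c}$, $(t+1)^{T_0}\ge t^{T_0}$, $t^{T}=k^{1-2\epsilon}$, and $\lceil k^{\epsilon}\rceil\ge k^{\epsilon}$. Your write-up is in fact a bit cleaner than the paper's, which routes the same computation through the intermediate expression $t^{T}\cdot((t+1)/t)^{T_0}$ before reaching the identical conclusion.
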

\begin{proof}
	By Claim \ref{cl:sizesuperclusters} after $T_{0}$ phases, 
	$$|\mathcal{C}_{T_{0}}|=n^{1-\frac{\lceil k^{\epsilon}\rceil\cdot (t+1)^{T_{0}}}{k}}=n^{1-\frac{\lceil k^{\epsilon}\rceil\cdot (t+1)^{T}}{k\cdot (t+1)^{c}}}~,$$ in expectation. By the same claim it follows that after additional $\lfloor t^{c} \rfloor$ steps of Proc. $\ClusterAndAugment$, the expected number of clusters in the final clustering $\mathcal{C}_{T_{0}+1}$ is:
	\begin{eqnarray*}
		n^{1-\frac{\lceil k^{\epsilon}\rceil\cdot (t+1)^{T}\cdot (\lfloor t^{c} \rfloor+1)}{k\cdot (t+1)^{c}}}&\le& n^{1-\frac{\lceil k^{\epsilon}\rceil\cdot (t+1)^{T}\cdot t^{c}}{k\cdot (t+1)^{c}}}\le n^{1-\frac{\lceil k^{\epsilon}\rceil\cdot t^{T}\cdot (\frac{t+1}{t})^{T_{0}}}{k}}\le n^{1-1/k^{\epsilon}}~.
	\end{eqnarray*} 
\end{proof}
We continue to bound the radius of the final clustering $\mathcal{C}_{T_{0}+1}$:
\begin{claim}
	\label{cl:fracrad}
$r_{T_0+1}\le 64^{1/\epsilon-1}\cdot k^{1-\epsilon}$.
\end{claim}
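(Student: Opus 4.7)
The plan is to re-use almost verbatim the machinery already built up in the proof of Claim \ref{claim:radiusecondspanner}, treating the first $T_0$ phases as ``full'' phases and only the last phase as fractional. First I would invoke the inductive bound from Claim \ref{claim:radiusecondspanner} to obtain
\[
r_{T_0,0} \;\le\; (2\lceil k^\epsilon\rceil)^{T_0}\cdot r_0 \;=\; (2\lceil k^\epsilon\rceil)^{T-c}\cdot \lceil k^\epsilon\rceil,
\]
which is the cost of the $T_0$ complete phases. All the work is then in controlling the growth of the radius during the final, truncated phase of $j = \lfloor t^c\rfloor$ supercluster-growing steps.

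Next I would plug $j=\lfloor t^c\rfloor$ into Eq. (\ref{eq:rij}) from the proof of Claim \ref{claim:radiusecondspanner}, which gives
\[
r_{T_0,j}\;\le\; \Bigl(1+\tfrac{\lceil k^\epsilon\rceil+1}{2}\cdot\bigl(\bigl(1+\tfrac{4}{\lceil k^\epsilon\rceil-3}\bigr)^{j}-1\bigr)\Bigr)\cdot r_{T_0,0}\;+\;\tfrac{k^{2\epsilon}}{8}\cdot\bigl(\bigl(1+\tfrac{4}{\lceil k^\epsilon\rceil-3}\bigr)^{j}-1\bigr).
\]
Here the key structural hypothesis of the claim, namely $t^c\le t/3=\lceil k^\epsilon\rceil/12$, enters: it lets me use $(1+\tfrac{4}{\lceil k^\epsilon\rceil-3})^{j}\le e^{4j/(\lceil k^\epsilon\rceil-3)}\le e^{\lceil k^\epsilon\rceil/(3(\lceil k^\epsilon\rceil-3))}$, which, for $k\ge 16^{1/\epsilon}$, is at most a small absolute constant $\kappa<2$. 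Consequently the multiplicative factor in front of $r_{T_0,0}$ is bounded by $\tfrac{\lceil k^\epsilon\rceil+1}{2}\cdot(\kappa-1)+1$, which is at most $2\lceil k^\epsilon\rceil$ (in fact considerably less), matching the growth one would incur from a single \emph{full} additional phase in Claim \ref{claim:radiusecondspanner}.

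Combining the two estimates I would then write
\[
r_{T_0+1}\;=\;r_{T_0,\lfloor t^c\rfloor}\;\le\; 2\lceil k^\epsilon\rceil\cdot (2\lceil k^\epsilon\rceil)^{T_0}\cdot \lceil k^\epsilon\rceil\;+\;O(k^{2\epsilon}),
\]
which is at most $(2\lceil k^\epsilon\rceil)^{T_0+1}\cdot \lceil k^\epsilon\rceil\cdot(1+o(1))$. Since $T_0+1\le T+1$, this quantity differs from the bound $\lceil k^\epsilon\rceil\cdot(2\lceil k^\epsilon\rceil)^T$ of Claim \ref{claim:radiusecondspanner} by a single factor of at most $2\lceil k^\epsilon\rceil\le 8t$. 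Substituting the explicit estimate from the end of the proof of Claim \ref{claim:radiusecondspanner}, i.e.\ $\lceil k^\epsilon\rceil\cdot(2\lceil k^\epsilon\rceil)^T\le \tfrac{1}{30}\cdot 64^{(1-\epsilon)/\epsilon}\cdot k^{1-\epsilon}$, yields $r_{T_0+1}\le 64^{(1-\epsilon)/\epsilon}\cdot k^{1-\epsilon}$ with plenty of slack from the $1/30$ factor.

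The main obstacle I anticipate is purely bookkeeping: making sure the exponential bound $e^{4j/(\lceil k^\epsilon\rceil-3)}$ is tight enough in the regime $t^c\le t/3$ so that the overall multiplicative blow-up of the fractional phase is genuinely dominated by a single extra factor of $2\lceil k^\epsilon\rceil$ (and not, say, $8t^{1-c}$, which would be too large once $c$ is close to $0$). The hypothesis $t^c\le t/3$ is precisely what rules out this bad case, and separating the regimes $t^c>t/3$ (handled in Claim \ref{cl:bigtc}) and $t^c\le t/3$ (handled here) is what makes the whole radius bound survive the fractional setting.
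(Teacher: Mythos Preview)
Your proposal has a genuine gap, and it is not merely bookkeeping. The crude bound
\[
\Bigl(1+\tfrac{4}{\lceil k^\epsilon\rceil-3}\Bigr)^{j}\;\le\; e^{4j/(\lceil k^\epsilon\rceil-3)}\;\le\;\kappa<2
\]
that you obtain from $j\le t^c\le t/3$ only tells you that the fractional phase blows up the radius by at most a factor of order $\frac{\lceil k^\epsilon\rceil+1}{2}(\kappa-1)+1\le 2\lceil k^\epsilon\rceil$, i.e.\ by the same amount as one \emph{full} phase. Plugging this in gives $r_{T_0+1}\le (2\lceil k^\epsilon\rceil)^{T_0+1}\cdot r_0=(2\lceil k^\epsilon\rceil)^{1-c}\cdot r^*$. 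But that is precisely the estimate behind Claim~\ref{cl:bigtc}, and it only closes because in that regime one has $t^{1-c}<3$, so $(2\lceil k^\epsilon\rceil)^{1-c}=(8t)^{1-c}\le 24$. In the present regime $t^c\le t/3$ you have $t^{1-c}\ge 3$, and as $c\to 0$ the factor $(2\lceil k^\epsilon\rceil)^{1-c}\to 2\lceil k^\epsilon\rceil\ge 32$; the $1/30$ slack in $r^*$ is not enough to absorb it, and the target bound $64^{1/\epsilon-1}k^{1-\epsilon}$ is missed. In short, you have inadvertently re-proved Claim~\ref{cl:bigtc} and applied it to the complementary case where it fails.

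What is actually needed is a growth factor of order $t^c$ (not $t$) for the fractional phase, so that it cancels the $(8t)^{-c}$ already present in $r_{T_0,0}\le r^*/(8t)^c$. The paper obtains this via the sharper estimate $e^x-1\le x+\tfrac{x^2}{1-x}$ applied with $x\approx t^{c-1}$, yielding $\bigl(1+\tfrac{4}{\lceil k^\epsilon\rceil-3}\bigr)^{j}-1\le 3.5\,t^{c-1}$ and hence a multiplicative blow-up of at most $15\,t^c$. Then $r_{T_0+1}\le 15\,t^c\cdot r^*/(8t)^c\le 15\,r^*$, which is within the claimed bound. Your own closing paragraph correctly flags the danger of an $8t^{1-c}$ factor when $c\to 0$, but the remedy you propose---capping the blow-up at $2\lceil k^\epsilon\rceil$---is exactly that dangerous factor; you must instead cap it at $O(t^c)$.
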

\begin{proof}
By Claim \ref{claim:radiusecondspanner} it holds that after $T_{0}$ phases the radius of the clustering $\mathcal{C}_{T_{0}}$ is bounded by $r_{T_{0}}= \lceil k^{\epsilon}\rceil(2\cdot \lceil k^{\epsilon}\rceil)^{T_{0}} \le \frac{r^*}{(2\cdot \lceil k^{\epsilon}\rceil)^{c}}$. By the same claim after additional $t':=\lfloor t^{c} \rfloor$ steps of Procedure $\ClusterAndAugment$ the radius is bounded by:
\begin{eqnarray*}
r_{T_{0},t'} &\le&\left (1+ \frac{\lceil k^{\epsilon} \rceil+1}{2}\cdot \left((1+\frac{4}{\lceil k^{\epsilon} \rceil-3})^{t'}-1\right)\right)\cdot r_{T_{0},0}-t'\cdot \frac{\lceil k^{\epsilon} \rceil-3}{2}+\frac{k^{2\epsilon}}{8}\cdot \left((1+\frac{4}{\lceil k^{\epsilon} \rceil-3})^{t'}-1\right)
\\&<& \left (1+ \frac{\lceil k^{\epsilon} \rceil+1}{2}\cdot \left(e^{(1+\frac{3}{4\cdot t-3})\cdot t^{c-1}}-1\right)\right)\cdot r_{T_{0},0}-(t^{c}-1)\cdot \frac{\lceil k^{\epsilon} \rceil-3}{2}+\frac{k^{2\epsilon}}{8}\cdot \left(e^{(1+\frac{3}{4\cdot t-3})\cdot t^{c-1}}-1\right)
\\&\le& \left (1+ \frac{4\cdot t+1}{2}\cdot (3.5\cdot t^{c-1})\right)\cdot r_{T_{0},0}+2\cdot t^{2}\cdot 3.5 \cdot t^{c-1}
\\&\le& \left (1+ \frac{4\cdot t+1}{2}\cdot (3.5\cdot t^{c-1})\right)\cdot r_{T_{0},0}+7\cdot t^{1+c}\le 15t^{c}\cdot r_{T_{0},0}\le15\cdot t^{c}\cdot \frac{r^*}{(8\cdot  t)^{c}}\le 15\cdot r^*~,
\end{eqnarray*}
where the third inequality following since for any $x\le 1$:
\begin{eqnarray*}
	e^{x}\le 1+x+\sum_{i=2}x^{i}\le 1+x+\frac{x^{2}}{1-x}~,
\end{eqnarray*}
and in particular since by assumption $4\cdot t\ge 16$, we have:
\begin{eqnarray*}
	e^{(1+\frac{3}{4\cdot t-3})\cdot t^{c-1}}-1&\le& 1.05\cdot t^{c-1}+2.25\cdot \frac{t^{2c-2}}{1-1.05\cdot t^{c-1}}\le 1.05\cdot t^{c-1}+2.25\cdot \frac{t^{2c}}{t(t-1.05\cdot t^{c})}
	\\&\le&3.5\cdot t^{c-1}~,
\end{eqnarray*}
where the last inequality follows since $t\ge 3 t^{c}$. We conclude $r_{T_{0}+1}\le 64^{1/\epsilon-1}\cdot k^{1-\epsilon}$.
\end{proof}
We proceed with providing stretch and size arguments.
\begin{claim}
	For any fixed distance $d$, it holds that Algorithm $\ImprovedSpannerII$ outputs a subgraph $H_{d}\subseteq{G}$ such that for any $u,v\in V$ with $\dist_{G}(u,v)=d$ it holds that $\dist_H(u,v)\le 4\cdot k^{\epsilon}\cdot d+1/15\cdot 64^{1/\epsilon}\cdot k$.
\end{claim}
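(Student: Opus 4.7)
My plan is to follow the exact same stretch-analysis scheme as in the proof of Lemma \ref{lem:secondspanner}, and simply re-run that argument with the (slightly weaker) parameters that come out of the fractional final phase. The three building blocks are Claim \ref{cl:helper-spanner-second} for $0$-unclustered vertices, Claim \ref{cl:ij-unc-secondspanner} for $(i,j)$-unclustered vertices, and Claim \ref{cl:clustered-secondspanner} for paths containing a clustered vertex. Only the last of these depends explicitly on the final radius $r_{T_0+1}$ and on the number of clusters entering the third stage; the other two are statements about what happens inside a single step $(i,j)$ of Procedure $\ClusterAndAugment$, so they transfer verbatim, provided we check that the inductive bound on $\alpha_{i,j}$ still applies when the last phase has fewer than $t$ steps.

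I would split the argument into the two cases already introduced. In the ``easy'' case $t^c > t/3$, Claim \ref{cl:bigtc} guarantees that after $T_0+1$ full phases the clustering $\mathcal{C}_{T_0+1}$ contains $O(n^{1-1/k^\epsilon})$ clusters of radius $r_{T_0+1}\le 64^{1/\epsilon-1}\cdot k^{1-\epsilon}$. In the ``fractional'' case $t^c\le t/3$, Claims \ref{cl:fracnumclusters} and \ref{cl:fracrad} give exactly the same two bounds. In both cases the third (cluster-graph spanner) stage is unchanged and behaves as in Sec. \ref{sec:second-regime-spanner}. So we may fix a pair $u,v$ at distance $d$ in $G$ and run the partition argument from the proof of Lemma \ref{lem:secondspanner}: if some clustered vertex lies on a $u$–$v$ shortest path, Claim \ref{cl:clustered-secondspanner}'s argument gives $\dist_H(u,v)\le 4 k^\epsilon d + 4 k^\epsilon\cdot r_{T_0+1}\le 4 k^\epsilon d + \tfrac{1}{16}\cdot 64^{1/\epsilon}\cdot k\le 4 k^\epsilon d + \tfrac{1}{15}\cdot 64^{1/\epsilon}\cdot k$.

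Otherwise I would partition $P$ into consecutive segments $P[v_\ell,v_{\ell+1}]$ whose lengths are governed by the unclustered status of $v_\ell$: length $1$ if $v_\ell$ is $0$-unclustered, and length $\min\{\alpha_{i,j},\dist_G(v_\ell,v)\}$ if $v_\ell$ is $(i,j)$-unclustered. Applying Claim \ref{cl:helper-spanner-second} and Claim \ref{cl:ij-unc-secondspanner} segment by segment yields multiplicative stretch at most $2\lceil k^\epsilon\rceil-1$ on each segment other than possibly the last one, and for the last segment the additive error is bounded by $2 r_{T_0+1}$ via Eq. \ref{eq:ijuncspanner} and Eq. \ref{eq:rij} exactly as before. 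Summing,
\[
\dist_H(u,v)\le (2\lceil k^\epsilon\rceil-1)\cdot d+ 2 r_{T_0+1}\le 4 k^\epsilon d + \tfrac{1}{32}\cdot 64^{1/\epsilon}\cdot k\le 4 k^\epsilon d + \tfrac{1}{15}\cdot 64^{1/\epsilon}\cdot k,
\]
which matches the statement.

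The main technical point to verify, and essentially the only thing that is not already contained in the proofs of Section \ref{sec:second-regime-spanner}, is that Claim \ref{cl:ij-unc-secondspanner} remains valid inside the fractional phase $T_0+1$. Inspecting its proof, the only inputs it uses are the recursion for $\alpha_{i,j}$ (Eq. \ref{eq:alpha}), the radius recursion (Eq. \ref{eq:rij}), and the inductive identity $(\lceil k^\epsilon\rceil-3)\alpha_{i,j}\ge 4j\cdot r_{i,0}+4\sum_{p=1}^{j-1}\alpha_{i,p}$. None of these references the total number of steps $t$ in the phase; they only require that step $j$ was actually executed. Hence the bound $\dist_H(u,v)\le \lceil k^\epsilon\rceil\cdot\alpha_{i,j}$ holds for every $(i,j)$-unclustered vertex with $1\le j\le \lfloor t^c\rfloor$ in the fractional phase as well, which closes the argument.
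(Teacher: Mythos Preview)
Your proposal is correct and follows essentially the same approach as the paper's own proof: both re-run the stretch analysis of Lemma~\ref{lem:secondspanner} with $r_{T_0+1}$ in place of $r_T$, invoking Claims~\ref{cl:helper-spanner-second}, \ref{cl:ij-unc-secondspanner}, and \ref{cl:clustered-secondspanner} unchanged and plugging in the radius bound from Claims~\ref{cl:bigtc}/\ref{cl:fracrad}. Your explicit verification that Claim~\ref{cl:ij-unc-secondspanner} remains valid in the fractional phase (since its proof only uses the per-step recursions and never the total number of steps) is exactly the point the paper condenses into the remark that ``the definition of the $\alpha_{i,j}$ parameters is unchanged.''
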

\begin{proof}
	Fix a distance $d$. By Claim \ref{cl:helper-spanner-second} we have that for any $0$-unclustered vertex $u$ and any $v\in N(u)$ it holds that $\dist_{H}(u,v)\le 2\cdot \lceil k^{\epsilon} \rceil$. 
	
Furthermore, since the definition of the $\alpha_{i,j}$ parameters in unchanged, by Claim \ref{cl:ij-unc-secondspanner}, for any $(i,j)$-unclustered vertex $u$ and any $v\in \partial\Ball_{G}(u,\alpha_{i,j})$  it holds that $\dist_{H}(u,v)\le \lceil k^{\epsilon} \rceil \cdot \dist_{G}(u,v)$. 
In particular, $\dist_{H}(u,v)\le 2\cdot r_{T_{0}+1}$. 
Thus by the exact same argument as in the proof of Lemma \ref{lem:secondspanner} only with $r_{T_{0}+1}$ instead of $r^{*}$, it holds that for any $u,v$ with $\dist_{G}(u,v)=d$, if all the vertices on the shortest path between $u$ and $v$ in $G$ are unclustered, then $\dist_{H}(u,v)\le 2\cdot \lceil k^{\epsilon} \rceil \cdot \dist_{G}(u,v)+2\cdot r_{T_{0}+1}$. 

It remains to prove the analogue of Claim \ref{cl:clustered-secondspanner}, that is, providing the stretch argument for the case where the $u$-$v$ shortest contains at least one clustered vertex $w$. 
By a similar arguments as in Claim \ref{cl:clustered-secondspanner}, we have that in this case $\dist_{H}(u,v)\le 4\cdot k^{\epsilon}\cdot \dist_{G}(u,v)+4\cdot r_{T_{0}+1}$. Thus plugging the value of $r_{T_{0}+1}$ from Claims \ref{cl:bigtc} and  \ref{cl:fracrad}, we get that 
$$\dist_{H}(u,v)\le 4\cdot k^{\epsilon}\cdot d+1/15\cdot 64^{1/\epsilon}\cdot k~.$$
\end{proof}
\begin{claim}
For any fixed distance $d\le 64^{1/\epsilon}\cdot k$ it holds that the algorithm outputs a subgraph $H_{d}$ of expected size $O(k^{\epsilon}\cdot n^{1+1/k}+64^{1/\epsilon}\cdot k^{1+\epsilon}\cdot n)$.
\end{claim}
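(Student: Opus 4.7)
The plan is to mirror, phase by phase, the size analysis carried out at the end of the proof of Lemma \ref{lem:secondspanner}, and to verify that each bound still goes through in the fractional setting where the middle stage consists of $T_0$ full phases of $t=\lceil k^{\epsilon}\rceil/4$ steps each, followed by at most one ``fractional'' phase of $\lfloor t^c\rfloor\le t$ steps. Since Claim \ref{cl:expectedspinstep} depends only on the sampling probability $|\mathcal{C}_{i-1}|/n$ at each step and not on whether the phase is ``full'' or ``fractional,'' it applies uniformly to every step $(i,j)$ of the modified algorithm.

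First, the preliminary stage contributes $O(k^{\epsilon}\cdot n^{1+1/k})$ edges in expectation by Fact \ref{claim:cluster}(3), exactly as before. For the middle stage, each step $(i,j)$ adds $O(n)$ shortest paths in expectation (by Claim \ref{cl:expectedspinstep}), each of length at most $r_{i,j}\le r_{i+1,0}$. With at most $t$ steps per phase this gives $O(k^{\epsilon}\cdot r_{i+1,0}\cdot n)$ edges per phase, so summing over the $T_0+1$ phases yields $O\bigl(k^{\epsilon}\cdot n\cdot \sum_{i=1}^{T_0+1} r_{i,0}\bigr)$. The radii grow geometrically by Claim \ref{claim:radiusecondspanner}, so the sum is dominated by its last term; combining Claims \ref{cl:bigtc} and \ref{cl:fracrad} gives $r_{T_0+1}\le 64^{1/\epsilon-1}\cdot k^{1-\epsilon}$, and the middle stage therefore contributes $O(64^{1/\epsilon}\cdot k^{1+\epsilon}\cdot n)$ edges, matching the bound from the integer case.

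Finally, for the finalizing stage, by Claim \ref{cl:fracnumclusters} the clustering $\mathcal{C}_{T_0+1}$ has $O(n^{1-1/k^{\epsilon}})$ clusters in expectation, so the $(2\lceil k^{\epsilon}\rceil-3)$-spanner $\widehat{H}$ on the cluster graph has $O(|\mathcal{C}_{T_0+1}|^{1+1/(\lceil k^{\epsilon}\rceil-1)})=O(n)$ edges in expectation. Each is translated into a shortest path of length at most $2r_{T_0+1}+2d$, and in addition each unclustered vertex may add one shortest path of length at most $r_{T_0+1}+d$ to its closest center; by the consistent tiebreaking both contributions amount to $O((r_{T_0+1}+d)\cdot n)$ edges. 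Under the assumption $d\le 64^{1/\epsilon}\cdot k$ and the radius bound $r_{T_0+1}=O(64^{1/\epsilon}\cdot k^{1-\epsilon})$, this quantity is $O(64^{1/\epsilon}\cdot k\cdot n)$, which is absorbed into the $O(64^{1/\epsilon}\cdot k^{1+\epsilon}\cdot n)$ term.

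The only genuinely new point, and the one I expect to be the main obstacle, is checking that the geometric sum $\sum_i r_{i,0}$ does not blow up at the fractional phase. However, Claim \ref{cl:fracrad} shows that the radius increase in the fractional phase is at most a constant multiple of the integer-case final radius $r^*$, so the standard geometric-series argument carries through with only a constant-factor loss, and no new ideas beyond those of Lemma \ref{lem:secondspanner} are required.
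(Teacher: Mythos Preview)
Your proposal is correct and follows essentially the same approach as the paper. The only presentational difference is that the paper first invokes the integral-case analysis of Lemma~\ref{lem:secondspanner} for the first $T_0$ full phases and then bounds the contribution of the final (fractional) phase separately, whereas you treat all $T_0+1$ phases uniformly via the geometric sum $\sum_i r_{i,0}$; both routes give the same bound. One minor indexing slip: in step $(i,j)$ the paths have length at most $r_{i-1,j}\le r_{i,0}$ (not $r_{i,j}\le r_{i+1,0}$), but this does not affect the argument.
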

\begin{proof}
First note that up to the last $T_{0}$ phases of Procedure $\ClusterAndAugment$, the algorithm is similar
to the integral case, thus by the analysis of Lemma \ref{lem:secondspanner}, until this step $O(k^{\epsilon}\cdot n^{1+1/k}+64^{1/\epsilon}\cdot k^{1+\epsilon}\cdot n)$ edges are added to the spanner, in expectation.

We now bound the number of remaining edges added to the spanner in the last phase. By Claim \ref{cl:expectedspinstep} it holds that in each step of the last phase we add $O(n)$ shortest paths in expectation. In step $i$ of the last phase the added shortest paths are of length at most $r_{T_{0},0}+r_{T_{0},i-1}+2\cdot \alpha_{T_{0},i}$, and by Eq. (\ref{eq:rij}) it holds that $r_{T_{0},0}+r_{T_{0},i-1}+2\cdot \alpha_{T_{0},i}\le r_{T_{0},i}$. In the first case, where $t^{c}>1/3\cdot t$, by Claim \ref{cl:bigtc} it holds that we add an extra of $O(n\cdot \sum_{i=1}^{t}r_{T_{0},i+1})=O(n\cdot k^{2\epsilon}\cdot r_{T_{0}+1})=O(n\cdot 64^{1/\epsilon}\cdot k^{1+\epsilon})$ edges in expectation. In the complementary case we add $O(\sum_{i=1}^{\lfloor t^{c} \rfloor+1} n\cdot r_{T_{0},i})=O(n\cdot k^{2\cdot \epsilon}\cdot r_{T_{0},t'})=O(64^{1/\epsilon}\cdot k^{1+\epsilon}\cdot n)$ edges in expectation. Thus over all the middle stage in the fractional case adds an extra of $O(64^{1/\epsilon}\cdot k^{1+\epsilon}\cdot n)$ edges in expectation. 

In the final stage we construct a $(2\cdot \lceil k^{\epsilon} \rceil-3)$ spanner of the cluster graph. By Claims \ref{cl:bigtc} and  \ref{cl:fracnumclusters}, the expected size of the final clustering $\mathcal{C}_{T_{0}+1}$ is $O(n^{1-1/k^{\epsilon}})$, thus as in the proof of Lemma \ref{lem:secondspanner} it holds that the size of the spanner of the cluster graph is $O(n)$. Since each edge in this spanner is translated into a path of length $O(d+r_{T_{0}+1})$ in the final spanner, it holds that this stage adds $O((d+r_{T_{0}+1})\cdot n)$ edges to the spanner. We conclude that the construction in the fractional case is of size $O(k^{\epsilon}\cdot n^{1+1/k}+64^{1/\epsilon}\cdot k^{1+\epsilon}\cdot n+64^{1/\epsilon-1}\cdot k^{1-\epsilon}\cdot n + (d+r_{T_{0}+1}) \cdot n)=O(k^{\epsilon}\cdot n^{1+1/k}+64^{1/\epsilon}\cdot k^{1+\epsilon}\cdot n)$.
\end{proof}
Finally by Observation \ref{claim:isenough}, we conclude that taking the output subgraphs $H_d$ of Algorithm $\ImprovedSpannerII$ for every $1\le d \le 64^{1/\epsilon}\cdot k^{1-\epsilon}$ yields a $(8\cdot k^{\epsilon},64^{1/\epsilon}\cdot k)$-spanner of $G$ of expected size at most $O(64^{1/\epsilon}\cdot k\cdot n^{1+1/k}+64^{2/\epsilon}\cdot k^{2}\cdot n)$.
This completes the proof of the fractional case of Theorem \ref{thm:secondspanner}.
\paragraph{Hopsets.} As in the spanner case we divide the treatment of the fractional case to two cases as explained above. In the first case, where $t^{c}>1/3\cdot t$, the algorithm applies $T_{0}+1$ standard phases of Procedure $\ClusterAndAugmentHop$, each with $t$ steps. In the complementary case, the algorithm applies $T_{0}$ standard phases of the procedure, and a final fractional phase of $t'=\lfloor t^{c} \rfloor$ steps. In Claim \ref{claim:radiushopset} we show that in the case where $T$ is an integer, the final radius is bounded by $r_{T}\le d/648$. By similar claims to Claim \ref{cl:bigtc} and \ref{cl:fracrad}, it holds that the final radius in the fractional case is at most $r_{T_{0}+1}\le 24\cdot r_{T}\le d/27$. Since we have some slackness in the stretch arguments of Claim \ref{cl:clustered-hop-second} and in the proof of Thm. \ref{thm:secondhopset}, the same bound holds when plugging the bound on the final radius $r_{T_{0}+1}$ instead of $r_{T}$. 

For the size analysis, the algorithm is the same of that in Section \ref{sec:second-regime-hopset} until the end of the $T_{0}$th phase. Therefore, by the proof of Thm. \ref{thm:secondhopset} until this last phase, $O(k^{\epsilon}\cdot n^{1+1/k}+k^{\epsilon}/\epsilon\cdot n)$ are added to the hopset. By a similar argument to Claim \ref{cl:expectedspinstep}, the last $T_0+1$ phase with $t'$ steps adds $O(t\cdot n)=O(k^{\epsilon}\cdot n)$ hops to the hopset, in expectation. By the exact same argument as in Claim \ref{cl:fracnumclusters}, in the third stage of the algorithm there are $n^{1-1/k^{\epsilon}}$ clusters, in expectation. Consequently as in the proof of Thm. \ref{thm:secondhopset}, the spanner of the cluster graph is of size $O(n)$, thus this stage contributes $O(n)$ hops to the hopset. We conclude that altogether the hopset in the fractional case is of size $O(k^{\epsilon}\cdot n^{1+1/k}+k^{\epsilon}/\epsilon\cdot n)$. Lemma \ref{lem:secondhopset} follows. 
\section{Improved $(3+\epsilon,\beta)$ Spanners and Hopsets}\label{sec:best-con}
\subsection{Spanners}\label{sec:best-spanner}
In the following subsection we state and sketch the construction of a $(3+\epsilon,\beta)$-spanner with an improved $\beta$ and provide the proof for Lemma \ref{lem:spanner-best}. For example, we show a $(4,\beta)$ spanner for $\beta=k^{\log 11}$. 
The algorithm is similar to the algorithm of Section \ref{sec:3epsspanner}, with the only differences that  we set $\alpha_{1}=1/2$, and in step (3) of the $i^{th}$ phase of the algorithm, the sampled clusters add to $H_{i}$ paths between their centers to centers at distance at most $2\cdot r_{i-1}+2\cdot \alpha_{i}$ (instead of $4\cdot r_{i-1}+4\cdot \alpha_{i}$). This change affects the radii of the clusters in the following way:
\begin{observation}
	\label{obs:rad-spanner-best}
	For every $i \in \{1,\ldots, T\}$, $r_i \leq (3+ 8/\epsilon)^{i-1}$. In particular, the radius of cluster in the final clustering $\mathcal{C}_T$ is $r_T \le (3+ 8/\epsilon)^{\log{k}+1}=(3+ 8/\epsilon)\cdot k^{\log(3+ 8/\epsilon)}$.
\end{observation}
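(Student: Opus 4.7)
The plan is to prove the observation by induction on the phase index $i$, mirroring the argument of Observation \ref{obs:rad3epsspanner} from Section \ref{sec:3epsspanner} but carefully tracking the effect of the two modifications: namely, that $\alpha_1 = 1/2$ and that in step (3) the join radius is tightened from $4r_{i-1}+4\alpha_i$ to $2r_{i-1}+2\alpha_i$.

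For the base case $i=1$, I will observe that the clusters in $\mathcal{C}_0$ are singletons (so $r_0=0$), and in the first phase each unsampled singleton joins its closest sampled singleton whose center lies within $2r_0+2\alpha_1 = 1$. Hence every cluster in $\mathcal{C}_1$ has radius at most $1 = (3+8/\epsilon)^{0}$, as required. For the inductive step, I will use the fact that a cluster in $\mathcal{C}_i$ is formed as a star whose head is a sampled cluster $C \in \mathcal{C}' \subseteq \mathcal{C}_{i-1}$, to which the algorithm attaches every cluster $C' \in \mathcal{C}_{i-1}$ with $\cdist_G(C,C') \le 2r_{i-1}+2\alpha_i$. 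A vertex in any attached $C'$ therefore reaches the center $r(C)$ by going at most $r_{i-1}$ inside $C'$, then at most $2r_{i-1}+2\alpha_i$ to $r(C)$, giving a bound
\[
r_i \;\le\; r_{i-1} + \bigl(2r_{i-1}+2\alpha_i\bigr) \;=\; 3r_{i-1}+2\alpha_i.
\]
Plugging in $\alpha_i = (4/\epsilon)\,r_{i-1}$ from Eq.~(\ref{eq:alpha-p}) yields $r_i \le (3+8/\epsilon)\,r_{i-1}$, and by the inductive hypothesis this is at most $(3+8/\epsilon)^{i-1}$.

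Finally, for the ``in particular'' claim, I will substitute $i=T = \lceil \log k + 1 \rceil$, obtaining
\[
r_T \;\le\; (3+8/\epsilon)^{T-1} \;\le\; (3+8/\epsilon)^{\log k + 1} \;=\; (3+8/\epsilon)\cdot k^{\log(3+8/\epsilon)},
\]
which matches the stated bound. There is no real obstacle here: the only subtlety is making sure the tightened join radius in step (3) is what allows the recurrence constant to shrink from $5+16/\epsilon$ (as in Observation \ref{obs:rad3epsspanner}) to $3+8/\epsilon$, and that the adjustment $\alpha_1 = 1/2$ is consistent with the base case $r_1 \le 1$.
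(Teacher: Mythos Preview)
Your proposal is correct and follows essentially the same approach as the paper's own proof: both argue by induction, establish $r_1 \le 1$ as the base case (you explain it via $2r_0+2\alpha_1=1$, the paper simply states $r_1=1$), and both derive the recurrence $r_i \le 3r_{i-1}+2\alpha_i \le (3+8/\epsilon)\,r_{i-1}$ from the tightened join radius in step (3). Your treatment is slightly more detailed than the paper's sketch, but there is no substantive difference.
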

\begin{proof}
	The proof is similar to the proof of Observation \ref{obs:rad3epsspanner}, with the difference that $r_{1}=1$, and at each phase each new cluster is a star of a cluster that connects to other clusters with center-distance at most $2\cdot r_{i-1}+2\cdot \alpha_{i}$, thus the radius of the new cluster is at most $3\cdot r_{i-1}+2\cdot \alpha_{i}\le (3+8/\epsilon)\cdot r_{i-1}$.
\end{proof}
As in the proof of Lemma \ref{lem:3plusepsspanner}, the size of $H$ is bounded by:
\begin{eqnarray*}
	\sum_{i=1}^T |E(H_i)|&\leq & n^{1+1/k}+\sum_{i=2}^{T}r_{i}\cdot n=
	n^{1+1/k}+O(n \cdot \sum_{i=2}^{T}(3+ 8/\epsilon)^{i-1})
	\\&=&O(n^{1+1/k}+(3+ 8/\epsilon)\cdot k^{\log(3+8/\epsilon)}\cdot n)~.
	\end {eqnarray*}
Following the proof of Lemma \ref{lem:3plusepsspanner}, we have that $H$ is a $(3+\epsilon,\beta)$-spanner with $\beta=4\cdot r_{T}=(3+ 8/\epsilon)\cdot k^{\log(3+8/\epsilon)}$, and Lemma \ref{lem:spanner-best} follows. 

\subsection{Hopsets}\label{sec:best-hopset}
In the following subsection we state and sketch the construction of a $(3+\epsilon,\beta)$-spanner with an improved $\beta$ and provide the proof for Lemma \ref{lem:hopset-best}.
For example, we show a $(4,\beta)$ hopset for $\beta=k^{\log 12}$. 
The algorithm is similar to the algorithm of Section \ref{sec:3epshopset}, with the only differences that we set $R'=(3+ \frac{8}{\epsilon})\cdot k^{\log(3+ \frac{8}{\epsilon})}$ and in step (3) of the algorithm, the sampled clusters add to $H_{i}$ hops to cluster centers at distance $2\cdot r_{i-1}+2\cdot \alpha_{i}$ (instead of $4\cdot r_{i-1}+4\cdot \alpha_{i}$). This change affects only the radius of the clusters, and in the following way:
\begin{observation}
\label{obs:rad-hopset-best}
For every $i \in \{1,\ldots, T\}$, $r_i \leq d/2R' \cdot (3+ 8/\epsilon)^{i}$. In particular, the radius of cluster in the final clustering $\mathcal{C}_T$ is $r_T \le d/2$.
\end{observation}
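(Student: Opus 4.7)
The plan is to prove the stated observation by a straightforward induction on $i$, paralleling Observation~\ref{obs:rad3epshopset} and its spanner analogue Observation~\ref{obs:rad-spanner-best}, and then only sketch how this radius bound slots into the rest of the argument for Lemma~\ref{lem:hopset-best}. The base case $i=0$ holds by construction, since the first stage sets $r_0 = d/(2R')$. For the inductive step, recall that the only change from Section~\ref{sec:3epshopset} is that in step~(3) we merge clusters whose centers are within $2 r_{i-1}+2\alpha_i$ rather than $4r_{i-1}+4\alpha_i$. Therefore each new cluster in $\mathcal{C}_i$ is a star whose head is a sampled cluster of $\mathcal{C}_{i-1}$ of radius at most $r_{i-1}$, to which we attach other $\mathcal{C}_{i-1}$-clusters whose centers lie within $2r_{i-1}+2\alpha_i$. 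Using $\alpha_i=(4/\epsilon)\cdot r_{i-1}$ from Eq.~(\ref{eq:alpha-p}), this yields
\[
 r_i \;\le\; r_{i-1} + (2r_{i-1}+2\alpha_i) + r_{i-1} \;=\; 3\,r_{i-1} + 2\alpha_i \;\le\; \Bigl(3+\tfrac{8}{\epsilon}\Bigr)\cdot r_{i-1},
\]
and combining with the inductive hypothesis gives $r_i \le (d/2R')\cdot (3+8/\epsilon)^i$, as claimed.

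For the ``in particular'' part, plugging $i=T=\lceil\log k\rceil$ and the value $R'=(3+8/\epsilon)\cdot k^{\log(3+8/\epsilon)}$ into the bound yields
\[
 r_T \;\le\; \frac{d}{2R'}\cdot (3+8/\epsilon)^{\lceil \log k\rceil} \;\le\; \frac{d}{2R'}\cdot (3+8/\epsilon)\cdot k^{\log(3+8/\epsilon)} \;=\; \frac{d}{2},
\]
which is the desired inequality. The only delicate point is making sure the ceiling in $\lceil \log k\rceil$ is absorbed by the extra factor of $(3+8/\epsilon)$ in $R'$, and this is exactly why $R'$ carries that extra leading factor.

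To finish Lemma~\ref{lem:hopset-best}, the rest of the argument parallels Lemma~\ref{lem:3plusepshopset}. The stretch analysis is unchanged in structure: Claims~\ref{cl:0-unc-3epshopset}, \ref{claim:i-unc-3epshopset}, and~\ref{claim:3epsclustered} carry over verbatim (with the updated constants induced by the above radius recursion), since they only require that $r_T \le d/2$ in order for the ``clustered'' case to contribute a total hop-length of at most $\dist_G(u,v) + 2 r_T \le 3d$ using a $2$-hop path through the final cluster center. Then partitioning the shortest path into segments of length $r_0$ or $\alpha_i$ according to which phase unclustered the segment endpoint yields $\ell = O(R')$ segments, each of which contributes $(3+\epsilon)$ multiplicative stretch via a constant number of hops, so $\beta = O(R') = O((3+8/\epsilon)\cdot k^{\log(3+8/\epsilon)})$. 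The size analysis is identical to that in Section~\ref{sec:3epshopset}: $O(n^{1+1/k})$ hops from the first stage, $O(n)$ hops per phase (with the same $1/p_i$ argument as in Lemma~\ref{lem:expectedamountofsp}), and $O(n)$ hops from the final BFS-like step, summing to $O((n^{1+1/k}+\log k\cdot n)\log\Lambda)$ across the $\log\Lambda$ distance classes.

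The main obstacle, which is essentially notational rather than mathematical, is calibrating the constant in front of $(3+8/\epsilon)$ in the definition of $R'$ so that $r_T \le d/2$ holds exactly (and not merely up to a constant), because the subsequent clustered-pair argument requires a clean bound $\dist_G(u,v)+2r_T \le 3d$ to keep the final stretch at $(3+\epsilon)\cdot d + O(r_T)$ rather than a larger constant. The chosen $R'=(3+8/\epsilon)\cdot k^{\log(3+8/\epsilon)}$ is tuned precisely for this, and the induction above confirms that the tuning is consistent.
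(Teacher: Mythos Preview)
Your proof is correct and follows exactly the same inductive approach as the paper, which simply invokes the argument of Observation~\ref{obs:rad-spanner-best} with the modified base value $r_0=d/(2R')$. One small arithmetic slip: in the displayed inequality you wrote $r_i \le r_{i-1} + (2r_{i-1}+2\alpha_i) + r_{i-1}$, which would give $4r_{i-1}+2\alpha_i$; the correct decomposition is just (center-distance) $+$ (radius of the attached cluster), i.e.\ $r_i \le (2r_{i-1}+2\alpha_i)+r_{i-1}=3r_{i-1}+2\alpha_i$, which is exactly the bound you then use.
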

\begin{proof}
We set $r_{0}:=d/2R'$, and the rest follows by the same argument as in Observation \ref{obs:rad-spanner-best}.
\end{proof}
As in the proof of Lemma \ref{lem:3plusepshopset}, the size of a hopset for a fixed distance range is bounded by 
\begin{eqnarray*}
\sum_{i=0}^T |E(H_i)|&= &O(n^{1+1/k}+\log{k}\cdot n)~.
\end{eqnarray*}
Thus the expected size of the hopset is $O((n^{1+1/k}+\log{k}\cdot n)\cdot \log\Lambda)$. Furthermore, as in the proof of Lemma \ref{lem:3plusepshopset}, $\alpha=(3+1.125\cdot \epsilon,\beta)$ and $\beta=16\cdot R'$, by plugging $\epsilon'=8/9\cdot \epsilon$, Lemma \ref{lem:hopset-best} follows.

\end{document}